\documentclass[journal, 11pt]{IEEEtran}

\usepackage{amsmath, amssymb, amsthm, amsfonts}
\usepackage{mathtools}
\usepackage{bm}
\usepackage{cite}
\usepackage{algorithm}
\usepackage{algorithmic}
\usepackage{booktabs}
\usepackage{float}
\usepackage{hyperref}
\usepackage{enumitem}

\newtheorem{theorem}{Theorem}
\newtheorem{lemma}[theorem]{Lemma}
\newtheorem{corollary}[theorem]{Corollary}
\newtheorem{proposition}[theorem]{Proposition}
\theoremstyle{definition}
\newtheorem{definition}[theorem]{Definition}
\newtheorem{example}[theorem]{Example}
\theoremstyle{remark}
\newtheorem{remark}[theorem]{Remark}

\newcommand{\Tr}{\mathrm{Tr}}
\newcommand{\End}{\mathrm{End}}

\newcommand{\GL}{\mathrm{GL}}

\newcommand{\im}{\mathrm{im}\,}

\begin{document}

\title{Unification of Signal Transform Theory}

\author{Mitchell A. Thornton,~\IEEEmembership{Senior Member,~IEEE}
\thanks{M. A. Thornton is with the Darwin Deason Institute for Cyber Security and the Department of Electrical and Computer Engineering, Southern Methodist University, Dallas, TX 75275 USA (e-mail: mitch@smu.edu).}}

\maketitle

\begin{abstract}
The discrete Fourier transform, discrete cosine transform, Walsh-Hadamard transform, Haar wavelet transform, and their continuous counterparts are unified in this paper under a single representation-theoretic principle: each is the eigenbasis of every covariance invariant under a specific finite or compact group, with its columns constructed from irreducible matrix elements of the group. The unifying framework rests on \emph{Algebraic Diversity}; a framework that identifies the matched group of a covariance as the foundational object for second-order signal processing. The data-dependent Karhunen-Lo\`eve transform emerges as the limiting case in which the matched group is trivial; classical transforms emerge as the cases corresponding to cyclic, dihedral, elementary abelian, and iterated wreath groups. Composition rules cover direct, wreath, and semidirect products. A polynomial-time algorithm for matched-group discovery (developed elsewhere) closes the operational loop. The continuous extension via the Peter-Weyl theorem recovers the Fourier series, Fourier transform, Fourier cosine series, and spherical harmonics by the same principle.
\end{abstract}

\begin{IEEEkeywords}
Karhunen-Lo\`eve transform, discrete Fourier transform, discrete cosine transform, Hadamard transform, Haar wavelet, group representations, irreducible representations, characters, wreath product, algebraic signal processing, algebraic diversity, matched group.
\end{IEEEkeywords}

\section{Introduction}
\label{sec:introduction}

\IEEEPARstart{T}{he} engineer asked to compress a digital image will reach for the discrete cosine transform; the radio engineer estimating a power spectral density will reach for the discrete Fourier transform; the digital communications engineer designing a code will reach for the Walsh-Hadamard transform; the multiresolution analyst will reach for the Haar wavelet or its smoother descendants; and the statistician told nothing about the structure of the data will fall back on the Karhunen-Lo\`eve transform computed from a sample covariance. Each choice is, in practice, a judgment call. It is informed by experience, by the dominant qualitative characteristics of the data (smoothness, locality, stationarity, sparsity), and by a sometimes-unspoken matching of the data's features to the kernel functions of the available transforms: cosines for slowly varying smooth signals, complex exponentials for stationary oscillatory signals, $\pm 1$-valued basis functions for combinatorial signals, dyadic step functions for hierarchically organized signals. The practical literature is replete with rules of thumb, comparison tables, and empirical recommendations that codify which transform to use when.

The practice of choosing a particular transform to be applied to a signal or data set can be formalized through the use of mathematical principles resulting from the framework of Algebraic Diversity (AD)~\cite{thornton2026ad_framework}. Commonly applied methods of reducing the effects of noise and other random errors through averaging multiple samples in time or space is a tried and true method that exploits temporal and spatial diversity.  Recognizing that signals and data measurement configurations usually comprise one or more degrees of symmetry enable a third dimension of diversity based on algebraic group theory.  Averaging over the elements of a group that represents the symmetry present in a signal augments temporal and spatial diversity with algebraic diversity.

With respect to choosing a transform, the unifying observation is that each classical transform is the eigenbasis of every covariance matrix that respects a specific symmetry: the discrete Fourier transform diagonalizes every cyclic-translation-invariant covariance, the discrete cosine transform diagonalizes every covariance whose even-reflected extension is translation-invariant, the Walsh-Hadamard transform diagonalizes every covariance invariant under bitwise translation on $\{0, 1\}^n$, and the Haar wavelet transform diagonalizes every covariance invariant under hierarchical tree-automorphisms of a binary tree. The Karhunen-Lo\`eve transform of an arbitrary covariance is the limiting case in which no symmetry is imposed: the eigenbasis is determined entirely by the data because nothing else is given.

The unification is mathematical rather than metaphorical. Symmetry, in its most precise sense, is the structure preserved by a group of transformations, and the language for describing symmetries in signal processing is group theory. To say that a covariance has a translation-invariant structure is to say that a particular cyclic group acts on the signal space and that the covariance lies in the commutant of this group action. To say that a covariance has translation and reflection symmetry is to say that a dihedral group acts on the signal space. To say that a signal is hierarchically organized is to say that an iterated wreath product acts on the signal space. The classical transforms are, in this view, the matched eigenbases of the corresponding group actions.

The recently developed AD framework~\cite{thornton2026ad_framework} makes this observation operational. AD identifies the \emph{matched group} of a covariance as the largest finite group of permutations of the signal indices that leaves the covariance invariant, and shows that the matched group is the foundational object of second-order signal processing. Given a covariance and its matched group, the optimal transform is determined by representation theory: by a classical structure theorem of Wedderburn, Schur, and Peter-Weyl, the eigenbasis of every matrix in the commutant of the group action is constructed from the matrix elements of the irreducible representations of the group. The matched group is to the optimal transform what the period of a periodic function is to its Fourier series: an invariant of the underlying object that determines the natural basis in which to represent it.

\subsection{The hierarchy of optimality}

The Karhunen-Lo\`eve transform of a covariance is optimal in the strongest sense: it minimizes the expected mean-squared truncation error of any $K$-term orthonormal expansion, for every $K$, and it produces uncorrelated coefficients. It is the unique transform with this property, up to ordering and unimodular phase. Different signal classes admit different additional structural constraints, however, and these constraints simplify the KLT, or rather, they pin down the KLT to a specific fixed basis whose columns are determined by representation theory rather than by data-dependent eigendecomposition.

The simplest such structural constraint is wide-sense stationarity. A wide-sense stationary process on uniformly spaced samples has a covariance that depends only on the lag $|s - t|$, and the corresponding covariance matrix is symmetric Toeplitz (or, with cyclic boundary conditions, circulant). The circulant form is invariant under the cyclic group $\mathbb{Z}_M$ acting by translation; the Toeplitz form is not, because a cyclic shift carries the last sample to the first, and the discrepancy between the two is confined to the wrapped corners and is $O(1/M)$ in relative energy (Remark~\ref{rem:toeplitz_wrap}). The matched transform of the circulant form is the discrete Fourier transform, with columns given by the characters of $\mathbb{Z}_M$. The intuition for why uniform sampling implies cyclic symmetry is straightforward: if the underlying process is stationary, then shifting the sampling clock by one sample produces a statistically equivalent measurement vector, so the covariance of a periodized window is invariant under the shift, and the shift generates the cyclic group on the sample indices. Stationarity therefore selects $\mathbb{Z}_M$ as the matched group, exactly on a periodized window and asymptotically on a long free window, and the DFT as the matched transform.

A more subtle example is the sampled first-order autoregressive (AR(1)) process $X_t = \phi X_{t-1} + \varepsilon_t$ on uniformly spaced samples. Its covariance on a finite window is symmetric Toeplitz, and its inverse (the precision matrix) is tridiagonal with constant diagonal and constant off-diagonals, except at the two endpoints. Both are invariant under the reflection $i \mapsto M - 1 - i$ (because $|i - j| = |M - 1 - i - (M - 1 - j)|$), and neither is invariant under the cyclic shift of the window. The permutation symmetry of the finite window is therefore the reflection alone, a group of order two whose permutation representation is far from multiplicity-free, and the exact eigenbasis of the AR(1) covariance depends on $\phi$~\cite{ray1970further}. The classical fact that the DCT is asymptotically optimal for AR(1) signals~\cite{ahmed1974discrete, rao1990discrete}, and that the JPEG image compression standard chooses the DCT because natural-image patches are approximately AR(1), has a precise reading in the present framework: the DCT-II is the exact matched transform of a dihedral group acting on the even-reflected extension of the signal (Section~\ref{sec:dct}), and the AR(1) covariance lies close to the invariant class of that action, approaching it as the correlation coefficient approaches one~\cite{strang1999discrete} and as the window grows. The DCT is to that dihedral action what the DFT is to the cyclic group: the matched eigenbasis.

\subsection{The harder question: finding the matched group}

The two examples above illustrate the framework when the symmetry is known: stationarity on a periodized window yields cyclic, and even-reflected extension yields dihedral. In most practical settings, the symmetry is not known in advance. The signal arrives as a covariance estimate, and the relevant matched group is a property of the unknown underlying process that must be inferred from data. This is the matched-group discovery problem of AD.

For an arbitrary covariance, the matched group is the largest subgroup of $S_M$ that commutes with it, which is a non-trivial combinatorial-algebraic object. A na\"ive search over subgroups of $S_M$ is exponential in $M$. A polynomial-time procedure for matched-group discovery was developed in the companion manuscript~\cite{thornton2026double_commutator}: the procedure casts the matched-group search as a generalized eigenvalue problem in a relaxed double-commutator form, and solves it sequentially with a deflation step that builds the matched group one generator at a time. We follow the naming convention of~\cite{thornton2026ad_framework} and use Discrete Algebraic Diversity (DAD)~\cite{thornton2026spectral_estimation} and Continuous Algebraic Diversity (CAD)~\cite{thornton2026continuous_ad} to denote the matched-group discovery procedure that is sometimes referred to as the DAD-CAD relaxation in recognition of its applicability to both the finite-group and the compact Lie group settings. The procedure is sound (every returned generator is a genuine symmetry of the covariance) and complete (the procedure recovers the matched group in full when run with an appropriate canonical generator basis), and its total cost is polynomial in $M$ and the depth of the discovered structure. The operational consequence is that finding the matched group, and therefore computing the optimal transform, is a polynomial-time problem.

A geometric perspective on the polynomial-time discovery procedure is the following: as the matched group is varied continuously (in a sense that requires care for finite groups but is natural for compact Lie groups), the matched transform varies continuously as a point on a smooth manifold of unitary matrices. The DFT, DCT, Walsh-Hadamard, Haar wavelet, and KLT are not separate constructions floating in different parts of the space of transforms, they are nearby points on this manifold, parameterized by their matched groups. The full Karhunen-Lo\`eve transform of an arbitrary covariance is the trivial-group point (no symmetry constraint), the DFT is the cyclic-group point, the DCT is the dihedral-group point, and so on. The matched-group discovery procedure navigates this manifold, identifying the nearest symmetry-respecting point to a given empirical covariance.

\subsection{Beyond the basic groups}

Once the basic theory is in place, the entire catalog of classical transforms emerges from increasingly elaborate symmetry groups. The Walsh-Hadamard transform diagonalizes covariances invariant under the elementary abelian 2-group $\mathbb{Z}_2^n$ acting by bitwise translation. Equivalently, signals indexed by binary strings whose statistics respect bitwise flipping. The Haar wavelet transform diagonalizes covariances invariant under the iterated dyadic-cyclic wreath product $\mathbb{Z}_2 \wr \mathbb{Z}_2 \wr \cdots \wr \mathbb{Z}_2$ acting by tree-automorphisms on a complete binary tree. Equivalently, signals whose statistics respect hierarchical reorganization at every scale. The hierarchical Karhunen-Lo\`eve basis~\cite{thornton2026continuous_ad} generalizes the Haar wavelet to arbitrary branching factors and to non-cyclic within-level groups. Composition rules covering direct products, wreath products, and semidirect products of groups permit the construction of matched transforms for compound symmetry classes.

The framework extends naturally from finite groups to compact Lie groups via the general Peter-Weyl theorem. The continuous Fourier series is the matched transform for circle-rotation-invariant kernels on $S^1$, the continuous Fourier transform is the matched transform for translation-invariant kernels on $\mathbb{R}$ (after passage through the Plancherel theorem for non-compact groups), the continuous Fourier cosine series is the matched transform for $O(2)$-invariant kernels on the half-circle, and the spherical harmonics are the matched transform for $SO(3)$-invariant kernels on $S^2$. Mercer's theorem and the continuous Karhunen-Lo\`eve theorem replace the matrix eigendecomposition; the underlying group-theoretic principle is identical.

\subsection{Related work}
\label{subsec:related_work}

Several existing research threads overlap or are similar to aspects of the unification described here. Each of these capture an important piece of the framework without completing it. We describe them in turn and contrast them with the AD viewpoint, with the goal of positioning the present contribution within the surrounding intellectual landscape and to acknowledge the important work that has come before.

\emph{Algebraic signal processing (ASP).} P\"uschel and Moura~\cite{puschel2008algebraic1, puschel2008algebraic2} developed an algebraic framework that organizes signal processing systems into \emph{signal models}, each consisting of an algebra, a module over that algebra, and a designated shift operator. Given a signal model, the corresponding Fourier transform is derived algorithmically as the basis that diagonalizes the algebra. ASP recovers the DFT, DCT, DST families, and related transforms with remarkable economy, and it has spawned a productive program of derived fast algorithms. The starting point for ASP is the choice of shift operator: time translation gives the DFT, the symmetric time-and-space shift gives the DCT, and so on. The transform is downstream of the operator. ASP does not, however, address the inverse question of how to choose the operator from data, nor does it offer a discovery procedure for the matched algebra of an empirically estimated covariance. The framework is also closer to a polynomial-algebra view than to a representation-theoretic one, which makes the extension to genuinely non-abelian symmetries such as the iterated wreath product less immediate.

\emph{Group harmonic analysis.} The harmonic-analysis-on-groups literature, with foundational textbook treatments by Diaconis~\cite{diaconis1988group} (statistical applications) and Terras~\cite{terras1999fourier} (computational), and computational developments by Rockmore~\cite{rockmore1995applications}, Maslen and Rockmore~\cite{maslen1997generalized}, and Foote, Mirchandani, Rockmore, Healy, and Olson~\cite{foote2000wreath, mirchandani2003wreath}, develops the formal apparatus for Fourier analysis on finite and compact groups, including efficient algorithms for the wreath product and other compound structures. The classical transforms (DFT, Walsh-Hadamard, and others) appear naturally as Fourier transforms on specific groups in this literature, and the wreath product appears explicitly in connection with multiresolution analysis. The group is presupposed as part of the modeling: given $G$, the literature constructs the corresponding Fourier transform and its fast algorithm. The matched-group discovery question of identifying the correct $G$ for a given covariance is not addressed, and the literature does not state the unification claim that the entire classical catalog of transforms shares a single representation-theoretic principle.

\emph{Transform coding.} The transform coding literature, beginning with Jayant and Noll~\cite{jayant1984digital} and extended by Mallat~\cite{mallat1989multiresolution}, Vetterli, Kova\v{c}evi\'c, and Goyal~\cite{vetterli2014foundations}, and Goyal~\cite{goyal2001theoretical}, develops the rate-distortion-optimal theory of coding under an assumed transform. Given a transform basis, the optimal bit-allocation strategy across coefficients can be computed, and the resulting compression performance can be characterized as a function of the source statistics. The transform is a design parameter in this framework, selected by the engineer based on assumed signal characteristics (the DCT for AR(1)-like signals, the DFT for stationary signals, the wavelet for piecewise-smooth signals). The transform coding literature does not provide a theoretical principle for selecting the transform from the signal class; the choice is informed by the same kind of feature-to-kernel matching that we identify as the ad hoc practice this paper seeks to replace.

\emph{Equivariant machine learning.} A vigorous recent literature on equivariant deep learning, including the group-equivariant convolutional networks of Cohen and Welling~\cite{cohen2016group}, the geometric deep learning manifesto of Bronstein et al.~\cite{bronstein2017geometric}, equivariant multilayer perceptrons for arbitrary matrix groups~\cite{finzi2021practical}, group-equivariant self-attention~\cite{romero2021group}, and equivariant maps for hierarchical structures~\cite{wang2020equivariant}, designs neural network architectures whose layers commute with the action of a specified symmetry group. The group is typically chosen as part of the architecture, on prior knowledge of the data's symmetry. A parallel symmetry-discovery thread has emerged: Augerino~\cite{benton2020learning} learns invariances by augmentation, Quessard et al.~\cite{quessard2020learning} learn the group structure of dynamical environments, LieGAN~\cite{yang2023generative} discovers Lie group structure from generative-adversarial training, and the group-algebraic tensors of Hoyos et al.~\cite{hoyos2026group} learn a physical symmetry from labelled training data through an equivariant model. These discovery procedures operate on neural network parameters via gradient-based optimization rather than on second-order statistics of the data, and they target invariance of the architecture rather than diagonalization of the covariance. The two perspectives are complementary: the equivariant deep learning literature designs architectures that respect a given symmetry, while the AD framework discovers the matched group of the data and computes the corresponding optimal linear transform, which then serves naturally as an equivariant first layer for a deep network.

\emph{Structured covariance estimation.} The structured covariance estimation literature in statistics, including the Toeplitz testing of Anderson~\cite{anderson1948toeplitz}, the banded estimation of Bickel and Levina~\cite{bickel2008regularized}, the group-symmetric regularization of Shah and Chandrasekaran~\cite{shah2012group}, the optimal-rate Toeplitz estimation of Cai et al.~\cite{cai2011optimal}, and the well-conditioned shrinkage of Ledoit and Wolf~\cite{ledoit2004well}, develops estimators for covariances that respect a posited structural form (Toeplitz, banded, sparse, low-rank, group-symmetric). The structure type is an analyst input: the practitioner posits Toeplitz structure, or banded structure, or invariance under a specific group, and the methods deliver an estimator that improves on the unstructured sample covariance under that assumption. Shah and Chandrasekaran~\cite{shah2012group} in particular treats the group-symmetric case and is the closest existing work to the matched-group framework, but their procedure still requires the group as input and does not provide a discovery algorithm.

\emph{What is new.} None of these threads claims the unification we propose: that every classical signal-processing transform is the matched eigenbasis of every covariance invariant under a specific group, with the columns constructed from irreducible matrix elements via a uniform representation-theoretic procedure. The Algebraic Diversity framework provides three contributions that together constitute the unification: a single mathematical principle (the multiplicity-free decomposition theorem of Section~\ref{sec:main}) that mechanizes the matched-transform construction for any group; a polynomial-time discovery procedure~\cite{thornton2026double_commutator} for the matched group of an empirical covariance, removing the need to specify the group as a design parameter; and a continuous extension via Peter-Weyl that recovers the classical continuous transforms by the same principle. The five existing threads above each capture one aspect of this triad: ASP and group harmonic analysis articulate the construction given the operator or group; transform coding articulates the optimality under an assumed transform; equivariant ML articulates the architectural consequences of a given symmetry; and structured covariance estimation articulates statistical estimation under a posited structure. The unification appears when these are assembled around the matched group as the foundational object.

\subsection{Preview: data-driven matched-group discovery}
\label{subsec:preview_discovery}

The matched-group principle developed in the body of this paper is conceptually clean, given the matched group, the transform follows mechanically, but the practical obstacle is that the matched group is rarely known in advance. The data scientist with an estimated covariance is in the position of needing to identify the underlying symmetry before the matched transform can be computed. A na\"ive enumeration over subgroups of $S_M$ is exponential in the signal dimension $M$, which makes the question whether the matched-group discovery problem is tractable a substantive theoretical concern. Section~\ref{sec:discovery} addresses this question and reports that the answer is positive: matched-group discovery can be performed in polynomial time by a relaxation that casts the search as a generalized eigenvalue problem in a double-commutator form. The polynomial-time algorithm, developed in detail in the companion manuscript~\cite{thornton2026double_commutator}, closes the operational loop: the practitioner with an empirical covariance can mechanically discover the matched group and then mechanically construct the matched transform, without any expert judgment about which classical transform to apply.

\subsection{Organization}

Section~\ref{sec:background} reviews the necessary background on finite group representations, including Schur's lemma, characters, and the Peter-Weyl theorem. Section~\ref{sec:main} states and proves the main theorem: when a finite group acts multiplicity-freely on a signal space, the eigenbasis of every invariant covariance is a fixed unitary whose columns are constructed from the irreducible matrix elements of the group. Section~\ref{sec:trivial} treats the trivial-group case, which recovers the data-dependent Karhunen-Lo\`eve transform. Section~\ref{sec:dft} treats the cyclic group and the DFT, with the uniform-sampling-implies-stationarity intuition made explicit. Section~\ref{sec:dct} treats the dihedral group and the DCT, with the even-extension construction and the AR(1) approximation made explicit. Section~\ref{sec:discovery} discusses the matched-group discovery problem and the polynomial-time algorithm of~\cite{thornton2026double_commutator}, along with the manifold-in-the-continuum framing. Section~\ref{sec:composition} states composition rules for direct products, wreath products, and semidirect products. Sections~\ref{sec:hadamard} through~\ref{sec:iterated_wreath} treat the further special cases (Walsh-Hadamard, Haar wavelet, hierarchical Karhunen-Lo\`eve basis). Section~\ref{sec:numerical_verification} reports numerical verification. Section~\ref{sec:lie_groups} maps the framework to compact Lie groups via the general Peter-Weyl theorem. Section~\ref{sec:continuous_transforms} derives the classical continuous transforms (Fourier series, Fourier transform, Fourier cosine series, spherical harmonics) as the corresponding matched transforms. Section~\ref{sec:discussion} discusses extensions and open questions, and Section~\ref{sec:conclusion} concludes.

\section{Background: Group Representations and Characters}
\label{sec:background}

We collect the representation-theoretic prerequisites. Standard references are Serre~\cite{serre1977linear}, Fulton and Harris~\cite{fulton1991representation}, Curtis and Reiner~\cite{curtis1962representation}, and Terras~\cite{terras1999fourier} (the last specialized to applications to harmonic analysis on finite groups).

\subsection{Groups, actions, and permutation representations}

A \emph{group} $G$ is a set equipped with a binary operation $G \times G \to G$, written multiplicatively, satisfying associativity, the existence of an identity $e$, and the existence of inverses. We restrict attention throughout to finite groups, $|G| < \infty$.

A \emph{group action} of $G$ on a set $X$ is a homomorphism $G \to \mathrm{Sym}(X)$, where $\mathrm{Sym}(X)$ is the symmetric group on $X$. Equivalently, an action assigns to each $g \in G$ a permutation $g: X \to X$ such that the identity acts as the identity and the action respects group multiplication.

The \emph{permutation representation} associated to an action on a finite set $X$ with $|X| = M$ is the homomorphism $\pi: G \to \GL(\mathbb{C}^M)$ defined by $\pi(g) e_x = e_{g \cdot x}$, where $\{e_x : x \in X\}$ is the standard basis of $\mathbb{C}^M$. Each $\pi(g)$ is a permutation matrix, hence unitary.

\subsection{Specific groups used in this paper}
\label{subsec:specific_groups}

For convenient reference, the groups of principal interest in this paper, with the notation adopted throughout, are collected below. Finite groups appear in Sections~\ref{sec:main}--\ref{sec:iterated_wreath}; compact (and a few non-compact) Lie groups appear in Sections~\ref{sec:lie_groups}--\ref{sec:continuous_transforms} and in the FrFT/LCT discussion.

\begin{itemize}[leftmargin=*]
\item $\{e\}$, the \emph{trivial group}, with one element. Corresponds to the no-symmetry limit; matched transform is the data-dependent Karhunen-Lo\`eve transform (Section~\ref{sec:trivial}).
\item $\mathbb{Z}_M$, the \emph{cyclic group of order $M$}, generated by a single element of order $M$. Matched transform is the DFT (Section~\ref{sec:dft}).
\item $D_M = \mathbb{Z}_M \rtimes \mathbb{Z}_2$, the \emph{dihedral group of order $2M$}, generated by a rotation and a reflection. Acting on the $M$-cycle its matched transform is the real Fourier (Hartley) basis (Remark~\ref{rem:hartley}); acting as $D_{2M}$ on the even-reflected extension $\mathbb{Z}_{2M}$ it yields the DCT (Section~\ref{sec:dct}).
\item $S_M$, the \emph{symmetric group} on $M$ symbols, the group of all $M!$ permutations of $\{0, 1, \ldots, M-1\}$. Appears as the automorphism group of the complete graph $K_M$ (Section~\ref{subsec:graph_sp}) and as the across-block exchangeability factor in hybrid wreath products.
\item $\mathbb{Z}_2^n$, the \emph{elementary abelian $2$-group of order $2^n$}, the direct product of $n$ copies of $\mathbb{Z}_2$. Matched transform is the Walsh-Hadamard transform (Section~\ref{sec:hadamard}); also the matched group of the Reed-Muller and arithmetic transforms with different basis choices (Section~\ref{sec:rm_arithmetic}).
\item $G_1 \times G_2$, the \emph{direct product}, the cartesian product with componentwise group operation. Composition rule (Theorem~\ref{thm:direct_product_rule}) gives the tensor product of the constituent matched transforms; the 2D DFT is the canonical example.
\item $G \wr H = G^n \rtimes H$, the \emph{wreath product} of $G$ and $H$, with $H$ acting on $n$ copies of $G$ by permutation. The \emph{iterated dyadic-cyclic wreath product} $W_L = \mathbb{Z}_2 \wr \mathbb{Z}_2 \wr \cdots \wr \mathbb{Z}_2$ ($L$ factors) has order $2^{2^L - 1}$ and is the matched group of the Haar wavelet (Section~\ref{sec:iterated_wreath}).
\item $G_1 \wr G_2 \wr \cdots \wr G_L$, the \emph{general iterated wreath product}, the matched group of hierarchical Karhunen-Lo\`eve bases on $L$-level rooted trees (Theorem~\ref{thm:iterated_wreath}).
\item $G_0 \wr S_K$, the \emph{hybrid wreath product} of a within-block group $G_0$ and the symmetric group $S_K$ on $K$ blocks. Discussed as the fifth proven-solvable class in Section~\ref{subsec:five_classes}.
\item $U(1) \cong S^1$, the \emph{unit circle} as a compact one-dimensional Lie group. Matched transform on $L^2(S^1)$ is the Fourier series (Section~\ref{subsec:fourier_series}).
\item $SO(2)$, the \emph{special orthogonal group} in two dimensions, the rotation group of the plane. Acts on $L^2(\mathbb{R})$ via the metaplectic representation; matched transform is the Hermite-Gauss basis (Section~\ref{subsec:frft}).
\item $SO(3)$, the \emph{rotation group of three-dimensional space}. Matched transform on $L^2(S^2)$ is the spherical-harmonic basis (Section~\ref{subsec:spherical_harmonics}); the radial component on $L^2(\mathbb{R}^d)$ for $d = 2\nu + 2$ gives the continuous Hankel transform of order $\nu$ (Section~\ref{subsec:hankel}).
\item $SO(d)$, the \emph{rotation group in $d$ dimensions} for general $d \geq 2$. Generalization of $SO(3)$ to higher dimensions; appears in the higher-dimensional spherical-harmonic and Hankel-transform discussion (Sections~\ref{subsec:spherical_harmonics} and~\ref{subsec:hankel}).
\item $SL(2, \mathbb{R})$, the \emph{special linear group} of real $2 \times 2$ matrices of unit determinant. Acts on $L^2(\mathbb{R})$ via the metaplectic representation; the family of unitary actions parameterized by elements of $SL(2, \mathbb{R})$ is the linear canonical transform (LCT) (Remark~\ref{rem:lct}).
\item $\mathrm{Aff}(\mathbb{R}_+) = \{(a, b) : a > 0, b \in \mathbb{R}\}$, the \emph{affine ``$ax + b$'' group} on the positive real line. Matched transform is the continuous wavelet transform (Section~\ref{subsec:continuous_wavelet}).
\end{itemize}

For brevity we write ``$G$ acts on $\mathbb{C}^M$'' for ``$\pi: G \to U(M)$ is a unitary representation,'' suppressing the representation $\pi$ when context makes it clear. The same convention applies to actions on continuous function spaces.

\subsection{Unitary representations}

\begin{definition}[Unitary representation]
\label{def:unitary_rep}
A \emph{unitary representation} of a finite group $G$ on a finite-dimensional complex inner product space $V$ is a homomorphism $\pi: G \to U(V)$, where $U(V)$ is the unitary group on $V$. The dimension of $V$ is called the \emph{dimension} of the representation, written $\dim \pi$.
\end{definition}

Two representations $\pi: G \to U(V)$ and $\pi': G \to U(V')$ are \emph{equivalent}, written $\pi \cong \pi'$, if there exists a unitary isomorphism $T: V \to V'$ with $\pi'(g) = T \pi(g) T^{-1}$ for every $g \in G$. The set of equivalence classes of irreducible unitary representations is denoted $\hat{G}$, the \emph{dual} or \emph{unitary dual} of $G$.

\begin{definition}[Subrepresentation, irreducibility]
\label{def:irrep}
A subspace $W \subseteq V$ is \emph{invariant} under $\pi$ if $\pi(g) W \subseteq W$ for every $g \in G$. The restriction $\pi|_W : G \to U(W)$ is then a unitary subrepresentation. A representation is \emph{irreducible} if its only invariant subspaces are $\{0\}$ and $V$ itself. A representation that is not irreducible is \emph{reducible}.
\end{definition}

\begin{theorem}[Maschke's theorem]
\label{thm:maschke}
Every finite-dimensional unitary representation of a finite group is completely reducible: it decomposes as a direct sum of irreducible subrepresentations.
\end{theorem}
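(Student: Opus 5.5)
The plan is to argue by strong induction on $\dim V$, using orthogonal complements; unitarity is what makes the complement of an invariant subspace again invariant. The base case covers $\dim V = 0$ (the empty direct sum) and $\dim V = 1$, where the representation is trivially irreducible. For the inductive step, suppose $\pi: G \to U(V)$ is unitary and $\dim V = n$. If $\pi$ is irreducible, there is nothing to prove. Otherwise Definition~\ref{def:irrep} provides an invariant subspace $W$ with $\{0\} \neq W \neq V$.

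The key step is to show that the orthogonal complement $W^\perp$, taken with respect to the given inner product on $V$, is also invariant. Take $u \in W^\perp$ and $g \in G$. For every $w \in W$, unitarity gives $\langle \pi(g) u, w\rangle = \langle u, \pi(g)^{*} w\rangle = \langle u, \pi(g^{-1}) w\rangle$. The vector $\pi(g^{-1}) w$ lies in $W$ by invariance, so this inner product is $0$. Here we use that $\pi$ is a homomorphism, so $\pi(g)^{-1} = \pi(g^{-1})$. Hence $\pi(g)u \in W^\perp$.

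It follows that $V = W \oplus W^\perp$ is an orthogonal decomposition into two unitary subrepresentations, each of dimension strictly less than $n$. The induction hypothesis decomposes each of them as a direct sum of irreducible subrepresentations. Concatenating the two decompositions gives the required decomposition of $V$, and that decomposition is in fact orthogonal.

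The argument has no real obstacle, because the statement already assumes a unitary representation. The only subtle point is the invariance of $W^\perp$, which depends entirely on $\pi(g)^{*} = \pi(g^{-1})$. If one wanted the theorem for an arbitrary representation of a finite group on a finite-dimensional space, the extra step would be averaging. One defines the inner product $\langle u, v\rangle_G = \frac{1}{|G|}\sum_{h\in G}\langle \pi(h)u, \pi(h)v\rangle$. This is positive definite, being an average of positive definite forms, and it is $G$-invariant by reindexing the sum. With respect to it every $\pi(g)$ is unitary, which reduces the general case to the argument above. Finiteness of $G$ is used only in this averaging step; for compact groups the sum becomes a Haar integral, in line with the Peter-Weyl extension used later.
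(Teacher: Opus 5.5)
Your proof is correct and follows the same route as the paper's sketch: $W^\perp$ is invariant because $\pi(g)^* = \pi(g^{-1})$, and the splitting is then iterated, which you make precise as strong induction on $\dim V$. The closing averaging argument is not needed here, since the statement already assumes unitarity, but it is correct and is the standard argument in the Serre reference the paper cites.
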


\begin{proof}[Sketch] If $W \subseteq V$ is an invariant subspace, its orthogonal complement $W^\perp$ is also invariant by unitarity of the action. Iterate. See~\cite[Section 1.4]{serre1977linear}.
\end{proof}

\subsection{Schur's lemma and the structure of the commutant}

\begin{theorem}[Schur's lemma]
\label{thm:schur}
Let $\pi: G \to U(V)$ and $\pi': G \to U(V')$ be two irreducible unitary representations of $G$. Let $T: V \to V'$ be a linear map such that $T \pi(g) = \pi'(g) T$ for every $g \in G$ (such a $T$ is called an \emph{intertwiner}). Then:
\begin{enumerate}
\item[(i)] If $\pi \not\cong \pi'$, then $T = 0$.
\item[(ii)] If $\pi = \pi'$, then $T = \lambda I_V$ for some scalar $\lambda \in \mathbb{C}$.
\end{enumerate}
\end{theorem}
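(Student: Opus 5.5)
The plan is the standard kernel/image argument for (i), followed by an eigenvalue argument for (ii) that uses the fact that we work over $\mathbb{C}$. Only irreducibility from Definition~\ref{def:irrep} and finite-dimensionality of the spaces are needed. In particular, Maschke's theorem (Theorem~\ref{thm:maschke}) is not required.

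The first step is to show that $\ker T \subseteq V$ and $\im T \subseteq V'$ are invariant subspaces.
\begin{itemize}
\item If $v \in \ker T$, then $T\pi(g)v = \pi'(g)Tv = 0$, so $\pi(g)v \in \ker T$.
\item If $w = Tv$, then $\pi'(g)w = T\pi(g)v \in \im T$.
\end{itemize}
Irreducibility of $\pi$ forces $\ker T \in \{\{0\}, V\}$, and irreducibility of $\pi'$ forces $\im T \in \{\{0\}, V'\}$.

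Hence either $T = 0$, or $\ker T = \{0\}$ and $\im T = V'$. In the second case $T$ is a linear bijection, so $\dim V = \dim V'$ and $\pi'(g) = T\pi(g)T^{-1}$ for all $g$. This conjugation is by an invertible map, but it need not be unitary, while the paper defines equivalence through a \emph{unitary} isomorphism. I expect this to be the only delicate point. I will close it with polar decomposition. Write $T = U P$ with $P = (T^*T)^{1/2}$ positive definite and $U$ unitary.

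Take adjoints in the intertwining relation and use $\pi(g)^* = \pi(g^{-1})$, $\pi'(g)^* = \pi'(g^{-1})$ to get $T^*\pi'(g) = \pi(g)T^*$. It follows that $T^*T$ commutes with every $\pi(g)$. Then $P$, being a polynomial in $T^*T$, also commutes with every $\pi(g)$. Consequently $U = TP^{-1}$ satisfies $U\pi(g) = \pi'(g)U$, so $U$ is a unitary intertwiner and $\pi \cong \pi'$. Contrapositively, if $\pi \not\cong \pi'$ then $T = 0$, which is (i).

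For (ii), take $\pi = \pi'$, so $T: V \to V$. Since $V$ is a nonzero finite-dimensional complex space, the characteristic polynomial of $T$ has a root $\lambda \in \mathbb{C}$. The map $T - \lambda I_V$ is again an intertwiner, because $I_V$ commutes with every $\pi(g)$, and it has nontrivial kernel. By the kernel step above, that kernel is a nonzero invariant subspace, so it equals $V$. Therefore $T = \lambda I_V$.

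The only place where the ground field matters is the existence of the eigenvalue $\lambda$, which uses algebraic closure of $\mathbb{C}$. The main obstacle is the unitary-versus-invertible equivalence issue in (i), and the polar decomposition argument handles it.
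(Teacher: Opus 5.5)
Your proof is correct and follows the same route as the paper. For (i) you show that kernel and image are invariant subspaces, and for (ii) an eigenvalue $\lambda$ of $T$ makes $T - \lambda I_V$ an intertwiner whose kernel is nonzero and hence all of $V$. Your polar-decomposition step supplies a detail the paper's proof skips: the paper defines $\pi \cong \pi'$ through a \emph{unitary} isomorphism, yet its proof of (i) only asserts that a bijective intertwiner contradicts $\pi \not\cong \pi'$, and your argument that $(T^*T)^{1/2}$ commutes with $\pi$, so that $TP^{-1}$ is a unitary intertwiner, closes exactly that gap.
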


\begin{proof}
For (i), the kernel $\ker T \subseteq V$ is $\pi$-invariant: if $v \in \ker T$, then $T(\pi(g) v) = \pi'(g) T v = 0$, so $\pi(g) v \in \ker T$. By irreducibility of $\pi$, $\ker T \in \{0, V\}$. Similarly, the image $\im T \subseteq V'$ is $\pi'$-invariant, so $\im T \in \{0, V'\}$. The only consistent combinations giving $T \neq 0$ are $\ker T = 0$ and $\im T = V'$, making $T$ an isomorphism, which contradicts $\pi \not\cong \pi'$.

For (ii), $V = V'$ and $T$ is an intertwiner. Let $\lambda$ be an eigenvalue of $T$, which exists because $\mathbb{C}$ is algebraically closed. Then $T - \lambda I$ is also an intertwiner, and $\ker(T - \lambda I) \neq 0$. By the argument above, $\ker(T - \lambda I) = V$, so $T = \lambda I$.
\end{proof}

The most important consequence of Schur's lemma for our purposes concerns the \emph{commutant algebra}.

\begin{definition}[Commutant]
\label{def:commutant}
The \emph{commutant} of a representation $\pi: G \to U(V)$ is the algebra
\[
\mathcal{A}_\pi = \{T \in \End(V) : T \pi(g) = \pi(g) T \text{ for all } g \in G\}.
\]
\end{definition}

The commutant is a $*$-subalgebra of $\End(V)$ (closed under composition and Hermitian conjugation) and is sometimes called the \emph{centralizer} or \emph{$G$-intertwiner algebra}.

\subsection{Isotypic decomposition and the structure theorem}

\begin{definition}[Isotypic component]
\label{def:isotypic}
Let $\pi: G \to U(V)$ be a unitary representation and $\rho \in \hat{G}$ an irreducible representation. The \emph{$\rho$-isotypic component} of $V$, denoted $V^{(\rho)}$, is the sum of all subrepresentations of $V$ isomorphic to $\rho$. The integer $m_\rho := \dim V^{(\rho)} / \dim \rho$ is the \emph{multiplicity} of $\rho$ in $\pi$.
\end{definition}

\begin{theorem}[Isotypic decomposition]
\label{thm:isotypic}
Every unitary representation $\pi: G \to U(V)$ decomposes as an orthogonal direct sum of isotypic components,
\[
V = \bigoplus_{\rho \in \hat{G}} V^{(\rho)},
\]
and each isotypic component admits a (non-canonical) further decomposition
\[
V^{(\rho)} \cong \mathbb{C}^{m_\rho} \otimes V_\rho,
\]
where $V_\rho$ is a fixed model space for $\rho$ and $\mathbb{C}^{m_\rho}$ is the corresponding \emph{multiplicity space}.
\end{theorem}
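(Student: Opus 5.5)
Three things must be shown: that the isotypic components are mutually orthogonal, that together they span $V$, and that each one factors as multiplicity space tensor model space. We use Maschke's theorem (Theorem~\ref{thm:maschke}) for existence and Schur's lemma (Theorem~\ref{thm:schur}) for the structure, working in this order.

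\emph{Step 1: spanning.} By Theorem~\ref{thm:maschke}, write $V = W_1 \oplus \cdots \oplus W_r$ as an orthogonal sum of irreducible subrepresentations. Each $W_i$ is isomorphic to exactly one $\rho \in \hat{G}$, so $W_i \subseteq V^{(\rho)}$ for that $\rho$. Hence $V = \sum_{\rho} V^{(\rho)}$.

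\emph{Step 2: orthogonality.} Let $U \cong \rho$ and $U' \cong \rho'$ be irreducible subrepresentations with $\rho \not\cong \rho'$. Consider the map $T = P_{U'}|_U : U \to U'$, where $P_{U'}$ is the orthogonal projection onto $U'$. Because $\pi$ is unitary and $U'$ is invariant, $U'^{\perp}$ is also invariant, so $P_{U'}$ commutes with every $\pi(g)$. Therefore $T$ is an intertwiner between inequivalent irreducibles, and Schur's lemma (i) gives $T = 0$, i.e.\ $U \perp U'$. Summing over all such pairs yields $V^{(\rho)} \perp V^{(\rho')}$, so the sum in Step 1 is an orthogonal direct sum. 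As a by-product, $V^{(\rho)}$ is exactly the sum of those $W_i$ in the Maschke decomposition that are isomorphic to $\rho$. If $W_i \not\cong \rho$, it is orthogonal to $V^{(\rho)}$ and hence contributes nothing to it. This shows $V^{(\rho)} = \bigoplus_{W_i \cong \rho} W_i$.

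\emph{Step 3: tensor factorization.} Let $m$ be the number of $W_i$ isomorphic to $\rho$. For each such $W_i$, fix a unitary intertwiner $T_i : V_\rho \to W_i$; a unitary one exists by the definition of equivalence. Define
\[
\Phi : \mathbb{C}^{m} \otimes V_\rho \to V^{(\rho)}, \qquad \Phi(e_i \otimes v) = T_i v .
\]
This map is $G$-equivariant when $G$ acts trivially on $\mathbb{C}^m$. It is unitary because the $W_i$ are mutually orthogonal and each $T_i$ is an isometry. Counting dimensions gives $\dim V^{(\rho)} = m \dim\rho$, so $m = m_\rho$ as defined in Definition~\ref{def:isotypic}, and $\Phi$ realizes $V^{(\rho)} \cong \mathbb{C}^{m_\rho} \otimes V_\rho$. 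The non-canonicity in the statement reflects the choice of the $W_i$ and $T_i$. By Schur's lemma (ii), any two such choices differ by an element of $U(m_\rho)$ acting on the multiplicity factor.

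The main obstacle is Step 2. It is what makes $V^{(\rho)}$ well defined independently of the chosen Maschke decomposition, and so makes $m_\rho$ an invariant. The point requiring care there is that the orthogonal projection onto an invariant subspace is an intertwiner, which uses unitarity in an essential way.
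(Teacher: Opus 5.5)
Your proof is correct and follows the same route as the paper: Maschke's theorem gives the decomposition into irreducibles, Schur's lemma applied to the orthogonal projection gives orthogonality of inequivalent pieces, and a choice of unitary intertwiners $V_\rho \to W_i$ realizes $V^{(\rho)} \cong \mathbb{C}^{m_\rho} \otimes V_\rho$. You also supply details the paper leaves implicit, namely that $V^{(\rho)}$ equals the sum of the $\rho$-pieces of the Maschke decomposition (so $m_\rho$ is an integer invariant) and that $\Phi$ is unitary; one small correction is that $V^{(\rho)}$ is already intrinsically defined, so Step~2 identifies it with those pieces rather than making it well defined.
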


\begin{proof}
By Maschke's theorem (Theorem~\ref{thm:maschke}), $V$ decomposes as a direct sum of irreducible subrepresentations. Two irreducibles in this decomposition are either equivalent or orthogonal (by Schur's lemma applied to the projection); summing equivalent pieces gives the isotypic components. The decomposition $V^{(\rho)} \cong \mathbb{C}^{m_\rho} \otimes V_\rho$ chooses an arbitrary intertwiner basis for the $m_\rho$ copies of $\rho$ in $V^{(\rho)}$.
\end{proof}

\begin{theorem}[Structure of the commutant]
\label{thm:commutant_structure}
The commutant $\mathcal{A}_\pi$ decomposes as a direct sum of full matrix algebras,
\[
\mathcal{A}_\pi \cong \bigoplus_{\rho \in \hat{G}, \, m_\rho \geq 1} \End(\mathbb{C}^{m_\rho}).
\]
Equivalently, there exists a unitary $U \in U(V)$ such that for every $T \in \mathcal{A}_\pi$,
\[
U^* T U \;=\; \bigoplus_{\rho \in \hat{G}} I_{d_\rho} \otimes \tilde{T}_\rho,
\]
where $d_\rho = \dim \rho$ and $\tilde{T}_\rho \in \End(\mathbb{C}^{m_\rho})$.
\end{theorem}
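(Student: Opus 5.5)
We will prove the statement by using Theorem~\ref{thm:isotypic} to fix coordinates and then applying Schur's lemma (Theorem~\ref{thm:schur}) block by block.

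\textbf{Coordinates.} By Theorem~\ref{thm:isotypic}, $V=\bigoplus_\rho V^{(\rho)}$ orthogonally. In each $V^{(\rho)}$ we choose $m_\rho$ mutually orthogonal irreducible subrepresentations $W_{\rho,1},\dots,W_{\rho,m_\rho}$, each equivalent to the model $V_\rho$. We also choose unitary intertwiners $\phi_{\rho,i}:V_\rho\to W_{\rho,i}$.

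Fix an orthonormal basis $\{f_1,\dots,f_{d_\rho}\}$ of $V_\rho$. The vectors $\phi_{\rho,i}(f_k)$ then form an orthonormal basis of $V$. We take $U$ to be the matrix with these vectors as columns, ordered so that $\pi(g)$ becomes $\bigoplus_\rho I_{m_\rho}\otimes\rho(g)$. Reordering to the convention $I_{d_\rho}\otimes\tilde T_\rho$ of the statement is only a permutation of columns (a commutation matrix), so we may use whichever tensor order is convenient.

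\textbf{Blocks of $U^*TU$.} Let $T\in\mathcal{A}_\pi$ and let $P_{\rho,i}$ be the orthogonal projection onto $W_{\rho,i}$. For each pair of indices, $P_{\sigma,j}T|_{W_{\rho,i}}$ is an intertwiner $W_{\rho,i}\to W_{\sigma,j}$. To see this, note that $P_{\sigma,j}$ commutes with $\pi(g)$, because $W_{\sigma,j}$ and its orthogonal complement are both invariant by unitarity.

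Consider $\phi_{\sigma,j}^{-1}P_{\sigma,j}T\phi_{\rho,i}$. Part (i) of Schur's lemma makes it zero when $\rho\not\cong\sigma$. When $\rho=\sigma$, part (ii) makes it equal to $t^{(\rho)}_{ji}I_{d_\rho}$ for some scalar $t^{(\rho)}_{ji}$. Assembling these blocks in the chosen ordering gives
\[
U^*TU=\bigoplus_\rho \tilde T_\rho\otimes I_{d_\rho},\qquad \tilde T_\rho=(t^{(\rho)}_{ji}).
\]
The key point is that $U$ is independent of $T$.

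\textbf{The algebra isomorphism.} The map $T\mapsto(\tilde T_\rho)_\rho$ is an injective algebra homomorphism, since conjugation by $U$ is multiplicative and the block form is preserved under products. For surjectivity, given arbitrary matrices $A_\rho\in\End(\mathbb{C}^{m_\rho})$, set
\[
T=U\Bigl(\bigoplus_\rho A_\rho\otimes I_{d_\rho}\Bigr)U^*.
\]
This $T$ commutes with $U\bigl(\bigoplus_\rho I_{m_\rho}\otimes\rho(g)\bigr)U^*=\pi(g)$, because the tensor factors act on different slots. Hence $\mathcal{A}_\pi\cong\bigoplus_\rho\End(\mathbb{C}^{m_\rho})$.

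\textbf{Expected obstacle.} The delicate step is the bookkeeping that makes the Schur scalars line up as a single matrix $\tilde T_\rho$ tensored with an identity. This requires the intertwiners $\phi_{\rho,i}$ to share one model $V_\rho$ and one fixed basis of it. Without that common choice, each block would be only some multiple of an intertwiner rather than the literal identity.

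A second point is that Schur's lemma is stated for a single irreducible $V$, so we apply it after composing with the $\phi$'s to reduce to $V_\rho\to V_\rho$. We also need $\phi_{\rho,i}$ to be unitary, which follows by normalizing: $\phi^*\phi$ is a positive scalar by Schur's lemma.
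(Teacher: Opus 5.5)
Your proof is correct and follows the paper's route: fix a basis of each isotypic component compatible with $V^{(\rho)} \cong \mathbb{C}^{m_\rho} \otimes V_\rho$, then use Schur's lemma to reduce every commuting operator to a matrix $\tilde{T}_\rho$ acting on the multiplicity factor. Your version is more complete than the paper's sketch, since you use Schur (i) explicitly to kill the cross-isotypic blocks and you verify surjectivity onto $\bigoplus_\rho \End(\mathbb{C}^{m_\rho})$, both of which the paper leaves implicit.
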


\begin{proof}
The commutant of $\pi|_{V^{(\rho)}} : G \to U(V^{(\rho)})$ is exactly $\End(\mathbb{C}^{m_\rho})$ by Schur's lemma: the action on $\mathbb{C}^{m_\rho} \otimes V_\rho$ commutes with $G$ if and only if it acts as the identity on the $V_\rho$ factor. Direct-summing across isotypic components gives the displayed decomposition. The unitary $U$ is constructed by choosing an orthonormal basis of each $V^{(\rho)}$ compatible with the $\mathbb{C}^{m_\rho} \otimes V_\rho$ factorization.
\end{proof}

\subsection{Characters and the Peter-Weyl theorem}

\begin{definition}[Character]
\label{def:character}
The \emph{character} of a representation $\pi: G \to U(V)$ is the function $\chi_\pi: G \to \mathbb{C}$ defined by $\chi_\pi(g) = \Tr(\pi(g))$.
\end{definition}

Characters are class functions (constant on conjugacy classes of $G$) and satisfy $\chi_\pi(e) = \dim \pi$. The irreducible characters $\{\chi_\rho : \rho \in \hat{G}\}$ form an orthonormal basis of the space of class functions on $G$ under the inner product
\[
\langle \chi, \chi' \rangle_G = \frac{1}{|G|} \sum_{g \in G} \chi(g) \overline{\chi'(g)}.
\]

\begin{theorem}[Peter-Weyl theorem for finite groups]
\label{thm:peter_weyl}
For each irreducible representation $\rho \in \hat{G}$ with dimension $d_\rho$, fix an orthonormal basis of the model space $V_\rho$ and let $\rho_{ij}(g) := \langle \rho(g) e_j, e_i \rangle$ denote the $(i, j)$-entry of the matrix $\rho(g)$ in this basis. Then the set of functions
\[
\Big\{ \sqrt{\tfrac{d_\rho}{|G|}} \, \rho_{ij} : \rho \in \hat{G}, \; 1 \leq i, j \leq d_\rho \Big\}
\]
forms an orthonormal basis of $L^2(G)$. In particular, the total dimension count is $\sum_{\rho \in \hat{G}} d_\rho^2 = |G|$.
\end{theorem}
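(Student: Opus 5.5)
The plan is to split the statement into orthonormality and completeness, and to finish completeness by a dimension count.

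\emph{Orthonormality.} Take irreducibles $\rho,\sigma\in\hat G$ and an arbitrary matrix $A\in\mathrm{Hom}(V_\sigma,V_\rho)$. Form the group average
\[
T_A=\frac{1}{|G|}\sum_{g\in G}\rho(g)\,A\,\sigma(g)^{-1}.
\]
Reindexing the sum shows that $T_A\sigma(h)=\rho(h)T_A$ for every $h\in G$, so $T_A$ is an intertwiner. Schur's lemma (Theorem~\ref{thm:schur}) then splits into two cases:
\begin{itemize}
\item if $\rho\not\cong\sigma$, then $T_A=0$;
\item if $\rho=\sigma$, then $T_A=\lambda I$, and taking traces gives $\lambda=\Tr(A)/d_\rho$.
\end{itemize}
Next we specialize $A$ to the matrix unit $E_{jl}$ and read off the $(i,m)$ entry. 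Unitarity gives $\sigma(g)^{-1}=\sigma(g)^*$, so $(\sigma(g)^{-1})_{lm}=\overline{\sigma_{ml}(g)}$. This yields the Schur orthogonality relations
\[
\frac{1}{|G|}\sum_g \rho_{ij}(g)\overline{\sigma_{ml}(g)}=\frac{\delta_{\rho\sigma}\delta_{im}\delta_{jl}}{d_\rho}.
\]
After multiplying by the normalization $\sqrt{d_\rho/|G|}$, these relations say the listed functions are orthonormal in $L^2(G)$ with the counting inner product. In particular they are linearly independent, so $\sum_\rho d_\rho^2\le |G|$.

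\emph{Completeness.} Consider the right regular representation $R$ on $L^2(G)$, given by $(R(h)f)(g)=f(gh)$. By Maschke (Theorem~\ref{thm:maschke}), $L^2(G)$ is a direct sum of irreducible subspaces. Let $W\subseteq L^2(G)$ be one of them, with $R|_W\cong\rho$, and choose an orthonormal basis $f_1,\dots,f_d$ of $W$ in which $R|_W$ acts by the matrices $\rho(h)$. Then
\[
f_j(gh)=\sum_i\rho_{ij}(h)f_i(g).
\]
Setting $g=e$ gives $f_j(h)=\sum_i f_i(e)\rho_{ij}(h)$. So each $f_j$ lies in the span of the matrix coefficients of $\rho$, hence every irreducible summand, and therefore all of $L^2(G)$, lies in the span of the $\rho_{ij}$. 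A spanning orthonormal set is a basis, and comparing dimensions gives $\sum_\rho d_\rho^2=|G|$.

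\emph{Main obstacle.} The delicate step is the index bookkeeping in the averaging argument: deriving the exact constant $1/d_\rho$ from the trace, and keeping the complex conjugations consistent with the convention $\rho_{ij}(g)=\langle\rho(g)e_j,e_i\rangle$. A second point needing care is that the coefficients of an abstract $W\cong\rho$ equal those of the fixed model $\rho$. This is harmless because the chosen basis realizes $R|_W$ by exactly the matrices $\rho(h)$. Equivalent models differ by a unitary change of basis, which only replaces the span by the same span. Finiteness of $\hat G$ is also needed for the dimension count, and it is implicit: the orthonormal family has at most $|G|$ members.
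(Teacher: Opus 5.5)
Your proof is correct. The orthogonality half is the same argument the paper sketches: Schur's lemma applied to the group-averaged operator $\sum_g \rho(g)\,T\,\rho'(g)^*$, with the trace fixing the constant $1/d_\rho$. Your handling of the conjugation convention is right. So is your choice of the counting inner product on $L^2(G)$, which is the one the normalization $\sqrt{d_\rho/|G|}$ requires.

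Where you go beyond the paper is completeness and the identity $\sum_\rho d_\rho^2=|G|$. The paper simply defers these to Serre and Fulton--Harris. The standard route there computes the character of the regular representation to get $\sum_\rho d_\rho^2=|G|$, and then an orthonormal family of $|G|$ functions is automatically a basis.

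You instead prove spanning directly. You realize each irreducible summand $W\subseteq L^2(G)$ of the right regular representation in a basis carrying the model matrices. Evaluating $f_j(gh)=\sum_i\rho_{ij}(h)f_i(g)$ at $g=e$ then puts $W$ inside the span of the $\rho_{ij}$, and the dimension identity follows as a corollary.

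This avoids character theory entirely and uses only Maschke and Schur. The character route gives, as a by-product, that each $\rho$ occurs in the regular representation with multiplicity $d_\rho$, which the paper invokes in Lemma~\ref{lem:cyclic_mf}. Your approach recovers this with one more line: for fixed $i$, the functions $\rho_{i1},\dots,\rho_{id_\rho}$ span an $R$-invariant copy of $\rho$, because $R(h)\rho_{ij}=\sum_k \rho_{kj}(h)\,\rho_{ik}$. Hence the span of all the $\rho_{ij}$ is exactly the $\rho$-isotypic component, of dimension $d_\rho^2$.
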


\begin{proof}
See~\cite[Theorem 7]{serre1977linear} or~\cite[Chapter 3]{fulton1991representation}. The orthogonality relations are
\[
\frac{1}{|G|} \sum_{g \in G} \rho_{ij}(g) \overline{\rho'_{kl}(g)} = \frac{1}{d_\rho} \delta_{\rho\rho'} \delta_{ik} \delta_{jl},
\]
which follow from Schur's lemma applied to the operator $S(T) = \sum_g \rho(g) T \rho'(g)^*$.
\end{proof}

\subsection{Multiplicity-free representations}

\begin{definition}[Multiplicity-free representation]
\label{def:multiplicity_free}
A unitary representation $\pi: G \to U(V)$ is \emph{multiplicity-free} if every irreducible representation $\rho \in \hat{G}$ appears in $\pi$ with multiplicity $m_\rho \in \{0, 1\}$.
\end{definition}

\begin{lemma}[Commutant of a multiplicity-free representation is commutative]
\label{lem:multiplicity_free_commutant}
A unitary representation $\pi$ is multiplicity-free if and only if its commutant $\mathcal{A}_\pi$ is a commutative algebra.
\end{lemma}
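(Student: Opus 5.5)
The plan is to read both directions directly off the structure theorem for the commutant (Theorem~\ref{thm:commutant_structure}), which gives an algebra isomorphism
\[
\mathcal{A}_\pi \cong \bigoplus_{\rho \in \hat{G},\, m_\rho \geq 1} \End(\mathbb{C}^{m_\rho}).
\]
A finite direct sum of algebras is commutative if and only if each summand is commutative. The matrix algebra $\End(\mathbb{C}^{m})$ is commutative if and only if $m \le 1$. Combining these two facts gives the lemma almost immediately, so the proof is essentially a careful bookkeeping of this equivalence.

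For the forward direction, assume $\pi$ is multiplicity-free, so every $m_\rho \in \{0,1\}$. Each nonzero summand is then $\End(\mathbb{C}) \cong \mathbb{C}$. Concretely, in the unitary basis $U$ of Theorem~\ref{thm:commutant_structure}, every $T \in \mathcal{A}_\pi$ takes the form $U^* T U = \bigoplus_\rho \lambda_\rho(T) I_{d_\rho}$ with scalars $\lambda_\rho(T)$. Any two operators of this form are simultaneously diagonal and hence commute. I would state this explicitly, because it is the form used later in Section~\ref{sec:main}.

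For the converse, I argue by contrapositive. Suppose some $m_\rho \geq 2$. I would exhibit two noncommuting elements of $\mathcal{A}_\pi$ of the form $E_{12} \otimes I_{d_\rho}$ and $E_{21} \otimes I_{d_\rho}$ acting on $V^{(\rho)} \cong \mathbb{C}^{m_\rho} \otimes V_\rho$, extended by zero on the other isotypic components. Their commutator is $(E_{11}-E_{22}) \otimes I_{d_\rho} \neq 0$. Both operators commute with $\pi(g)$, because $\pi(g)$ acts as $I_{m_\rho} \otimes \rho(g)$ on $V^{(\rho)}$ and preserves the orthogonal decomposition of Theorem~\ref{thm:isotypic}. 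Equivalently, these operators are the intertwiners that swap two chosen copies of $\rho$ inside $V^{(\rho)}$. This produces two noncommuting elements, so $\mathcal{A}_\pi$ is not commutative.

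The only delicate point is justifying that $\pi(g)$ really has the form $I_{m_\rho}\otimes\rho(g)$ in the chosen identification. This requires the copies of $\rho$ to be identified with the model $V_\rho$ by unitary intertwiners, with the copies mutually orthogonal. Theorem~\ref{thm:isotypic} supplies this identification, using Schur's lemma (Theorem~\ref{thm:schur}) to rescale each intertwiner to an isometry. I would note this once and otherwise treat the lemma as an immediate corollary of Theorem~\ref{thm:commutant_structure}.
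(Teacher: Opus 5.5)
Your proposal is correct and matches the paper's proof, which also reads the equivalence directly off Theorem~\ref{thm:commutant_structure} by noting that $\bigoplus_\rho \End(\mathbb{C}^{m_\rho})$ is commutative exactly when every $m_\rho \le 1$; your explicit noncommuting pair $E_{12}\otimes I_{d_\rho}$, $E_{21}\otimes I_{d_\rho}$ just spells out the converse. One small wording fix: $E_{12}\otimes I_{d_\rho}$ carries the second copy of $\rho$ onto the first and annihilates the others, so it is not a ``swap'' (that would be $(E_{12}+E_{21})\otimes I_{d_\rho}$), though this does not affect the argument.
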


\begin{proof}
By Theorem~\ref{thm:commutant_structure}, $\mathcal{A}_\pi \cong \bigoplus_\rho \End(\mathbb{C}^{m_\rho})$. This direct sum is commutative if and only if each summand is commutative, equivalently $m_\rho \leq 1$ for all $\rho$.
\end{proof}

The reader is invited to keep Lemma~\ref{lem:multiplicity_free_commutant} in mind: it is the structural property that will make the matched transform group-determined, since every $T$ in a commutative algebra is simultaneously diagonalizable by a fixed unitary basis.

\subsection{Covariance matrices and the matched group}

\begin{definition}[$G$-invariant covariance]
\label{def:g_invariant_cov}
Let $\pi: G \to U(\mathbb{C}^M)$ be a unitary representation, and let $\mathbf{R} \in \mathbb{C}^{M \times M}$ be a positive semidefinite Hermitian matrix. We say $\mathbf{R}$ is \emph{$G$-invariant} (with respect to $\pi$) if $\mathbf{R} \in \mathcal{A}_\pi$, equivalently $\pi(g) \mathbf{R} \pi(g)^* = \mathbf{R}$ for every $g \in G$.
\end{definition}

\begin{definition}[Matched group of a covariance]
\label{def:matched_group}
Let $\mathbf{R} \in \mathbb{C}^{M \times M}$ be a Hermitian PSD matrix. The \emph{matched group} of $\mathbf{R}$ is the largest subgroup $G^*(\mathbf{R}) \subseteq S_M$ of the symmetric group on $M$ symbols such that the standard permutation action $\pi : G^*(\mathbf{R}) \to U(\mathbb{C}^M)$ satisfies $\mathbf{R} \in \mathcal{A}_\pi$.
\end{definition}

The matched group is well-defined: if $\mathbf{P}_g \mathbf{R} = \mathbf{R} \mathbf{P}_g$ and $\mathbf{P}_h \mathbf{R} = \mathbf{R} \mathbf{P}_h$, then $\mathbf{P}_g \mathbf{P}_h$ and $\mathbf{P}_g^{-1}$ also commute with $\mathbf{R}$, so the set of commuting permutations forms a subgroup. The matched group is the heart of the Algebraic Diversity framework~\cite{thornton2026ad_framework}: it captures the second-order combinatorial symmetry of the underlying random process.

\section{The Main Theorem}
\label{sec:main}

We are ready to state and prove the central result of the paper. It says that whenever a finite group acts multiplicity-freely on a signal space, the eigenbasis of every invariant covariance is determined entirely by the group, and the columns of the transform matrix are constructed from the irreducible matrix elements of the group.

\begin{theorem}[Main theorem: group-determined optimal transform]
\label{thm:main}
Let $G$ be a finite group and $\pi: G \to U(\mathbb{C}^M)$ a unitary representation. Suppose $\pi$ is multiplicity-free, with isotypic decomposition
\[
\mathbb{C}^M = \bigoplus_{\rho \in \hat{G}} V^{(\rho)},
\]
where $V^{(\rho)} \cong V_\rho$ when $m_\rho = 1$ and $V^{(\rho)} = \{0\}$ when $m_\rho = 0$,
and let $S(\pi) = \{\rho \in \hat{G} : m_\rho = 1\}$ denote the support of the representation. Then:
\begin{enumerate}
\item[(i)] \emph{Existence of a fixed eigenbasis.} There exists a unitary $U_G \in U(\mathbb{C}^M)$, depending only on $G$ and $\pi$ and not on any particular covariance, such that for every $G$-invariant Hermitian PSD matrix $\mathbf{R} \in \mathcal{A}_\pi$,
\[
U_G^* \mathbf{R} U_G \;=\; \mathrm{diag}(\lambda_1, \lambda_2, \ldots, \lambda_M)
\]
is diagonal. In particular, $U_G$ is an eigenbasis of every $G$-invariant covariance.

\item[(ii)] \emph{Construction from irreducible matrix elements.} Fix an orthonormal basis of each model space $V_\rho$ for $\rho \in S(\pi)$, with corresponding matrix elements $\rho_{ij}(g) = \langle \rho(g) e_j, e_i \rangle$. The columns of $U_G$, suitably ordered, are linear combinations of the matrix-element functions $\{\rho_{ij}\}_{\rho \in S(\pi), \, 1 \leq i, j \leq d_\rho}$, evaluated on a fundamental set of orbit representatives for the action of $G$ on $\{1, \ldots, M\}$.

\item[(iii)] \emph{Uniqueness.} The basis $U_G$ is unique up to:
\begin{itemize}
\item a complex unimodular scaling of each column corresponding to a one-dimensional irrep (the scaling does not affect the eigenbasis property);
\item a unitary rotation within each $d_\rho$-dimensional irrep block when $d_\rho > 1$ (in this case the eigenvalue of any $G$-invariant $\mathbf{R}$ is $d_\rho$-fold degenerate on the $\rho$-isotypic subspace, and any orthonormal basis of $V^{(\rho)}$ is a valid choice).
\end{itemize}

\item[(iv)] \emph{Eigenvalues.} The eigenvalue $\lambda$ assigned to the $\rho$-isotypic subspace is the scalar $\tilde{R}_\rho \in \mathbb{C}$ of Theorem~\ref{thm:commutant_structure}, repeated $d_\rho$ times along the diagonal of $U_G^* \mathbf{R} U_G$.
\end{enumerate}
\end{theorem}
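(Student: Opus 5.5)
The plan is to deduce (i), (iii), and (iv) from the structure of the commutant (Theorem~\ref{thm:commutant_structure}) together with Schur's lemma (Theorem~\ref{thm:schur}), and to obtain (ii) from the explicit isotypic projectors. First, by Theorem~\ref{thm:isotypic} and multiplicity-freeness, $\mathbb{C}^M=\bigoplus_{\rho\in S(\pi)}V^{(\rho)}$ is an orthogonal decomposition in which each $V^{(\rho)}$ is irreducible and isomorphic to $V_\rho$, and no two summands are equivalent.

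Take $\mathbf R\in\mathcal A_\pi$ and let $P_\sigma$ denote the orthogonal projector onto $V^{(\sigma)}$. The block $P_\sigma\mathbf R|_{V^{(\rho)}}:V^{(\rho)}\to V^{(\sigma)}$ is an intertwiner, because $P_\sigma$ commutes with $\pi$. By Schur's lemma (i), this block vanishes when $\rho\neq\sigma$. By Schur's lemma (ii), transported to $V^{(\rho)}$ through a unitary isomorphism with $V_\rho$, the diagonal block is $\tilde R_\rho I$. This is the case $m_\rho=1$ of Theorem~\ref{thm:commutant_structure}.

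Now choose any orthonormal basis of each $V^{(\rho)}$ and concatenate these bases to form $U_G$. Then $U_G^*\mathbf R U_G=\bigoplus_\rho \tilde R_\rho I_{d_\rho}$, which is diagonal. Since $U_G$ was built from the subspaces $V^{(\rho)}$, which depend only on $(G,\pi)$, it does not depend on $\mathbf R$. This proves (i). Because $\mathbf R$ is Hermitian PSD, each $\tilde R_\rho$ is real and nonnegative, and it appears $d_\rho$ times on the diagonal. This proves (iv).

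For (iii), we argue uniqueness as follows. Any unitary that diagonalizes every element of $\mathcal A_\pi$ must have each column lying in a common eigenspace of the whole algebra. The projectors $P_\rho$ themselves belong to $\mathcal A_\pi$; for example, $P_\rho=\frac{d_\rho}{|G|}\sum_g\overline{\chi_\rho(g)}\pi(g)$ commutes with $\pi$. So each column must lie in a single $V^{(\rho)}$. Within $V^{(\rho)}$ every element of the algebra acts as a scalar, so any orthonormal basis of $V^{(\rho)}$ is admissible. When $d_\rho=1$, the remaining freedom is exactly a unimodular phase on that column.

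For (ii), we use the character projector formula above together with the finer projectors $P^\rho_{ij}=\frac{d_\rho}{|G|}\sum_g\overline{\rho_{ij}(g)}\pi(g)$. The orthogonality relations of Theorem~\ref{thm:peter_weyl} give $P^\rho_{ij}P^\rho_{kl}=\delta_{jk}P^\rho_{il}$, so these operators act as matrix units on $V^{(\rho)}$. Fix a vector $v$ with $P^\rho_{11}v\neq 0$ and normalize the vectors $u_i:=P^\rho_{i1}v$. These form an orthonormal basis of $V^{(\rho)}$ that transforms according to $\rho$.

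In the permutation case, take $v=e_{x_0}$ for an orbit representative $x_0$. The $y$-th coordinate of $u_i$ is then $\frac{d_\rho}{|G|}\sum_{g:\,g x_0=y}\overline{\rho_{i1}(g)}$. This is a linear combination of matrix-element functions evaluated over the orbit, which is the content of (ii). For a general unitary $\pi$ the same formula holds with $\pi(g)$ in place of permutations, and $v$ ranges over standard basis vectors.

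The main obstacle is making (ii) precise. We must show that as $v$ ranges over standard basis vectors at orbit representatives, some choice yields $P^\rho_{11}v\neq0$ for every $\rho\in S(\pi)$. This holds because the $e_{x}$ span $\mathbb{C}^M$, so the $P^\rho_{11}e_x$ span $P^\rho_{11}\mathbb{C}^M\neq0$. Moreover, $P^\rho_{11}e_{gx}=P^\rho_{11}\pi(g)e_x$ is a combination of the vectors $P^\rho_{1k}e_x$, so orbit representatives suffice. The resulting normalization constants are harmless, since (ii) only claims linear combinations.
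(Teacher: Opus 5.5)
Your proof is correct, and it takes a somewhat different route from the paper's.

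\textbf{Part (i).} The paper argues abstractly. Lemma~\ref{lem:multiplicity_free_commutant} makes $\mathcal{A}_\pi$ commutative, and a commuting family of normal matrices is simultaneously unitarily diagonalizable (Horn--Johnson). This gives existence of $U_G$ but says nothing about where its columns live. You instead apply Schur's lemma block by block to show that every $\mathbf{R}\in\mathcal{A}_\pi$ acts as $\bigoplus_\rho \tilde{R}_\rho I$ on $\bigoplus_\rho V^{(\rho)}$. That identifies $U_G$ explicitly as any concatenation of orthonormal bases of the isotypic components, yields (iv) at once, and avoids the simultaneous-diagonalization theorem.

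\textbf{Part (iii).} Your argument is stronger than the paper's. The paper only checks that phases and within-block rotations preserve the eigenbasis property, and takes for granted that each column lies in a single $V^{(\rho)}$. Your observation that the central projections $P^{(\rho)}$ are themselves $G$-invariant Hermitian PSD matrices forces this, so you get genuine necessity.

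\textbf{Part (ii).} Your vectors $P^\rho_{i1}v$ are exactly the paper's $\mathbf{u}_{\rho,i}$ of Equation~\ref{eq:basis_vectors}. Your matrix-unit relations $P^\rho_{ij}P^\rho_{kl}=\delta_{jk}P^\rho_{il}$, together with $(P^\rho_{ij})^*=P^\rho_{ji}$, supply the orthogonality and the common norm $\|P^\rho_{11}v\|$. The paper's appeal to Peter--Weyl orthogonality leaves these implicit.

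\textbf{A step in your last paragraph needs repair.} The identity $P^\rho_{11}\pi(g)=\sum_k \rho_{1k}(g)P^\rho_{1k}$ shows the following: if $P^\rho_{11}e_{gx}\neq 0$, then $P^\rho_{1k}e_x\neq 0$ for some $k$, equivalently $P^\rho_{kk}e_x\neq 0$. It does not show $P^\rho_{11}e_x\neq 0$, and that can fail.
\begin{itemize}
\item One has $\|P^\rho_{11}e_x\|^2=\frac{d_\rho}{|G|}\sum_{h\in\mathrm{Stab}(x)}\overline{\rho_{11}(h)}$. This vanishes whenever the first basis vector of $V_\rho$ is orthogonal to the $\mathrm{Stab}(x)$-fixed subspace of $V_\rho$.
\item Example: $S_3$ acting on three points, with the standard irrep, and the first basis vector chosen anti-invariant under the transposition fixing $x$. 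Then every $P^\rho_{i1}e_x$ vanishes, and your construction with column index $1$ at that representative yields nothing.
\end{itemize}
The fix is immediate: use $u_i=P^\rho_{ik}e_x$ with that $k$. The same matrix-unit relations give an orthogonal family of common norm $\|P^\rho_{kk}e_x\|$. Its coordinates $\frac{d_\rho}{|G|}\sum_{g:\,gx=y}\overline{\rho_{ik}(g)}$ are still matrix-element values over the orbit, which is all (ii) asks.
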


\begin{proof}
We prove the four claims in order.

\emph{(i) Existence.} By Lemma~\ref{lem:multiplicity_free_commutant}, the commutant $\mathcal{A}_\pi$ is a commutative algebra. By a standard result in linear algebra~\cite[Theorem 1.3.21]{horn2012matrix}, every commutative algebra of normal matrices is simultaneously diagonalizable by a single unitary basis. Since every $\mathbf{R} \in \mathcal{A}_\pi$ is Hermitian, hence normal, there exists a unitary $U_G$ that simultaneously diagonalizes all of $\mathcal{A}_\pi$. The diagonalization is independent of the specific $\mathbf{R}$.

\emph{(ii) Construction from matrix elements.} By Theorem~\ref{thm:commutant_structure}, the unitary $U$ that simultaneously block-diagonalizes the commutant maps each $V^{(\rho)}$ to the canonical $\mathbb{C}^{m_\rho} \otimes V_\rho \cong V_\rho$ form (since $m_\rho = 1$). The columns of $U_G$ corresponding to the $\rho$-isotypic block are precisely an orthonormal basis of $V^{(\rho)}$ that is mapped under $U_G^*$ to a chosen orthonormal basis of $V_\rho$. Such a basis can be constructed explicitly as follows. Let $P^{(\rho)} : \mathbb{C}^M \to V^{(\rho)}$ denote the orthogonal projection onto the $\rho$-isotypic component, given by the central projection formula
\begin{equation}
\label{eq:central_projection}
P^{(\rho)} = \frac{d_\rho}{|G|} \sum_{g \in G} \overline{\chi_\rho(g)} \, \pi(g).
\end{equation}
For each $i \in \{1, \ldots, d_\rho\}$, define vectors $\mathbf{u}_{\rho, i} \in V^{(\rho)}$ by
\begin{equation}
\label{eq:basis_vectors}
\mathbf{u}_{\rho, i} = \frac{d_\rho}{|G|} \sum_{g \in G} \rho_{i1}(g)^* \, \pi(g) \mathbf{v}_0
\end{equation}
for an arbitrary cyclic vector $\mathbf{v}_0$ (chosen so that the result is non-zero; such a $\mathbf{v}_0$ exists whenever $m_\rho = 1$ by inspection of the regular representation argument). The Peter-Weyl orthogonality (Theorem~\ref{thm:peter_weyl}) ensures that $\{\mathbf{u}_{\rho, i}\}_{i=1}^{d_\rho}$ is an orthogonal set spanning $V^{(\rho)}$; normalization gives an orthonormal basis. The columns of $U_G$ in the $\rho$-block are therefore explicit linear combinations of $\pi(g) \mathbf{v}_0$ for $g$ in a transversal of orbit representatives, weighted by the matrix elements $\rho_{i1}(g)$.

\emph{(iii) Uniqueness.} For one-dimensional irreps, the basis vector in each $\rho$-block is unique up to scaling by a complex unit. For higher-dimensional irreps, the $\rho$-block carries a multiplicity-one copy of $V_\rho$, but within $V_\rho$ any orthonormal basis is equally valid; this corresponds to a unitary rotation. Because every $\mathbf{R} \in \mathcal{A}_\pi$ acts as the scalar $\tilde{R}_\rho \cdot I_{V^{(\rho)}}$ on $V^{(\rho)}$ (the multiplicity-one consequence of Theorem~\ref{thm:commutant_structure}), any orthonormal basis of $V^{(\rho)}$ is an eigenbasis of $\mathbf{R}$ with the same eigenvalue.

\emph{(iv) Eigenvalues.} The eigenvalue of $\mathbf{R}$ on $V^{(\rho)}$ is the scalar $\tilde{R}_\rho$, repeated $d_\rho$ times because $\dim V^{(\rho)} = d_\rho$.
\end{proof}

\begin{remark}[Why ``optimal transform''?]
The Karhunen-Lo\`eve transform is optimal in the mean-squared sense: it minimizes the expected approximation error among all $K$-term truncations of an orthonormal expansion, for every $K$~\cite{karhunen1947methoden, loeve1948fonctions, vetterli2014foundations}. Theorem~\ref{thm:main} states that when the matched group $G$ acts multiplicity-freely on the signal space, the KL transform of every $G$-invariant covariance is the same fixed basis $U_G$. Therefore $U_G$ is simultaneously the KL transform for the entire family of $G$-invariant covariances; using it does not require the data-dependent step of estimating the covariance and then computing its eigendecomposition. The basis is determined by representation theory rather than by data.
\end{remark}

\begin{remark}[Trivial group and the data-dependent KLT]
\label{rem:trivial}
The trivial group $G = \{e\}$ has only one irreducible representation, the trivial 1-dimensional character, which appears in any $M$-dimensional representation with multiplicity $M$. The commutant is the full matrix algebra $\End(\mathbb{C}^M)$, not commutative, and Theorem~\ref{thm:main} does not apply. The eigenbasis of a covariance is now data-dependent: it is the conventional KLT, which can be any orthonormal basis depending on $\mathbf{R}$. The trivial group thus represents the ``no-symmetry'' regime in which the optimal transform must be learned from data.
\end{remark}

\begin{remark}[How the matched group determines the transform]
The matched group $G^*(\mathbf{R})$ of a covariance is the largest finite subgroup of $S_M$ under which $\mathbf{R}$ is invariant. By Theorem~\ref{thm:main}, the eigenbasis of $\mathbf{R}$ is determined by the representation theory of $G^*(\mathbf{R})$ whenever the permutation representation is multiplicity-free. In this sense, ``finding the matched group'' is operationally equivalent to ``finding the optimal transform.'' This equivalence is the foundational principle of the Algebraic Diversity framework.
\end{remark}

\section{Special Case: Trivial Group and the KLT}
\label{sec:trivial}

We make Remark~\ref{rem:trivial} formal.

\begin{corollary}[Trivial-group case: KLT is data-dependent]
\label{cor:trivial}
Let $G = \{e\}$ be the trivial group and $\pi: \{e\} \to U(\mathbb{C}^M)$ the trivial representation (acting as the identity on $\mathbb{C}^M$). The commutant of $\pi$ is the full matrix algebra $\End(\mathbb{C}^M)$. The unique irreducible representation of $\{e\}$ is the trivial character, which appears with multiplicity $M$. The representation is not multiplicity-free unless $M = 1$. For each Hermitian $\mathbf{R} \in \mathbb{C}^{M \times M}$, the eigenbasis is the data-dependent KLT.
\end{corollary}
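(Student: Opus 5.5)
The plan is to verify each assertion of Corollary~\ref{cor:trivial} directly from the definitions, treating the trivial group as the degenerate instance of Theorem~\ref{thm:commutant_structure} and Lemma~\ref{lem:multiplicity_free_commutant}.

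\emph{Commutant.} Since $\pi(e) = I_M$, the defining condition $T\pi(e) = \pi(e)T$ in Definition~\ref{def:commutant} reads $T = T$. It therefore holds for every $T \in \End(\mathbb{C}^M)$, so $\mathcal{A}_\pi = \End(\mathbb{C}^M)$.

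\emph{Irreducibles and multiplicity.} Any representation of $\{e\}$ sends $e$ to the identity, so every subspace is invariant. Hence the irreducible representations are exactly the one-dimensional ones, and all of them are equivalent to the trivial character. Thus $\hat{G}$ has a single element $\rho_0$ with $d_{\rho_0} = 1$. The whole space $\mathbb{C}^M$ is the $\rho_0$-isotypic component, so by Definition~\ref{def:isotypic} the multiplicity is $m_{\rho_0} = M/1 = M$. As a consistency check, Theorem~\ref{thm:commutant_structure} then gives $\mathcal{A}_\pi \cong \End(\mathbb{C}^M)$, which matches the direct computation.

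\emph{Multiplicity-freeness.} By Definition~\ref{def:multiplicity_free}, $\pi$ is multiplicity-free iff $M \le 1$, that is, iff $M = 1$ for a nonzero space. Equivalently, by Lemma~\ref{lem:multiplicity_free_commutant}, this is iff $\End(\mathbb{C}^M)$ is commutative. For $M \ge 2$ the matrix units $E_{12}$ and $E_{21}$ fail to commute, which confirms the claim independently.

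\emph{Data-dependence of the eigenbasis.} The spectral theorem diagonalizes each Hermitian $\mathbf{R}$ by a unitary of its own eigenvectors, which is the KLT of $\mathbf{R}$. To make ``data-dependent'' precise, I will show that no fixed unitary $U$ diagonalizes every Hermitian $\mathbf{R}$ when $M \ge 2$, so Theorem~\ref{thm:main}(i) fails. Suppose such a $U$ existed. Then $U$ would diagonalize every element of $\End(\mathbb{C}^M)$, because every matrix is $A + iB$ with $A$ and $B$ Hermitian. All matrices would then commute, which contradicts the noncommutativity just shown. A more concrete version: for any unitary $V$ and any distinct real $\lambda_1,\dots,\lambda_M$, the matrix $\mathbf{R} = V\,\mathrm{diag}(\lambda_k)V^*$ is Hermitian, positive semidefinite if the $\lambda_k \ge 0$, and its eigenbasis is exactly $V$ up to unimodular phases. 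So every orthonormal basis arises as some covariance's KLT.

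None of these steps is a real obstacle. The only point needing care is to phrase ``the eigenbasis is the data-dependent KLT'' as the precise statement above: for $M \ge 2$ no $\mathbf{R}$-independent unitary works, and every unitary is realized as the eigenbasis of some covariance.
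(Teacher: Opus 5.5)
Your proposal is correct and follows the same route as the paper. The paper reads each claim off the definitions and uses the noncommutativity of $\End(\mathbb{C}^M)$ for $M \ge 2$ to conclude that Lemma~\ref{lem:multiplicity_free_commutant}, and hence Theorem~\ref{thm:main}, does not apply. Your extra step is a worthwhile tightening: a fixed unitary diagonalizing every Hermitian $\mathbf{R}$ would, via $A + iB$, diagonalize all of $\End(\mathbb{C}^M)$ and force it to be commutative. The paper's remark that a hypothesis fails does not by itself prove the eigenbasis must depend on $\mathbf{R}$, but your argument does.
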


\begin{proof}
Direct from the definitions. The full matrix algebra is not commutative for $M \geq 2$, so Lemma~\ref{lem:multiplicity_free_commutant} does not apply.
\end{proof}

This case sets the lower extreme of the framework: the larger the matched group, the more the eigenbasis is determined by representation theory and the less it depends on the specific covariance. At one extreme is the trivial group (entirely data-dependent); at the other extreme are multiplicity-free representations of large groups (entirely group-determined).

\section{Special Case: Cyclic Group and the DFT}
\label{sec:dft}

The cyclic group $\mathbb{Z}_M = \{0, 1, \ldots, M-1\}$ acting on itself by translation provides the first nontrivial multiplicity-free case. The resulting matched transform is the discrete Fourier transform.

\subsection{When does cyclic symmetry arise in practice?}

The mathematics of this section is most useful to a practitioner when the underlying source of the cyclic symmetry is made explicit. The canonical source is uniform sampling of a wide-sense stationary process. If $X(t)$ is a continuous-time wide-sense stationary process and we sample at uniformly spaced times $\{t_0, t_0 + \Delta, t_0 + 2\Delta, \ldots, t_0 + (M - 1) \Delta\}$, then the resulting sample covariance matrix
\[
R_{ij} = \mathbb{E}\big[X(t_0 + i\Delta) X(t_0 + j\Delta)^*\big] = K\big((i - j)\Delta\big)
\]
depends only on the index difference $i - j$, and is therefore (symmetric) Toeplitz. Under periodic boundary conditions, which arise naturally for processes on a torus or for any windowing scheme that periodizes the signal, the Toeplitz matrix becomes circulant, that is, $R_{ij} = K_{(i - j) \bmod M}$. Circulant matrices are exactly the matrices invariant under the cyclic group $\mathbb{Z}_M$ acting on indices by translation, so the matched group of any circulant covariance contains $\mathbb{Z}_M$ (for a symmetric circulant it is the dihedral group $D_M$, Remark~\ref{rem:hartley}), and the matched transform is the discrete Fourier transform. The DFT is therefore the optimal transform for stationary signals on uniform grids, a foundational fact of digital signal processing that the present framework recasts as a corollary of the group-theoretic principle.

The matching extends beyond strict stationarity. Any signal model that produces a circulant covariance, including stationary $\mathrm{AR}(p)$ processes under periodic boundary conditions, stationary moving-average processes, and processes on cyclic graphs, has the DFT as its matched transform. The shared feature is the cyclic group $\mathbb{Z}_M$ in the role of matched group.

\begin{remark}[Toeplitz versus circulant]
\label{rem:toeplitz_wrap}
A symmetric Toeplitz covariance $\mathbf{T}$ with autocorrelation $r_0, r_1, \ldots, r_{M-1}$ is not $\mathbb{Z}_M$-invariant unless it is circulant: the cyclic shift carries the last sample to the first, and the shift fails to commute with $\mathbf{T}$ whenever $r_k \neq r_{M-k}$ for some $k$. The nearest circulant in Frobenius norm is the Reynolds average of $\mathbf{T}$ over $\mathbb{Z}_M$, whose symbol is $c_k = \big((M - k) r_k + k\, r_{M-k}\big)/M$, and the deleted energy is
\begin{align*}
\|\mathbf{T} - \mathcal{P}_{\mathbb{Z}_M}(\mathbf{T})\|_F^2 &= \frac{1}{M} \sum_{k=1}^{M-1} k (M - k) (r_k - r_{M-k})^2 \\
&\leq 4 \sum_{k \geq 1} k\, r_k^2,
\end{align*}
which is bounded independently of $M$ whenever $\sum_k k r_k^2 < \infty$, while $\|\mathbf{T}\|_F^2$ grows linearly in $M$. The relative discrepancy between a finite-window Toeplitz covariance and its circulant projection is therefore $O(1/M)$. This is the precise sense in which a long window of a stationary process is approximately cyclic, and it is the second-moment form of the classical asymptotic equivalence of Toeplitz and circulant matrices~\cite{gray2006toeplitz}. For the AR(1) covariance $r_k = 0.9^k$ the relative deleted energy is $0.016$, $0.076$ and $0.035$ at $M = 8$, $32$ and $128$; the initial rise reflects a window shorter than the correlation length of about ten samples, after which the $O(1/M)$ decay sets in.
\end{remark}


\subsection{The cyclic group and its irreducibles}

The cyclic group $\mathbb{Z}_M$ is abelian of order $M$. By a standard result in representation theory of abelian groups, every irreducible representation of an abelian group is one-dimensional and is given by a character $\chi: G \to \mathbb{C}^\times$. For $\mathbb{Z}_M$, the irreducible characters are
\begin{equation}
\label{eq:cyclic_characters}
\chi_k(j) = e^{2\pi i jk/M}, \quad k, j \in \mathbb{Z}_M.
\end{equation}
There are exactly $M$ distinct characters, indexed by $k \in \mathbb{Z}_M$.

The regular representation of $\mathbb{Z}_M$ is its action on $\mathbb{C}^M$ by translation: $\pi(g) e_j = e_{(j+g) \bmod M}$. The matrix of $\pi(g)$ is the cyclic shift permutation matrix.

\begin{lemma}[Regular representation of $\mathbb{Z}_M$ is multiplicity-free]
\label{lem:cyclic_mf}
The regular representation of $\mathbb{Z}_M$ on $\mathbb{C}^M$ decomposes as the direct sum of all irreducible characters, each appearing with multiplicity exactly one:
\[
\pi_{\mathrm{reg}} \cong \bigoplus_{k = 0}^{M-1} \chi_k.
\]
\end{lemma}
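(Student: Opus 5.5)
The plan is to exhibit the decomposition explicitly with the DFT vectors and then confirm multiplicity one by a character computation. For each $k \in \mathbb{Z}_M$ I would set
\[
\mathbf{f}_k = \frac{1}{\sqrt{M}} \sum_{j=0}^{M-1} e^{-2\pi i jk/M} e_j .
\]
The first step is to verify that each $\mathbf{f}_k$ spans a $\pi$-invariant line on which $\mathbb{Z}_M$ acts by the character $\chi_k$ of \eqref{eq:cyclic_characters}. Apply $\pi(g)$, which sends $e_j$ to $e_{j+g}$, and reindex the sum by $j' = j+g \bmod M$. This gives $\pi(g)\mathbf{f}_k = e^{2\pi i gk/M}\mathbf{f}_k = \chi_k(g)\mathbf{f}_k$. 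The computation uses only that the exponential is $M$-periodic in $j$, so reduction mod $M$ is harmless.

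The second step is to show that $\{\mathbf{f}_k\}_{k=0}^{M-1}$ is an orthonormal basis of $\mathbb{C}^M$. The inner product is
\[
\langle \mathbf{f}_k,\mathbf{f}_l\rangle = \frac{1}{M}\sum_j e^{2\pi i j(l-k)/M} .
\]
This is a finite geometric sum, equal to $\delta_{kl}$ because $e^{2\pi i (l-k)/M}\neq 1$ is an $M$-th root of unity whenever $k\neq l$. Equivalently, it is the character orthogonality relation of Theorem~\ref{thm:peter_weyl} specialized to one-dimensional irreps. It follows that $\mathbb{C}^M = \bigoplus_k \mathbb{C}\mathbf{f}_k$ as an orthogonal direct sum of invariant lines, which gives $\pi_{\mathrm{reg}} \cong \bigoplus_k \chi_k$.

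The third step, which carries the actual content of the lemma, is to show each character appears exactly once. The $M$ characters $\chi_k$ are pairwise inequivalent: they are distinct functions on $\mathbb{Z}_M$, and one-dimensional representations are equivalent only if their characters coincide. Since each $\chi_k$ labels exactly one summand, every $m_{\chi_k}=1$. As a cross-check, I would compute the multiplicity via the inner product on class functions. The character of $\pi_{\mathrm{reg}}$ is $M$ at $g=0$ and $0$ otherwise, since a nontrivial shift has no fixed points. Hence $m_{\chi_k} = \langle \chi_{\pi},\chi_k\rangle_G = \frac{1}{M}\cdot M\cdot \overline{\chi_k(0)} = 1$.

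The only step needing any care is this multiplicity count: it must show each character appears exactly once, not merely that the space decomposes into characters. The fixed-point character computation settles it cleanly, and I expect no real obstacle.
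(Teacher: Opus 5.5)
Your proof is correct, but it takes a different route from the paper. The paper argues in one line from general structure theory: the regular representation of any finite group decomposes as $\bigoplus_{\rho} d_\rho\,\rho$ (citing Serre), and every irreducible of an abelian group has $d_\rho = 1$, so each occurs exactly once. You instead build the decomposition by hand. The $\mathbf{f}_k$ are explicit $\chi_k$-eigenvectors of the shifts, orthonormality follows from a geometric sum, and the multiplicity is read off either from the distinctness of the $\chi_k$ or from the fixed-point-free character of $\pi_{\mathrm{reg}}$.

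Your cross-check, run with an arbitrary irreducible $\rho$ in place of $\chi_k$, gives $\langle \chi_\pi, \chi_\rho\rangle_G = \overline{\chi_\rho(0)} = d_\rho$. That is exactly the general fact the paper cites. So the character computation by itself already proves the lemma, together with the dimension count $\sum_k m_{\chi_k} = M$. Combined with your explicit decomposition, it also shows that the $\chi_k$ exhaust the irreducibles of $\mathbb{Z}_M$, a fact that you, like the paper, otherwise take from the discussion preceding the lemma.

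What your route buys is self-containment and the isotypic basis itself: the $\mathbf{f}_k$ are the DFT columns that the paper only extracts later, in Theorem~\ref{thm:dft}, through the central projection formula. What the paper's route buys is brevity and immediate generality to every finite abelian group, which is how the lemma is reused for $\mathbb{Z}_2^n$ in Theorem~\ref{thm:hadamard}.
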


\begin{proof}
The regular representation of any finite group decomposes as $\pi_{\mathrm{reg}} \cong \bigoplus_{\rho \in \hat{G}} d_\rho \cdot \rho$ (each irrep appears with multiplicity equal to its dimension; see~\cite[Section 2.4]{serre1977linear}). For abelian groups $d_\rho = 1$ for every $\rho$, so each irrep appears exactly once, and the decomposition is multiplicity-free.
\end{proof}

\subsection{The DFT as the matched transform}

\begin{theorem}[DFT is the matched transform of $\mathbb{Z}_M$]
\label{thm:dft}
The matched unitary $U_{\mathbb{Z}_M}$ for the regular representation of $\mathbb{Z}_M$ on $\mathbb{C}^M$ is the unitary discrete Fourier transform matrix
\begin{equation}
\label{eq:dft_kernel}
(U_{\mathbb{Z}_M})_{jk} = \frac{1}{\sqrt{M}} e^{2\pi i jk / M}, \quad j, k \in \{0, 1, \ldots, M-1\}.
\end{equation}
For every $\mathbb{Z}_M$-invariant Hermitian PSD matrix $\mathbf{R}$ (equivalently, every circulant Hermitian PSD matrix), the DFT diagonalizes $\mathbf{R}$:
\[
U_{\mathbb{Z}_M}^* \mathbf{R} \, U_{\mathbb{Z}_M} = \mathrm{diag}(\lambda_0, \lambda_1, \ldots, \lambda_{M-1}),
\]
where the eigenvalues $\lambda_k$ are the DFT coefficients of the first row of $\mathbf{R}$.
\end{theorem}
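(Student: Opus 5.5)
Two routes are available, and I would use both: derive the result from Theorem~\ref{thm:main}, and confirm the eigenvalue formula by a direct computation. The abstract route starts from Lemma~\ref{lem:cyclic_mf}, which says the regular representation is multiplicity-free with every irrep one-dimensional. Theorem~\ref{thm:main}(i) therefore supplies a fixed unitary $U_{\mathbb{Z}_M}$. By Theorem~\ref{thm:main}(iii) it is unique up to column ordering and unimodular scalings, since every $d_\rho=1$.

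To identify the columns, I apply the construction \eqref{eq:basis_vectors} with $\rho=\chi_k$ and the cyclic vector $\mathbf{v}_0=e_0$. This gives
\[
\mathbf{u}_k=\tfrac{1}{M}\sum_{g}\overline{\chi_k(g)}\,e_g .
\]
After normalization and complex conjugation (equivalently, reindexing $k\mapsto -k$, which only permutes the columns), this vector has entries $M^{-1/2}e^{2\pi i jk/M}$. That is exactly column $k$ of \eqref{eq:dft_kernel}.

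The cleanest rigorous check is that each column is a joint eigenvector of the generator. Let $\mathbf{S}=\pi(1)$ be the cyclic shift. Then $(\mathbf{S}\mathbf{f}_k)_j=(\mathbf{f}_k)_{j-1}=e^{-2\pi i k/M}(\mathbf{f}_k)_j$. So $\mathbf{f}_k$ spans the $\chi_{-k}$-isotypic line, and these lines are distinct for distinct $k$. Orthonormality of the columns follows from the orthogonality of characters in Theorem~\ref{thm:peter_weyl}, or equivalently from the geometric sum $\sum_j e^{2\pi i j(k-k')/M}=M\delta_{kk'}$.

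Next I would justify the parenthetical equivalence with circulants. $\mathbf{R}$ commutes with every $\pi(g)$ if and only if it commutes with $\mathbf{S}$. The identity $\mathbf{S}\mathbf{R}\mathbf{S}^*=\mathbf{R}$ means $R_{j+1,l+1}=R_{jl}$ with indices mod $M$, which is precisely the circulant condition $R_{jl}=c_{(j-l)\bmod M}$. Diagonality then follows either from Theorem~\ref{thm:main}(i) or directly: $\mathbf{R}$ preserves each one-dimensional isotypic line, because it commutes with $\mathbf{S}$ and the eigenvalues $e^{-2\pi i k/M}$ of $\mathbf{S}$ are distinct.

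Last, I compute the eigenvalue directly to fix the convention. Take $(\mathbf{R}\mathbf{f}_k)_j=\sum_l c_{j-l}\,M^{-1/2}e^{2\pi i lk/M}$ and substitute $m=j-l$. The sum becomes $\big(\sum_m c_m e^{-2\pi i mk/M}\big)(\mathbf{f}_k)_j$. Since the first row of $\mathbf{R}$ is $R_{0,l}=c_{-l}$, this equals $\sum_l R_{0l}e^{2\pi i lk/M}$, a DFT of the first row.

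The main obstacle is this bookkeeping of sign and indexing conventions. Whether one calls the result "the DFT coefficients of the first row" or "of the first column", and whether the kernel is $e^{\pm 2\pi i lk/M}$, depends on convention. I would fix a convention explicitly, and note that Hermitian symmetry $c_{-m}=\overline{c_m}$ makes the row and column versions agree up to conjugation and the reindexing $k\mapsto -k$. The eigenvalues are real because $\mathbf{R}$ is Hermitian.
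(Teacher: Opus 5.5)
Your proposal is correct, and its main line is the paper's own proof: apply Theorem~\ref{thm:main} via Lemma~\ref{lem:cyclic_mf}, project $e_0$ onto each character line, normalize, and conjugate to match the stated kernel. Here your use of \eqref{eq:basis_vectors} coincides with the central projection \eqref{eq:central_projection}, since every $d_\rho = 1$.

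Your added direct check goes further than the paper in two places. First, the shift-eigenvector argument proves the circulant equivalence, which the paper only asserts. Second, your explicit eigenvalue $\sum_l R_{0l}e^{2\pi i lk/M}$ is the right one for column $k$ of the stated matrix. The paper's $\lambda_k=\sum_g \overline{\chi_k(g)}R_{0,g}$ is the eigenvalue of the unconjugated vector, i.e.\ of column $-k$. So your reading of the conjugation as the reindexing $k\mapsto -k$ is the accurate bookkeeping.
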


\begin{proof}
Apply Theorem~\ref{thm:main} to $G = \mathbb{Z}_M$ in its regular representation. By Lemma~\ref{lem:cyclic_mf}, the representation is multiplicity-free, and each irreducible character $\chi_k$ has dimension $d_{\chi_k} = 1$. The central projection formula (Equation~\ref{eq:central_projection}) onto the $\chi_k$-isotypic subspace is
\[
P^{(\chi_k)} = \frac{1}{M} \sum_{g \in \mathbb{Z}_M} \overline{\chi_k(g)} \, \pi(g).
\]
Applied to $e_0$, the result is
\[
P^{(\chi_k)} e_0 = \frac{1}{M} \sum_{g = 0}^{M-1} e^{-2\pi i k g / M} e_g.
\]
Normalizing to unit length gives the $k$-th column of $U_{\mathbb{Z}_M}$:
\[
\mathbf{u}_k = \frac{1}{\sqrt{M}} \sum_{j = 0}^{M-1} e^{-2\pi i jk / M} e_j,
\]
which, up to a complex conjugation (a choice of convention), is the $k$-th DFT basis vector. Theorem~\ref{thm:main} parts (i) and (iv) then give that every $\mathbb{Z}_M$-invariant $\mathbf{R}$ is diagonalized by $U_{\mathbb{Z}_M}$ with eigenvalues $\lambda_k = \tilde{R}_{\chi_k}$. The eigenvalues are explicitly computed by the orthogonality of characters: $\lambda_k = \sum_{g} \overline{\chi_k(g)} R_{0, g}$, which is precisely the DFT of the first row of $\mathbf{R}$.
\end{proof}

\begin{remark}
This recovers the classical fact that circulant matrices are diagonalized by the DFT~\cite[Section 4.7]{golub2013matrix}. The novelty of the present formulation is that it is a special case of the general Theorem~\ref{thm:main}, with the DFT kernel arising from the characters of $\mathbb{Z}_M$ via the central projection formula.
\end{remark}

\begin{remark}[The Hartley transform and the dihedral group on the $M$-cycle]
\label{rem:hartley}
The Hartley transform~\cite{bracewell1983discrete} with kernel
\[
(U_{\mathrm{Hartley}})_{jk} = \frac{1}{\sqrt{M}} \left[\cos(2\pi jk/M) + \sin(2\pi jk/M)\right]
\]
is a real orthogonal basis. Its matched group is not $\mathbb{Z}_M$ but the dihedral group $D_M = \mathbb{Z}_M \rtimes \mathbb{Z}_2$ acting on the $M$-cycle by the translations together with the reflection $j \mapsto -j \bmod M$. The $D_M$-invariant matrices are the symmetric circulants, a commutant of dimension $\lfloor M/2 \rfloor + 1$; the permutation representation of $D_M$ on $\mathbb{C}^M$ is multiplicity-free, with two-dimensional constituents on the frequency pairs $\{k, -k\}$ for $0 < k < M/2$ and one-dimensional constituents at $k = 0$ and, for even $M$, at $k = M/2$; and the Hartley basis is one orthonormal basis inside each two-dimensional isotypic component, since the Hartley vectors at $k$ and $-k$ span the same plane as the cosine and sine vectors at $k$. Every symmetric circulant is therefore diagonalized by the Hartley basis, by the real Fourier pairs $(\cos_k, \sin_k)$, and by any rotation within each pair, the within-component freedom of Theorem~\ref{thm:main}(iii). A Hermitian circulant that is not symmetric, the covariance of a complex stationary process whose spectrum differs at $k$ and $-k$, is $\mathbb{Z}_M$-invariant but not $D_M$-invariant; it is diagonalized by the DFT and not by the Hartley basis (for $M = 8$ and a random Hermitian circulant the Hartley coefficient matrix carries about ten percent of its energy off the diagonal, Section~\ref{subsec:additional_checks}). For real stationary processes the two classes coincide, because a real symmetric circulant is automatically reflection-invariant, and the DFT and the Hartley basis are then two bases adapted to the same isotypic decomposition of $D_M$, the choice between them being one of complex or real arithmetic. The $D_M$ action of this remark is on $M$ points and is distinct from the dihedral action on the $2M$-point even extension in Section~\ref{sec:dct}, which yields the DCT-II rather than the real Fourier pairs. The relationship between the Hartley basis and the real Fourier pairs is the same as that between the Walsh-Hadamard transform and the Reed-Muller or arithmetic transforms developed in Section~\ref{sec:rm_arithmetic}: different bases for one group-determined decomposition.
\end{remark}

\section{Special Case: Dihedral Group and the DCT}
\label{sec:dct}

The discrete cosine transform (DCT-II) is the matched transform for signals with even-symmetric extension. The relevant group is not the dihedral group acting on the original $M$ points directly, but the dihedral group $D_{2M} = \mathbb{Z}_{2M} \rtimes \mathbb{Z}_2$ of order $4M$, generated by the cyclic shift and the reflection of $\mathbb{Z}_{2M}$, acting on the doubled-and-folded signal. This subtlety is what makes the DCT case more delicate than the DFT case.

\subsection{When does dihedral symmetry arise in practice?}
\label{subsec:ar1_motivation}

The canonical motivation is the first-order autoregressive (AR(1)) process, equivalently the discretization of an Ornstein-Uhlenbeck process. Consider the AR(1) recursion $X_t = \phi X_{t-1} + \varepsilon_t$ with $\varepsilon_t \sim \mathcal{N}(0, \sigma^2)$ i.i.d.\ and $|\phi| < 1$. The stationary covariance of this process on uniformly spaced samples is
\[
R_{ij} = \frac{\sigma^2}{1 - \phi^2} \phi^{|i - j|},
\]
which is symmetric Toeplitz. The structure is best seen via the precision (inverse covariance) matrix. The precision of the length-$M$ AR(1) process under \emph{free} boundary conditions (no wrap-around) is the tridiagonal matrix
\[
R^{-1} = \frac{1}{\sigma^2} \begin{pmatrix}
1 & -\phi & & & \\
-\phi & 1 + \phi^2 & -\phi & & \\
& -\phi & 1 + \phi^2 & -\phi & \\
& & \ddots & \ddots & \ddots \\
& & & -\phi & 1
\end{pmatrix},
\]
which is invariant under the reflection $i \mapsto M - 1 - i$ and translation-invariant on the interior. Neither the covariance nor the precision is invariant under the cyclic shift of the $M$ indices (for $M = 8$ and $\phi = 0.9$ the relative commutator norm with the shift is $0.17$, Remark~\ref{rem:toeplitz_wrap}), so the permutation symmetry of the finite window is the reflection alone, a group of order two whose permutation representation on $\mathbb{R}^M$ has isotypic components of dimensions $\lceil M/2 \rceil$ and $\lfloor M/2 \rfloor$ and is far from multiplicity-free. The reflection fixes only the even/odd split of the eigenbasis; the exact eigenbasis of the AR(1) covariance depends on $\phi$ and was given in closed form by Ray and Driver~\cite{ray1970further}.

The DCT-II is nonetheless the classical transform for AR(1) signals~\cite{ahmed1974discrete, rao1990discrete}, and the JPEG image compression standard chooses it because natural-image patches are approximately AR(1) (correlation drops off exponentially with spatial separation, and adjacent pixels share local structure on both sides of any given pixel). The present framework reads this classical fact as an approximation to an exact matched-transform statement. The DCT-II is the exact matched transform of the dihedral group $D_{2M}$ acting on the even-reflected extension of the signal (Theorem~\ref{thm:dct}), whose invariant covariances, pulled back to $\mathbb{R}^M$, are the sums of a symmetric Toeplitz matrix and a Hankel matrix with the same symbol~\cite{sanchez1995diagonalizing}; the AR(1) precision differs from that class only in its two corner entries, and the DCT-II diagonalizes the limiting second-difference matrix with Neumann boundary conditions exactly~\cite{strang1999discrete}. Numerically, the fraction of the AR(1) covariance energy that the DCT-II leaves off the diagonal at $M = 8$ is $0.023$, $0.005$ and $0.0001$ for $\phi = 0.5$, $0.9$ and $0.99$, and the smallest overlap between a DCT-II column and the corresponding eigenvector of the $\phi = 0.9$ covariance is $0.998$ (Section~\ref{subsec:additional_checks}). The intuition survives in this reading: an AR(1) signal looks ``the same on both sides'' of any interior point, since once the value at position $i$ is known the conditional distributions at $i - 1$ and $i + 1$ are identical by the time-reversibility of the Markov chain, and the even extension is the construction that turns this two-sided symmetry into an exact group invariance.

The matching extends to other processes. Any covariance of the Toeplitz-plus-Hankel form of Theorem~\ref{thm:dct} has the DCT-II as its matched transform exactly, and any reflection-symmetric process whose precision is banded and translation-invariant on the interior has it approximately, with the discrepancy confined to the boundary.

\paragraph{Higher-order autoregressive processes.}
The same reading applies to autoregressive processes of order $m \geq 2$. An AR$(m)$ process $X_t = \sum_{k=1}^{m} \phi_k X_{t - k} + \varepsilon_t$ on uniformly spaced samples has a precision matrix that is banded with bandwidth $m$, translation-invariant on the interior, and reflection-invariant under $i \mapsto M - 1 - i$ under free boundary conditions (because the bilateral symmetry of the Yule-Walker equations preserves the band structure under index reversal). The boundary blocks (the first and last $m$ rows and columns) are the only departure from the Toeplitz-plus-Hankel class of Theorem~\ref{thm:dct}. The classical result that the DCT is asymptotically optimal for AR$(m)$ signals~\cite{ahmed1974discrete, rao1990discrete} is, in this light, the statement that the boundary blocks occupy $2m$ of the $M$ rows and columns, so that their contribution to any normalized norm is $O(m/M)$ and vanishes as $M \to \infty$. Practitioners who use the DCT for autoregressive signals accept a boundary-localized loss of optimality in exchange for the data-independence and fast-transform advantages of a group-determined basis.

Two exact cases should be separated from this approximation. Under cyclic boundary conditions (an AR$(m)$ process on a circle, equivalently a process whose covariance is exactly a symmetric circulant) the matched group is $D_M$ acting on the $M$-cycle and the matched transform is the real Fourier, or Hartley, basis of Remark~\ref{rem:hartley}, not the DCT-II. Under even-reflected extension the matched group is $D_{2M}$ acting on $\mathbb{Z}_{2M}$ and the matched transform is the DCT-II exactly (Theorem~\ref{thm:dct}). The DCT's optimality for finite-window autoregressive processes of every order, including $m = 1$, is therefore asymptotic rather than exact, and the same holds for the closely related ARMA$(p, q)$ models: the dihedral structure of the even extension governs the bulk of the precision, and the finite-$M$ deviations are localized to the boundary.

\subsection{The doubled-and-folded signal}

Let $\mathbf{x} \in \mathbb{R}^M$ be a length-$M$ real signal. The \emph{even extension} of $\mathbf{x}$ is the length-$2M$ signal $\tilde{\mathbf{x}} \in \mathbb{R}^{2M}$ defined by
\begin{equation}
\label{eq:even_extension}
\tilde{x}_j = \begin{cases} x_j & 0 \leq j \leq M-1, \\ x_{2M-1-j} & M \leq j \leq 2M-1. \end{cases}
\end{equation}
The even extension satisfies $\tilde{x}_j = \tilde{x}_{2M-1-j}$, that is, $\tilde{\mathbf{x}}$ is symmetric about the midpoint $j = M - \frac{1}{2}$.

Let $\sigma$ denote the reflection on $\mathbb{Z}_{2M}$ that sends $j \mapsto 2M - 1 - j$. The even-extension subspace $E \subset \mathbb{R}^{2M}$ is precisely the $\sigma$-invariant subspace of $\mathbb{R}^{2M}$. The injection $\mathbf{x} \mapsto \tilde{\mathbf{x}}$ identifies $\mathbb{R}^M$ with $E$ isometrically up to a factor of $\sqrt{2}$.

\subsection{The relevant matched group}

The reflection $\sigma$ generates a group of order 2; combined with the cyclic shift $\tau: j \mapsto (j+1) \bmod 2M$, the two generate the dihedral group $D_{2M} = \mathbb{Z}_{2M} \rtimes \mathbb{Z}_2$ of order $4M$ acting on $\mathbb{Z}_{2M}$ (in the notation of Section~\ref{subsec:specific_groups}, $D_n$ has order $2n$). Its permutation representation on $\mathbb{C}^{2M}$ is multiplicity-free, with one-dimensional constituents at the frequencies $0$ and $M$ and two-dimensional constituents on the pairs $\{k, 2M - k\}$ for $0 < k < M$, so its commutant, the symmetric circulants of size $2M$, has dimension $M + 1$.

\begin{definition}[Dihedrally invariant covariance on $\mathbb{R}^M$]
\label{def:dihedral_inv}
A covariance $\mathbf{R}_{\mathbf{x}} \in \mathbb{R}^{M \times M}$ is \emph{dihedrally invariant} if it is the compression to the even-extension subspace of a covariance $\tilde{\mathbf{R}} \in \mathbb{R}^{2M \times 2M}$ that is invariant under both the cyclic shift $\tau$ and the reflection $\sigma$: $\mathbf{R}_{\mathbf{x}} = \tfrac{1}{2} \mathbf{E}^{\mathsf{T}} \tilde{\mathbf{R}} \mathbf{E}$, where $\mathbf{E} \in \mathbb{R}^{2M \times M}$ is the even-extension map of Equation~\ref{eq:even_extension}. Equivalently, $\tilde{\mathbf{R}}$ is a symmetric circulant of size $2M$ with symbol $c$, and $(\mathbf{R}_{\mathbf{x}})_{ij} = c_{i - j} + c_{i + j + 1}$: a symmetric Toeplitz matrix plus a Hankel matrix with the same symbol~\cite{sanchez1995diagonalizing}.
\end{definition}

\subsection{The DCT as the matched transform}

\begin{theorem}[DCT-II is the matched transform of $D_{2M}$ on the even-extension subspace]
\label{thm:dct}
The matched unitary on the even-extension subspace $E \subset \mathbb{R}^{2M}$ is the DCT-II matrix
\begin{equation}
\label{eq:dct_kernel}
(U_{\mathrm{DCT}})_{jk} = \sqrt{\frac{2}{M}} \, \omega_k \, \cos\left(\frac{\pi (2j + 1) k}{2M}\right),
\end{equation}
for $j, k \in \{0, 1, \ldots, M-1\}$, where $\omega_0 = 1/\sqrt{2}$ and $\omega_k = 1$ for $k \geq 1$. For every dihedrally invariant covariance $\mathbf{R}_{\mathbf{x}}$, the DCT-II diagonalizes $\mathbf{R}_{\mathbf{x}}$.
\end{theorem}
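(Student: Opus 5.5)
The plan is to reduce the statement to the DFT/Hartley analysis of $D_{2M}$ on $\mathbb{C}^{2M}$ (Theorem~\ref{thm:dft} and Remark~\ref{rem:hartley}) and then restrict to the even-extension subspace $E$. The reflection $\sigma: j \mapsto 2M-1-j$ is itself an element of $D_{2M}$, namely $\sigma = \tau^{-1}\circ(j\mapsto -j)$. Hence every $D_{2M}$-invariant $\tilde{\mathbf{R}}$ commutes with $\sigma$. It therefore preserves $E = \ker(\sigma - I)$, and the compression $\tfrac12 \mathbf{E}^{\mathsf T}\tilde{\mathbf{R}}\mathbf{E}$ is unitarily equivalent to $\tilde{\mathbf{R}}|_E$, using the isometry $\mathbf{E}/\sqrt2$.

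The first step is to record that $D_{2M}$ on $\mathbb{C}^{2M}$ is multiplicity-free, as stated just before Definition~\ref{def:dihedral_inv}. Its isotypic components are the frequency-$0$ line, the frequency-$M$ line, and the planes $W_k = \mathrm{span}\{f_k, f_{2M-k}\}$ for $0<k<M$, where $f_k(j) = e^{2\pi i jk/(2M)}$. By Theorem~\ref{thm:main}(iii)–(iv), every invariant $\tilde{\mathbf{R}}$ acts as a scalar $\tilde R_k$ on each $W_k$ and on each line.

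The second step is to intersect these components with $E$. Since $\sigma$ lies in the group, it preserves each isotypic component, so $E = \bigoplus (E\cap V^{(\rho)})$. I would then compute $\sigma f_k(j) = f_k(2M-1-j) = e^{-2\pi i k/(2M)} f_{2M-k}(j)$. It follows that inside $W_k$ the $\sigma$-fixed line is spanned by $e^{-\pi i k/(2M)} f_k + e^{\pi i k/(2M)} f_{2M-k}$, which is proportional to $\cos\!\big(\pi(2j+1)k/(2M)\big)$. At $k=0$ the constant vector is fixed. At $k=M$ we get $f_M(j) = (-1)^j$ and $\sigma f_M = -f_M$, so $E$ meets that component trivially. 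This yields exactly $M$ one-dimensional pieces of $E$, indexed by $k=0,\dots,M-1$, which matches $\dim E = M$.

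The final step is to restrict each of these cosine vectors to indices $0\le j\le M-1$ via $\mathbf{E}^{\mathsf T}$ and normalize. Since $\sum_{j<M}\cos^2(\pi(2j+1)k/(2M)) = M/2$ for $k\ge1$ and $M$ for $k=0$, the normalization produces the factors $\sqrt{2/M}\,\omega_k$. Because $\tilde{\mathbf{R}}$ is scalar on each isotypic component, each cosine vector is an eigenvector, and so the compression $\mathbf{R}_{\mathbf{x}}$ is diagonalized by $U_{\mathrm{DCT}}$ with eigenvalues $\tilde R_k$. The main obstacle is the bookkeeping in the second step: getting the half-sample phase $e^{\mp \pi i k/(2M)}$ right, and verifying that $E \cap W_k$ is exactly one-dimensional. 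Neither fact is automatic from Theorem~\ref{thm:main}, which concerns only $\mathbb{C}^{2M}$. To handle this I would check directly that $\sigma$ restricted to $W_k$ is a reflection, that is, an involution with eigenvalues $+1$ and $-1$. This holds because it swaps the two basis vectors up to phase, with a determinant of $-1$.
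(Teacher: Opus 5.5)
Your argument follows the paper's proof: diagonalize on $\mathbb{Z}_{2M}$, pair $f_k$ with $f_{2M-k}$ via $\sigma$, take the $\sigma$-fixed line in each pair, discard the Nyquist character, then restrict to the first $M$ coordinates and normalize. You then use that an invariant $\tilde{\mathbf{R}}$ is scalar on each pair. The bookkeeping you add is correct, and it makes explicit what the paper leaves to ``absorbing normalization constants.'' That bookkeeping consists of the following: $\sigma=\tau^{-1}\circ(j\mapsto -j)\in D_{2M}$, so the compression $\tfrac12\mathbf{E}^{\mathsf T}\tilde{\mathbf{R}}\mathbf{E}$ is $\tilde{\mathbf{R}}|_E$ written in the orthonormal basis formed by the columns of $\mathbf{E}/\sqrt2$; the splitting $E=\bigoplus_\rho(E\cap V^{(\rho)})$; the involution-with-determinant-$-1$ argument giving $\dim(E\cap W_k)=1$; the count $1+(M-1)+0=M$; and the normalization sums.

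There is, however, a sign error at exactly the step you flagged as the main obstacle. The paper's formula for $\mathbf{c}_k$ in Equation~\ref{eq:cosine_basis} contains the same slip. With $f_k(j)=e^{\pi i jk/M}$, the vector $e^{-\pi i k/(2M)}f_k+e^{\pi i k/(2M)}f_{2M-k}$ has entries $2\cos\big(\pi k(2j-1)/(2M)\big)$, not $2\cos\big(\pi k(2j+1)/(2M)\big)$. It is not $\sigma$-fixed for any $0<k<M$. Your own relation $\sigma f_k=e^{-\pi i k/M}f_{2M-k}$ gives $\sigma f_{2M-k}=e^{\pi i k/M}f_k$. Hence $a f_k+b f_{2M-k}$ is fixed if and only if $a=e^{\pi i k/M}b$. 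Your choice gives $e^{\pi i k/M}b=e^{3\pi i k/(2M)}\neq e^{-\pi i k/(2M)}=a$. Concretely, for $M=2$, $k=1$ your vector is $\sqrt2\,(1,1,-1,-1)$, which $\sigma$ negates rather than fixes. The fixed line is spanned by $e^{\pi i k/(2M)}f_k+e^{-\pi i k/(2M)}f_{2M-k}$, whose entries are $2\cos\big(\pi k(2j+1)/(2M)\big)$; for $M=2$, $k=1$ this is $\sqrt2\,(1,-1,-1,1)$. With the two phases swapped, the rest of your proof goes through unchanged.
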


\begin{proof}
The cyclic shift $\tau$ on $\mathbb{Z}_{2M}$ generates the group $\mathbb{Z}_{2M}$ of order $2M$. By Theorem~\ref{thm:dft}, the DFT on $\mathbb{Z}_{2M}$ diagonalizes every $\tau$-invariant covariance on $\mathbb{R}^{2M}$, with characters $\tilde{\chi}_k(j) = e^{\pi i jk / M}$ for $k = 0, 1, \ldots, 2M - 1$.

The reflection $\sigma$ permutes the characters: $\tilde{\chi}_k \circ \sigma = e^{-\pi i k / M}\, \overline{\tilde{\chi}_k}$, so the action of $\sigma$ on the DFT-diagonalized space pairs $\tilde{\chi}_k$ with $\tilde{\chi}_{-k} = \tilde{\chi}_{2M-k}$. Within each pair the $\sigma$-invariant combination is
\begin{equation}
\label{eq:cosine_basis}
\mathbf{c}_k = \frac{e^{-\pi i k / 2M}\, \tilde{\chi}_k + e^{\pi i k / 2M}\, \tilde{\chi}_{-k}}{\sqrt{2}}, \;\; k = 0, \ldots, M - 1,
\end{equation}
with entries $\mathbf{c}_k(j) \propto \cos\big(\pi k (2j + 1) / 2M\big)$; the Nyquist character $\tilde{\chi}_M(j) = (-1)^j$ satisfies $\tilde{\chi}_M \circ \sigma = -\tilde{\chi}_M$ and does not enter. The $\sigma$-invariant subspace, i.e., the even-extension subspace $E$, is therefore spanned by $\mathbf{c}_0, \ldots, \mathbf{c}_{M-1}$. Restricting to the original $\mathbb{R}^M$ coordinates via the inclusion $\mathbf{x} \mapsto \tilde{\mathbf{x}}$ and absorbing normalization constants gives the DCT-II kernel of Equation~\ref{eq:dct_kernel}.

A dihedrally invariant covariance $\mathbf{R}_{\mathbf{x}}$ is the compression of a $\tau$-and-$\sigma$-invariant $\tilde{\mathbf{R}}$ to $E$. Since $\tilde{\mathbf{R}}$ is diagonal in the basis $\{\tilde{\chi}_k\}$ with equal eigenvalues on each pair $\{k, -k\}$, it acts on each $\mathbf{c}_k$ as a scalar, so $\tilde{\mathbf{R}}$ restricted to $E$ is diagonal in the basis $\{\mathbf{c}_k\}$, and the DCT-II diagonalizes $\mathbf{R}_{\mathbf{x}}$.
\end{proof}

\begin{remark}[Variants of the DCT]
The DCT-II is the most commonly used variant, corresponding to even extension with a half-integer offset. The DCT-I, DCT-III (the inverse of DCT-II), DCT-IV, and the discrete sine transforms (DST-I through DST-IV) all arise as matched transforms for related boundary-extension classes; see~\cite{rao1990discrete} for the complete catalog. Each is the matched transform of a slight variation on the dihedral construction (different choices of even/odd extension and integer/half-integer offset).
\end{remark}

\begin{remark}[Subtlety of the dihedral case]
The dihedral group $D_{2M}$ has 2-dimensional irreducible representations for $1 \leq k < M$. The corresponding matrix elements (Theorem~\ref{thm:peter_weyl}) are precisely the cosine and sine pairs in the DCT and DST. The 2-dimensionality of these irreps is reflected in the eigenvalue degeneracy: a $D_{2M}$-invariant covariance on $\mathbb{R}^{2M}$ has 2-fold degenerate eigenvalues for the 2D irrep blocks, with the cosine and sine basis vectors as the two degenerate eigenvectors. After restriction to the even-extension subspace $E$, only the cosine basis vector survives, breaking the 2-fold degeneracy and yielding a non-degenerate spectrum on $\mathbb{R}^M$. The restriction is what singles out the DCT-II; without it, as for the $M$-point action of $D_M$ in Remark~\ref{rem:hartley}, the two-dimensional components remain and the matched transform is determined only up to a rotation within each pair. As a permutation group on the original $M$ indices, a generic dihedrally invariant covariance is fixed only by the reflection $i \mapsto M - 1 - i$; the dihedral symmetry lives on the extension.
\end{remark}

\section{Finding the Matched Group: Algorithms and Geometry}
\label{sec:discovery}

The two previous sections each began with a known symmetry, translation for the DFT, reflection-plus-translation for the DCT, and produced the corresponding matched transform mechanically. In practice the symmetry is rarely known in advance. The covariance arrives as an estimate from data, and the relevant matched group is a property of the unknown population that must be inferred. Inferring the matched group is, in this sense, the inverse problem of the previous two sections, and the operational utility of the framework rests on whether the inverse problem can be solved efficiently.

\subsection{The matched group as a polynomial-time discoverable object}

For an arbitrary covariance $\mathbf{R} \in \mathbb{C}^{M \times M}$, the matched group $G^*(\mathbf{R}) \subseteq S_M$ is defined as the largest subgroup of the symmetric group of $M$ symbols such that $\mathbf{P}_g \mathbf{R} = \mathbf{R} \mathbf{P}_g$ for every $g \in G^*(\mathbf{R})$. A na\"ive enumeration over subgroups of $S_M$ is exponential in $M$: the symmetric group $S_M$ has $M!$ elements, and the number of subgroups grows even faster. A polynomial-time procedure for matched-group discovery was developed in the companion manuscript~\cite{thornton2026double_commutator}, which we briefly summarize here.

The key idea is to cast the matched-group search as a continuous optimization over a candidate basis of permutation matrices. For each candidate permutation $\sigma \in S_M$ with permutation matrix $\mathbf{P}_\sigma$, the \emph{commutativity residual} of $\sigma$ with respect to $\mathbf{R}$ is
\[
\delta(\sigma, \mathbf{R}) \;=\; \frac{\| \mathbf{P}_\sigma \mathbf{R} - \mathbf{R} \mathbf{P}_\sigma \|_F}{\| \mathbf{P}_\sigma \|_F \, \| \mathbf{R} \|_F},
\]
which is zero if and only if $\sigma \in G^*(\mathbf{R})$. Discovering the matched group therefore reduces to identifying the set of permutations with zero commutativity residual. Direct minimization over the discrete set of permutations is intractable, but the search can be relaxed to a continuous optimization over a candidate matrix subspace.

\begin{proposition}[Double-commutator generalized eigenvalue problem~{\cite[Theorem 1]{thornton2026double_commutator}}]
\label{prop:dad_cad}
Let $\{\mathbf{B}_1, \ldots, \mathbf{B}_d\}$ be a basis of $d$ linearly independent $M \times M$ matrices, and let $\mathbf{A} = \sum_k c_k \mathbf{B}_k$ be a candidate generator. The commutativity residual squared of $\mathbf{A}$ with respect to $\mathbf{R}$ is the Rayleigh quotient of a generalized eigenvalue problem:
\[
\delta^2(\mathbf{A}, \mathbf{R}) \, \| \mathbf{R} \|_F^2 \;=\; \frac{\mathbf{c}^* \mathbf{M} \mathbf{c}}{\mathbf{c}^* \mathbf{G} \mathbf{c}},
\]
with
\[
M_{ij} = \Tr(\mathbf{B}_i^* [\mathbf{R}, [\mathbf{R}, \mathbf{B}_j]]), \quad G_{ij} = \Tr(\mathbf{B}_i^* \mathbf{B}_j),
\]
where $[\mathbf{R}, [\mathbf{R}, \mathbf{B}_j]]$ is the double commutator. The optimal continuous-valued generator $\mathbf{A}^*$ is the eigenvector corresponding to the minimum eigenvalue of the generalized eigenvalue problem $\mathbf{M} \mathbf{c} = \lambda \mathbf{G} \mathbf{c}$, achievable in $O(d^2 M^2 + d^3)$ operations.
\end{proposition}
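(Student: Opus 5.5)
The plan is to reduce everything to the Frobenius inner product $\langle \mathbf{X}, \mathbf{Y} \rangle = \Tr(\mathbf{X}^* \mathbf{Y})$ on $\mathbb{C}^{M \times M}$. The key observation is that the map $\mathrm{ad}_{\mathbf{R}} : \mathbf{X} \mapsto [\mathbf{R}, \mathbf{X}]$ is self-adjoint with respect to this inner product. Once that is in hand, the identity is bilinear bookkeeping and the minimization is the Rayleigh--Ritz principle for a Hermitian-definite pencil.

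\emph{Step 1: the residual as a ratio of quadratic forms.} I extend the definition of $\delta$ from permutation matrices to arbitrary $\mathbf{A}$, replacing $\|\mathbf{P}_\sigma\|_F$ by $\|\mathbf{A}\|_F$. This is the reading under which the proposition is meaningful. It gives
\[
\delta^2(\mathbf{A}, \mathbf{R})\, \|\mathbf{R}\|_F^2 = \frac{\|[\mathbf{R}, \mathbf{A}]\|_F^2}{\|\mathbf{A}\|_F^2}.
\]
Substituting $\mathbf{A} = \sum_k c_k \mathbf{B}_k$ into the denominator yields $\|\mathbf{A}\|_F^2 = \sum_{i,j} \overline{c_i} c_j \Tr(\mathbf{B}_i^* \mathbf{B}_j) = \mathbf{c}^* \mathbf{G} \mathbf{c}$. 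The matrix $\mathbf{G}$ is the Gram matrix of a linearly independent family, so it is Hermitian positive definite.

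\emph{Step 2: self-adjointness of $\mathrm{ad}_{\mathbf{R}}$.} Using only $\mathbf{R}^* = \mathbf{R}$ and cyclicity of the trace, I will check
\[
\langle \mathbf{X}, [\mathbf{R}, \mathbf{Y}] \rangle = \Tr(\mathbf{X}^* \mathbf{R} \mathbf{Y}) - \Tr(\mathbf{R} \mathbf{X}^* \mathbf{Y}) = \langle [\mathbf{R}, \mathbf{X}], \mathbf{Y} \rangle.
\]
Taking $\mathbf{X} = \mathbf{A}$ and $\mathbf{Y} = [\mathbf{R}, \mathbf{A}]$ then gives $\|[\mathbf{R}, \mathbf{A}]\|_F^2 = \langle \mathbf{A}, [\mathbf{R}, [\mathbf{R}, \mathbf{A}]] \rangle$. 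Expanding bilinearly in the coefficients gives exactly $\mathbf{c}^* \mathbf{M} \mathbf{c}$ with $M_{ij} = \Tr(\mathbf{B}_i^* [\mathbf{R}, [\mathbf{R}, \mathbf{B}_j]])$. The same computation shows $\mathbf{M} = \big(\langle [\mathbf{R}, \mathbf{B}_i], [\mathbf{R}, \mathbf{B}_j] \rangle\big)_{ij}$, which is a Gram matrix and hence Hermitian positive semidefinite. This proves the displayed identity.

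\emph{Step 3: optimality.} Since $\mathbf{G} \succ 0$, I write the Cholesky factorization $\mathbf{G} = \mathbf{L} \mathbf{L}^*$ and substitute $\mathbf{y} = \mathbf{L}^* \mathbf{c}$. This turns the quotient into the ordinary Rayleigh quotient of the Hermitian matrix $\mathbf{L}^{-1} \mathbf{M} \mathbf{L}^{-*}$. By Rayleigh--Ritz, its minimum over $\mathbf{c} \neq 0$ equals the smallest generalized eigenvalue of $\mathbf{M} \mathbf{c} = \lambda \mathbf{G} \mathbf{c}$, and it is attained at the corresponding eigenvector. That eigenvector defines $\mathbf{A}^*$. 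In particular, the minimum is zero exactly when the span of the $\mathbf{B}_k$ contains a nonzero matrix commuting with $\mathbf{R}$.

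\emph{Step 4: complexity, the main obstacle.} The delicate point is the cost bound $O(d^2 M^2 + d^3)$, which does not hold for dense arbitrary $\mathbf{B}_k$: forming $[\mathbf{R}, [\mathbf{R}, \mathbf{B}_j]]$ naively costs $O(M^3)$ per $j$. I would therefore state the structural assumption used in the companion work. The basis elements are permutation matrices, or more generally have $O(M)$ nonzeros, so that $\mathbf{R} \mathbf{B}_j$ and $\mathbf{B}_j \mathbf{R}$ cost $O(M^2)$.

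Rather than forming the double commutator, I compute the $d$ single commutators $\mathbf{C}_j = [\mathbf{R}, \mathbf{B}_j]$ in $O(d M^2)$ total. I then use the Gram form $M_{ij} = \langle \mathbf{C}_i, \mathbf{C}_j \rangle$ from Step 2, which costs $O(M^2)$ per entry and $O(d^2 M^2)$ overall. The entries of $\mathbf{G}$ cost no more; for permutation matrices $G_{ij}$ simply counts common fixed points of $\sigma_i^{-1} \sigma_j$. The dense generalized Hermitian eigensolve adds $O(d^3)$. The proof will make the sparsity hypothesis explicit, since without it the bound becomes $O(d M^3 + d^2 M^2 + d^3)$.
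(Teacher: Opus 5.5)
Your argument is correct. The paper itself gives no proof of this proposition; it imports it from Theorem~1 of the companion manuscript~\cite{thornton2026double_commutator}, so there is no in-paper route to compare against, and your write-up serves as a self-contained replacement for that citation.

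Your route has two steps. First, the Frobenius self-adjointness of $\mathrm{ad}_{\mathbf{R}}$ turns $\mathbf{M}$ into the Gram matrix of the single commutators $[\mathbf{R},\mathbf{B}_i]$. Second, Rayleigh--Ritz applies to the Hermitian-definite pencil $(\mathbf{M},\mathbf{G})$. This is the natural route, and its main value is that it exposes three hypotheses the statement leaves implicit:
\begin{enumerate}
\item[(a)] $\delta$ must be extended from permutations to arbitrary $\mathbf{A}$ by normalizing with $\|\mathbf{A}\|_F$.
\item[(b)] $\mathbf{R}$ must be Hermitian. In general the adjoint of $\mathrm{ad}_{\mathbf{R}}$ is $\mathrm{ad}_{\mathbf{R}^*}$, and the correct matrix would be $\Tr(\mathbf{B}_i^*[\mathbf{R}^*,[\mathbf{R},\mathbf{B}_j]])$.
\item[(c)] The $O(d^2M^2+d^3)$ count needs basis elements with $O(M)$ nonzeros. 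The Gram form $M_{ij}=\langle[\mathbf{R},\mathbf{B}_i],[\mathbf{R},\mathbf{B}_j]\rangle$ is exactly what lets you avoid forming double commutators.
\end{enumerate}

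Some small refinements:
\begin{enumerate}
\item[(1)] In the dense case the extra $O(dM^3)$ term is dominated by $d^2M^2$ whenever $d\ge M$. The sparsity hypothesis therefore matters only for $d<M$.
\item[(2)] For permutation matrices, $G_{ij}$ is the number of fixed points of $\sigma_i^{-1}\sigma_j$, i.e.\ the number of points where $\sigma_i$ and $\sigma_j$ agree. The word ``common'' in your description is misplaced.
\item[(3)] You observe that the minimum is zero exactly when the span contains a nonzero matrix commuting with $\mathbf{R}$. It follows that ``the eigenvector'' in the statement is determined only up to the minimal eigenspace. It also follows that the basis, or the deflation of the companion procedure, must exclude $\mathbf{I}$, which would otherwise be a trivial minimizer. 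This is a caveat about the statement, not a gap in your proof.
\end{enumerate}
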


The result of the single GEVP is a continuous-valued candidate generator $\mathbf{A}^*$, which is then rounded to the nearest permutation matrix via the Hungarian algorithm. The rounding step is exact when the population covariance is genuinely $G^*$-invariant and the basis is suitably chosen.

A single GEVP recovers a single generator. To recover a full non-abelian matched group with multiple non-commuting generators, the procedure of~\cite{thornton2026double_commutator} iterates the GEVP with a deflation step that constrains each subsequent search to be Frobenius-orthogonal to the span of the already-recovered generators. The iteration is guaranteed to terminate in at most $\log_2 |G^*|$ steps (each accepted generator at least doubles the order of the discovered subgroup, by Lagrange's theorem), and the total cost is therefore polynomial in $M$ and in $\log_2 |G^*|$. When the basis is the canonical generator set associated with a structured group family (cyclic, dihedral, iterated wreath, etc.), the procedure is provably sound and complete for that family. Operationally, this means that the matched group of any covariance can be discovered in polynomial time, which in turn means that the optimal transform of any covariance can be computed in polynomial time, mechanically and without expert judgment. Symmetry discovery in the equivariant-learning literature, most recently the group-algebraic tensors of Hoyos et al.~\cite{hoyos2026group}, learns a group from labelled training data through the fit of an equivariant model; the procedure here uses only the second moment, its criterion is the commutation of the covariance with candidate permutations, and the object it returns is the matched group of a covariance and the linear transform that group determines, rather than a learned model.

\subsection{Noisy data and finite-SNR matching}
\label{subsec:noisy_data}

In practice, the covariance available to the practitioner is not the population covariance but an empirical estimate $\hat{\mathbf{R}}$ from a finite sample, and the matched-group invariance is therefore only approximate. The commutativity residual $\delta(\sigma, \hat{\mathbf{R}})$ defined above is positive even for genuine generators of the underlying matched group, because of finite-sample noise. A practical matching procedure thresholds $\delta$ against a tolerance $\tau$ that scales with the sample size and the signal-to-noise ratio: generators with $\delta < \tau$ are accepted as approximate symmetries and added to the recovered group; generators with $\delta \geq \tau$ are rejected. Adaptive thresholds based on the spectral properties of $\hat{\mathbf{R}}$ are developed in~\cite{thornton2026ad_framework, thornton2026spectral_estimation}, and a noise-aware variant of the Sequential DC-GEVP procedure with provable performance guarantees in the finite-SNR setting is given in~\cite{thornton2026double_commutator}.

A complementary metric for assessing matched-group quality is the algebraic coloring index $\alpha$~\cite{thornton2026ad_framework}, which quantifies the fraction of the total variance of $\hat{\mathbf{R}}$ explained by a candidate matched group $G^*$:
\begin{align*}
\alpha(G^*, \hat{\mathbf{R}}) &= 1 - \frac{\|\hat{\mathbf{R}} - \mathcal{P}_{G^*}(\hat{\mathbf{R}})\|_F^2}{\|\hat{\mathbf{R}}\|_F^2}, \\
\mathcal{P}_{G^*}(\hat{\mathbf{R}}) &= \frac{1}{|G^*|} \sum_{g \in G^*} \mathbf{P}_g \hat{\mathbf{R}} \mathbf{P}_g^{-1},
\end{align*}
where $\mathcal{P}_{G^*}$ is the Reynolds projection of $\hat{\mathbf{R}}$ onto the $G^*$-invariant subspace. The index satisfies $\alpha \in [0,1]$, with $\alpha = 0$ indicating no captured structure (matched group trivial, matched transform reduces to the data-dependent KLT) and $\alpha$ near $1$ indicating that the matched-group structure captures nearly all of the variance. The pair $(\delta, \alpha)$ together characterizes the quality of a candidate matched group in the finite-SNR regime: low $\delta$ confirms approximate invariance, high $\alpha$ confirms substantive structure rather than spurious symmetry. Operational guidelines for choosing $\tau$ and for interpreting $\alpha$ at given sample sizes are given in~\cite{thornton2026double_commutator, thornton2026spectral_estimation}.

\subsection{Transforms as points on a manifold}

The polynomial-time discovery procedure invites a geometric perspective. For a fixed signal dimension $M$, consider the space $U(M)$ of all $M \times M$ unitary matrices, the natural ambient space for the set of optimal transforms. This space is a smooth compact manifold of real dimension $M^2$. Within $U(M)$, each finite subgroup $G \subseteq S_M$ acting on $\mathbb{C}^M$ by permutation determines a specific unitary $U_G$ (up to the within-isotypic ambiguity captured by Theorem~\ref{thm:main}(iii)), and the assignment $G \mapsto U_G$ defines a discrete subset of $U(M)$ parameterized by subgroups.

In the continuous limit, as $M \to \infty$ and the underlying signal space becomes infinite-dimensional, the discrete subset of subgroups of $S_M$ is replaced by a continuous family of compact Lie groups acting on the signal space, and the corresponding family of matched transforms becomes a continuous subset (in fact, a real submanifold) of the infinite-dimensional unitary group of $L^2(\mathbb{R})$ or $L^2(\Omega)$. The DFT, DCT, Walsh-Hadamard, Haar wavelet, and KLT are not separate constructions floating in distinct corners of the transform space, they are nearby points on a common manifold, parameterized by their matched groups. The matched-group discovery procedure of Proposition~\ref{prop:dad_cad} navigates this manifold, identifying the symmetry-respecting point nearest a given empirical covariance.

This geometric perspective extends to the continuous setting via the orbit-type stratification of the space of compact subgroups of a Lie group, a structure familiar from equivariant geometry. The classical transforms occupy the strata corresponding to the well-known compact subgroups (cyclic, dihedral, wreath, abelian, semisimple), while the trivial-group stratum at the boundary corresponds to the unconstrained Karhunen-Lo\`eve transform. The matched-group discovery procedure can be interpreted as a stratification-aware optimization that identifies the most-restrictive stratum (the largest matched group) consistent with a given empirical covariance.

\subsection{Operational consequences}

The polynomial-time discovery procedure of~\cite{thornton2026double_commutator}, combined with the matched-group-determines-transform principle of Theorem~\ref{thm:main}, replaces the traditional choice-of-transform decision with an automatic algorithm. Given a covariance estimate $\hat{\mathbf{R}}$:
\begin{enumerate}
\item Discover the matched group $G^* = G^*(\hat{\mathbf{R}})$ in polynomial time via the Sequential DC-GEVP procedure.
\item Construct the matched unitary $U_{G^*}$ via the irreducible matrix elements of $G^*$.
\item Use $U_{G^*}$ as the optimal transform for any task on data with this covariance structure.
\end{enumerate}
The procedure mechanizes the entire workflow that has historically been driven by expert judgment, and it produces the same answer that a sufficiently informed expert would: the DFT for stationary data, the DCT for even-extended data such as image blocks, the Walsh-Hadamard transform for combinatorial data on $\{0, 1\}^n$, the Haar wavelet for hierarchically organized data, and the data-dependent KLT in the absence of any discoverable symmetry. The user-side decision is no longer ``which transform should I use,'' but rather ``what is the matched group of my data,'' a question that the framework answers in polynomial time.

\section{Composition Rules}
\label{sec:composition}

We give explicit rules for constructing matched transforms of compound groups, building on the special cases of Sections~\ref{sec:dft}--\ref{sec:iterated_wreath}.

\subsection{Direct product of groups}

\begin{theorem}[Direct product composition rule]
\label{thm:direct_product_rule}
Let $G$ and $H$ be finite groups acting unitarily on $\mathbb{C}^m$ and $\mathbb{C}^n$ respectively, with multiplicity-free representations and matched unitaries $U_G \in U(\mathbb{C}^m)$ and $U_H \in U(\mathbb{C}^n)$. Then the direct product $G \times H$ acts on $\mathbb{C}^m \otimes \mathbb{C}^n \cong \mathbb{C}^{mn}$ multiplicity-freely, and the matched unitary is
\[
U_{G \times H} = U_G \otimes U_H,
\]
where $\otimes$ denotes the Kronecker product.
\end{theorem}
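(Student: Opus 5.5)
The proof uses the standard description of the irreducibles of a direct product. Every irreducible representation of $G \times H$ is, up to equivalence, an outer tensor product $\rho \boxtimes \sigma$ with $\rho \in \hat{G}$ and $\sigma \in \hat{H}$, defined by $(\rho\boxtimes\sigma)(g,h) = \rho(g)\otimes\sigma(h)$. Moreover, $\rho\boxtimes\sigma \cong \rho'\boxtimes\sigma'$ if and only if $\rho\cong\rho'$ and $\sigma\cong\sigma'$.

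I will prove irreducibility and distinctness with characters. The character of $\rho\boxtimes\sigma$ is $\chi_\rho(g)\chi_\sigma(h)$, and the inner product of class functions on $G\times H$ factors as
\[
\langle \chi_\rho\chi_\sigma, \chi_{\rho'}\chi_{\sigma'}\rangle_{G\times H} = \langle\chi_\rho,\chi_{\rho'}\rangle_G\,\langle\chi_\sigma,\chi_{\sigma'}\rangle_H = \delta_{\rho\rho'}\delta_{\sigma\sigma'}.
\]
For completeness, $\sum_{\rho,\sigma} d_\rho^2 d_\sigma^2 = |G||H|$ by Theorem~\ref{thm:peter_weyl}, so no further irreducibles exist.

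Next, write $\pi_G \cong \bigoplus_{\rho\in S(\pi_G)} \rho$ and $\pi_H \cong \bigoplus_{\sigma\in S(\pi_H)}\sigma$, using multiplicity-freeness and Theorem~\ref{thm:isotypic}. Distributing the tensor product gives
\[
\pi_G\otimes\pi_H \cong \bigoplus_{(\rho,\sigma)\in S(\pi_G)\times S(\pi_H)} \rho\boxtimes\sigma .
\]
Each summand is irreducible, and any two distinct summands are inequivalent. Hence every irreducible of $G\times H$ occurs with multiplicity at most one, which is the multiplicity-free claim. The $(\rho,\sigma)$-isotypic component is $V^{(\rho)}\otimes W^{(\sigma)}$.

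For the unitary, Theorem~\ref{thm:main}(ii) says the columns of $U_G$ in the $\rho$-block form an orthonormal basis of $V^{(\rho)}$, and similarly for $U_H$. The Kronecker products of these columns form an orthonormal basis of $V^{(\rho)}\otimes W^{(\sigma)}$. Therefore $U_G\otimes U_H$ is unitary and is adapted to the isotypic decomposition of $G\times H$.

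Diagonalization then follows directly. By Theorem~\ref{thm:commutant_structure} with all multiplicities equal to one, every $\mathbf{R}$ in the commutant of $\pi_G\otimes\pi_H$ acts as a scalar on each component $V^{(\rho)}\otimes W^{(\sigma)}$. So $(U_G\otimes U_H)^*\mathbf{R}(U_G\otimes U_H)$ is diagonal, and by Theorem~\ref{thm:main}(iii) this is the matched unitary up to the stated within-block freedom. As a cross-check, I will note that the commutant is $\mathcal{A}_{\pi_G}\otimes\mathcal{A}_{\pi_H}$: the dimensions match, both having dimension $|S(\pi_G)|\,|S(\pi_H)|$, and the inclusion of the tensor-product algebra in the commutant is clear. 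Both factors are diagonalized by $U_G$ and $U_H$.

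The main obstacle is the irreducibility and distinctness of the $\rho\boxtimes\sigma$; the rest is bookkeeping. The character factorization above settles it cleanly. The point to stress is that distinctness is what prevents multiplicities from merging. Because distinctness holds, the tensor product of multiplicity-free representations of different groups stays multiplicity-free. This contrasts with the inner tensor product of two representations of the same group, which generally acquires multiplicities.
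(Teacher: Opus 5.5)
Your proof is correct and follows the paper's route. You classify the irreducibles of $G\times H$ as outer tensor products $\rho\boxtimes\sigma$, distribute $\pi_G\otimes\pi_H$ into pairwise inequivalent summands to get multiplicity-freeness, and observe that $U_G\otimes U_H$ is adapted to the isotypic components $V^{(\rho)}\otimes W^{(\sigma)}$. You also supply the character-orthogonality proof of the classification that the paper only cites from Serre, and your isotypic-component argument amounts to the paper's remark that the central projection factorizes as $P^{(\rho\boxtimes\sigma)}=P^{(\rho)}\otimes P^{(\sigma)}$.
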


\begin{proof}
The irreducible representations of $G \times H$ are the outer tensor products $\rho_G \boxtimes \rho_H$ for $\rho_G \in \hat{G}$ and $\rho_H \in \hat{H}$, with $\dim(\rho_G \boxtimes \rho_H) = \dim \rho_G \cdot \dim \rho_H$~\cite[Section 3.2]{serre1977linear}. The tensor-product representation $\pi_G \otimes \pi_H$ on $\mathbb{C}^m \otimes \mathbb{C}^n$ decomposes as $(\bigoplus_{\rho_G} \rho_G) \otimes (\bigoplus_{\rho_H} \rho_H) = \bigoplus_{\rho_G, \rho_H} \rho_G \boxtimes \rho_H$, and is multiplicity-free if and only if both factors are. The Kronecker product of the matched unitaries diagonalizes the tensor-product covariance algebra, and the central projection (Equation~\ref{eq:central_projection}) on the tensor product factorizes as the product of central projections on each factor, giving columns of $U_G \otimes U_H$ as the corresponding matrix-element tensor products.
\end{proof}

\begin{example}[Hadamard as $L$-fold direct product]
$\mathbb{Z}_2^L = \mathbb{Z}_2 \times \cdots \times \mathbb{Z}_2$, and $U_{\mathbb{Z}_2} = H_1 = \frac{1}{\sqrt{2}} \begin{pmatrix} 1 & 1 \\ 1 & -1 \end{pmatrix}$. By Theorem~\ref{thm:direct_product_rule}, $U_{\mathbb{Z}_2^L} = H_1^{\otimes L}$, recovering the tensor-product factorization of the Walsh-Hadamard transform (Theorem~\ref{thm:hadamard}).
\end{example}

\subsection{Wreath product of groups}

\begin{theorem}[Wreath product composition rule]
\label{thm:wreath_rule}
Let $G_0$ be a finite group acting unitarily on $\mathbb{C}^W$ with multiplicity-free representation and matched unitary $U_{G_0}$, and let $G_1$ be a finite group acting unitarily on $\mathbb{C}^K$ with multiplicity-free representation and matched unitary $U_{G_1}$. The wreath product $G_0 \wr G_1$ acts on $\mathbb{C}^W \otimes \mathbb{C}^K \cong \mathbb{C}^{WK}$ (the leaves of a depth-2 tree with branching $(W, K)$), and its matched unitary is given by the explicit recursive formula (Equation~\ref{eq:wreath_recursion} at $L = 2$). The multiplicity-free property of $G_0 \wr G_1$ on the leaves is automatic when both $G_0$ on $\mathbb{C}^W$ and $G_1$ on $\mathbb{C}^K$ are multiplicity-free.
\end{theorem}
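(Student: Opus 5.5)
The plan is to compute the isotypic decomposition of the leaf representation of $G_0 \wr G_1 = G_0^K \rtimes G_1$ on $\mathbb{C}^W \otimes \mathbb{C}^K$ directly, and then read off the matched unitary from Theorem~\ref{thm:main}. The first step is a reduction: multiplicity-freeness forces transitivity. For a permutation representation, the multiplicity of the trivial character equals the number of orbits, since the fixed vectors are spanned by orbit indicator vectors. So multiplicity-freeness of $G_0$ on $\mathbb{C}^W$ and of $G_1$ on $\mathbb{C}^K$ forces both actions to be transitive. I will use this repeatedly.

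Next, write $\mathbb{C}^W = \bigoplus_{\alpha \in S(\pi_{G_0})} V^{(\alpha)}$, with $\alpha = 0$ the trivial constituent spanned by $\mathbf{1}_W/\sqrt{W}$. This gives a decomposition of $\mathbb{C}^{WK} = \bigoplus_\alpha V^{(\alpha)} \otimes \mathbb{C}^K$, and each summand is invariant under the full wreath product: the base group acts blockwise, and $G_1$ only permutes the blocks $e_i$.

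For $\alpha = 0$, the base group $G_0^K$ acts trivially on $\mathbf{1}_W \otimes \mathbb{C}^K$, and $G_1$ acts by its permutation representation. This summand is therefore multiplicity-free by hypothesis, and all its constituents are trivial on $G_0^K$.

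For $\alpha \neq 0$, I plan to use Clifford--Mackey theory. As a $G_0^K$-module, $V^{(\alpha)} \otimes \mathbb{C}^K = \bigoplus_{i} V^{(\alpha)} \otimes e_i$. Here the $i$-th piece is the irreducible $1 \boxtimes \cdots \boxtimes \alpha \boxtimes \cdots \boxtimes 1$, with $\alpha$ in slot $i$. These $K$ irreducibles are pairwise inequivalent because $\alpha$ is nontrivial. Since $G_1$ is transitive, the whole summand is $\mathrm{Ind}_{G_0^K \rtimes \mathrm{Stab}(i)}^{G_0 \wr G_1}$ of an irreducible whose inertia group is exactly $G_0^K \rtimes \mathrm{Stab}_{G_1}(i)$. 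By Mackey's irreducibility criterion, or directly by computing $\dim \mathrm{Hom}$ via Schur's lemma (Theorem~\ref{thm:schur}) applied to the $G_0^K$-isotypic pieces, it is therefore irreducible.

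For distinctness, compare restrictions to $G_0^K$, which are determined by their isotypic content. Summands for different $\alpha \neq 0$ are inequivalent because their restrictions differ. None of them is equivalent to a constituent of the $\alpha = 0$ summand, because those constituents restrict trivially. Hence every irreducible occurs at most once, and the representation is multiplicity-free.

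The matched unitary then follows from Theorem~\ref{thm:main}(iii). On the $\alpha = 0$ block I take the columns $(\mathbf{1}_W/\sqrt{W}) \otimes \mathbf{u}$, where $\mathbf{u}$ runs over the columns of $U_{G_1}$. On each irreducible block $V^{(\alpha)} \otimes \mathbb{C}^K$ with $\alpha \neq 0$, any orthonormal basis is admissible; I take $\mathbf{u}^{(0)}_\alpha \otimes e_i$, where $\mathbf{u}^{(0)}_\alpha$ runs over the columns of $U_{G_0}$ in the $\alpha$-block and $i = 1, \ldots, K$. This is the $L = 2$ instance of the recursion: coarse-scale coefficients come from $U_{G_1}$ applied to block averages, and fine-scale coefficients come from $U_{G_0}$ applied within each block. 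Equation~\ref{eq:wreath_recursion} should be exactly this formula, and I would check that when the recursion is stated.

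The main obstacle I expect is the irreducibility of $V^{(\alpha)} \otimes \mathbb{C}^K$ for $\alpha \neq 0$. It needs both the inequivalence of the $K$ slot-conjugates and the transitivity of $G_1$, and the latter must be extracted from multiplicity-freeness by the orbit-count argument. If Mackey's criterion is not available in the paper's toolkit, I will instead show that the commutant of this block is one-dimensional. Any intertwiner $T$ must preserve each slot piece by Schur's lemma (i), act as a scalar $c_i$ on each by Schur's lemma (ii), and have $c_i = c_{g(i)}$ for all $g \in G_1$; transitivity then forces a single scalar.
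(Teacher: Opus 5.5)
Your proof is correct, but for the multiplicity-free claim it takes a different route from the paper. The paper only sketches this part. It cites Specker's classification of wreath-product irreducibles by partition-valued functions on $\hat{G_0}$ and asserts multiplicity-freeness by analogy with the depth-class orbit count of Lemma~\ref{lem:wl_orbits}. It then describes the unitary as $U_{G_0}$ within blocks followed by $U_{G_1}$ on the multiplicity space, deferring the details to Foote et al.

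You instead run Clifford theory directly on the leaf module:
\begin{itemize}
\item the trivial-multiplicity orbit count forces $G_1$ to be transitive, a step the paper leaves implicit but on which the conclusion depends;
\item Schur's lemma plus transitivity makes each $V^{(\alpha)} \otimes \mathbb{C}^K$ irreducible for nontrivial $\alpha$;
\item restriction to $G_0^K$ separates all constituents.
\end{itemize}
This stays within the tools of Section~\ref{sec:background} and covers any transitive $G_1$; the partition-valued parametrization is the one for $G_0 \wr S_K$. It also gives the correct constituent count $r_0 + r_1 - 1$, where $r_0, r_1$ count the constituents of the two factor representations. That count matches the number of orbitals on leaf pairs, which reduces to the three depth classes only when both factor actions are $2$-transitive. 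The paper's route, in exchange, ties the result to the full irreducible classification and to fast wreath-product Fourier algorithms, which matter for computation rather than for the diagonalization claim.

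When you check Equation~\ref{eq:wreath_recursion}, note that its factor $I_{M_{L-1}} \otimes U_{G_L}$ applies $U_{G_1}$ across blocks on every within-block channel, not just the block-average one. Also, $U_{\mathcal{G}_{L-1}}^{\otimes K_L}$ has to be read as a $K_L$-fold block-diagonal direct sum for the dimensions to work. So at $L = 2$ the formula produces, up to ordering, the columns $\mathbf{u}^{(0)}_{\alpha,j} \otimes \mathbf{u}^{(1)}_k$ on the nontrivial blocks, instead of your $\mathbf{u}^{(0)}_{\alpha,j} \otimes e_i$. You have shown each such block is a single irreducible, so both bases are admissible by Theorem~\ref{thm:main}(iii), and one sentence closes the identification. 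In the dyadic case yours is the Haar basis and the literal formula gives the $4 \times 4$ Walsh-Hadamard basis; both are valid eigenbases.
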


\begin{proof}
The irreducible representations of $G_0 \wr G_1$ are classified by Specker's theorem~\cite{specker1933verfeinerung}: they are indexed by partition-valued functions on $\hat{G_0}$. The permutation representation on the leaves is multiplicity-free under the stated conditions, with the depth-class orbit structure of Lemma~\ref{lem:wl_orbits} generalizing to depth $\leq L$. The recursive matched unitary follows from inducing the matched unitary at each level: within each $G_0$-block of $K$ subtrees, $U_{G_0}^{\otimes K}$ block-diagonalizes the within-block structure; the cross-block $G_1$ action is then diagonalized by $U_{G_1}$ on the multiplicity space. The full development appears in~\cite{foote2000wreath, thornton2026ad_framework}.
\end{proof}

\subsection{Semidirect product of groups}

\begin{theorem}[Semidirect product composition rule, abelian-by-finite case]
\label{thm:semidirect_rule}
Let $N$ be a finite abelian group acting unitarily on $\mathbb{C}^M$, let $H$ be a finite group acting on $N$ by automorphisms, and form the semidirect product $G = N \rtimes H$. Suppose the permutation representation of $G$ on $\mathbb{C}^M$ is multiplicity-free. Then the matched unitary $U_G$ is constructed in two stages:
\begin{enumerate}
\item Compute the matched unitary $U_N$ for the abelian normal subgroup $N$, which by Theorem~\ref{thm:dft} (generalized to arbitrary abelian groups) is the abelian Fourier transform on $N$. In the $U_N$ basis, the covariance is diagonal with character-indexed eigenvalues.
\item The quotient $H$ acts by permuting the characters of $N$, organized into $H$-orbits. Within each $H$-orbit, apply the matched unitary $U_H$ (a smaller Fourier transform on the orbit, in effect) to obtain the final matched basis.
\end{enumerate}
\end{theorem}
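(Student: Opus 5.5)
The plan is Clifford/Mackey theory specialized to an abelian normal subgroup, followed by an appeal to Theorem~\ref{thm:main} for the within-orbit step.

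\emph{Step 1: the $N$-stage.} Restrict $\pi$ to $N$. Since $N$ is abelian, its irreducibles are characters $\chi \in \hat{N}$, so $\mathbb{C}^M$ splits orthogonally into $N$-isotypic components $W_\chi$. The projection onto each is
\[
P_\chi = \frac{1}{|N|}\sum_{n\in N}\overline{\chi(n)}\,\pi(n),
\]
which is Equation~\ref{eq:central_projection} with $d_\chi = 1$. Choosing an orthonormal basis adapted to $\bigoplus_\chi W_\chi$ gives the unitary $U_N$. This is the abelian Fourier transform: when $N$ acts regularly, each $W_\chi$ is spanned by the character itself, exactly as in the proof of Theorem~\ref{thm:dft}. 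Every $G$-invariant $\mathbf{R}$ is in particular $N$-invariant, so it preserves each $W_\chi$. In the $U_N$ basis $\mathbf{R}$ is therefore block diagonal, and it is diagonal when every $W_\chi$ is one-dimensional.

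\emph{Step 2: $H$ permutes the blocks.} For $h\in H$, normality of $N$ gives $\pi(h)P_\chi\pi(h)^{-1} = P_{h\cdot\chi}$, where $(h\cdot\chi)(n)=\chi(h^{-1}nh)$. Hence $\pi(h)$ maps $W_\chi$ onto $W_{h\cdot\chi}$, and for each $H$-orbit $\mathcal{O}\subseteq\hat N$ the sum $W_{\mathcal{O}}=\bigoplus_{\chi\in\mathcal{O}}W_\chi$ is $G$-invariant. The plan is then to invoke Mackey's little-group description: the irreducible constituents of $\pi$ supported on $\mathcal{O}$ are induced from the stabilizer $N\rtimes H_\chi$, tensored with irreps of $H_\chi$.

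Multiplicity-freeness of $\pi$ should force each $W_\chi$ to be multiplicity-free as an $H_\chi$-module. In the permutation setting with $W_\chi$ of small dimension, this reduces the $G$-action on $W_{\mathcal{O}}$ to the permutation action of $H$ on the orbit $\mathcal{O}\cong H/H_\chi$, possibly twisted by a character of $H_\chi$.

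\emph{Step 3: the $H$-stage.} Apply Theorem~\ref{thm:main}, parts (i)--(ii), to this action of $H$ on $\mathbb{C}^{\mathcal{O}}$. Its commutant is commutative: by Lemma~\ref{lem:multiplicity_free_commutant} together with the hypothesis, it is the restriction of $\mathcal{A}_\pi$ to $W_{\mathcal{O}}$. So it is diagonalized by a fixed unitary $U_H^{(\mathcal{O})}$ built from the matrix elements of $H$ (or of $H_\chi$ induced up). Composing, the matched unitary is
\[
U_G = U_N\Big(\bigoplus_{\mathcal{O}}U_H^{(\mathcal{O})}\Big),
\]
and every $G$-invariant $\mathbf{R}$ is diagonal in it. Uniqueness up to the freedoms of Theorem~\ref{thm:main}(iii) is inherited directly.

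\emph{Main obstacle.} The hard step is Step 2. I must show that the restriction of $\mathcal{A}_\pi$ to $W_{\mathcal{O}}$ is exactly the commutant of a representation that $H$ (or $H_\chi$) carries on the orbit, so that "apply $U_H$ on the orbit" is meaningful. This is delicate when some $W_\chi$ has dimension greater than one, or when the stabilizer acts on $W_\chi$ through a projective or twisted representation. I would first establish it under the simplifying assumption $\dim W_\chi\le 1$, which holds when $N$ acts regularly or freely. In that case $U_N$ already diagonalizes $\mathbf{R}$ up to the pairing within orbits, as in the dihedral case of Theorem~\ref{thm:dct}. For the general case I would handle the twist with Mackey's theorem.
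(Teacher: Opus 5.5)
Your route, Clifford/Mackey theory for the abelian normal subgroup, is the one the paper's one-line proof points to. Steps~1--2 are sound, including your correction that $\mathbf{R}$ is in general only block diagonal in the $U_N$ basis. The gap is in Step~3, where you identify $\mathcal{A}_\pi|_{W_{\mathcal{O}}}$ with the commutant of $H$ acting on $\mathbb{C}^{\mathcal{O}}$ and conclude that the latter is commutative. Both halves fail.

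In the case you actually treat, $\dim W_\chi = 1$, every $T \in \mathcal{A}_\pi$ acts by a scalar on each line $W_{\chi'}$. $H$-equivariance then makes that scalar constant along the transitive orbit, so $\mathcal{A}_\pi|_{W_{\mathcal{O}}} = \mathbb{C} I$. The commutant of $H$ alone on $W_{\mathcal{O}}$ contains this but is generally larger, and it need not be commutative. For example, the sharply $2$-transitive group $\mathbb{F}_3^2 \rtimes Q_8$ on $9$ points satisfies every hypothesis: $\pi$ is the trivial representation plus an $8$-dimensional irreducible. Yet $Q_8$ acts regularly on the eight nontrivial characters, so the $H$-action on that orbit contains the $2$-dimensional irreducible of $Q_8$ twice and has no matched unitary $U_H$. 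What is true in this regime is that stage~1 already diagonalizes every $G$-invariant $\mathbf{R}$ exactly, not merely ``up to the pairing within orbits.'' Stage~2 is then an optional rotation inside a $G$-irreducible block, as Theorem~\ref{thm:main}(iii) permits.

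The case where stage~2 does real work, $\dim W_\chi > 1$, is the one you defer, and there you are diagonalizing the wrong object. Take $N = \{e\}$, so $G = H$, there is one orbit $\{\chi_0\}$, and $W_{\chi_0} = \mathbb{C}^M$. Your $U_H^{(\mathcal{O})}$ would be $1 \times 1$, while the answer must be $U_H$ on all of $\mathbb{C}^M$.

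The missing idea is that the block $W_\chi$ carries a genuine representation $\sigma_\chi(h) = \pi(h)|_{W_\chi}$ of the stabilizer $H_\chi$. There is no projective twist, since $H$ is a subgroup of $G$ and $\pi(h)$ maps $W_\chi$ onto $W_{h\cdot\chi}$. The map $T \mapsto T|_{W_\chi}$ is then an algebra isomorphism from $\mathcal{A}_\pi|_{W_{\mathcal{O}}}$ onto $\End_{H_\chi}(W_\chi)$. Its inverse is $S \mapsto \pi(h) S \pi(h)^{-1}$ on $W_{h\cdot\chi}$. This is well defined exactly because $S$ commutes with $\sigma_\chi$, and it is automatically $N$-equivariant because $N$ acts by scalars on each $W_{\chi'}$.

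Consequently $\pi$ is multiplicity-free if and only if every $\sigma_\chi$ is. The matched basis of $W_{\mathcal{O}}$ is $\{\pi(h_i) u_j\}$, where the $u_j$ are the columns of the matched unitary of $\sigma_\chi$ on $W_\chi$ and the $h_i$ are coset representatives of $H/H_\chi$. This elementary argument needs no appeal to Mackey's classification. It should replace $U_N\big(\bigoplus_{\mathcal{O}} U_H^{(\mathcal{O})}\big)$, which acts on the orbit index and never diagonalizes inside $W_\chi$. The statement's own wording of stage~2 invites your reading, and the paper's proof, which only cites Mackey induction, does not settle it either.
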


\begin{proof}
This is a Mackey-induction construction~\cite{serre1977linear} adapted to the multiplicity-free setting; see also~\cite[Section 7.3]{terras1999fourier}. The two-stage cascade reflects the Mackey decomposition: the irreducible representations of $N \rtimes H$ are obtained by inducing representations of stabilizers of $H$-orbits on $\hat{N}$. The matched unitary inherits this two-stage structure.
\end{proof}

\begin{example}[Dihedral transforms revisited as a semidirect product]
The dihedral group $D_M = \mathbb{Z}_M \rtimes \mathbb{Z}_2$ is a semidirect product, with $\mathbb{Z}_M$ acting by rotations and $\mathbb{Z}_2$ acting by reflection. Theorem~\ref{thm:semidirect_rule} gives an alternative derivation of the real Fourier basis of the $M$-cycle: apply the DFT on $\mathbb{Z}_M$, then organize the resulting characters into the $\mathbb{Z}_2$-orbits $\{\chi_k, \chi_{-k}\}$, and within each orbit of size two apply the $2 \times 2$ matched transform $U_{\mathbb{Z}_2}$, which yields the cosine and sine pairs, equivalently the Hartley basis of Remark~\ref{rem:hartley}. The DCT-II arises when the same construction is applied to $D_{2M}$ acting on $\mathbb{Z}_{2M}$ and the result is restricted to the $\sigma$-fixed subspace (Theorem~\ref{thm:dct}).
\end{example}

\section{Special Case: Elementary Abelian $\mathbb{Z}_2^n$ and the Walsh-Hadamard Transform}
\label{sec:hadamard}

The elementary abelian 2-group $\mathbb{Z}_2^n$ acts on $\mathbb{C}^{2^n}$ by component-wise translation (XOR), and the resulting matched transform is the Walsh-Hadamard transform.

\subsection{Structure of $\mathbb{Z}_2^n$ and its characters}

$\mathbb{Z}_2^n$ is abelian of order $2^n$. Its elements are binary vectors $\mathbf{a} = (a_1, \ldots, a_n) \in \{0, 1\}^n$ with addition mod 2. The irreducible characters are indexed by binary vectors $\mathbf{k} \in \{0, 1\}^n$ and are given by
\begin{equation}
\label{eq:hadamard_characters}
\chi_{\mathbf{k}}(\mathbf{j}) = (-1)^{\langle \mathbf{k}, \mathbf{j} \rangle}, \quad \langle \mathbf{k}, \mathbf{j} \rangle = \sum_{l=1}^n k_l j_l \pmod{2}.
\end{equation}

\subsection{The Walsh-Hadamard transform as the matched transform}

\begin{theorem}[Walsh-Hadamard is the matched transform of $\mathbb{Z}_2^n$]
\label{thm:hadamard}
The matched unitary $U_{\mathbb{Z}_2^n}$ for the regular representation of $\mathbb{Z}_2^n$ on $\mathbb{C}^{2^n}$ is the Walsh-Hadamard matrix
\begin{equation}
\label{eq:hadamard_kernel}
(U_{\mathbb{Z}_2^n})_{\mathbf{j}, \mathbf{k}} = \frac{1}{\sqrt{2^n}} (-1)^{\langle \mathbf{j}, \mathbf{k} \rangle},
\end{equation}
indexed by binary vectors $\mathbf{j}, \mathbf{k} \in \{0, 1\}^n$. Equivalently, $U_{\mathbb{Z}_2^n} = H_1^{\otimes n}$, where $H_1 = \frac{1}{\sqrt{2}} \begin{pmatrix} 1 & 1 \\ 1 & -1 \end{pmatrix}$ is the $2 \times 2$ Hadamard matrix. For every $\mathbb{Z}_2^n$-invariant Hermitian PSD matrix $\mathbf{R}$, the Walsh-Hadamard transform diagonalizes $\mathbf{R}$.
\end{theorem}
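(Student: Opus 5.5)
The proof will follow the template of Theorem~\ref{thm:dft}: apply Theorem~\ref{thm:main} to the regular representation of $\mathbb{Z}_2^n$, use the characters~\eqref{eq:hadamard_characters} in the central projection formula~\eqref{eq:central_projection}, and then check the tensor-product factorization separately. The regular representation is $\pi(\mathbf{a}) e_{\mathbf{j}} = e_{\mathbf{j} \oplus \mathbf{a}}$. Since $\mathbb{Z}_2^n$ is abelian, every irreducible representation is one-dimensional. The argument of Lemma~\ref{lem:cyclic_mf}, that the regular representation contains each irrep with multiplicity $d_\rho = 1$, carries over verbatim. Hence $\pi$ is multiplicity-free with $2^n$ one-dimensional constituents $\chi_{\mathbf{k}}$, and Theorem~\ref{thm:main}(i) supplies a fixed diagonalizing unitary. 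The remaining work is to identify that unitary explicitly.

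For each $\mathbf{k}$ I will compute
\[
P^{(\chi_{\mathbf{k}})} e_{\mathbf{0}} = 2^{-n} \sum_{\mathbf{a}} \overline{\chi_{\mathbf{k}}(\mathbf{a})}\, e_{\mathbf{a}} = 2^{-n} \sum_{\mathbf{a}} (-1)^{\langle \mathbf{k}, \mathbf{a}\rangle} e_{\mathbf{a}}.
\]
The characters are real, so no conjugation ambiguity arises. This vector is nonzero, and normalizing it gives the column $2^{-n/2}\big((-1)^{\langle \mathbf{j}, \mathbf{k}\rangle}\big)_{\mathbf{j}}$, which is exactly~\eqref{eq:hadamard_kernel}.

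Orthonormality of these columns follows from the character orthogonality relations in Theorem~\ref{thm:peter_weyl}. As an elementary check, $\sum_{\mathbf{j}} (-1)^{\langle \mathbf{j}, \mathbf{k} \oplus \mathbf{k}'\rangle} = 2^n \delta_{\mathbf{k}\mathbf{k}'}$, because the sum factorizes coordinatewise.

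The diagonalization statement should be verified directly rather than only by citation. A $\mathbb{Z}_2^n$-invariant $\mathbf{R}$ satisfies $R_{\mathbf{i}\oplus\mathbf{a},\,\mathbf{j}\oplus\mathbf{a}} = R_{\mathbf{i}\mathbf{j}}$, so $R_{\mathbf{i}\mathbf{j}} = r(\mathbf{i}\oplus\mathbf{j})$ depends only on $\mathbf{i}\oplus\mathbf{j}$. Applying $\mathbf{R}$ to a column gives
\[
\sum_{\mathbf{j}} r(\mathbf{i}\oplus\mathbf{j})(-1)^{\langle \mathbf{j},\mathbf{k}\rangle} = (-1)^{\langle \mathbf{i},\mathbf{k}\rangle}\,\hat r(\mathbf{k})
\]
after the substitution $\mathbf{j} = \mathbf{i}\oplus\mathbf{m}$. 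Therefore each column is an eigenvector with eigenvalue $\hat r(\mathbf{k}) = \sum_{\mathbf{m}} r(\mathbf{m})(-1)^{\langle \mathbf{m},\mathbf{k}\rangle}$. This agrees with Theorem~\ref{thm:main}(iv).

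For the factorization $U_{\mathbb{Z}_2^n} = H_1^{\otimes n}$, I will use $(-1)^{\langle \mathbf{j},\mathbf{k}\rangle} = \prod_l (-1)^{j_l k_l}$. Under the binary lexicographic ordering of the indices, this product is precisely the $(\mathbf{j},\mathbf{k})$ entry of the Kronecker product $H_1^{\otimes n}$. The same identity also follows from Theorem~\ref{thm:direct_product_rule} applied inductively with $U_{\mathbb{Z}_2} = H_1$.

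No step is a real obstacle. The only care needed is bookkeeping. First, the ordering convention for $\{0,1\}^n$ must be the one under which the Kronecker product matches~\eqref{eq:hadamard_kernel}. Second, ``matched unitary'' has to be read modulo the uniqueness of Theorem~\ref{thm:main}(iii). Since all irreps are one-dimensional, the columns are determined only up to unimodular scalars and ordering, and the Walsh-Hadamard matrix is the canonical representative.
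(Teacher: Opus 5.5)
Your proposal is correct and follows the paper's proof: apply Theorem~\ref{thm:main} to the regular representation, which is multiplicity-free because $\mathbb{Z}_2^n$ is abelian; read off column $\mathbf{k}$ from the central projection formula applied to $e_{\mathbf{0}}$; and obtain $H_1^{\otimes n}$ from the direct-product structure via Theorem~\ref{thm:direct_product_rule}. Your direct check that $R_{\mathbf{i}\mathbf{j}} = r(\mathbf{i}\oplus\mathbf{j})$ makes every column an eigenvector, and your entrywise Kronecker verification, are sound additions that make the argument self-contained without changing the route.
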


\begin{proof}
Apply Theorem~\ref{thm:main} to $G = \mathbb{Z}_2^n$ in its regular representation. The representation is multiplicity-free (Lemma~\ref{lem:cyclic_mf} generalizes: any abelian regular representation is multiplicity-free), and every character is 1-dimensional. The central projection formula gives column $\mathbf{k}$ of $U_{\mathbb{Z}_2^n}$ as
\[
\mathbf{u}_{\mathbf{k}} = \frac{1}{\sqrt{2^n}} \sum_{\mathbf{j} \in \{0, 1\}^n} (-1)^{\langle \mathbf{j}, \mathbf{k} \rangle} e_{\mathbf{j}},
\]
which yields Equation~\ref{eq:hadamard_kernel}. The factorization $U_{\mathbb{Z}_2^n} = H_1^{\otimes n}$ follows from the tensor-product structure of $\mathbb{Z}_2^n = \mathbb{Z}_2 \times \cdots \times \mathbb{Z}_2$ and the composition rule for direct products (Theorem~\ref{thm:direct_product_rule} below).
\end{proof}

\begin{remark}[Hadamard is the DFT on $\mathbb{Z}_2^n$]
\label{rem:hadamard_dft}
A more conceptual statement: Theorem~\ref{thm:hadamard} is Theorem~\ref{thm:dft} for the underlying group $\mathbb{Z}_2^n$ rather than $\mathbb{Z}_M$. The Walsh-Hadamard transform is the DFT on the group $\mathbb{Z}_2^n$ in exactly the same sense that the classical DFT is the DFT on $\mathbb{Z}_M$. Both are special cases of the abelian DFT, which is in turn a special case of Theorem~\ref{thm:main}.
\end{remark}

\section{Related Transforms on $\mathbb{Z}_2^n$: Reed-Muller and Arithmetic}
\label{sec:rm_arithmetic}

The Walsh-Hadamard transform of Section~\ref{sec:hadamard} is the matched orthogonal eigenbasis of $\mathbb{Z}_2^n$-invariant covariances, with columns equal to the characters of $\mathbb{Z}_2^n$. Two closely related transforms on the same function space, the Reed-Muller transform and the arithmetic transform (also called the probability transform), share the matched group $\mathbb{Z}_2^n$ but use non-orthogonal bases tailored to algebraic representation rather than to KL-optimal spectral decomposition. We include them here for completeness because they are widely used in Boolean function analysis, coding theory, and logic synthesis, and because their group-theoretic context is the same as that of the Walsh-Hadamard transform. More details about the application of spectral transforms to binary switching logic, including extensive coverage of the Walsh, Reed-Muller, arithmetic, and Haar transforms in the context of digital circuit synthesis, verification, and decision-diagram representations, can be found in~\cite{thornton2001spectral}.

\subsection{The Reed-Muller transform}
\label{subsec:reed_muller}

The Reed-Muller transform represents a Boolean function $f: \mathbb{Z}_2^n \to \mathbb{F}_2$ in algebraic normal form (ANF) by expanding it in the basis of monomials. Each Boolean function on $\mathbb{Z}_2^n$ admits a unique representation
\[
f(x_1, \ldots, x_n) = \bigoplus_{S \subseteq [n]} c_S \prod_{i \in S} x_i,
\]
called the Zhegalkin polynomial or algebraic normal form, where the coefficients $c_S \in \mathbb{F}_2$ are uniquely determined by the truth table of $f$. The Reed-Muller transform is the linear map from truth-table representation to ANF coefficient representation.

\begin{definition}[Reed-Muller transform matrix]
\label{def:reed_muller}
The Reed-Muller transform on $n$ variables is the $2^n \times 2^n$ matrix $\mathbf{R}_n$ defined recursively by the tensor product
\[
\mathbf{R}_n = \mathbf{R}_1^{\otimes n}, \quad \mathbf{R}_1 = \begin{pmatrix} 1 & 0 \\ 1 & 1 \end{pmatrix}.
\]
Indexing rows and columns of $\mathbf{R}_n$ by subsets $S, T \subseteq [n]$, the entry $(\mathbf{R}_n)_{S, T}$ equals $1$ if $T \subseteq S$ and $0$ otherwise.
\end{definition}

\begin{theorem}[Reed-Muller transform on $\mathbb{Z}_2^n$]
\label{thm:reed_muller}
The Reed-Muller transform $\mathbf{R}_n$ takes the truth-table representation of a Boolean function on $\mathbb{Z}_2^n$ to its ANF coefficients, with columns given by the monomial basis functions $m_S(x) = \prod_{i \in S} x_i$. The matched group is $\mathbb{Z}_2^n$, the same as for the Walsh-Hadamard transform of Theorem~\ref{thm:hadamard}, but $\mathbf{R}_n$ is non-orthogonal: it is lower-triangular (when subsets are ordered by inclusion) with $\det \mathbf{R}_n = 1$, and satisfies $\mathbf{R}_n^2 \equiv \mathbf{I} \pmod{2}$. The Reed-Muller transform is therefore self-inverse over $\mathbb{F}_2$ but not over $\mathbb{Z}$ or $\mathbb{R}$.
\end{theorem}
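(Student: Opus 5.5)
The plan is to prove every algebraic claim from the Kronecker structure $\mathbf{R}_n = \mathbf{R}_1^{\otimes n}$ together with the mixed-product rule $(A\otimes B)(C\otimes D) = AC \otimes BD$. The one claim that is interpretive rather than computational, the phrase ``the matched group is $\mathbb{Z}_2^n$'', I will treat separately at the end.

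\emph{Entry formula, columns and triangularity.} Index $\{0,1\}^n$ by subsets $S \subseteq [n]$ through $x \leftrightarrow \mathbf{1}_S$. A Kronecker product entry factors over coordinates, so $(\mathbf{R}_n)_{S,T} = \prod_{l=1}^{n} (\mathbf{R}_1)_{s_l,t_l}$. Since $(\mathbf{R}_1)_{s,t} = 1$ exactly when $t \le s$, this gives $(\mathbf{R}_n)_{S,T} = 1$ iff $T \subseteq S$, which is the entry formula in Definition~\ref{def:reed_muller}. Read down column $T$, this is $S \mapsto \prod_{i \in T} (\mathbf{1}_S)_i = m_T(\mathbf{1}_S)$, so the columns are the monomials evaluated on the truth-table points. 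For triangularity, order subsets by binary value, a linear extension of inclusion. A Kronecker product of lower-triangular factors is lower triangular in this lexicographic order, with diagonal entries equal to products of diagonal entries. Hence every diagonal entry is $1$ and $\det \mathbf{R}_n = 1$.

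\emph{Squares and inverses.} By the mixed-product rule, $\mathbf{R}_n^2 = (\mathbf{R}_1^2)^{\otimes n}$ with $\mathbf{R}_1^2 = \begin{pmatrix} 1 & 0 \\ 2 & 1\end{pmatrix}$. Expanding gives $(\mathbf{R}_n^2)_{S,T} = 2^{|S\setminus T|}\,[T\subseteq S]$. This is congruent to the identity modulo $2$. Over $\mathbb{Z}$ or $\mathbb{R}$, however, the entry at $(\{1\},\varnothing)$ equals $2 \neq 0$, so $\mathbf{R}_n$ is not self-inverse there. For completeness I will also record the integer inverse $(\mathbf{R}_1^{-1})^{\otimes n}$, which is the Möbius matrix with entries $(-1)^{|S\setminus T|}[T\subseteq S]$.

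\emph{ANF correspondence.} Evaluate the Zhegalkin expansion at $x = \mathbf{1}_S$. A monomial $m_T$ contributes exactly when $T \subseteq S$, so $f(\mathbf{1}_S) = \bigoplus_{T\subseteq S} c_T$, i.e.\ $\mathbf{f} = \mathbf{R}_n \mathbf{c}$ over $\mathbb{F}_2$. Because $\mathbf{R}_n$ is invertible over $\mathbb{F}_2$, the coefficients $\mathbf{c}$ exist and are unique. By self-inversion mod $2$, $\mathbf{c} = \mathbf{R}_n \mathbf{f}$. So the same matrix maps truth table to ANF coefficients and back, and its columns are the monomial basis functions. Non-orthogonality follows because, for example, columns $\varnothing$ and $\{1\}$ have inner product $2^{n-1} \neq 0$.

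\emph{Matched-group statement (main obstacle).} This is the delicate step, because Theorem~\ref{thm:main} characterizes $U_G$ as the unitary eigenbasis of the commutant. $\mathbf{R}_n$ is not unitary and does not diagonalize $\mathbb{Z}_2^n$-invariant covariances, so the claim cannot mean ``$\mathbf{R}_n$ is the matched unitary.'' I will make it precise as follows. $\mathbf{R}_n$ is a change of basis of the same space $\mathbb{C}^{2^n} = L^2(\mathbb{Z}_2^n)$, carrying the same XOR permutation action whose matched transform is $H_1^{\otimes n}$ (Theorem~\ref{thm:hadamard}). Moreover, the tensor factorization of $\mathbf{R}_n$ is indexed by the same direct-product decomposition $\mathbb{Z}_2\times\cdots\times\mathbb{Z}_2$ that Theorem~\ref{thm:direct_product_rule} uses, with $\mathbf{R}_1$ taking the place of $H_1$ on each factor. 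Read this way, ``matched group $\mathbb{Z}_2^n$'' names the shared group-indexed function space and factor structure, and the proof will be explicit that no eigenbasis claim is intended.
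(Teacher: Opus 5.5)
Your proposal is correct. On the computational claims you follow the paper's route, which derives everything from the tensor structure and ``direct computation from the explicit form of $\mathbf{R}_1$.'' The difference is that you actually carry the computation out. You give the formula $(\mathbf{R}_n^2)_{S,T} = 2^{|S\setminus T|}[T\subseteq S]$ and the explicit witness at $(\{1\},\varnothing)$ against self-inversion over $\mathbb{Z}$. You also make explicit a point the statement glosses over: ``columns are monomials'' ($\mathbf{f} = \mathbf{R}_n\mathbf{c}$) and ``truth table to ANF'' ($\mathbf{c} = \mathbf{R}_n\mathbf{f}$) are reconciled only by self-inversion modulo 2.

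On the matched-group clause the two routes genuinely differ. The paper justifies it by asserting that $\mathbf{R}_n$ commutes with the additive structure of $\mathbb{Z}_2^n$ and with coordinate permutations of the Boolean lattice. The first assertion fails already at $n = 1$. With $P = \begin{pmatrix} 0 & 1 \\ 1 & 0 \end{pmatrix}$ one gets $\mathbf{R}_1 P = \begin{pmatrix} 0 & 1 \\ 1 & 1 \end{pmatrix} \neq \begin{pmatrix} 1 & 1 \\ 1 & 0 \end{pmatrix} = P\mathbf{R}_1$, even modulo 2. More precisely, a permutation of the $2^n$ points commutes with $\mathbf{R}_n$ iff it preserves inclusion. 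So the only such permutations are the $n!$ permutations of bit positions: a copy of $S_n$, not $\mathbb{Z}_2^n$.

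Your refusal to claim any commutant property is therefore the right call. Your reading (same space, same XOR action, same direct-product factorization) also agrees with the paper's own later remark that $\mathbf{R}_n$ is not the matched eigenbasis. What it costs you is that the clause becomes a convention rather than a theorem.

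If you want genuine content there, note that $\mathbf{R}_1 P \mathbf{R}_1 = \begin{pmatrix} 1 & 1 \\ 2 & 1 \end{pmatrix} \equiv \mathbf{R}_1^{(1)} \pmod 2$. Hence, modulo 2, $\mathbf{R}_n$ conjugates translation by $e_i$ into $I \otimes \cdots \otimes \mathbf{R}_1^{(1)} \otimes \cdots \otimes I$, which is the substitution $x_i \mapsto 1 \oplus x_i$ on ANF coefficients. Since $(P + I)^2 \equiv 0 \pmod 2$ while $P \neq I$, no basis diagonalizes the XOR action over $\mathbb{F}_2$; Maschke's theorem fails because the characteristic divides $|\mathbb{Z}_2^n|$. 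So the replacement for diagonalization is to put all translations simultaneously into unipotent upper-triangular form, as tensor products of the $2\times 2$ Jordan block $\mathbf{R}_1^{(1)}$. That is the natural modular counterpart of the Walsh-Hadamard diagonalization, and it gives a precise sense in which $\mathbf{R}_n$ belongs to $\mathbb{Z}_2^n$.
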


\begin{proof}
The recursive tensor structure follows from the standard derivation of the Möbius transform of the Boolean lattice $(\mathcal{P}([n]), \subseteq)$, which is isomorphic to the additive group $\mathbb{Z}_2^n$. The matched group is $\mathbb{Z}_2^n$ because the recursive construction commutes with the additive structure: each level of the tensor product applies the same $2 \times 2$ block to a pair of coordinates, and the entire transform commutes with arbitrary coordinate permutations within the Boolean lattice. Non-orthogonality and the modular self-inverse property follow by direct computation from the explicit form of $\mathbf{R}_1$.
\end{proof}

\begin{remark}
The Reed-Muller transform is \emph{not} the matched eigenbasis of $\mathbb{Z}_2^n$-invariant covariances in the AD/KL sense: the matched eigenbasis is uniquely (up to permutation and phase) the Walsh-Hadamard basis. The Reed-Muller transform is instead the canonical algebraic basis for the ring structure of $\mathbb{F}_2[x_1, \ldots, x_n]/(x_i^2 - x_i)$, which is the function space $\mathbb{F}_2^{\mathbb{Z}_2^n}$ viewed as a quotient ring. Both transforms therefore share the matched group $\mathbb{Z}_2^n$ but serve complementary purposes: Walsh-Hadamard gives spectral decomposition, Reed-Muller gives algebraic decomposition. Reed-Muller is widely used in logic synthesis, exclusive-sum-of-products (ESOP) minimization, and coding theory~\cite{thornton2001spectral}.
\end{remark}

\subsection{The fixed-polarity Reed-Muller family}
\label{subsec:fixed_polarity_rm}

The Reed-Muller transform of Definition~\ref{def:reed_muller} is the \emph{polarity-zero} form: in the underlying monomial basis, each Boolean variable $x_i$ appears in its positive (uncomplemented) form. This is one of an entire family of $2^n$ structurally equivalent transforms on $\mathbb{Z}_2^n$, parameterized by the choice of polarity for each of the $n$ variables. The Reed-Muller transform is therefore not a single transform but a \emph{family} of transforms, all sharing the same matched group $\mathbb{Z}_2^n$ and the same multiplicity-free decomposition but using different non-orthogonal bases of the same isotypic structure.

\begin{definition}[Fixed-polarity Reed-Muller transform]
\label{def:fp_reed_muller}
Let $\mathbf{p} = (p_1, p_2, \ldots, p_n) \in \{0, 1\}^n$ be a \emph{polarity vector}. The fixed-polarity Reed-Muller transform of polarity $\mathbf{p}$ on $n$ variables is the $2^n \times 2^n$ matrix
\[
\mathbf{R}_n^{(\mathbf{p})} = \mathbf{R}_1^{(p_1)} \otimes \mathbf{R}_1^{(p_2)} \otimes \cdots \otimes \mathbf{R}_1^{(p_n)},
\]
where the $2 \times 2$ polarity-$p$ block $\mathbf{R}_1^{(p)}$ is
\[
\mathbf{R}_1^{(0)} = \begin{pmatrix} 1 & 0 \\ 1 & 1 \end{pmatrix},
\qquad
\mathbf{R}_1^{(1)} = \begin{pmatrix} 1 & 1 \\ 0 & 1 \end{pmatrix}.
\]
The polarity-zero case $\mathbf{p} = (0, \ldots, 0)$ recovers Definition~\ref{def:reed_muller}, $\mathbf{R}_n^{(0)} = \mathbf{R}_n$.
\end{definition}

The semantic meaning of the polarity vector $\mathbf{p}$ is the choice of monomial basis with respect to which a Boolean function is expanded. In the polarity-zero form, the monomial basis is $\{\prod_{i \in S} x_i : S \subseteq [n]\}$, the standard set of positive products. In the polarity-$\mathbf{p}$ form, the monomial basis is
\begin{align*}
& \Big\{\prod_{i \in S} x_i^{(p_i)} : S \subseteq [n]\Big\}, \\
& \text{where} \quad x_i^{(p_i)} = \begin{cases} x_i & p_i = 0, \\ \bar{x}_i = 1 \oplus x_i & p_i = 1.\end{cases}
\end{align*}
That is, variable $x_i$ appears in its complemented form $\bar{x}_i$ whenever $p_i = 1$. Each of the $2^n$ polarity choices yields a distinct monomial basis on $\mathbb{Z}_2^n$, and each is linked to the polarity-zero basis by a linear transformation that takes $x_i \mapsto 1 \oplus x_i$ for the indices $i$ with $p_i = 1$. The corresponding linear transformation on coefficients is the matrix $\mathbf{R}_n^{(\mathbf{p})}$.

\begin{theorem}[The fixed-polarity Reed-Muller transforms share the matched group]
\label{thm:fp_reed_muller_group}
For every polarity vector $\mathbf{p} \in \{0, 1\}^n$, the fixed-polarity Reed-Muller transform $\mathbf{R}_n^{(\mathbf{p})}$ satisfies:
\begin{enumerate}
\item[(i)] Matched group is $\mathbb{Z}_2^n$, the same as for the Walsh-Hadamard transform (Theorem~\ref{thm:hadamard}) and for the polarity-zero Reed-Muller transform (Theorem~\ref{thm:reed_muller}).
\item[(ii)] $\mathbf{R}_n^{(\mathbf{p})}$ is non-orthogonal in general, with $\det \mathbf{R}_n^{(\mathbf{p})} = 1$ and modular self-inverse $(\mathbf{R}_n^{(\mathbf{p})})^2 \equiv \mathbf{I} \pmod{2}$.
\item[(iii)] The Boolean function $f$ admits, for each $\mathbf{p}$, a unique algebraic-normal-form expansion in the polarity-$\mathbf{p}$ monomial basis, with coefficients $c_S^{(\mathbf{p})} = (\mathbf{R}_n^{(\mathbf{p})} \mathbf{f})_S$ for each $S \subseteq [n]$. Different polarities yield different ANFs of the same function.
\end{enumerate}
\end{theorem}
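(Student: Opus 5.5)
The plan is to reduce every polarity to the polarity-zero case of Theorem~\ref{thm:reed_muller} through a single conjugation identity. Let $\mathbf{X} = \begin{pmatrix} 0 & 1 \\ 1 & 0 \end{pmatrix}$ be the permutation matrix of the nontrivial element of $\mathbb{Z}_2$. A direct $2 \times 2$ check gives $\mathbf{R}_1^{(1)} = \mathbf{X} \mathbf{R}_1^{(0)} \mathbf{X}$. Set $\mathbf{P}_{\mathbf{p}} = \mathbf{X}^{p_1} \otimes \cdots \otimes \mathbf{X}^{p_n}$. This is exactly the permutation matrix of translation by $\mathbf{p}$ in the regular representation of $\mathbb{Z}_2^n$. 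Applying the mixed-product property of the Kronecker product factor by factor then yields
\[
\mathbf{R}_n^{(\mathbf{p})} = \mathbf{P}_{\mathbf{p}} \, \mathbf{R}_n \, \mathbf{P}_{\mathbf{p}}, \qquad \mathbf{P}_{\mathbf{p}}^2 = \mathbf{I}.
\]
This identity is the backbone of all three parts.

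For (ii), I take determinants in the identity. Since $\det \mathbf{P}_{\mathbf{p}}^2 = 1$, we get $\det \mathbf{R}_n^{(\mathbf{p})} = \det \mathbf{R}_n = 1$. This holds because $\mathbf{R}_n$ is unitriangular in the inclusion order, or alternatively because $\det(\mathbf{A} \otimes \mathbf{B}) = (\det \mathbf{A})^{k} (\det \mathbf{B})^{m}$ with $\det \mathbf{R}_1 = 1$.

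For the modular self-inverse, I first note that $\mathbf{R}_1^2 = \begin{pmatrix} 1 & 0 \\ 2 & 1 \end{pmatrix} \equiv \mathbf{I} \pmod 2$. The mixed-product property then gives $\mathbf{R}_n^2 \equiv \mathbf{I}$. Hence $(\mathbf{R}_n^{(\mathbf{p})})^2 = \mathbf{P}_{\mathbf{p}} \mathbf{R}_n^2 \mathbf{P}_{\mathbf{p}} \equiv \mathbf{P}_{\mathbf{p}}^2 = \mathbf{I} \pmod 2$.

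Non-orthogonality for $n \geq 1$ follows by exhibiting a column with at least two unit entries. Such a column has squared norm $\geq 2$, so $(\mathbf{R}_n^{(\mathbf{p})})^{\mathsf{T}} \mathbf{R}_n^{(\mathbf{p})} \neq \mathbf{I}$.

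For (i), the identity shows that $\mathbf{R}_n^{(\mathbf{p})}$ differs from $\mathbf{R}_n$ only by conjugation by $\mathbf{P}_{\mathbf{p}}$. The matrix $\mathbf{P}_{\mathbf{p}}$ is itself an element of the regular representation of $\mathbb{Z}_2^n$. Conjugation by a group element preserves the permutation action, the commutant $\mathcal{A}_\pi$, and the isotypic decomposition of Theorem~\ref{thm:hadamard}. Therefore the group context established for $\mathbf{R}_n$ in Theorem~\ref{thm:reed_muller} transfers verbatim, and the matched group is again $\mathbb{Z}_2^n$. I would phrase this exactly in the sense used in Theorem~\ref{thm:reed_muller}, namely the group whose translations relate the monomial bases, rather than as an eigenbasis statement.

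For (iii), I substitute $x_i \mapsto 1 \oplus x_i$ for each $i$ with $p_i = 1$. This substitution is a ring automorphism of $\mathbb{F}_2[x_1, \ldots, x_n]/(x_i^2 - x_i)$, and it carries the polarity-zero monomial basis bijectively onto the polarity-$\mathbf{p}$ basis. Consequently, the polarity-$\mathbf{p}$ expansion of $f$ is the ordinary ANF of the translate $g(\mathbf{x}) = f(\mathbf{x} \oplus \mathbf{p})$. The truth table of $g$ is $\mathbf{P}_{\mathbf{p}} \mathbf{f}$. Existence and uniqueness then follow from uniqueness of the ANF, equivalently from invertibility of $\mathbf{R}_n$ over $\mathbb{F}_2$. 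Distinct polarities generally give distinct coefficient vectors, which I would exhibit with $f = x_1$, where the polarity-$1$ expansion is $1 \oplus \bar{x}_1$.

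I expect the main obstacle to be the bookkeeping of index conventions in (iii). The substitution argument naturally produces the coefficient vector $\mathbf{R}_n \mathbf{P}_{\mathbf{p}} \mathbf{f}$. The stated matrix $\mathbf{P}_{\mathbf{p}} \mathbf{R}_n \mathbf{P}_{\mathbf{p}}$ has an extra outer $\mathbf{P}_{\mathbf{p}}$, which acts as a relabelling of the coefficient index $S$. I would first fix whether rows of $\mathbf{R}_1^{(1)}$ are indexed by the monomials $\{1, \bar{x}\}$ in the order the Definition intends. Then I would verify on the $n = 1$ case, $f = x_1$, that the two expressions agree under that relabelling. If they do not, I would state (iii) with the coefficient vector equal to $\mathbf{R}_n^{(\mathbf{p})} \mathbf{f}$ read up to that fixed row permutation.
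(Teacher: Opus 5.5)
Your proof is correct, and it takes a genuinely different route from the paper's. The paper argues block by block:
\begin{itemize}
\item (ii) holds because each $2 \times 2$ factor is unimodular and congruent to its inverse mod $2$;
\item (i) holds because each factor $\mathbf{R}_1^{(p_i)}$ is asserted to commute with the $\mathbb{Z}_2$ swap;
\item (iii) is handled by citation.
\end{itemize}
That factorwise route needs no auxiliary identity and is slightly shorter for (ii). Your identity $\mathbf{R}_n^{(\mathbf{p})} = \mathbf{P}_{\mathbf{p}} \mathbf{R}_n \mathbf{P}_{\mathbf{p}}$ gives you more.

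\textbf{Non-orthogonality.} The identity also yields non-orthogonality, which the paper never proves.

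\textbf{Part (i).} The relation $\mathbf{X} \mathbf{R}_1^{(0)} \mathbf{X} = \mathbf{R}_1^{(1)} \neq \mathbf{R}_1^{(0)}$ shows that the commutation the paper invokes already fails at $n = 1$. The provable content of (i) is therefore exactly your reading: the polarity family is the orbit of $\mathbf{R}_n$ under conjugation by the translations $\mathbf{P}_{\mathbf{p}}$, $\mathbf{p} \in \mathbb{Z}_2^n$. State that outright rather than letting the group context of Theorem~\ref{thm:reed_muller} ``transfer verbatim''. Read as a commutation statement, that theorem's claim fails too. The permutations commuting with $\mathbf{R}_n$ are the permutations of the $n$ bit positions (the order automorphisms of the Boolean lattice), and no nontrivial XOR translation is among them.

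\textbf{Part (iii).} Your substitution argument is an actual proof, and your suspicion is right. It yields the coefficient vector $\mathbf{R}_n \mathbf{P}_{\mathbf{p}} \mathbf{f}$, so $(\mathbf{R}_n^{(\mathbf{p})} \mathbf{f})_S = c^{(\mathbf{p})}_{S \triangle \mathrm{supp}(\mathbf{p})}$. The stated formula therefore holds only after that relabelling.

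Your planned check with $f = x_1$ cannot detect this, because both of its polarity-$1$ coefficients equal $1$. Use $f = \bar{x}_1$ instead: its coefficients are $(c_\emptyset, c_{\{1\}}) = (0, 1)$, while $\mathbf{R}_1^{(1)} \mathbf{f} = (1, 0)^{\mathsf{T}}$. So keep your caveat. The block that would make the formula literally true is $\mathbf{R}_1^{(0)} \mathbf{X} = \begin{pmatrix} 0 & 1 \\ 1 & 1 \end{pmatrix}$.
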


\begin{proof}[Sketch]
Each $\mathbf{R}_1^{(p_i)}$ is a unimodular integer matrix of determinant 1 and is congruent to its inverse modulo 2; the tensor product preserves both properties, giving (ii). The matched-group claim (i) follows because $\mathbf{R}_n^{(\mathbf{p})}$, as a tensor product of $2 \times 2$ blocks each acting on a single $\mathbb{Z}_2$ factor, commutes with the regular representation of $\mathbb{Z}_2^n$ for the same reason as the polarity-zero Reed-Muller transform: each factor commutes with the corresponding $\mathbb{Z}_2$ factor, and the tensor structure of the representation is preserved. Claim (iii) is the standard uniqueness of the polarity-$\mathbf{p}$ Reed-Muller expansion; see~\cite{thornton2022modeling} for a complete treatment of the fixed-polarity Reed-Muller family from the linear-algebraic perspective.
\end{proof}

The choice among the $2^n$ different polarities matters in practical applications: for any given Boolean function $f$, different polarities produce ANF representations of different sparsity, and the \emph{best polarity problem} (find $\mathbf{p}^* = \arg\min_{\mathbf{p}} \|\mathbf{R}_n^{(\mathbf{p})} \mathbf{f}\|_0$) is a classical optimization problem in logic synthesis. The polarity vector that minimizes the number of monomials in the ANF representation typically corresponds to a circuit-area minimum in the resulting AND-XOR-network implementation. The polarity-zero form is the most common default choice in theoretical exposition but is rarely the minimum-cost form for an arbitrary Boolean function; for any specific $f$, the optimal polarity must be determined by search or by structural inspection of the function. Polarity selection algorithms, including exhaustive search over all $2^n$ polarities (feasible for $n$ up to roughly 20 with modern computation), greedy search variants, and decision-diagram-based methods that compute polarity-$\mathbf{p}$ Reed-Muller coefficients efficiently, are developed in~\cite{thornton2001spectral} and~\cite{thornton2022modeling}. From the matched-group perspective developed in this paper, the entire $2^n$-element polarity family inherits the same matched group $\mathbb{Z}_2^n$, the same multiplicity-free decomposition into one-dimensional characters, and the same Walsh-Hadamard basis as the matched orthogonal eigenbasis; the different polarity choices select different non-orthogonal but algebraically meaningful bases of the same isotypic structure, and the choice among them is one of algebraic representation rather than of matched-group structure. More details on the fixed-polarity Reed-Muller transforms, including a unified linear-algebraic framework that derives the polarity-family from the underlying bit-vector model and a comprehensive treatment of polarity-selection algorithms, can be found in~\cite{thornton2022modeling}.

\subsection{The arithmetic transform}
\label{subsec:arithmetic_transform}

The arithmetic transform is the integer-valued analog of the Reed-Muller transform. It represents an integer-valued function on $\mathbb{Z}_2^n$ in an integer monomial basis, with applications in probabilistic Boolean function analysis (where the function values represent probabilities or expectations) and in arithmetic circuit synthesis.

\begin{definition}[Arithmetic transform matrix]
\label{def:arithmetic}
The arithmetic transform (also called probability transform) on $n$ variables is the $2^n \times 2^n$ integer matrix $\mathbf{A}_n$ defined recursively by the tensor product
\[
\mathbf{A}_n = \mathbf{A}_1^{\otimes n}, \quad \mathbf{A}_1 = \begin{pmatrix} 1 & 0 \\ -1 & 1 \end{pmatrix}.
\]
\end{definition}

\begin{theorem}[Arithmetic transform on $\mathbb{Z}_2^n$]
\label{thm:arithmetic}
The arithmetic transform $\mathbf{A}_n$ takes the truth-table representation of an integer-valued function on $\mathbb{Z}_2^n$ to its coefficients in the integer monomial basis. It is the integer inverse of the Reed-Muller transform: $\mathbf{A}_n \mathbf{R}_n = \mathbf{R}_n \mathbf{A}_n = \mathbf{I}$ over $\mathbb{Z}$. The matched group is $\mathbb{Z}_2^n$, the same as for the Reed-Muller and Walsh-Hadamard transforms. Like the Reed-Muller transform, the arithmetic transform is non-orthogonal triangular with unit determinant. When applied to a probability vector on $\mathbb{Z}_2^n$, the arithmetic transform produces a representation in which each coefficient corresponds to a marginal or joint probability of a subset of indicator variables, hence the alternative name ``probability transform.''
\end{theorem}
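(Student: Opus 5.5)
The plan is to reduce every claim to the case $n = 1$ through the Kronecker mixed-product rule $(\mathbf{X}_1 \otimes \mathbf{X}_2)(\mathbf{Y}_1 \otimes \mathbf{Y}_2) = \mathbf{X}_1\mathbf{Y}_1 \otimes \mathbf{X}_2\mathbf{Y}_2$. This is the same strategy as in Theorems~\ref{thm:reed_muller} and~\ref{thm:fp_reed_muller_group}.

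\emph{Inverse relation.} First I compute directly
\[
\mathbf{A}_1\mathbf{R}_1 = \begin{pmatrix} 1 & 0 \\ -1 & 1 \end{pmatrix}\begin{pmatrix} 1 & 0 \\ 1 & 1 \end{pmatrix} = \mathbf{I}_2,
\qquad
\mathbf{R}_1\mathbf{A}_1 = \mathbf{I}_2 ,
\]
over $\mathbb{Z}$. Applying the mixed-product rule $n-1$ times then gives
\[
\mathbf{A}_n\mathbf{R}_n = (\mathbf{A}_1\mathbf{R}_1)^{\otimes n} = \mathbf{I}_{2^n},
\]
and the same argument gives $\mathbf{R}_n\mathbf{A}_n = \mathbf{I}_{2^n}$.

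\emph{Triangularity and determinant.} $\mathbf{A}_1$ is lower triangular with unit diagonal. A Kronecker product of such matrices is again lower triangular with unit diagonal when indices are ordered lexicographically (equivalently, by any linear extension of subset inclusion). Hence $\det \mathbf{A}_n = 1$. Alternatively, $\det(\mathbf{X}\otimes\mathbf{Y}) = \det(\mathbf{X})^{\dim \mathbf{Y}}\det(\mathbf{Y})^{\dim \mathbf{X}}$ gives the same conclusion. Non-orthogonality already holds at $n=1$, because $\mathbf{A}_1^{\mathsf{T}}\mathbf{A}_1 \neq \mathbf{I}$, and the defect propagates through the tensor product.

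\emph{Monomial-basis interpretation.} For an integer-valued $f$ on $\mathbb{Z}_2^n$, I write $f(x) = \sum_{T} c_T \prod_{i\in T} x_i$ over $\mathbb{Z}$. Evaluating at the point $x = \mathbf{1}_S$ gives $f(S) = \sum_{T \subseteq S} c_T$. In matrix form this is $\mathbf{f} = \mathbf{R}_n \mathbf{c}$, using the inclusion description of Definition~\ref{def:reed_muller}. Since $\mathbf{R}_n$ is invertible over $\mathbb{Z}$ by the first step, the coefficients exist, are unique, and equal
\[
\mathbf{c} = \mathbf{A}_n\mathbf{f},
\qquad\text{that is,}\qquad
c_S = \sum_{T\subseteq S}(-1)^{|S\setminus T|} f(T),
\]
which is M\"obius inversion on the Boolean lattice. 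For the ``probability transform'' reading, I would apply this to a probability (or expectation) vector and interpret each $c_S$ through inclusion--exclusion. Each coefficient is then a signed combination of the joint probabilities of the indicator events $\{x_i = 1,\ i \in T\}$ for $T \subseteq S$. I will check this on $n=1$ and let the tensor structure carry it to general $n$.

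\emph{Matched group (main obstacle).} This is the claim I expect to be hardest. $\mathbf{A}_1$ does \emph{not} literally commute with the XOR translation $\begin{pmatrix} 0&1\\1&0\end{pmatrix}$, so the claim cannot mean that $\mathbf{A}_n$ lies in the commutant. I would therefore read ``matched group'' in the same sense as Theorem~\ref{thm:reed_muller} and Theorem~\ref{thm:fp_reed_muller_group}. In that sense, $\mathbf{A}_n$ acts on the same function space $\mathbb{C}^{\mathbb{Z}_2^n}$ carrying the multiplicity-free regular representation of $\mathbb{Z}_2^n$. It is built factor by factor on the $n$ copies of $\mathbb{Z}_2$, so it respects the tensor decomposition $\mathbb{C}^{\mathbb{Z}_2^n} = (\mathbb{C}^2)^{\otimes n}$ that underlies Theorem~\ref{thm:direct_product_rule}. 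Consequently the covariance-level matched eigenbasis remains the Walsh--Hadamard basis of Theorem~\ref{thm:hadamard}. Since $\mathbf{A}_n = \mathbf{R}_n^{-1}$, whatever structural relation to $\mathbb{Z}_2^n$ was established for $\mathbf{R}_n$ transfers to $\mathbf{A}_n$ by inversion. I would make this explicit rather than claim a commutation that fails.
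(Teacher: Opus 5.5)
Your inverse-relation step is exactly the paper's proof: the $n=1$ product followed by the mixed-product rule. Your triangularity, determinant and M\"obius-inversion arguments are correct and cover claims the paper's proof never argues; for non-orthogonality, note that the $(\mathbf 0,\mathbf 0)$ entry of $\mathbf{A}_n^{\mathsf T}\mathbf{A}_n=(\mathbf{A}_1^{\mathsf T}\mathbf{A}_1)^{\otimes n}$ is $2^n$.

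On the matched group, you are right that $\mathbf{A}_1$ does not commute with the swap $\mathbf{J}=\begin{pmatrix}0&1\\1&0\end{pmatrix}$. The paper settles this clause only by pointing to the Reed--Muller proof, which rests on exactly that commutation, so your diagnosis is sharper than the paper's. But your replacement proves nothing about $\mathbf{A}_n$. Acting on $\mathbb{C}^{\mathbb{Z}_2^n}$ factor by factor is true of $\mathbf{X}^{\otimes n}$ for every $2\times 2$ matrix $\mathbf{X}$. The Walsh--Hadamard conclusion is Theorem~\ref{thm:hadamard} and does not depend on $\mathbf{A}_n$. And ``transfer by inversion'' from $\mathbf{R}_n$ transfers only the same failed commutation.

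Here is a precise statement you can prove instead. Let $\mathbf{P}_{\mathbf a}=\bigotimes_{i=1}^{n}\mathbf{J}^{a_i}$ be the XOR translation. Then $\mathbf{P}_{\mathbf a}\mathbf{A}_n\mathbf{P}_{\mathbf a}=\bigotimes_{i=1}^{n}\mathbf{J}^{a_i}\mathbf{A}_1\mathbf{J}^{a_i}$, and $\mathbf{J}\mathbf{A}_1\mathbf{J}=\begin{pmatrix}1&-1\\0&1\end{pmatrix}=(\mathbf{R}_1^{(1)})^{-1}$. So $\mathbb{Z}_2^n$ acts simply transitively by conjugation on the fixed-polarity arithmetic transforms $(\mathbf{R}_n^{(\mathbf p)})^{-1}$, sending polarity $\mathbf p$ to $\mathbf p\oplus\mathbf a$. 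Here your inversion idea does work, starting from $\mathbf{J}\mathbf{R}_1\mathbf{J}=\mathbf{R}_1^{(1)}$. By contrast, any permutation commuting with $\mathbf{A}_n$ must preserve the support relation $T\subseteq S$. It therefore fixes $\emptyset$ and is induced by a permutation of the $n$ variables, so no nontrivial translation qualifies.

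The probability clause is where your plan would actually fail. Checking $n=1$ refutes it rather than confirming it: $\mathbf{A}_1\mathbf{p}=(p_0,\,p_1-p_0)^{\mathsf T}$, and $p_1-p_0$ can be negative, so it is not the probability of any event. You describe $c_S$ as a signed combination of the probabilities of $\{x_i=1,\ i\in T\}$ for $T\subseteq S$. Read literally, this fails already at $S=\emptyset$: it would force $c_\emptyset$ to be an integer, while $c_\emptyset=\Pr[\mathbf{x}=\mathbf{0}]$. If $T$ is meant to index atoms, it merely restates your M\"obius formula. The joint probabilities $\Pr[x_i=1\ \forall i\in S]=\sum_{T\supseteq S}p(T)$ are the entries of $\mathbf{R}_n^{\mathsf T}\mathbf{p}$, not of $\mathbf{A}_n\mathbf{p}$.

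So this clause must be reformulated before it can be proved; the paper's proof does not address it at all. What can be proved, and what explains the name, is the following. Take a Boolean $f$ with $\mathbf{c}=\mathbf{A}_n\mathbf{f}$ and independent inputs with $\Pr[x_i=1]=q_i$. Your monomial expansion and multilinearity then give $\Pr[f=1]=\sum_{S}c_S\prod_{i\in S}q_i$.
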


\begin{proof}
The inverse relationship $\mathbf{A}_1 \mathbf{R}_1 = \mathbf{R}_1 \mathbf{A}_1 = \mathbf{I}$ over $\mathbb{Z}$ is direct: $\begin{pmatrix} 1 & 0 \\ -1 & 1 \end{pmatrix}\begin{pmatrix} 1 & 0 \\ 1 & 1 \end{pmatrix} = \begin{pmatrix} 1 & 0 \\ 0 & 1 \end{pmatrix}$. The tensor product structure preserves the inverse relationship: $\mathbf{A}_n \mathbf{R}_n = (\mathbf{A}_1 \mathbf{R}_1)^{\otimes n} = \mathbf{I}^{\otimes n} = \mathbf{I}$. The matched group $\mathbb{Z}_2^n$ follows by the same argument as in the proof of Theorem~\ref{thm:reed_muller}.
\end{proof}

\begin{remark}
The three transforms on $\mathbb{Z}_2^n$ -- Walsh-Hadamard, Reed-Muller, and arithmetic -- share a common tensor-product structure with a $2 \times 2$ generator:
\begin{align*}
\mathbf{H}_1 &= \begin{pmatrix} 1 & 1 \\ 1 & -1 \end{pmatrix}, &
\mathbf{R}_1 &= \begin{pmatrix} 1 & 0 \\ 1 & 1 \end{pmatrix}, \\
\mathbf{A}_1 &= \begin{pmatrix} 1 & 0 \\ -1 & 1 \end{pmatrix},
\end{align*}
all with $|\det| = 1$ but only $\mathbf{H}_1$ being orthogonal (after normalization by $1/\sqrt{2}$). The matched group $\mathbb{Z}_2^n$ is common to all three, but only the Walsh-Hadamard transform is the matched eigenbasis in the AD/KL sense. The Reed-Muller and arithmetic transforms can be viewed as related change-of-basis matrices within the same group-theoretic context, used for different (algebraic, rather than spectral) purposes.
\end{remark}

\section{Special Case: Iterated Dyadic-Cyclic Wreath Product and the Haar Wavelet}
\label{sec:haar}

The Haar wavelet basis on $\mathbb{R}^{2^L}$ is the matched transform of the iterated dyadic-cyclic wreath product $W_L = \mathbb{Z}_2 \wr \mathbb{Z}_2 \wr \cdots \wr \mathbb{Z}_2$ ($L$ levels) acting on the leaves of a complete binary tree of depth $L$. Unlike the previous cases, $W_L$ is non-abelian for $L \geq 2$, and the matched transform involves higher-dimensional irreducible representations.

\subsection{The iterated dyadic-cyclic wreath product}

\begin{definition}[Iterated dyadic-cyclic wreath product]
\label{def:wL}
The \emph{iterated dyadic-cyclic wreath product of depth $L$} is the group
\[
W_L = \mathbb{Z}_2 \wr \mathbb{Z}_2 \wr \cdots \wr \mathbb{Z}_2 \quad (L \text{ levels}),
\]
defined recursively by $W_1 = \mathbb{Z}_2$ and $W_d = W_{d-1} \wr \mathbb{Z}_2 = W_{d-1}^2 \rtimes \mathbb{Z}_2$ for $d \geq 2$. $W_L$ acts on the leaf set of a complete binary tree of depth $L$ (a set of cardinality $2^L$) by tree-automorphisms: at each internal node, the $\mathbb{Z}_2$ factor swaps the two children subtrees, and the action is independent across non-overlapping subtrees.
\end{definition}

The group order is $|W_L| = 2^{2^L - 1}$, which grows doubly exponentially in $L$. For $L = 5$ (the size used in our numerical verification), $|W_5| = 2^{31}$.

\begin{lemma}[Orbit structure of $W_L$ on index pairs]
\label{lem:wl_orbits}
The orbits of $W_L$ on $\{0, 1, \ldots, 2^L - 1\}^2$ under the diagonal action are in bijection with the depth $d(i, j) \in \{0, 1, \ldots, L\}$ of the deepest common ancestor of leaves $i$ and $j$ in the binary tree. There are exactly $L + 1$ orbits.
\end{lemma}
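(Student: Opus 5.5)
Note on conventions: I measure the depth of the deepest common ancestor from the leaves, so $d(i,i)=0$ and the root has depth $L$. Then $d(i,j)\in\{0,\dots,L\}$, which gives the $L+1$ values in the statement; for $i\neq j$ the value $d(i,j)$ is the smallest $d$ with $\lfloor i/2^d\rfloor=\lfloor j/2^d\rfloor$.

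The plan is to prove two things: that $d(i,j)$ is an invariant of the diagonal action, and that it is complete, so pairs with equal $d$ lie in one orbit. Together these give the bijection. Every value $0,\dots,L$ is attained: $(i,i)$ gives $0$, and for $d\ge 1$ take two leaves in different child subtrees of one depth-$d$ node. This yields exactly $L+1$ orbits.

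\emph{Invariance.} By Definition~\ref{def:wL}, each element of $W_L$ acts as a tree automorphism. Such a map sends each node to a node of the same depth and preserves the ancestor relation. The ancestors of $g\cdot i$ are therefore the images of the ancestors of $i$, so the common ancestors of $(g\cdot i,g\cdot j)$ are the images of those of $(i,j)$. In particular the deepest common ancestor maps to the deepest common ancestor, and $d(g\cdot i,g\cdot j)=d(i,j)$.

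\emph{Transitivity on each level set.} I would argue by induction on $L$, using $W_L=W_{L-1}^2\rtimes\mathbb{Z}_2$.

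\begin{enumerate}
\item Base case $L=1$: $W_1=\mathbb{Z}_2$ acts on two leaves. It is transitive on the diagonal pairs $(0,0),(1,1)$, and the swap carries $(0,1)$ to $(1,0)$.
\item Inductive step, case $d(i,j)<L$: here $i$ and $j$ lie in the same half. For $d(i',j')=d(i,j)$ with $i',j'$ also in one half, first apply the top swap if needed so both pairs sit in the same half. Then the induction hypothesis, applied to the copy of $W_{L-1}$ acting on that half, moves $(i,j)$ to $(i',j')$.
\item Inductive step, case $d(i,j)=L$: here $i$ and $j$ lie in opposite halves. Apply the top swap if needed so that $i$ and $i'$ are in the left half, with $j$ and $j'$ in the right. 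The two copies of $W_{L-1}$ act independently and each is transitive on its leaves, again by induction (transitivity is the diagonal case). So we can move $i\mapsto i'$ and $j\mapsto j'$ simultaneously.
\end{enumerate}

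The only delicate point is this independence of the two subtree factors in the last case. It is exactly what the direct-product factor $W_{L-1}^2$ provides, so I expect no real obstacle beyond fixing the depth convention stated at the start.
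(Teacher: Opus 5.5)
Your proof is correct and follows the paper's route: induction on $L$ via $W_L = W_{L-1}^2 \rtimes \mathbb{Z}_2$, with same-half pairs handled by the inductive hypothesis on one copy of $W_{L-1}$ and opposite-half pairs forming a single orbit. You are also more careful than the paper in three ways. The paper measures depth from the root rather than from the leaves, which is only a change of convention. It leaves the invariance of $d$ implicit, whereas you prove it. It attributes the merging of opposite-half pairs to the $\mathbb{Z}_2$ swap alone, whereas you correctly use the independent transitivity of the two $W_{L-1}$ factors.
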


\begin{proof}
By induction on $L$. For $L = 1$, $W_1 = \mathbb{Z}_2$ acts on $\{0, 1\}$ with two orbits on pairs: same-index ($d = 1$) and different-index ($d = 0$). For $L \geq 2$, the semidirect-product factorization $W_L = W_{L-1}^2 \rtimes \mathbb{Z}_2$ has the $\mathbb{Z}_2$ factor swapping the two depth-$(L-1)$ subtrees and the base $W_{L-1}^2$ acting independently within each subtree. Two leaves in the same depth-$(L-1)$ subtree have orbit structure determined by $W_{L-1}$ (inductively, $L$ orbit classes); two leaves in different subtrees form a single orbit (the $\mathbb{Z}_2$ factor merges all such pairs). Total: $L + 1$ orbit classes, indexed by the depth of the deepest common ancestor.
\end{proof}

\begin{lemma}[Multiplicity-free permutation representation of $W_L$]
\label{lem:wl_mf}
The permutation representation of $W_L$ on $\mathbb{C}^{2^L}$ (acting on the leaves) is multiplicity-free.
\end{lemma}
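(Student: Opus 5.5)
The plan is to use the standard criterion for permutation representations. For a permutation representation $\pi$ of a finite group on a set $X$, the dimension of the commutant equals the number of orbits on $X \times X$. By Theorem~\ref{thm:commutant_structure}, this dimension is also $\sum_\rho m_\rho^2$. The commutant is spanned by the orbital indicator matrices $\mathbf{1}_{\mathcal{O}}$, one for each $G$-orbit $\mathcal{O} \subseteq X \times X$. This holds because $\mathbf{P}_g \mathbf{T} \mathbf{P}_g^{-1} = \mathbf{T}$ says exactly that $T_{g\cdot i,\, g\cdot j} = T_{ij}$, and these orbital matrices are linearly independent. On the other hand, $\sum_\rho m_\rho = \dim \mathrm{Hom}$-decomposition count, computed directly as the number of distinct isotypic summands. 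So I will show that $\sum_\rho m_\rho^2 = \sum_\rho m_\rho$. Since each $m_\rho$ is a nonnegative integer, $m_\rho^2 \geq m_\rho$ with equality iff $m_\rho \in \{0,1\}$, and this forces multiplicity-freeness.

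Lemma~\ref{lem:wl_orbits} gives the first count directly: the commutant has dimension exactly $L+1$. For the second count I will exhibit $L+1$ pairwise orthogonal, nonzero, $W_L$-invariant subspaces whose dimensions sum to $2^L$. The natural candidates come from the tree's levels. Let $F_d$ be the space of functions on leaves that are constant on each depth-$d$ subtree, for $d = 0, \ldots, L$. Here $F_0$ is the constants, $F_L$ is everything, and $\dim F_d = 2^d$. Each $F_d$ is $W_L$-invariant because tree automorphisms permute the depth-$d$ subtrees. Set $V_0 = F_0$ and $V_d = F_d \ominus F_{d-1}$ for $d \geq 1$; these are the Haar detail spaces, and they are invariant since $\pi$ is unitary. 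This gives $L+1$ nonzero invariant summands, so the representation has at least $L+1$ irreducible constituents counted with multiplicity, i.e.\ $\sum_\rho m_\rho \geq L+1$.

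Combining the two counts gives $L+1 \leq \sum_\rho m_\rho \leq \sum_\rho m_\rho^2 = L+1$. Hence equality holds throughout, and every $m_\rho \in \{0,1\}$. As a byproduct, each $V_d$ is irreducible and no two of them are equivalent, with $\dim V_d = 2^{d-1}$ for $d \geq 1$. Alternatively, the last step can be finished by observing that the commutant is spanned by $L+1$ orbital matrices which are polynomials in the commuting level-averaging projections onto the $F_d$, hence commutative. Lemma~\ref{lem:multiplicity_free_commutant} then applies directly.

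The main obstacle is rigor in the dimension identity $\dim \mathcal{A}_\pi = \#\{\text{orbits on } X\times X\}$, together with the correctness of Lemma~\ref{lem:wl_orbits} itself. Both are routine, but the orbit lemma relies on transitivity of $W_L$ on pairs with a fixed deepest common ancestor depth. I would double-check this by noting that the stabilizer of a leaf acts transitively on the leaves of each sibling subtree along its root path.
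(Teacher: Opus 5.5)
Your argument is correct, and it differs from the paper's at the decisive step. The paper asserts that the number of orbits on pairs equals the number of irreducible constituents counted with multiplicity. It then claims, via an unspecified ``direct construction,'' that the leaf action contains $L+1$ pairwise distinct irreducibles (the constants plus one piece per level) which account for everything. That identity is not right in general. By Burnside, $\#\{\text{orbits on } X \times X\} = \langle \chi_\pi, \chi_\pi \rangle_G = \sum_\rho m_\rho^2$, which is the version you use, and it agrees with $\sum_\rho m_\rho$ only in the multiplicity-free case being proved. Even with the correct identity, the paper's route still needs the level pieces to be irreducible and pairwise inequivalent, and it never checks this. The proof of Theorem~\ref{thm:haar} again defers the point to a character computation that is only asserted.

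Your chain $L+1 \le \sum_\rho m_\rho \le \sum_\rho m_\rho^2 = L+1$ needs only that the detail spaces $V_d = F_d \ominus F_{d-1}$ are nonzero and invariant, which is immediate from the nested invariant subspaces $F_d$. It also gives irreducibility and pairwise inequivalence of the $V_d$ for free, which is exactly what Theorem~\ref{thm:haar} later relies on.

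Your alternative ending is also sound. Let $Q_d$ be the orthogonal projection onto $F_d$ and set $Q_{L+1} := 0$. Then the orbital matrices are $\mathbf{D}_d = 2^{L-d} Q_d - 2^{L-d-1} Q_{d+1}$. The $Q_d$ commute because the $F_d$ are nested, so $\mathcal{A}_\pi$ is commutative and Lemma~\ref{lem:multiplicity_free_commutant} applies. Each $Q_e$ ($0 \le e \le L$) acts on $V_s$ as $1$ if $e \ge s$ and as $0$ if $e < s$. So this formula also gives, in closed form, the eigenvalue of any $W_L$-invariant covariance on each scale, replacing the unspecified M\"obius inversion in the proof of Theorem~\ref{thm:haar}.

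One wording fix: $\sum_\rho m_\rho$ is the number of irreducible summands counted with multiplicity, not ``the number of distinct isotypic summands'' (that is $\#\{\rho : m_\rho \ge 1\}$). This distinction is exactly what your proof exploits, and you do use the correct meaning when you bound $\sum_\rho m_\rho$ below by $L+1$.
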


\begin{proof}
The number of irreducible components (counted with multiplicity) in a permutation representation equals the number of orbits on pairs (a standard result, see~\cite[Section 1.7]{serre1977linear}). Lemma~\ref{lem:wl_orbits} gives $L + 1$ orbits. A direct construction shows that $W_L$ has at least $L + 1$ distinct irreducible representations of total multiplicity-weighted dimension $2^L$ in its leaf action (one constant function for the global orbit at depth 0, one anti-symmetric vector per level for the level-$d$ orbits at depths $d \geq 1$), accounting for all components. The decomposition is therefore multiplicity-free.
\end{proof}

\subsection{The Haar wavelet basis as the matched transform}

\begin{definition}[Haar wavelet basis on $\mathbb{R}^{2^L}$]
\label{def:haar}
The \emph{Haar wavelet basis} on $\mathbb{R}^{2^L}$ consists of $2^L$ orthonormal vectors organized into $L + 1$ \emph{scales}:
\begin{itemize}
\item \emph{Scaling function (scale 0):} $\phi(j) = 2^{-L/2}$ for $j = 0, \ldots, 2^L - 1$.
\item \emph{Mother wavelet at scale $s$ and position $p$:} For $s \in \{1, \ldots, L\}$ and $p \in \{0, \ldots, 2^{s-1} - 1\}$, set $a = p \cdot 2^{L-s+1}$ and $h = 2^{L-s}$. Then
\[
\psi_{s, p}(j) = 2^{(s - L - 1)/2} \cdot \begin{cases}
+1 & a \leq j < a + h, \\
-1 & a + h \leq j < a + 2h, \\
0 & \text{otherwise}.
\end{cases}
\]
\end{itemize}
The basis has $1 + \sum_{s=1}^{L} 2^{s-1} = 1 + (2^L - 1) = 2^L$ vectors, forming a complete orthonormal basis of $\mathbb{R}^{2^L}$.
\end{definition}

\begin{theorem}[Haar wavelet is the matched transform of $W_L$]
\label{thm:haar}
The matched unitary $U_{W_L}$ for the permutation representation of $W_L$ on $\mathbb{R}^{2^L}$ is the Haar wavelet basis matrix, with columns ordered by scale: column 0 is the scaling function $\phi$; columns $1$ through $2^{s-1}$ at scale $s$ are the mother wavelets $\psi_{s, 0}, \ldots, \psi_{s, 2^{s-1}-1}$ for $s = 1, \ldots, L$. For every $W_L$-invariant covariance $\mathbf{R}$ on $\mathbb{R}^{2^L}$, the Haar wavelet transform diagonalizes $\mathbf{R}$, and the eigenvalue on the scale-$s$ subspace is $2^{s-1}$-fold degenerate (within-scale degeneracy on scales $s \geq 2$).
\end{theorem}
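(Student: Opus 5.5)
The plan is to avoid the abstract construction of Theorem~\ref{thm:main}(ii) and instead verify directly that every Haar vector of Definition~\ref{def:haar} is an eigenvector of every $W_L$-invariant $\mathbf{R}$. I would then identify the eigenspaces with the isotypic components promised by Lemma~\ref{lem:wl_mf} and Theorem~\ref{thm:main}.

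\emph{Step 1 (form of an invariant covariance).} If $\mathbf{P}_g\mathbf{R}\mathbf{P}_g^{-1}=\mathbf{R}$ for all $g\in W_L$, then $R_{ij}=R_{g i,\,g j}$, so $R_{ij}$ is constant on orbits of the diagonal action on index pairs. By Lemma~\ref{lem:wl_orbits}, this means $R_{ij}=r_{d(i,j)}$ for some scalars $r_0,\dots,r_L$. Here $d(i,j)$ is the depth of the deepest common ancestor, with the root at depth $0$ and the leaves at depth $L$, so that $d(i,i)=L$. Conversely, every such matrix is invariant. So the commutant is exactly the $(L+1)$-dimensional space of ``ultrametric'' matrices, consistent with the multiplicity-free count.

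\emph{Step 2 (eigenvector check).} Orthonormality of the Haar vectors is given in Definition~\ref{def:haar}, so only the eigenvector property needs checking.

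For the scaling function $\phi$, it suffices that all row sums of $\mathbf{R}$ are equal. For each leaf $i$, the number of leaves $j$ with $d(i,j)=d$ does not depend on $i$: it is $2^{L-d-1}$ for $d<L$ and $1$ for $d=L$.

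Now take a wavelet $\psi_{s,p}$. Its support is the leaf set of a node $v$ at depth $s-1$, with value $+1$ on the left child subtree $T_+$ and $-1$ on the right child subtree $T_-$, each of size $h=2^{L-s}$ (up to normalization). I would compute $(\mathbf{R}\psi_{s,p})(i)=\sum_j r_{d(i,j)}\psi_{s,p}(j)$ in three cases.
\begin{itemize}
\item If $i$ lies outside the subtree of $v$, then $d(i,j)$ equals the depth of the common ancestor of $i$ and $v$ for every $j$ under $v$. The sum is therefore that constant times $\sum_j\psi_{s,p}(j)=0$.
\item If $i\in T_+$, the terms with $j\in T_-$ all have $d(i,j)=s-1$ and contribute $-h\,r_{s-1}$. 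The terms with $j\in T_+$ contribute $A_s:=\sum_{j\in T_+}r_{d(i,j)}$. By the same counting as for $\phi$, applied within a subtree of depth $L-s$, $A_s$ does not depend on $i\in T_+$ or on $p$.
\item The case $i\in T_-$ is symmetric and gives $-(A_s-h\,r_{s-1})$.
\end{itemize}
Hence $\mathbf{R}\psi_{s,p}=\lambda_s\psi_{s,p}$ with $\lambda_s=A_s-h\,r_{s-1}$, independent of $p$. The analogous computation gives $\lambda_0=\sum_d r_d\cdot\#\{j:d(i,j)=d\}$ for $\phi$.

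\emph{Step 3 (matching with Theorem~\ref{thm:main}).} Since $\lambda_s$ is independent of $p$, each scale-$s$ span is an eigenspace of dimension $2^{s-1}$ for every invariant $\mathbf{R}$. This gives the stated within-scale degeneracy. The scale-$s$ span is $W_L$-invariant, because tree automorphisms permute the depth-$(s-1)$ nodes and at most flip the sign of the corresponding wavelets. It is also one of the $L+1$ components in Lemma~\ref{lem:wl_mf}. The scale spans therefore realize the isotypic decomposition, and the Haar matrix is a valid choice of $U_{W_L}$, unique up to the within-block rotations of Theorem~\ref{thm:main}(iii).

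The main obstacle I expect is bookkeeping rather than ideas. The delicate points are fixing the depth convention consistently with Lemma~\ref{lem:wl_orbits}, and verifying that $A_s$ is genuinely independent of the position of $i$ inside $T_+$. I would handle the latter by the same induction on subtree depth used in Lemma~\ref{lem:wl_orbits}.
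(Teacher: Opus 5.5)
Your argument is correct, and it runs in the opposite direction from the paper's. The paper argues top-down. It takes multiplicity-freeness from Lemma~\ref{lem:wl_mf} and asserts that each scale span is an irreducible $W_L$-module whose character is ``found by direct evaluation'' to be independent of the others. It then lets Theorem~\ref{thm:main} deliver the diagonalization, and leaves the eigenvalues as a schematic expression ``obtained by M\"obius inversion.'' You argue bottom-up. Lemma~\ref{lem:wl_orbits} turns invariance into the ultrametric form $R_{ij}=r_{d(i,j)}$, and your three-case computation verifies the eigenvector property directly, giving
\[
\lambda_0=r_L+\sum_{d=0}^{L-1}2^{L-d-1}r_d,\qquad \lambda_s=r_L+\sum_{d=s}^{L-1}2^{L-d-1}r_d-2^{L-s}r_{s-1}\quad(1\le s\le L).
\]
The independence of $A_s$ from $i$, which you flag as delicate, is immediate from the leaf count you already used for $\phi$, applied inside $T_+$; no induction is needed.

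Your route gives a self-contained proof of the substantive claims, namely the diagonalization and the $2^{s-1}$-fold degeneracy. It never needs the irreducibility or character computations that the paper asserts without carrying out, and it produces explicit eigenvalues. The paper's route buys the conceptual placement of the Haar scales as irreducible constituents inside the general machinery of Theorem~\ref{thm:main}. That placement is what the unification and the iterated-wreath generalization rest on.

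You can recover that identification from your own computation and drop the appeal to Lemma~\ref{lem:wl_mf} in Step~3. Step~2 holds verbatim for arbitrary complex $r_d$, i.e.\ for the entire $(L+1)$-dimensional commutant. Hence the commutant is commutative, and $\pi$ is multiplicity-free by Lemma~\ref{lem:multiplicity_free_commutant}. Moreover, $(r_d)\mapsto(\lambda_s)$ is injective, since all $\lambda_s=0$ forces $\mathbf{R}=0$; it is therefore bijective. So the projection onto each scale span lies in the commutant. These $L+1$ orthogonal idempotents are the minimal idempotents of a commutant isomorphic to $\mathbb{C}^{L+1}$, hence exactly the isotypic projections.

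One wording point: for a particular $\mathbf{R}$ whose $\lambda_s$ coincide across scales, the scale-$s$ span is contained in an eigenspace rather than equal to one. Containment is all the degeneracy claim requires.
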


\begin{proof}
Apply Theorem~\ref{thm:main} to $G = W_L$ in its permutation representation on $\mathbb{R}^{2^L}$. By Lemma~\ref{lem:wl_mf}, the representation is multiplicity-free, with $L + 1$ irreducible components corresponding to the $L + 1$ orbit classes of Lemma~\ref{lem:wl_orbits}.

The structure of the irreducible decomposition is identified as follows. The trivial representation (scale 0) is spanned by the constant vector $\phi$, which is $W_L$-invariant. For each scale $s \in \{1, \ldots, L\}$, the corresponding irreducible representation has dimension $2^{s-1}$ and is realized on the subspace spanned by the mother wavelets $\{\psi_{s, p}\}_{p=0}^{2^{s-1}-1}$. This subspace is $W_L$-invariant by inspection: at scale $s$, the tree-automorphism action of $W_L$ permutes the $2^{s-1}$ supports of the mother wavelets at that scale (along with possible sign flips), and the resulting permutation action on the $2^{s-1}$-dimensional space is irreducible.

Concretely, the irreducible representation at scale $s$ acts on $\mathbb{R}^{2^{s-1}}$ as follows: $W_{L-s+1}^2 \rtimes \mathbb{Z}_2$ permutes the $2^{s-1}$ supports of the mother wavelets at scale $s$ (the supports at scale $s$ form the leaves of a sub-binary-tree of depth $s - 1$), and the action induces signed permutations on the mother wavelet basis. The character of this representation is computed by direct evaluation and is found to be linearly independent from the characters of other scales; this verifies that the scales correspond to distinct irreducibles.

By Theorem~\ref{thm:main}, every $W_L$-invariant covariance $\mathbf{R}$ has the Haar wavelet basis as its eigenbasis. Theorem~\ref{thm:main}(iv) gives that the eigenvalue on the scale-$s$ subspace is $\dim(\text{scale } s)$-fold degenerate, equal to $2^{s-1}$ for $s \geq 1$ and 1 for $s = 0$.

The eigenvalues themselves are explicitly computed from the orbit averages: if $\mathbf{R} = \sum_{d=0}^L c_d \mathbf{D}_d$ where $\mathbf{D}_d$ is the indicator matrix of the depth-$d$ orbit (Lemma~\ref{lem:wl_orbits}), then
\[
\lambda_{\text{scale } s} = c_{L - s + 1} \cdot 2^{L - s} - \text{(higher-depth contributions)},
\]
with the precise formula obtained by the Möbius inversion of the orbit structure.
\end{proof}

\begin{remark}[Why $W_L$ rather than $\mathbb{Z}_2^L$?]
The reader may notice that both $\mathbb{Z}_2^L$ and $W_L$ have order $2^{\bullet}$ and act on $\mathbb{R}^{2^L}$. They differ fundamentally: $\mathbb{Z}_2^L$ is abelian with $|\mathbb{Z}_2^L| = 2^L$ and acts via the regular representation, giving the Walsh-Hadamard transform (Theorem~\ref{thm:hadamard}). $W_L$ is non-abelian with $|W_L| = 2^{2^L - 1}$ (much larger) and acts via the tree-automorphism permutation representation, giving the Haar wavelet transform. The Walsh-Hadamard basis has global support (every basis vector is non-zero on every leaf), while the Haar wavelet basis has localized support. The difference reflects the different group structures: the direct-product structure of $\mathbb{Z}_2^L$ has all generators commuting, while the wreath-product structure of $W_L$ has a hierarchical noncommutativity that corresponds to the wavelet's multiresolution structure.
\end{remark}

\section{General Iterated Wreath Products and the Hierarchical KL Basis}
\label{sec:iterated_wreath}

The Haar case of Section~\ref{sec:haar} is the binary special case of a much broader family. Iterated wreath products with arbitrary branching factors and arbitrary within-level groups produce a corresponding family of \emph{hierarchical KL bases}, which generalize the Haar wavelet to non-binary branching and to within-level permutation symmetries beyond cyclic shifts.

\begin{definition}[General iterated wreath product]
\label{def:gL}
Given branching factors $(K_1, K_2, \ldots, K_L)$ with each $K_d \geq 2$ and within-level groups $G_d$ acting on $K_d$ points (typically $G_d = S_{K_d}$ or $G_d = \mathbb{Z}_{K_d}$), the \emph{iterated wreath product of depth $L$} is
\[
\mathcal{G}_L = G_1 \wr G_2 \wr \cdots \wr G_L,
\]
defined recursively by $\mathcal{G}_1 = G_1$ and $\mathcal{G}_d = \mathcal{G}_{d-1} \wr G_d = \mathcal{G}_{d-1}^{K_d} \rtimes G_d$. $\mathcal{G}_L$ acts on the $M_L = \prod_{d=1}^L K_d$ leaves of a regular rooted tree of branching $(K_1, \ldots, K_L)$ by tree-automorphisms.
\end{definition}

\begin{theorem}[Hierarchical KL basis as the matched transform of $\mathcal{G}_L$]
\label{thm:iterated_wreath}
Let $\mathcal{G}_L$ be a general iterated wreath product as in Definition~\ref{def:gL}, and let $\pi: \mathcal{G}_L \to U(\mathbb{C}^{M_L})$ be its permutation representation on the leaves. The representation is multiplicity-free, with $L + 1$ irreducible components indexed by the depth $d \in \{0, 1, \ldots, L\}$ of the deepest common ancestor of an index pair. The matched unitary $U_{\mathcal{G}_L}$ is constructed recursively from the matched unitaries of the within-level groups $G_d$ via the wreath product composition rule:
\begin{equation}
\label{eq:wreath_recursion}
U_{\mathcal{G}_L} = U_{\mathcal{G}_{L-1}}^{\otimes K_L} \cdot (I_{M_{L-1}} \otimes U_{G_L}) \cdot \Pi,
\end{equation}
where $\Pi$ is an explicit permutation matrix encoding the tree-to-leaf ordering, and $U_{G_L}$ is the matched unitary of the across-level group $G_L$ on the multiplicity space of size $K_L$.
\end{theorem}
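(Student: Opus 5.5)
The plan is to argue by induction on $L$ and work throughout at the level of the commutant $\mathcal{A}_\pi$, rather than with the irreducible characters of $\mathcal{G}_L$ directly. This avoids Specker-type classifications. By Lemma~\ref{lem:multiplicity_free_commutant}, multiplicity-freeness is equivalent to commutativity of $\mathcal{A}_\pi$. By Theorem~\ref{thm:main}(i), a commutative commutant is simultaneously diagonalized by a fixed unitary. So it suffices to (a) describe $\mathcal{A}_\pi$ explicitly, (b) show it is commutative, and (c) exhibit a common eigenbasis of the form (\ref{eq:wreath_recursion}).

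\emph{Step 1 (coordinates and orbitals).} Choose the leaf ordering (this choice is what $\Pi$ encodes) so that $\mathbb{C}^{M_L} \cong \mathbb{C}^{K_L} \otimes \mathbb{C}^{M_{L-1}}$. Here the first factor indexes the top-level subtree and the second indexes a leaf inside it. Using $\mathcal{G}_L = \mathcal{G}_{L-1}^{K_L} \rtimes G_L$, I will generalize the proof of Lemma~\ref{lem:wl_orbits} to classify the orbits on leaf pairs:
\begin{itemize}
\item a pair inside one subtree lies in an orbit determined by an orbital of $\mathcal{G}_{L-1}$, and $G_L$ merges the copies across subtrees;
\item a pair in distinct subtrees $a \neq b$ lies in an orbit determined only by the $G_L$-orbital of $(a,b)$, since the base group $\mathcal{G}_{L-1}^{K_L}$ acts transitively and independently on the leaves of each subtree.
\end{itemize}
Consequently every $\mathbf{R} \in \mathcal{A}_\pi$ has the form $\mathbf{R} = I_{K_L} \otimes \mathbf{A} + \mathbf{B}^{\circ} \otimes J_{M_{L-1}}$. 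In this decomposition, $\mathbf{A} \in \mathcal{A}_{\pi_{L-1}}$, $\mathbf{B}^{\circ}$ is a $G_L$-invariant matrix with zero diagonal, and $J$ is the all-ones matrix.

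\emph{Step 2 (commutativity).} Since $\mathcal{G}_{L-1}$ is transitive on its leaves, every $\mathbf{A} \in \mathcal{A}_{\pi_{L-1}}$ commutes with $J$ and satisfies $\mathbf{A}J = JA = a(\mathbf{A})J$. Expanding the product of two such $\mathbf{R}$'s, the cross terms become $\mathbf{B}^{\circ} \otimes a(\mathbf{A})J$. Commutativity then follows from two inductive inputs: commutativity of $\mathcal{A}_{\pi_{L-1}}$, and commutativity of the $G_L$-commutant, which holds by the hypothesis that $G_L$ acts multiplicity-freely.

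\emph{Step 3 (the eigenbasis).} The basis splits into two families.
\begin{itemize}
\item For $\mathbf{w}$ a column of $U_{\mathcal{G}_{L-1}}$ orthogonal to the constant vector, $\mathbf{v} \otimes \mathbf{w}$ is an eigenvector of $\mathbf{R}$ for every $\mathbf{v} \in \mathbb{C}^{K_L}$, because $J\mathbf{w} = 0$. This gives the block $U_{\mathcal{G}_{L-1}}^{\otimes K_L}$, up to the ordering $\Pi$.
\item On the constant-in-subtree subspace $\mathbb{C}^{K_L} \otimes \mathbf{1}$, $\mathbf{R}$ acts as $a(\mathbf{A}) I + M_{L-1}\mathbf{B}^{\circ}$. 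This operator is in the $G_L$-commutant, so it is diagonalized by $U_{G_L}$. This gives the factor $I \otimes U_{G_L}$ acting on the "multiplicity space" of size $K_L$.
\end{itemize}
Composing the two families yields (\ref{eq:wreath_recursion}). The components are then counted by orbits on pairs, as in Lemma~\ref{lem:wl_mf}.

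\emph{Main obstacle.} I expect the main difficulty to be the count "$L+1$ components indexed by depth." The argument above gives one component per nontrivial irrep of $G_d$ at each level $d$, plus the trivial one. This collapses to exactly $L+1$ components precisely when each $G_d$ is $2$-transitive on its $K_d$ points (e.g.\ $S_{K_d}$, or $\mathbb{Z}_2$). For $G_d = \mathbb{Z}_{K_d}$ with $K_d \geq 3$, the depth index must be refined by the $G_d$-orbital. I would first prove the general count and then specialize, noting this hypothesis explicitly. A secondary bookkeeping issue is fixing the tensor ordering convention so that $\Pi$ is exactly the tree-to-leaf permutation.
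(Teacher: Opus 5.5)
Your argument is correct, and it takes a different route from the paper's. The paper's sketch gets multiplicity-freeness by asserting that $\mathcal{G}_L$ has exactly $L+1$ orbits on leaf pairs, matched by the number of irreducible constituents. It gets the recursion by appeal to the wreath-product Fourier transform of Foote et al.\ and Maslen--Rockmore, and defers the details to a companion manuscript.

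You instead compute the commutant outright as $I_{K_L}\otimes\mathbf{A}+\mathbf{B}^{\circ}\otimes J$. Commutativity then follows from a one-line product expansion, using $\mathbf{A}J=J\mathbf{A}=a(\mathbf{A})J$ and the commutativity of the $G_L$-commutant. The eigenvectors $\mathbf{v}\otimes\mathbf{w}$ (with $J\mathbf{w}=0$) and $\mathbf{u}\otimes\mathbf{1}$ are written down directly. This buys several things:
\begin{itemize}
\item it is self-contained and needs no Specker or Clifford theory;
\item it gives the eigenvalues in closed form: each $\lambda_{\mathbf{w}}(\mathbf{A})$ repeated $K_L$ times, and $a(\mathbf{A})+M_{L-1}\mu$ for each eigenvalue $\mu$ of $\mathbf{B}^{\circ}$;
\item it produces the correct component count $1+\sum_{d=1}^{L}(r_d-1)$, where $r_d$ is the number of orbitals of $G_d$ on its $K_d$ points.
\end{itemize}
The group-Fourier route yields more than is needed here (a decomposition of the whole group algebra, fast algorithms). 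As sketched, however, it rests on an orbit count that is false.

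What you call the main obstacle is in fact an error in the statement, not a difficulty in the proof. Take $L=1$ and $G_1=\mathbb{Z}_M$ with $M\ge 3$. The leaf representation is then the regular representation, with $M$ constituents (the DFT of the paper's own Example~\ref{ex:classical}), not $L+1=2$. In general, pairs whose deepest common ancestor lies at a given depth split further according to the orbitals of the group acting at that ancestor. So the sketch's ``$L+1$ orbits on pairs'' fails in the same way. The count $L+1$ holds exactly when every $G_d$ is $2$-transitive, so proving the general count and then specializing is the right plan.

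Three points to tighten:
\begin{itemize}
\item Level $d$ contributes one component per nontrivial constituent of the permutation representation of $G_d$ on $K_d$ points, not one per nontrivial irrep of $G_d$. For instance, $S_K$ contributes only its standard constituent.
\item State explicitly that multiplicity-freeness of each $G_d$ forces transitivity, since the trivial constituent's multiplicity equals the number of orbits. Steps~1 and~2 use transitivity of both $G_L$ and $\mathcal{G}_{L-1}$.
\item In subtree-major coordinates the two factors are $I_{K_L}\otimes U_{\mathcal{G}_{L-1}}$ and $U_{G_L}\otimes I_{M_{L-1}}$. The first is the only dimensionally sensible reading of $U_{\mathcal{G}_{L-1}}^{\otimes K_L}$. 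Consequently the paper's middle factor $I_{M_{L-1}}\otimes U_{G_L}$ must be conjugated by a perfect shuffle, which a single right-hand column permutation $\Pi$ cannot supply.
\end{itemize}
Since every $\mathbf{v}\otimes\mathbf{w}$ is an eigenvector, you may apply $U_{G_L}$ across all components. The recursion then collapses to the Kronecker product $U_{G_L}\otimes U_{\mathcal{G}_{L-1}}$, a legitimate choice within the freedom of Theorem~\ref{thm:main}(iii).
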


\begin{proof}[Sketch]
The multiplicity-free property follows from a generalization of Lemma~\ref{lem:wl_mf} to arbitrary iterated wreath products: the number of orbits on index pairs is $L + 1$ (the depth classes), and the corresponding number of irreducibles in the leaf representation matches. The recursion (Equation~\ref{eq:wreath_recursion}) is established by induction on $L$ using the semidirect-product factorization $\mathcal{G}_L = \mathcal{G}_{L-1}^{K_L} \rtimes G_L$ and the wreath product Fourier transform of Foote, Mirchandani, Rockmore, Healy, and Olson~\cite{foote2000wreath, mirchandani2003wreath} and Maslen-Rockmore~\cite{maslen1997generalized}. Full details appear in the companion manuscript~\cite{thornton2026ad_framework}.
\end{proof}

\begin{example}[Recovering classical cases]
\label{ex:classical}
The classical transforms emerge as special cases of Theorem~\ref{thm:iterated_wreath}:
\begin{itemize}
\item All $K_d = 2$ and $G_d = \mathbb{Z}_2$: $\mathcal{G}_L = W_L$ (iterated dyadic-cyclic wreath), giving the Haar wavelet transform (Theorem~\ref{thm:haar}).
\item $L = 1$, $K_1 = M$, $G_1 = \mathbb{Z}_M$: $\mathcal{G}_1 = \mathbb{Z}_M$, giving the DFT (Theorem~\ref{thm:dft}).
\item $L = 1$, $K_1 = 2M$, $G_1 = D_{2M}$ (acting on $2M$ points): $\mathcal{G}_1 = D_{2M}$, giving the DCT after restriction to the even-extension subspace (Theorem~\ref{thm:dct}); $L = 1$, $K_1 = M$, $G_1 = D_M$ gives the real Fourier (Hartley) basis of Remark~\ref{rem:hartley}.
\item Branching $(2, 2, \ldots, 2)$ with all $G_d = \mathbb{Z}_2$ but using the \emph{direct} rather than \emph{wreath} product: $\mathbb{Z}_2^L$, giving the Walsh-Hadamard transform (Theorem~\ref{thm:hadamard}). (Note: direct product, not iterated wreath.)
\end{itemize}
\end{example}

\section{Numerical Verification}
\label{sec:numerical_verification}

We verify Theorem~\ref{thm:main} numerically for each of the five special cases (KLT, DFT, Hadamard, DCT, Haar) by constructing the matched group action, sampling an invariant covariance, computing its eigenbasis, and comparing to the predicted classical transform.

\subsection{Procedure}

For each case, the verification proceeds in four steps:

\begin{enumerate}
\item Construct the group action $\pi: G \to U(\mathbb{C}^M)$ explicitly.
\item Construct a random Hermitian PSD matrix $\mathbf{R}_0 \in \mathcal{A}_\pi$ by Reynolds-averaging a random PSD matrix.
\item Compute the eigendecomposition of $\mathbf{R}_0$ to obtain the empirical eigenbasis $U_{\mathrm{emp}}$.
\item Compute the subspace match between $U_{\mathrm{emp}}$ and the predicted classical transform $U_{\mathrm{predicted}}$, where the predicted transform is the DFT (Equation~\ref{eq:dft_kernel}), Walsh-Hadamard (Equation~\ref{eq:hadamard_kernel}), DCT-II (Equation~\ref{eq:dct_kernel}), or Haar wavelet basis (Definition~\ref{def:haar}). The subspace match accounts for the eigenvalue degeneracies predicted by Theorem~\ref{thm:main}(iv) by computing the smallest singular value of the cross-overlap matrix within each eigenspace.
\end{enumerate}

\subsection{Results}

Table~\ref{tab:verification} reports the numerical results. In every case, the subspace match is $1.000000$ to machine precision, confirming that the empirical eigenbasis coincides with the predicted classical transform within each eigenspace.

\begin{table}[t]
\caption{Numerical verification of Theorem~\ref{thm:main}. The subspace match is the smallest singular value of the cross-overlap matrix between empirical and predicted bases within each eigenspace. A value of $1.000000$ indicates perfect spanning agreement.}
\label{tab:verification}
\centering
\setlength{\tabcolsep}{4pt}
\small
\begin{tabular}{@{}lll@{}}
\toprule
Group $G$ & Transform & Subspace match \\
\midrule
$\mathbb{Z}_M$, $M = 16$ & DFT & $1.000000$ \\
$\mathbb{Z}_2^n$, $n = 4$ & Walsh-Hadamard & $1.000000$ \\
$D_{2M}$ on $\mathbb{Z}_{2M}$, $M = 8$ & DCT-II & $1.000000$ \\
$W_L$, $L = 5$ & Haar wavelet & $1.000000$ \\
$\{e\}$, $M = 16$ & data-dep.\ KLT & N/A \\
\bottomrule
\end{tabular}
\end{table}

\subsection{Additional checks}
\label{subsec:additional_checks}

The statements of Remarks~\ref{rem:toeplitz_wrap} and~\ref{rem:hartley} and of Section~\ref{subsec:ar1_motivation} were checked with a short script at $M = 8$ (available with the arXiv source). For a random symmetric circulant, the Hartley basis and the real Fourier pairs leave no energy off the diagonal (relative off-diagonal energy below $10^{-30}$), while the DCT-II leaves $0.020$; for a random Hermitian circulant that is not symmetric, the DFT is exact and the Hartley basis leaves $0.095$. For the AR(1) covariance with $\phi = 0.9$, the relative commutator norm with the cyclic shift is $0.17$ and with the reflection is $0$, the only permutations of the eight indices that commute with it are the identity and the reflection, and the DCT-II leaves $0.005$ of its energy off the diagonal ($0.023$ at $\phi = 0.5$, $0.0001$ at $\phi = 0.99$). The group generated by the shift and the reflection of $\mathbb{Z}_{16}$ has order $32 = 4M$, its commutant has dimension $9 = M + 1$, the compression of a random symmetric circulant of size $16$ to the even-extension subspace has the Toeplitz-plus-Hankel form of Definition~\ref{def:dihedral_inv} and is diagonalized exactly by the DCT-II, and the only permutations of the eight original indices commuting with it are the identity and the reflection. The Laplacian of the path graph on eight vertices is diagonalized exactly by the DCT-II and the Laplacian of the cycle graph by the real Fourier pairs and the Hartley basis, with the DCT-II leaving $0.052$ of the cycle Laplacian's energy off the diagonal (Section~\ref{subsec:graph_sp}). The wrap-energy identity of Remark~\ref{rem:toeplitz_wrap} agrees with the direct Reynolds projection to machine precision at $M = 8$, $32$ and $128$.

\section{Mapping to the Continuum: Compact Lie Groups}
\label{sec:lie_groups}

The framework of Sections~\ref{sec:main}--\ref{sec:composition} was developed for finite groups acting on $\mathbb{C}^M$. This section extends the framework to continuous symmetry: a finite group $G$ acting on $\mathbb{C}^M$ becomes a compact Lie group $G$ acting on a function space $L^2(\Omega, d\mu)$ for some domain $\Omega$ with $G$-invariant measure $d\mu$. The discrete eigenvectors of the finite case become continuous eigenfunctions. The matched-group principle survives with essentially no modification: the columns of the optimal transform are still matrix elements of irreducible representations, now under the Peter-Weyl theorem in its general (compact Lie group) form. The classical continuous transforms (Fourier series, Fourier transform, cosine series, spherical harmonics) emerge as the corresponding special cases.

We treat the compact case fully in this section and the non-compact case briefly. Standard references for the underlying harmonic analysis are Folland~\cite{folland2015course}, Sugiura~\cite{sugiura1990unitary}, Helgason~\cite{helgason1984groups}, and Vilenkin~\cite{vilenkin1968special}.

\subsection{Compact Lie groups and Haar measure}

\begin{definition}[Compact Lie group]
\label{def:compact_lie}
A \emph{Lie group} is a smooth manifold $G$ equipped with a group structure such that multiplication $G \times G \to G$ and inversion $G \to G$ are smooth maps. A Lie group is \emph{compact} if it is compact as a topological space. Throughout this section, ``Lie group'' will mean ``compact Lie group'' unless stated otherwise.
\end{definition}

Every compact Lie group $G$ admits a unique left-and-right invariant Borel probability measure, the \emph{Haar measure} $dg$, satisfying
\[
\int_G f(hg) \, dg = \int_G f(gh) \, dg = \int_G f(g) \, dg
\]
for every continuous $f: G \to \mathbb{C}$ and every $h \in G$, and normalized by $\int_G dg = 1$. Finite groups are the special case in which $G$ is a finite set with discrete topology and Haar measure $dg = |G|^{-1} \sum_{g \in G} \delta_g$, recovering the averages of Section~\ref{sec:background}.

\subsection{The Peter-Weyl theorem for compact Lie groups}

\begin{theorem}[Peter-Weyl theorem, compact Lie group form~\cite{peter1927vollstandigkeit, folland2015course}]
\label{thm:peter_weyl_continuous}
Let $G$ be a compact Lie group. Every irreducible unitary representation of $G$ is finite-dimensional. Let $\hat{G}$ denote the set of equivalence classes of irreducible unitary representations and $d_\rho = \dim \rho$ for $\rho \in \hat{G}$. For each $\rho \in \hat{G}$, choose an orthonormal basis of the model space $V_\rho$ and let $\rho_{ij}: G \to \mathbb{C}$ denote the corresponding matrix-element function. Then the system
\[
\Big\{ \sqrt{d_\rho} \, \rho_{ij} : \rho \in \hat{G}, \; 1 \leq i, j \leq d_\rho \Big\}
\]
is a complete orthonormal basis of $L^2(G, dg)$.
\end{theorem}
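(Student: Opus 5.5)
The plan has three parts: finite-dimensionality of irreducibles, orthonormality of the matrix elements, and completeness. The first two are routine adaptations of the finite-group arguments of Section~\ref{sec:background}, with the normalized Haar measure $dg$ replacing the average $|G|^{-1}\sum_g$. Completeness is where compactness enters in an essential way.

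\emph{Finite-dimensionality.} Let $\rho$ be an irreducible unitary representation on a Hilbert space $V$, and fix a unit vector $v$. Define the averaged operator
\[
T = \int_G \rho(g)\, P_v\, \rho(g)^*\, dg,
\]
where $P_v$ is the rank-one projection onto $\mathbb{C}v$.
\begin{itemize}
\item $T$ is positive, nonzero (since $\langle Tv, v\rangle > 0$), and compact, being a norm-convergent Bochner integral of compact operators over a compact group.
\item By invariance of Haar measure, $T$ commutes with every $\rho(h)$.
\item By the spectral theorem for compact self-adjoint operators, $T$ has a nonzero eigenvalue whose eigenspace is finite-dimensional. That eigenspace is $G$-invariant, so irreducibility forces $V$ to equal it.
\end{itemize}

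\emph{Orthonormality.} Repeat the proof of the orthogonality relations in Theorem~\ref{thm:peter_weyl}. For an arbitrary $A$, the operator
\[
S(A) = \int_G \rho(g)\, A\, \rho'(g)^*\, dg
\]
is an intertwiner. Schur's lemma (Theorem~\ref{thm:schur}), whose proof is purely algebraic and so holds verbatim here, gives $S(A) = 0$ when $\rho \not\cong \rho'$ and $S(A) = \Tr(A)\, I / d_\rho$ when $\rho = \rho'$. Taking $A$ to be matrix units $E_{jl}$ yields
\[
\int_G \rho_{ij}\, \overline{\rho'_{kl}}\, dg = d_\rho^{-1}\, \delta_{\rho\rho'}\, \delta_{ik}\, \delta_{jl}.
\]
This is exactly the orthonormality of the functions $\sqrt{d_\rho}\,\rho_{ij}$.

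\emph{Completeness (main obstacle).} Let $\mathcal{E} \subset L^2(G)$ be the closed span of all the $\rho_{ij}$, and suppose $f \perp \mathcal{E}$. The goal is to show $f = 0$.
\begin{itemize}
\item Take a continuous approximate identity $\phi_n \ge 0$ that is symmetric ($\phi_n(g^{-1}) = \phi_n(g)$) and concentrates at $e$, which exists because $G$ is a compact metric group.
\item Consider the convolution operators $K_n h = \phi_n * h$ on $L^2(G)$. Each has a continuous kernel, so it is Hilbert--Schmidt and hence compact. Symmetry of $\phi_n$ makes it self-adjoint, and it commutes with the right-regular representation $R$.
\item Each eigenspace $E_\lambda$ of $K_n$ with $\lambda \neq 0$ is finite-dimensional and $R$-invariant. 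By Maschke-type complete reducibility (Theorem~\ref{thm:maschke}, which holds for finite-dimensional unitary representations of any group), $E_\lambda$ splits into irreducibles.
\item Every function in a finite-dimensional $R$-invariant subspace is a linear combination of matrix elements: choose a basis $\{h_k\}$ and write $h_k(g) = (R(g)h_k)(e)$. Strictly, evaluation at $e$ is only defined because elements of $E_\lambda$ are continuous, which holds since $K_n$ smooths with a continuous kernel. Hence $E_\lambda \subset \mathcal{E}$.
\item Since $K_n$ is self-adjoint and compact, its range lies in the closed span of its nonzero eigenspaces, so $K_n f \in \mathcal{E}$. Because $K_n$ is self-adjoint and preserves $\mathcal{E}$, it also preserves $\mathcal{E}^\perp$, so $K_n f \in \mathcal{E}^\perp$ as well. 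Therefore $K_n f = 0$.
\item Finally, $\phi_n * f \to f$ in $L^2(G)$, so $f = 0$.
\end{itemize}

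The delicate points are all in this last part: existence and symmetry of the approximate identity, and the continuity needed to evaluate functions at $e$. Everything else transfers from the finite case by replacing sums with Haar integrals.
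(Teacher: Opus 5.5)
Your proof is correct. It agrees with the paper on orthogonality and differs on the two other points.

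\textbf{What the paper does.} The sketch treats orthogonality exactly as you do: a Haar-averaged intertwiner, then Schur. It says nothing about finite-dimensionality of irreducibles beyond the citation. For completeness it uses Stone--Weierstrass. The span of the $\rho_{ij}$ contains the constants and is closed under conjugation (via contragredients) and multiplication (via decomposing tensor products into irreducibles). It also separates points. So it is uniformly dense in $C(G)$, hence dense in $L^2(G)$.

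\textbf{What that route buys and what it assumes.} It gives the stronger conclusion of uniform density. But it takes separation of points as given, and that is the real content of the theorem. For an abstract compact group it must come from the Gelfand--Raikov theorem or from an argument like yours. A faithful finite-dimensional representation would give it at once, but for a general compact Lie group the existence of one is itself a corollary of Peter--Weyl.

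\textbf{What your route buys.} You use the compact self-adjoint convolution operators $h \mapsto \phi_n * h$. Their nonzero eigenspaces are finite-dimensional, right-translation-invariant spaces of continuous functions, hence spans of matrix elements. You then finish with $\phi_n * f \to f$. This argument is self-contained and needs no algebra structure on the matrix elements. It also rests on the same compactness mechanism you use to prove finite-dimensionality.

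\textbf{What you can recover.} Your argument gives only $L^2$ completeness directly. Separation of points follows from it, and then uniform density by Stone--Weierstrass. Suppose some $z \neq e$ acted trivially in every irreducible. Then every function in the closed span, hence every $L^2$ function, would be invariant under right translation by $z$. A continuous bump supported near $e$ is not.
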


\begin{proof}[Sketch] See~\cite[Theorem 5.11]{folland2015course} for the full proof. The orthogonality is established by the same intertwiner argument as in the finite case (Theorem~\ref{thm:peter_weyl}), with sums replaced by integrals against $dg$. Completeness follows from the Stone-Weierstrass theorem applied to the linear span of matrix elements, using the fact that matrix elements separate points on $G$.
\end{proof}

The structural theorems of Section~\ref{sec:background} carry over verbatim with one change of bookkeeping: direct sums of irreducibles remain countable (the dual $\hat{G}$ is countable for compact $G$), but the sum may now be infinite. Schur's lemma (Theorem~\ref{thm:schur}), Maschke's theorem (Theorem~\ref{thm:maschke}), the isotypic decomposition (Theorem~\ref{thm:isotypic}), and the structure of the commutant (Theorem~\ref{thm:commutant_structure}) all hold for compact Lie groups acting unitarily on a separable Hilbert space, with the understanding that the direct-sum decomposition may have countably infinitely many summands.

\subsection{Continuous covariances and the Mercer-Karhunen-Lo\`eve theorem}

In the continuous setting the matrix $\mathbf{R} \in \mathbb{C}^{M \times M}$ is replaced by a positive-semidefinite integral kernel.

\begin{definition}[Continuous covariance kernel]
\label{def:continuous_kernel}
Let $\Omega$ be a compact Hausdorff space with a finite Borel measure $d\mu$. A \emph{continuous covariance kernel} on $\Omega$ is a continuous Hermitian function $R: \Omega \times \Omega \to \mathbb{C}$, $R(s, t) = \overline{R(t, s)}$, such that for every continuous $f: \Omega \to \mathbb{C}$,
\[
\int_\Omega \int_\Omega R(s, t) \, f(s) \, \overline{f(t)} \, d\mu(s) \, d\mu(t) \geq 0.
\]
The associated integral operator $\mathcal{R}: L^2(\Omega) \to L^2(\Omega)$ is defined by $(\mathcal{R} f)(s) = \int_\Omega R(s, t) f(t) \, d\mu(t)$.
\end{definition}

\begin{theorem}[Mercer's theorem~\cite{mercer1909functions}]
\label{thm:mercer}
Let $R: \Omega \times \Omega \to \mathbb{C}$ be a continuous positive-semidefinite Hermitian kernel on a compact set $\Omega$ with finite measure $d\mu$. The integral operator $\mathcal{R}$ is compact, self-adjoint, and trace-class. It admits a complete orthonormal eigensystem $\{\phi_n\}_{n=1}^\infty \subset L^2(\Omega)$ with eigenvalues $\lambda_n \geq 0$ summing to a finite trace,
\[
\mathcal{R} \phi_n = \lambda_n \phi_n, \quad \sum_n \lambda_n = \int_\Omega R(s, s) \, d\mu(s) < \infty,
\]
and the kernel admits the absolutely uniformly convergent expansion
\[
R(s, t) = \sum_{n=1}^\infty \lambda_n \, \phi_n(s) \, \overline{\phi_n(t)}.
\]
\end{theorem}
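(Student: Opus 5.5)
The plan is the classical route through Hilbert--Schmidt theory followed by Dini's theorem. I will organize it in four steps.

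First, the properties of $\mathcal{R}$. Since $R$ is continuous on the compact set $\Omega\times\Omega$ and $\mu$ is finite, $R\in L^2(\Omega\times\Omega)$. Hence $\mathcal{R}$ is Hilbert--Schmidt, and therefore compact. The Hermitian symmetry $R(s,t)=\overline{R(t,s)}$ gives self-adjointness by Fubini. The positivity hypothesis, tested on continuous $f$ and extended to all of $L^2$ by density of $C(\Omega)$, gives $\langle\mathcal{R}f,f\rangle\ge0$.

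Second, the eigensystem. The spectral theorem for compact self-adjoint operators yields an orthonormal eigensystem $\{\phi_n\}$ for $\overline{\operatorname{ran}\mathcal{R}}$ with eigenvalues $\lambda_n\ge0$ tending to $0$. Adjoining an orthonormal basis of $\ker\mathcal{R}$, with eigenvalue $0$, makes the system complete in $L^2(\Omega)$. For $\lambda_n>0$, the identity $\phi_n=\lambda_n^{-1}\mathcal{R}\phi_n$ shows that $\phi_n$ is continuous, since $R$ is uniformly continuous. This continuity lets us work pointwise from here on.

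Third, the diagonal estimate, which is the heart of Mercer's argument. Set $R_N(s,t)=R(s,t)-\sum_{n\le N}\lambda_n\phi_n(s)\overline{\phi_n(t)}$. This is again a continuous kernel, and its operator is $\mathcal{R}$ with the first $N$ eigencomponents removed, so it is positive. A continuous positive kernel has nonnegative diagonal. To see this, suppose $R_N(s_0,s_0)<0$ and test against an approximate identity concentrated near $s_0$; this gives a contradiction provided $\mu$ charges every neighbourhood of $s_0$. I expect this support issue to be the main obstacle. If $\mu$ does not have full support, I will restrict $\Omega$ to $\operatorname{supp}\mu$, which is the implicit setting of the theorem. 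From $R_N(s,s)\ge0$ we get $\sum_n\lambda_n|\phi_n(s)|^2\le R(s,s)$ for every $s$. The Cauchy--Schwarz inequality then shows that $\sum_n\lambda_n\phi_n(s)\overline{\phi_n(t)}$ converges absolutely, and uniformly in $t$ for each fixed $s$.

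Fourth, identification of the limit and uniform convergence. For fixed $s$, the uniform limit in $t$ and $R(s,\cdot)$ have the same inner products against every $\phi_n$ and both annihilate $\ker\mathcal{R}$. They therefore agree in $L^2$, and since both are continuous in $t$ they agree everywhere. Taking $t=s$ gives $\sum_n\lambda_n|\phi_n(s)|^2=R(s,s)$. This is a monotone series of continuous functions converging to a continuous limit on a compact set, so Dini's theorem makes the convergence uniform. Cauchy--Schwarz applied to the tails then makes the off-diagonal expansion absolutely and uniformly convergent. Finally, integrating the diagonal identity termwise, which uniform convergence permits, gives $\sum_n\lambda_n=\int_\Omega R(s,s)\,d\mu(s)<\infty$. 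This is the trace formula, and it shows that $\mathcal{R}$ is trace class.
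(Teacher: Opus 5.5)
The paper gives no proof of this statement; it quotes Mercer's theorem as a classical result, with a citation. Your outline is the classical argument for this result, and it is correct. Its steps are Hilbert--Schmidt compactness, continuity of the eigenfunctions with $\lambda_n>0$, positivity of the remainder kernel $R_N$ giving $\sum_n \lambda_n|\phi_n(s)|^2\le R(s,s)$, pointwise identification of the limit, and then Dini's theorem.

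The support caveat you raise is not just an artifact of your method: the statement as written in the paper is false without it. Take $\Omega=[0,2]$, let $\mu$ be Lebesgue measure on $[0,1]$, and set $R(s,t)=1+a(s)a(t)$, where $a$ is continuous, real, nonzero, and supported in $[3/2,2]$. The operator then has the single nonzero eigenvalue $1$, with eigenfunction $1$, and the trace formula holds. But no choice of representatives makes the series reproduce $R(s,s)=1+a(s)^2$ where $a\neq 0$. So the pointwise expansion on all of $\Omega$ requires $\mathrm{supp}\,\mu=\Omega$, exactly as you assume.

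One small bookkeeping point in your fourth step. For complex $R$, the identity available is $\int_\Omega R(s,t)\,\phi_n(t)\,d\mu(t)=\lambda_n\phi_n(s)$. This is the Hermitian inner product of $R(s,\cdot)$ with $\overline{\phi_n}$, not with $\phi_n$, and the same bilinear pairing is what you need for $\ker\mathcal{R}$. Since $\{\overline{\phi_n}\}$ is also complete, your conclusion is unaffected.
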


\begin{theorem}[Continuous Karhunen-Lo\`eve theorem~\cite{karhunen1947methoden, loeve1948fonctions}]
\label{thm:kl_continuous}
Let $X(t)$, $t \in \Omega$, be a second-order zero-mean stochastic process with continuous covariance $R(s, t) = \mathbb{E}[X(s) \overline{X(t)}]$. Let $\{\phi_n, \lambda_n\}$ be the eigensystem of the integral operator $\mathcal{R}$ from Theorem~\ref{thm:mercer}. Then $X$ admits the convergent (in $L^2$) expansion
\[
X(t) = \sum_{n=1}^\infty \sqrt{\lambda_n} \, Z_n \, \phi_n(t),
\]
with the coordinate random variables
\[
Z_n := \frac{1}{\sqrt{\lambda_n}} \int_\Omega X(t) \, \overline{\phi_n(t)} \, d\mu(t),
\]
in which $\{Z_n\}$ are uncorrelated random variables with $\mathbb{E}[Z_n] = 0$ and $\mathbb{E}[|Z_n|^2] = 1$. The expansion is optimal in the sense that the $K$-term truncation
\[
X_K(t) := \sum_{n=1}^K \sqrt{\lambda_n} Z_n \phi_n(t)
\]
minimizes the mean-square error $\mathbb{E} \int_\Omega |X(t) - X_K(t)|^2 \, d\mu(t)$ over all $K$-term orthogonal expansions.
\end{theorem}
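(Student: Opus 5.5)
The plan is to derive the expansion from Mercer's theorem (Theorem~\ref{thm:mercer}) applied to the covariance kernel $R$, and then to obtain optimality from a trace/Ky Fan argument. The steps follow.

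\emph{Well-definedness of the coordinates.} First, since $R$ is continuous on the compact $\Omega$, we have $\mathbb{E}\int_\Omega |X(t)|^2\,d\mu(t) = \int_\Omega R(t,t)\,d\mu(t) < \infty$ by Fubini. Hence almost every sample path lies in $L^2(\Omega)$, and the integrals defining $Z_n$ exist in mean square. For $\lambda_n = 0$ we interpret $\sqrt{\lambda_n} Z_n$ as the integral itself. That integral has zero variance, so such terms vanish almost surely, and these indices are dropped.

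\emph{Uncorrelatedness.} By Fubini,
\[
\mathbb{E}\Big[\int X\overline{\phi_n}\,\overline{\int X\overline{\phi_m}}\Big] = \iint R(s,t)\,\phi_m(t)\,\overline{\phi_n(s)}\,d\mu\,d\mu = \langle \mathcal{R}\phi_m,\phi_n\rangle = \lambda_m\delta_{mn}.
\]
This gives $\mathbb{E}[Z_n\overline{Z_m}] = \delta_{nm}$, and $\mathbb{E} Z_n = 0$ follows from zero mean.

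\emph{Mean-square convergence.} Fix $t$ and compute $\mathbb{E}|X(t) - X_K(t)|^2$ by expanding the square. The cross term is
\[
\mathbb{E}\big[X(t)\,\overline{\sqrt{\lambda_n}Z_n}\big] = \int R(t,s)\,\phi_n(s)\,d\mu(s) = \lambda_n\phi_n(t).
\]
This yields
\[
\mathbb{E}|X(t) - X_K(t)|^2 = R(t,t) - \sum_{n\le K}\lambda_n|\phi_n(t)|^2 .
\]
Mercer's uniform convergence on the diagonal sends this to zero uniformly in $t$, which also gives convergence in $L^2(\Omega\times\text{probability})$.

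\emph{Optimality.} This is the step I expect to be the main obstacle, because the competitor class must be stated carefully. I would take it to be all truncations $\sum_{n\le K}\langle X,\psi_n\rangle\psi_n$ with $\{\psi_n\}$ orthonormal. By Pythagoras, the error equals
\[
\mathbb{E}\|X\|^2 - \sum_{n\le K}\mathbb{E}|\langle X,\psi_n\rangle|^2 = \mathrm{Tr}\,\mathcal{R} - \sum_{n\le K}\langle\mathcal{R}\psi_n,\psi_n\rangle ,
\]
so minimizing the error means maximizing $\mathrm{Tr}(P\mathcal{R}P)$ over rank-$K$ orthogonal projections $P$. Ky Fan's maximum principle for compact self-adjoint operators, equivalently the Courant--Fischer min-max, bounds this quantity by $\sum_{n\le K}\lambda_n$ when the eigenvalues are ordered decreasingly, and equality is attained at $\psi_n = \phi_n$. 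To prove Ky Fan briefly, write
\[
\sum_{n\le K}\langle\mathcal{R}\psi_n,\psi_n\rangle = \sum_m \lambda_m w_m, \qquad w_m = \sum_{n\le K}|\langle\psi_n,\phi_m\rangle|^2 ,
\]
where $0\le w_m\le 1$ and $\sum_m w_m \le K$. A linear program over such weights is maximized by putting weight one on the $K$ largest $\lambda_m$.
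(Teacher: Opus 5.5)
The paper does not prove this theorem. It quotes it as a classical result, citing Karhunen and Lo\`eve, right after Mercer's theorem. Your argument is exactly the standard proof behind that citation, and it is correct: the computation $\langle \mathcal{R}\phi_m,\phi_n\rangle=\lambda_m\delta_{mn}$ gives uncorrelatedness, the identity $\mathbb{E}|X(t)-X_K(t)|^2=R(t,t)-\sum_{n\le K}\lambda_n|\phi_n(t)|^2$ together with Mercer on the diagonal gives convergence, and the trace/Ky Fan bound gives optimality.

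You also rightly supply two things the statement leaves implicit. First, the eigenvalues must be indexed in decreasing order for $X_K$ to be optimal. Second, $Z_n$, and the claim $\mathbb{E}|Z_n|^2=1$, only make sense for $\lambda_n>0$. Your restriction of competitors to projection coefficients loses nothing: for a fixed orthonormal $K$-set, the projection is pathwise optimal by Pythagoras.

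You use joint measurability of $(t,\omega)\mapsto X(t)$ silently in your Fubini steps. This is harmless. Continuity of $R$ makes $t\mapsto X(t)$ continuous into $L^2(\mathbb{P})$, so you can pass to a measurable modification (for metrizable $\Omega$). Alternatively, you can read the coordinates as mean-square (Bochner) integrals, and every computation goes through unchanged.

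Your pointwise cross-term identity $(\mathcal{R}\phi_n)(t)=\lambda_n\phi_n(t)$ uses the continuous representatives $\phi_n=\lambda_n^{-1}\mathcal{R}\phi_n$. These are the representatives Mercer's expansion refers to, so this step is fine.
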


The continuous KL eigenfunctions $\phi_n$ are the continuous analog of the eigenvectors of the discrete covariance matrix; they are the columns of the continuous optimal transform.

\subsection{Group-invariant continuous covariances}

\begin{definition}[$G$-invariant continuous covariance]
\label{def:g_invariant_continuous}
Let $G$ be a compact Lie group acting continuously on $\Omega$ by measure-preserving transformations. A continuous covariance kernel $R$ is \emph{$G$-invariant} if $R(g \cdot s, g \cdot t) = R(s, t)$ for every $g \in G$ and every $s, t \in \Omega$.
\end{definition}

The integral operator $\mathcal{R}$ associated to a $G$-invariant kernel commutes with the regular representation of $G$ on $L^2(\Omega)$: $\mathcal{R} \pi(g) = \pi(g) \mathcal{R}$ for every $g \in G$, where $(\pi(g) f)(t) = f(g^{-1} \cdot t)$. The simultaneous eigenfunctions of $\mathcal{R}$ and the $\pi(g)$ are then constrained by Schur's lemma.

\subsection{The continuous main theorem}

\begin{theorem}[Main theorem, continuous form]
\label{thm:main_continuous}
Let $G$ be a compact Lie group acting continuously and measure-preservingly on a compact domain $\Omega$ with $G$-invariant probability measure $d\mu$, and let $\pi: G \to U(L^2(\Omega))$ denote the resulting unitary representation. Suppose $\pi$ is multiplicity-free, with isotypic decomposition
\[
L^2(\Omega) \;=\; \bigoplus_{\rho \in \hat{G}, \, m_\rho = 1} V^{(\rho)},
\]
in which the sum is countable and $\dim V^{(\rho)} = d_\rho$. Then:
\begin{enumerate}
\item[(i)] Every continuous $G$-invariant covariance kernel $R$ has an eigensystem of the form
\[
R(s, t) = \sum_{\rho \in \hat{G}, \, m_\rho = 1} \lambda_\rho \sum_{i = 1}^{d_\rho} \phi_{\rho, i}(s) \, \overline{\phi_{\rho, i}(t)},
\]
where $\{\phi_{\rho, i}\}_{i = 1}^{d_\rho}$ is an orthonormal basis of $V^{(\rho)}$ chosen once and for all and used for every $G$-invariant kernel. The eigenvalue $\lambda_\rho$ on $V^{(\rho)}$ is $d_\rho$-fold degenerate.
\item[(ii)] The basis functions $\phi_{\rho, i}$ are matrix-element functions of $\rho \in \hat{G}$ in their realization on $\Omega$, evaluated on a fundamental set of orbit representatives of the $G$-action.
\item[(iii)] In particular, the continuous optimal transform, the map $f \mapsto (\langle f, \phi_{\rho, i} \rangle)_{\rho, i}$ from $L^2(\Omega)$ to the multiplicity-indexed sequence space, is a fixed unitary depending only on $G$ and the action on $\Omega$, not on $R$.
\end{enumerate}
\end{theorem}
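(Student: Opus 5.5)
The plan is to mirror the proof of Theorem~\ref{thm:main}, replacing the finite-dimensional simultaneous-diagonalization argument with Schur's lemma applied isotypic-block by isotypic-block, and then to use Mercer's theorem (Theorem~\ref{thm:mercer}) to convert the operator statement into the kernel expansion. First, I will check that the integral operator $\mathcal{R}$ of a $G$-invariant kernel lies in the commutant of $\pi$. This uses the substitution $t \mapsto g\cdot t$ together with the invariance of $d\mu$, as already noted after Definition~\ref{def:g_invariant_continuous}. Next, for each $\rho$ with $m_\rho = 1$, let $P^{(\rho)}$ be the orthogonal projection onto $V^{(\rho)}$. It is given by the continuous analogue of Equation~\eqref{eq:central_projection}, namely $P^{(\rho)} = d_\rho \int_G \overline{\chi_\rho(g)}\,\pi(g)\,dg$. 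Because $\mathcal{R}$ commutes with every $\pi(g)$, it commutes with every $P^{(\rho)}$, and so it maps $V^{(\rho)}$ into itself. The compressed operator $P^{(\rho)}\mathcal{R}|_{V^{(\rho)}}$ is then an intertwiner of the finite-dimensional irreducible $V^{(\rho)} \cong V_\rho$ with itself. By Schur's lemma (Theorem~\ref{thm:schur}(ii)) it equals $\lambda_\rho I$, and $\lambda_\rho \ge 0$ because $\mathcal{R}$ is positive. The same argument with Theorem~\ref{thm:schur}(i) shows that the cross blocks $P^{(\rho')}\mathcal{R}P^{(\rho)}$ vanish for $\rho\neq\rho'$, although the invariance of each block already gives this.

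With a fixed orthonormal basis $\{\phi_{\rho,i}\}$ of each $V^{(\rho)}$, chosen independently of $R$, the orthogonal direct-sum hypothesis (the Hilbert-space version of Theorem~\ref{thm:isotypic}) makes $\{\phi_{\rho,i}\}_{\rho,i}$ a complete orthonormal system of $L^2(\Omega)$. Each $\phi_{\rho,i}$ is an eigenfunction of $\mathcal{R}$. Hence this system is a complete Mercer eigensystem of $\mathcal{R}$, and Theorem~\ref{thm:mercer} gives the stated expansion of $R(s,t)$, converging absolutely and uniformly. The eigenvalue multiplicity is $d_\rho$ since $\dim V^{(\rho)} = d_\rho$. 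This proves (i). Part (iii) then follows immediately: the analysis map $f \mapsto (\langle f,\phi_{\rho,i}\rangle)$ is unitary onto $\ell^2$ by completeness and orthonormality, and it was built without reference to $R$.

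For (ii) I will use the continuous analogue of Equation~\eqref{eq:basis_vectors}. Fix a nonzero $v_0 \in V^{(\rho)}$ and, after choosing the model basis so that $v_0$ corresponds to $e_1$, set $\phi_{\rho,i} \propto d_\rho\int_G \overline{\rho_{i1}(g)}\,\pi(g)v_0\,dg$. The Schur orthogonality relations in Theorem~\ref{thm:peter_weyl_continuous} show that these vectors are orthogonal, of equal norm, and span $V^{(\rho)}$. Evaluating at a point $t = h\cdot t_0$, with $t_0$ an orbit representative, expresses $\phi_{\rho,i}(t)$ through the matrix coefficients $\rho_{ij}(h)$ times values attached to the representatives. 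This is the claimed matrix-element form.

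I expect the main obstacle to be part (ii) in the generality stated, and I will prove it in its weakest reading: $\phi_{\rho,i}(h\cdot t_0)$ is a linear combination of the $\rho_{ij}(h)$, with coefficients given by the stabilizer-invariant data at $t_0$. This is a Frobenius-reciprocity argument on $L^2(G/G_{t_0})$ for each orbit. The delicate points are the well-definedness under the choice of coset representative, and the measurability of a section choosing orbit representatives. For (i) I need to take care only with interchanging $\mathcal{R}$ and the Bochner integral defining $P^{(\rho)}$, which is justified by the boundedness of $\mathcal{R}$ and the strong continuity of $\pi$.
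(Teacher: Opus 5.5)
Your proof is correct. For part (i) it takes a more direct route than the paper; for part (ii) it uses the same idea.

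\emph{Part (i).} The paper argues that multiplicity-freeness makes the commutant commutative (Lemma~\ref{lem:multiplicity_free_commutant}, lifted to compact groups). It then appeals to a simultaneous eigenfunction decomposition of the commutative $*$-algebra of $G$-invariant compact self-adjoint operators, and leaves the passage from operator to kernel implicit. You instead apply Schur's lemma on each finite-dimensional block to get $\mathcal{R}|_{V^{(\rho)}}=\lambda_\rho I$. You then invoke Mercer's theorem, which is legitimate for any orthonormal eigenbasis because the partial sum over a (finite-dimensional) eigenspace does not depend on the basis chosen in it.

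Your route delivers exactly what the statement needs: an arbitrary, $R$-independent orthonormal basis of each $V^{(\rho)}$ is an eigenbasis, with $d_\rho$-fold degeneracy. Commutativity alone yields only \emph{some} common eigenbasis, so the paper is tacitly relying on the Schur/commutant-structure argument of Theorem~\ref{thm:commutant_structure} anyway. Your route also avoids joint spectral theory for an infinite commuting family. The paper's route buys brevity and a formal parallel with Theorem~\ref{thm:main}(i).

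\emph{Part (ii).} You use the same idea as the paper (the continuous analogue of Equation~\ref{eq:basis_vectors} plus Schur orthogonality), only more explicitly. Your worry about a measurable section of orbit representatives is moot. The constants lie in the trivial isotypic component $L^2(\Omega)^G$, so multiplicity-freeness forces $L^2(\Omega)^G=\mathbb{C}$, i.e.\ the action is ergodic. For a compact group, $\mu$ is then carried by a single orbit $G\cdot t_0\cong G/G_{t_0}$. On that orbit every element of $V^{(\rho)}$ is continuous, since it is fixed by $P^{(\rho)}$, a convolution with the continuous function $d_\rho\overline{\chi_\rho}$. So point evaluation makes sense, and with $(\pi(g)f)(t)=f(g^{-1}t)$ you get $\phi_{\rho,i}(h\cdot t_0)=\sum_k\overline{\rho_{ik}(h)}\,\phi_{\rho,k}(t_0)$. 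Here $(\overline{\phi_{\rho,k}(t_0)})_k$ is a nonzero $G_{t_0}$-fixed vector. Choosing the model basis with this vector along $e_1$ gives $\phi_{\rho,i}(h\cdot t_0)=\sqrt{d_\rho}\,\overline{\rho_{i1}(h)}$, which is the cleanest form of (ii).
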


\begin{proof}
The integral operator $\mathcal{R}$ associated to a $G$-invariant continuous covariance kernel commutes with $\pi(g)$ for every $g \in G$, so $\mathcal{R}$ lies in the commutant of $\pi$. By Lemma~\ref{lem:multiplicity_free_commutant} (lifted to the compact Lie group setting), the commutant is commutative when $\pi$ is multiplicity-free. The simultaneous eigenfunction decomposition of the commutative $*$-algebra of $G$-invariant compact self-adjoint operators on $L^2(\Omega)$, a continuous analog of simultaneous diagonalization, yields the displayed decomposition.

For part (ii), the central projection formula of Theorem~\ref{thm:main}(ii) lifts to the compact Lie group setting with the finite sum replaced by an integral:
\[
P^{(\rho)} = d_\rho \int_G \overline{\chi_\rho(g)} \, \pi(g) \, dg,
\]
and the analog of Equation~\ref{eq:basis_vectors} gives the $\phi_{\rho, i}$ as matrix-element-weighted integrals over $G$. These are by construction matrix-element functions of $\rho$ pulled back to $\Omega$ via the orbit-representative map.

Part (iii) is the statement that the unitary intertwiner from $L^2(\Omega)$ to the Peter-Weyl decomposition depends only on $G$ and the action, not on any specific covariance, by exactly the argument of Theorem~\ref{thm:main}(i).
\end{proof}

\subsection{Non-compact groups: Plancherel theorem}

For non-compact locally compact groups $G$ (such as the real line $\mathbb{R}$ or the affine group $ax + b$), the Peter-Weyl theorem fails in the form stated above: a non-compact group typically has uncountably many irreducibles and no irreducible has a non-zero $L^2$ matrix coefficient. The correct generalization is the Plancherel theorem, which decomposes $L^2(G)$ as a direct \emph{integral} (rather than direct sum) of irreducible representations against a Plancherel measure on $\hat{G}$~\cite{wigner1939unitary, folland2015course}.

For our purposes, the most important non-compact case is $G = \mathbb{R}$. The Plancherel theorem for $\mathbb{R}$ states that $L^2(\mathbb{R})$ decomposes via the Fourier transform into a direct integral of 1-dimensional irreducibles indexed by frequency $\xi \in \mathbb{R}$ with Plancherel measure $d\xi$:
\[
L^2(\mathbb{R}) \;\cong\; \int_{\mathbb{R}}^{\oplus} \mathbb{C} \, d\xi.
\]
The Fourier transform $\hat{f}(\xi) = \int_{\mathbb{R}} f(x) \, e^{-2 \pi i \xi x} \, dx$ is the unitary intertwiner. The matched-group principle still applies: a covariance kernel invariant under translation (a stationary kernel) is diagonalized by the Fourier transform, with the spectral density playing the role of the eigenvalue.

\section{Derived Continuous Transforms}
\label{sec:continuous_transforms}

We apply the main theorem of Section~\ref{sec:lie_groups} to derive each of the four most-used continuous transforms (Fourier series, Fourier transform, Fourier cosine series, spherical harmonics) as the matched transform for a specific compact Lie group action. We close with a brief comment on the continuous wavelet transform, which involves a non-compact group and lies slightly outside the immediate scope of Theorem~\ref{thm:main_continuous}.

\subsection{The circle $S^1$ and Fourier series}
\label{subsec:fourier_series}

The circle group $S^1 = U(1)$, the unit complex numbers under multiplication, acts on itself by rotation. Identifying $S^1$ with $[0, 2\pi)$ with the normalized rotation-invariant measure $d\theta / (2\pi)$, the group acts on $L^2(S^1)$ by $(\pi(\varphi) f)(\theta) = f(\theta - \varphi)$.

\begin{lemma}[Irreducibles and characters of $S^1$]
\label{lem:s1_irreps}
Every irreducible unitary representation of $S^1$ is one-dimensional and is given by a character $\chi_n: S^1 \to U(1)$,
\[
\chi_n(\theta) = e^{i n \theta}, \quad n \in \mathbb{Z}.
\]
The dual $\hat{S^1}$ is $\mathbb{Z}$, and the regular representation of $S^1$ on $L^2(S^1)$ decomposes as the direct sum of all characters, each appearing once: $L^2(S^1) = \bigoplus_{n \in \mathbb{Z}} \mathbb{C} \cdot e^{in\theta}$.
\end{lemma}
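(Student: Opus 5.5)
The plan has three parts: classify the irreducibles of the abelian group $S^1$, then show the exponentials are orthonormal, and finally show they are complete in $L^2(S^1)$.

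\emph{Irreducibles are one-dimensional.} Since $S^1$ is abelian, for an irreducible unitary representation $\rho$ on $V_\rho$, each $\rho(\varphi)$ commutes with every $\rho(\psi)$. So each $\rho(\varphi)$ is an intertwiner of $\rho$ with itself. By Schur's lemma (Theorem~\ref{thm:schur}(ii)), which holds for compact groups since $V_\rho$ is finite-dimensional by Theorem~\ref{thm:peter_weyl_continuous}, each $\rho(\varphi)$ is a scalar. Every subspace is then invariant, so irreducibility forces $\dim V_\rho = 1$. Thus $\rho$ is a continuous homomorphism $\chi: \mathbb{R}/2\pi\mathbb{Z} \to U(1)$.

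\emph{Classifying the characters.} This is the one genuinely analytic step, where continuity must be converted into the exponential form. First pick $\delta > 0$ with $c := \int_0^\delta \chi(t)\,dt \neq 0$, which is possible by continuity and $\chi(0)=1$. Integrating $\chi(\theta + t) = \chi(\theta)\chi(t)$ over $t \in [0,\delta]$ gives
\[
c\,\chi(\theta) = \int_\theta^{\theta+\delta}\chi(s)\,ds .
\]
The right side is $C^1$ in $\theta$, so $\chi$ is differentiable. Differentiating the homomorphism identity at $t = 0$ gives $\chi' = \chi'(0)\chi$, hence $\chi(\theta) = e^{a\theta}$. Unimodularity forces $a = i\lambda$ with $\lambda$ real, and $2\pi$-periodicity forces $\lambda = n \in \mathbb{Z}$. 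The characters $\chi_n$ are pairwise distinct, so $\hat{S^1} \cong \mathbb{Z}$.

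\emph{Decomposition of $L^2(S^1)$.} Each line $\mathbb{C}\, e^{in\theta}$ is invariant under translation, since
\[
\pi(\varphi)e^{in\theta} = e^{-in\varphi}e^{in\theta},
\]
and it carries the character $\chi_{-n}$, or $\chi_n$ under the $g^{-1}$ convention. Orthonormality under $d\theta/2\pi$ is a direct computation. Completeness follows by applying Theorem~\ref{thm:peter_weyl_continuous} with $G = S^1$ acting on itself: since all $d_\rho = 1$, the matrix elements are exactly the characters $e^{in\theta}$, and these form a complete orthonormal basis of $L^2(S^1)$. Alternatively, one can use Stone--Weierstrass on trigonometric polynomials. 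Each character occurs in exactly one line, so the regular representation is multiplicity-free. This sets up Theorem~\ref{thm:main_continuous} for the Fourier series.
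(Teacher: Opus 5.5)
Your proof is correct and follows the same route as the paper. The paper just cites Pontryagin duality for $S^1$ (the classification of continuous homomorphisms $S^1 \to U(1)$) and the classical completeness of Fourier series, and you supply the standard arguments behind those citations (Schur's lemma for one-dimensionality, the integration trick for regularity of $\chi$, and Peter--Weyl or Stone--Weierstrass for completeness); your observation that under the paper's convention $(\pi(\varphi)f)(\theta) = f(\theta - \varphi)$ the line $\mathbb{C}\,e^{in\theta}$ carries $\chi_{-n}$ is accurate and harmless, since $n \mapsto -n$ is a bijection of $\mathbb{Z}$.
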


\begin{proof}
Every continuous homomorphism $S^1 \to U(1)$ is of the form $\theta \mapsto e^{in\theta}$ for some $n \in \mathbb{Z}$; this is the classical Pontryagin duality for $S^1$~\cite[Theorem 4.6]{folland2015course}. The decomposition of $L^2(S^1)$ is the classical Fourier series statement.
\end{proof}

\begin{theorem}[Fourier series is the matched transform of $S^1$]
\label{thm:fourier_series}
The optimal transform for any $S^1$-invariant continuous covariance kernel on $S^1$ is the Fourier series expansion. Explicitly, every continuous covariance $R(\theta_1, \theta_2)$ that depends only on $\theta_1 - \theta_2$ (i.e., every continuous \emph{stationary} kernel on the circle) admits the spectral expansion
\[
R(\theta_1, \theta_2) = \sum_{n \in \mathbb{Z}} \lambda_n \, e^{in(\theta_1 - \theta_2)},
\]
with non-negative eigenvalues $\lambda_n$ given by the Fourier coefficients $\lambda_n = \int_{S^1} R(0, \theta) \, e^{-in\theta} \, d\theta / (2\pi)$. The eigenfunctions are $\phi_n(\theta) = e^{in\theta}$, $n \in \mathbb{Z}$.
\end{theorem}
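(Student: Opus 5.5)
The plan is to apply Theorem~\ref{thm:main_continuous} with $G = S^1$ acting on $\Omega = S^1$ by rotation, using Lemma~\ref{lem:s1_irreps} for the isotypic decomposition, and then compute the eigenvalues directly. First, I will check that the hypotheses hold. $S^1$ is a compact Lie group, and the normalized measure $d\theta/(2\pi)$ is rotation-invariant. By Lemma~\ref{lem:s1_irreps}, the representation $(\pi(\varphi)f)(\theta) = f(\theta-\varphi)$ decomposes as $\bigoplus_{n\in\mathbb{Z}} \mathbb{C}\, e^{in\theta}$. Each character appears exactly once, so the representation is multiplicity-free and every $d_\rho = 1$. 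A kernel $R(\theta_1,\theta_2)$ that depends only on $\theta_1 - \theta_2$ satisfies $R(\theta_1+\varphi,\theta_2+\varphi) = R(\theta_1,\theta_2)$, so it is $G$-invariant in the sense of Definition~\ref{def:g_invariant_continuous}.

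Second, I will identify the eigenfunctions and eigenvalues concretely rather than only abstractly. Write $R(\theta_1,\theta_2) = r(\theta_1-\theta_2)$ with $r(\theta) = R(\theta,0) = \overline{R(0,\theta)}$. Applying the integral operator to $e^{in\theta}$ and substituting $u = \theta_1 - \theta_2$ gives
\[
(\mathcal{R} e^{in\cdot})(\theta_1) = \int r(\theta_1-\theta_2) e^{in\theta_2}\,\tfrac{d\theta_2}{2\pi} = e^{in\theta_1}\int r(u) e^{-inu}\,\tfrac{du}{2\pi}.
\]
So $\phi_n = e^{in\theta}$ is an eigenfunction with eigenvalue $\lambda_n = \hat r(n)$. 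Using Hermitian symmetry, $r(u) = \overline{r(-u)}$, I will rewrite this as $\int R(0,\theta)e^{-in\theta}\,d\theta/(2\pi)$ after the change of variable $\theta = -u$. This agrees with the stated formula, possibly up to the orientation convention $\theta_1-\theta_2$ versus $\theta_2-\theta_1$, which I will fix so the signs match. Nonnegativity of $\lambda_n$ follows from positive semidefiniteness: take $f = e^{in\theta}$ in Definition~\ref{def:continuous_kernel}, and the quadratic form equals $\lambda_n$.

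Third, I will establish the expansion $R(\theta_1,\theta_2) = \sum_n \lambda_n e^{in(\theta_1-\theta_2)}$. Since $\{e^{in\theta}\}$ is complete in $L^2(S^1)$, it is a complete orthonormal eigensystem of $\mathcal{R}$. Mercer's theorem (Theorem~\ref{thm:mercer}) then gives absolute and uniform convergence of $\sum_n \lambda_n \phi_n(\theta_1)\overline{\phi_n(\theta_2)}$ to $R$, with $\sum_n\lambda_n = r(0) < \infty$. Optimality follows by invoking Theorem~\ref{thm:kl_continuous}: the KL basis is this fixed Fourier basis for every stationary kernel, which is Theorem~\ref{thm:main_continuous}(iii).

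The main obstacle is the convergence step. Mercer's theorem, as stated, applies to an eigensystem produced by the operator. I must make sure that nothing is lost when eigenvalues coincide ($\lambda_n = \lambda_{-n}$ for real kernels), because then Mercer's basis may differ from $\{e^{in\theta}\}$ inside an eigenspace. I would handle this by noting that the finite-rank spectral projections are basis-independent within each eigenspace, so the Mercer sum is unchanged. Alternatively, I would argue directly: $r$ is continuous and positive definite with $\hat r(n) \ge 0$ and $\sum_n \hat r(n) = r(0)$, which is Fejér's theorem evaluated at $0$. Hence the Fourier series of $r$ converges absolutely and uniformly to $r$.
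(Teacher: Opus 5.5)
Your argument follows the paper's proof. You apply Theorem~\ref{thm:main_continuous} with Lemma~\ref{lem:s1_irreps} for multiplicity-freeness, use Mercer (Theorem~\ref{thm:mercer}) for the expansion, and read off the eigenvalues as Fourier coefficients. The details you add are sound, and in places better than the paper's. Your direct convolution computation $\mathcal{R}e^{in\cdot} = \hat r(n)\,e^{in\cdot}$ replaces the paper's heuristic projection of a ``$\delta$-like'' test function. Both of your treatments of the convergence step are correct: basis-independence of the eigenspace projections, or $\hat r(n)\ge 0$ together with Fej\'er at $0$, which gives $\sum_n \hat r(n) = r(0)$ and hence absolute uniform convergence.

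The step that fails is the sign in the eigenvalue formula. With $r(u) = R(u,0)$, your computation gives $\lambda_n = \hat r(n) = \int R(\theta,0)\,e^{-in\theta}\,d\theta/(2\pi)$. The substitution $\theta = -u$ turns this into $\int R(0,\theta)\,e^{+in\theta}\,d\theta/(2\pi)$, not $e^{-in\theta}$. Using $r(u)=\overline{r(-u)}$ and $\lambda_n \in \mathbb{R}$ leads to the same place. In fact, substituting $R(0,\theta) = \sum_m \lambda_m e^{-im\theta}$ shows that the displayed $\int R(0,\theta)e^{-in\theta}\,d\theta/(2\pi)$ equals $\lambda_{-n}$.

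No orientation convention is left for you to choose, because the statement fixes both the kernel $e^{in(\theta_1-\theta_2)}$ and $\phi_n = e^{in\theta}$. The formula as printed therefore holds only when $\lambda_n = \lambda_{-n}$, e.g.\ for real-valued kernels, where $r$ is real and even. For the complex Hermitian stationary kernels admitted by Definition~\ref{def:continuous_kernel}, it must be corrected to $\lambda_n = \int R(\theta,0)\,e^{-in\theta}\,d\theta/(2\pi)$. The paper's proof (``integration against $e^{-in\theta}$'') passes over the same point. You should prove the corrected formula and flag the discrepancy, rather than promise to make the signs match.

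The same reflection appears in your nonnegativity step. With the paper's form $\iint R(s,t)f(s)\overline{f(t)}$, the choice $f = e^{in\theta}$ yields $\lambda_{-n}$, not $\lambda_n$. This is harmless, because $n$ ranges over all of $\mathbb{Z}$.
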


\begin{proof}
Apply Theorem~\ref{thm:main_continuous} to $G = S^1$ acting on $\Omega = S^1$. By Lemma~\ref{lem:s1_irreps}, the regular representation is multiplicity-free, and each irreducible is one-dimensional with character $\chi_n(\theta) = e^{in\theta}$. The central projection formula (continuous version) onto the $n$-th irreducible subspace, applied to a $\delta$-like test function, recovers the Fourier basis vector $\phi_n(\theta) = e^{in\theta}$. The Mercer expansion (Theorem~\ref{thm:mercer}) applied to a stationary kernel gives the displayed spectral decomposition. The eigenvalues are computed by integration against $e^{-in\theta}$, which is the definition of the Fourier coefficient.
\end{proof}

\begin{remark}
The Fourier series is to the circle what the DFT is to $\mathbb{Z}_M$: both are the matched transform for abelian translation-invariance, the only difference being the continuous-versus-discrete nature of the underlying group. The two are related by the standard discretization of $S^1$ into $M$ equispaced points, under which $\mathbb{Z}_M \hookrightarrow S^1$ is a subgroup and the DFT becomes the sampled Fourier series.
\end{remark}

\subsection{The real line $\mathbb{R}$ and the Fourier transform}
\label{subsec:fourier_transform}

The real line $\mathbb{R}$ under addition is a non-compact abelian Lie group. Theorem~\ref{thm:main_continuous} does not apply directly because $\mathbb{R}$ is not compact, but the Plancherel-theorem variant of the matched-group principle does.

\begin{theorem}[Fourier transform is the matched transform of $\mathbb{R}$]
\label{thm:fourier_transform}
Let $R: \mathbb{R} \times \mathbb{R} \to \mathbb{C}$ be a continuous covariance kernel that is translation-invariant: $R(s + h, t + h) = R(s, t)$ for every $h \in \mathbb{R}$, equivalently $R(s, t) = r(s - t)$ for a continuous positive-definite function $r$. Then the Plancherel-Mercer decomposition is
\[
r(s - t) = \int_{\mathbb{R}} S(\xi) \, e^{2 \pi i \xi (s - t)} \, d\xi,
\]
where $S(\xi) \geq 0$ is the \emph{power spectral density}, the Fourier transform of $r$. The eigenfunctions are the Fourier modes $\phi_\xi(t) = e^{2 \pi i \xi t}$ indexed by $\xi \in \mathbb{R}$, and the eigenvalues are the values $S(\xi)$ of the spectral density.
\end{theorem}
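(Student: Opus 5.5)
The plan is to obtain the decomposition from Bochner's theorem and then read it as the Plancherel form of the matched-group principle, in parallel with Theorem~\ref{thm:fourier_series}.

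\emph{Step one: reduce to a single variable.} Translation invariance with $h = -t$ gives $R(s,t) = R(s-t, 0) =: r(s-t)$. Continuity and positive semidefiniteness of $R$ say exactly that $r$ is a continuous positive-definite function on $\mathbb{R}$: every finite matrix $[r(t_a - t_b)]$ is Hermitian PSD.

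\emph{Step two: represent $r$ spectrally.} Bochner's theorem (the Plancherel analog of Mercer's Theorem~\ref{thm:mercer}) gives a unique finite positive Borel measure $\mu$ with
\[
r(\tau) = \int_{\mathbb{R}} e^{2\pi i \xi \tau} \, d\mu(\xi).
\]
The displayed formula in the statement needs $\mu$ to have a density $S \geq 0$. I would make this an explicit regularity hypothesis, $r \in L^1(\mathbb{R})$. Then $S = \hat{r}$ is continuous, Fourier inversion holds, and $S \geq 0$ follows from positive-definiteness by Fej\'er-type averaging:
\[
S(\xi) = \lim_{T \to \infty} \frac{1}{T} \int_0^T \!\! \int_0^T r(s-t)\, e^{-2\pi i \xi (s-t)} \, ds \, dt \;\geq\; 0.
\]
Each integral in this limit is a nonnegative quadratic form.

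\emph{Step three: identify eigenfunctions and eigenvalues.} The covariance acts as the convolution operator $\mathcal{R}f = r * f$. Because $r \in L^1$, the integral converges for bounded $f$, and on each character
\[
\mathcal{R}\, e^{2\pi i \xi \cdot} = S(\xi)\, e^{2\pi i \xi \cdot}.
\]
So the $\phi_\xi$ are generalized eigenfunctions with eigenvalues $S(\xi)$. On $L^2$, the Plancherel theorem of Section~\ref{sec:lie_groups} makes the Fourier transform a unitary intertwiner carrying $\mathcal{R}$ to multiplication by $S$. This is the direct-integral version of Theorem~\ref{thm:main_continuous}(i). Multiplicity-freeness of the regular representation of $\mathbb{R}$ holds because each character $\xi$ occurs once in the direct integral. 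That is what makes the basis independent of $R$, just as in Theorem~\ref{thm:main}(i).

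\emph{Main obstacle.} The hard step is the rigor in steps two and three, not the algebra. The $\phi_\xi$ are not in $L^2(\mathbb{R})$, and $\mathcal{R}$ is not compact, so Mercer's theorem does not apply. The phrase ``eigenfunctions'' has to be read either in the generalized (rigged Hilbert space) sense or through the multiplication-operator form of the spectral theorem. I would state the result in the multiplication-operator form, $\mathcal{F}\mathcal{R}\mathcal{F}^{-1} = M_S$, and treat the eigenfunction language as its formal consequence. I would also note that without the $L^1$ hypothesis the spectral density must be replaced by the spectral measure $\mu$.
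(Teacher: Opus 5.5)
This is essentially the paper's own argument: Bochner's theorem for the spectral representation, then the Plancherel direct-integral decomposition $L^2(\mathbb{R}) = \int_{\mathbb{R}}^{\oplus}\mathbb{C}\,d\xi$ for the Fourier modes, and your added hypothesis $r \in L^1(\mathbb{R})$ (more sharply, absolute continuity of the Bochner measure $\mu$) supplies what the paper tacitly assumes when it replaces $\mu$ by a density $S(\xi)\,d\xi$, an assumption that fails for instance for $r \equiv 1$, where $\mu = \delta_0$. One ordering fix: for $r \in L^1$, Fourier inversion needs $S \in L^1$, which is not automatic; it follows from your Fej\'er bound $S \geq 0$ together with Gaussian summability ($\int S\,d\xi = r(0)$), or at once from identifying $S\,d\xi$ with $\mu$ by injectivity of the Fourier transform on tempered distributions.
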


\begin{proof}
Bochner's theorem~\cite{folland2015course} states that every continuous positive-definite function on $\mathbb{R}$ is the Fourier transform of a finite non-negative measure. The displayed integral is then the inverse Fourier transform, and the eigenfunctions follow from the Plancherel-theorem decomposition $L^2(\mathbb{R}) = \int_{\mathbb{R}}^\oplus \mathbb{C} \, d\xi$ as a direct integral against frequency. The matched-group structure is the same as in Theorem~\ref{thm:fourier_series} except that the dual $\hat{\mathbb{R}} = \mathbb{R}$ is continuous and the discrete eigenvalues become a continuous spectral density.
\end{proof}

\begin{remark}
The Fourier transform is the continuous-time continuous-frequency analog of the DFT. It is the matched transform for stationary continuous-time processes on the real line, a foundational fact of signal processing dating to the Wiener-Khintchine theorem.
\end{remark}

\subsection{The reflection-extended interval and Fourier cosine series}
\label{subsec:fourier_cosine}

Just as the DCT arises from extending a length-$M$ signal to a length-$2M$ signal under even reflection and then applying the dihedral framework, the continuous Fourier cosine series arises from extending a function on $[0, \pi]$ to a function on $[0, 2\pi)$ with even-reflection symmetry and then applying the $S^1$ framework with a reflection-invariance condition.

Let $\sigma$ denote the reflection $\sigma(\theta) = 2\pi - \theta$ on $S^1$. The group generated by $\sigma$ alongside rotation is the orthogonal group $O(2)$, the continuous analog of the dihedral group $D_M$. The $\sigma$-invariant subspace of $L^2(S^1)$ consists of even functions on $S^1$, which can be identified with $L^2([0, \pi])$ under restriction.

\begin{theorem}[Fourier cosine series is the matched transform of $O(2)$ on the symmetric interval]
\label{thm:fourier_cosine}
Let $R: [0, \pi] \times [0, \pi] \to \mathbb{R}$ be a continuous covariance kernel whose even extension $\tilde{R}$ to $S^1 \times S^1$ is $S^1$-translation-invariant. The optimal transform for $R$ is the Fourier cosine basis $\{\phi_0(\theta) = 1\} \cup \{\phi_n(\theta) = \sqrt{2} \cos(n\theta) : n \geq 1\}$, and the spectral expansion is
\[
R(\theta_1, \theta_2) = \sum_{n = 0}^\infty \lambda_n \phi_n(\theta_1) \phi_n(\theta_2),
\]
with $\lambda_n \geq 0$ given by the cosine-transform coefficients of $R(\theta_1, \cdot)$ on $[0, \pi]$.
\end{theorem}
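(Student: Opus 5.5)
The plan is to mirror the proof of Theorem~\ref{thm:dct}, with $S^1$ in place of $\mathbb{Z}_{2M}$ and the reflection $\sigma(\theta) = 2\pi - \theta$ in place of $j \mapsto 2M-1-j$. Since the continuous reflection has no half-sample offset, the phase factors of Equation~\ref{eq:cosine_basis} disappear.

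First, take the even extension $\tilde{R}$. By hypothesis it is $S^1$-translation-invariant, so $\tilde{R}(\theta_1,\theta_2) = \tilde r(\theta_1-\theta_2)$ for a continuous positive-definite $\tilde r$ on $S^1$. Here I read the even extension as $\tilde R(\sigma\theta_1,\theta_2) = \tilde R(\theta_1,\sigma\theta_2) = \tilde R(\theta_1,\theta_2)$. Invariance under $\sigma$ in the first slot gives $\tilde r(-\theta_1-\theta_2) = \tilde r(\theta_1-\theta_2)$ for all $\theta_1,\theta_2$, so $\tilde r$ is even. Theorem~\ref{thm:fourier_series} then gives
\[
\tilde R(\theta_1,\theta_2) = \sum_{n\in\mathbb{Z}} \lambda_n e^{in(\theta_1-\theta_2)}.
\]
The coefficients satisfy $\lambda_n \ge 0$, and evenness of $\tilde r$ gives $\lambda_{-n} = \lambda_n$. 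Pairing the $n$ and $-n$ terms yields
\[
\tilde R(\theta_1,\theta_2) = \lambda_0 + \sum_{n\ge1} 2\lambda_n\big(\cos n\theta_1\cos n\theta_2 + \sin n\theta_1 \sin n\theta_2\big).
\]
This is the $O(2)$ picture: a one-dimensional constituent at $n=0$ and two-dimensional constituents $\{e^{in\theta}, e^{-in\theta}\}$, exactly as in the dihedral remark.

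Second, compress to the $\sigma$-invariant subspace. Applying the reflection invariance again (average $\tilde R$ over $\theta_2 \mapsto \sigma\theta_2$), the sine terms are odd in $\theta_2$ and cancel. This leaves
\[
\tilde R = \lambda_0 + \sum_{n\ge 1} 2\lambda_n \cos n\theta_1\cos n\theta_2 .
\]
Restricting to $[0,\pi]^2$ gives $R(\theta_1,\theta_2) = \sum_n \lambda_n \phi_n(\theta_1)\phi_n(\theta_2)$ with $\phi_0 = 1$ and $\phi_n = \sqrt2\cos n\theta$. The normalization uses the measure $d\theta/\pi$ on $[0,\pi]$, the pushforward of $d\theta/(2\pi)$ on the even functions. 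Under this measure $\{\phi_n\}$ is the standard complete orthonormal cosine system on $L^2([0,\pi])$, so the expansion is a Mercer expansion in the sense of Theorem~\ref{thm:mercer}. The $\phi_n$ are therefore the Karhunen--Lo\`eve eigenfunctions, and the optimality claim follows from Theorem~\ref{thm:kl_continuous}.

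Finally, identify the eigenvalues. Integrating $R(\theta_1,\theta)\phi_n(\theta)\,d\theta/\pi$ over $[0,\pi]$ and using orthonormality gives $\lambda_n\phi_n(\theta_1)$. Hence $\lambda_n$ is the cosine-transform coefficient, and equivalently $\lambda_n = \int_0^\pi \tilde r(\theta)\cos(n\theta)\,d\theta/\pi$.

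The main obstacle is convergence bookkeeping rather than algebra. Uniform absolute convergence of the Fourier series of $\tilde r$ is needed to justify pairing and restricting termwise. It follows because $\tilde r$ is continuous and positive-definite with $\lambda_n \ge 0$, so $\sum_n \lambda_n = \tilde r(0) < \infty$. This Mercer/Bochner-type fact is what I would invoke first. A secondary point is pinning down the precise meaning of "even extension" of a two-variable kernel; I would state the convention above explicitly.
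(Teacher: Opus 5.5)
\textbf{The gap: under your reading the hypothesis forces $R$ to be constant.} Your route is the paper's route: Fourier-expand $\tilde R$ via Theorem~\ref{thm:fourier_series}, pass to the $\sigma$-invariant cosine part, and restrict to $[0,\pi]$ with measure $d\theta/\pi$. Your normalization, the eigenvalue formula $\lambda_n=\int_0^\pi\tilde r(\theta)\cos(n\theta)\,d\theta/\pi$, and the convergence argument via $\sum_n\lambda_n=\tilde r(0)$ are all fine. The problem is the convention you fix for ``even extension.''

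Suppose $\tilde R$ is even in each slot separately and also translation-invariant. Then your identity $\tilde r(-\theta_1-\theta_2)=\tilde r(\theta_1-\theta_2)$ holds for \emph{all} $\theta_1,\theta_2$. Taking $\theta_1=(b-a)/2$ and $\theta_2=-(a+b)/2$ gives $\tilde r(a)=\tilde r(b)$ for arbitrary $a,b$. So $\tilde r$ is constant, not merely even.

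The same collapse shows up in your second step, where you set $\tilde R$ equal to its own $\theta_2$-average. The two expansions you display for $\tilde R$ differ by $\sum_{n\ge1}2\lambda_n\sin n\theta_1\sin n\theta_2$. That difference must then vanish, so $\lambda_n=0$ for all $n\ge1$. As written, the argument is logically valid but proves the theorem only for constant kernels. The theorem's wording invites your reading, but under it the hypothesis is degenerate.

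\textbf{The fix: the even extension must enter by compression.} This is how Definition~\ref{def:dihedral_inv} and the proof of Theorem~\ref{thm:dct} work. Take a stationary kernel $\tilde R(\theta_1,\theta_2)=\tilde r(\theta_1-\theta_2)$ on $S^1$ with $\tilde r$ even. Let $R$ be the kernel of the compression of $\tilde{\mathcal R}$ to the $\sigma$-invariant subspace, transported to $L^2([0,\pi],d\theta/\pi)$. Folding the integrals gives $R(\theta_1,\theta_2)=\tfrac12[\tilde r(\theta_1-\theta_2)+\tilde r(\theta_1+\theta_2)]$. This is the continuous Toeplitz-plus-Hankel class, not the restriction of $\tilde R$ to $[0,\pi]^2$. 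It is what the paper's proof uses implicitly when it passes to the $\sigma$-invariant subspace of $L^2(S^1)$.

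Under this reading your $\theta_2$-average becomes the \emph{definition} of $R$ rather than an identity satisfied by $\tilde R$. It produces $R=\sum_{n\ge0}\lambda_n\phi_n(\theta_1)\phi_n(\theta_2)$ with no constraint on the $\lambda_n$, and the rest of your proof goes through unchanged.

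Do not fall back on the other natural reading, $R=\tilde R|_{[0,\pi]^2}$, because it makes the statement false. For $R(\theta_1,\theta_2)=\cos(\theta_1-\theta_2)$ one computes $\int_0^\pi\cos(\theta_1-\theta)\cos(2\theta)\,d\theta/\pi=-\tfrac{2}{3\pi}\sin\theta_1$. So $\cos 2\theta$ is not an eigenfunction.
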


\begin{proof}
Apply Theorem~\ref{thm:fourier_series} to the even extension $\tilde{R}$ on $S^1$, which has Fourier series eigenfunctions $\{e^{in\theta}\}_{n \in \mathbb{Z}}$. The $\sigma$-invariant subspace of $L^2(S^1)$ is spanned by the real combinations $\{\cos(n\theta) : n \geq 0\}$ (with $\cos(0) = 1$), since $\sigma$ acts on Fourier modes by $\sigma : e^{in\theta} \mapsto e^{-in\theta}$. Restricting to $[0, \pi]$ and renormalizing gives the displayed orthonormal basis. The eigenvalue $\lambda_n$ is the Fourier coefficient of the even extension at frequency $n$.
\end{proof}

\begin{remark}
This is the exact continuous analog of Theorem~\ref{thm:dct}. The DCT is to the DFT what the Fourier cosine series is to the Fourier series. Both arise from passing to the reflection-symmetric subspace under the dihedral or $O(2)$ symmetry, respectively. The discrete cases of Section~\ref{sec:dct} and the continuous case of this subsection are the same mathematical construction at different scales.
\end{remark}

\subsection{The rotation group $SO(3)$ and spherical harmonics}
\label{subsec:spherical_harmonics}

The rotation group $SO(3)$ is a 3-dimensional compact non-abelian Lie group acting on the 2-sphere $S^2 = \{(x, y, z) \in \mathbb{R}^3 : x^2 + y^2 + z^2 = 1\}$ transitively. Identifying $S^2$ with the homogeneous space $SO(3) / SO(2)$, where $SO(2)$ is the stabilizer of the north pole, gives a $G$-action on the function space $L^2(S^2)$.

\begin{lemma}[Irreducibles of $SO(3)$]
\label{lem:so3_irreps}
The irreducible unitary representations of $SO(3)$ are indexed by non-negative integers $\ell = 0, 1, 2, \ldots$, with $\rho_\ell$ of dimension $d_{\rho_\ell} = 2\ell + 1$. The dual is $\hat{SO(3)} = \mathbb{Z}_{\geq 0}$.
\end{lemma}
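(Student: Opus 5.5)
The plan is to prove Lemma~\ref{lem:so3_irreps} in three steps: restrict to the maximal torus to get weights, build one explicit family of representations realizing every allowed highest weight, and then prove completeness and uniqueness with characters and Peter-Weyl orthogonality (Theorem~\ref{thm:peter_weyl_continuous}).

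\textbf{Step 1 (weights and characters).} Let $T \cong SO(2) \cong S^1$ be the subgroup of rotations about the $z$-axis. Every element of $SO(3)$ is a rotation by some angle $\theta$ about some axis, so it is conjugate to an element $t_\theta \in T$. Hence every character of $SO(3)$ is determined by its restriction to $T$. By Lemma~\ref{lem:s1_irreps}, any finite-dimensional unitary representation $\rho$ restricted to $T$ splits into characters $e^{im\theta}$ with $m \in \mathbb{Z}$. The rotation by $\pi$ about the $x$-axis conjugates $t_\theta$ to $t_{-\theta}$, so the multiset of weights is symmetric under $m \mapsto -m$. Thus $\chi_\rho(t_\theta) = \sum_m n_m e^{im\theta}$ with $n_m = n_{-m}$.

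\textbf{Step 2 (existence).} For each $\ell \geq 0$, I would take $\rho_\ell$ to be the action on $\mathcal{H}_\ell$, the space of harmonic homogeneous polynomials of degree $\ell$ on $\mathbb{R}^3$, equivalently spherical harmonics of degree $\ell$. Its dimension is $\binom{\ell+2}{2} - \binom{\ell}{2} = 2\ell+1$, because the Laplacian maps degree-$\ell$ polynomials onto degree-$(\ell-2)$ polynomials. In the variables $x \pm iy$ and $z$, the $T$-weights on $\mathcal{H}_\ell$ are exactly $-\ell, \ldots, \ell$, each with multiplicity one. This gives
\[
\chi_\ell(t_\theta) = \sum_{m=-\ell}^{\ell} e^{im\theta}.
\]
For irreducibility, any nonzero invariant subspace $W$ is $T$-stable, so it contains a weight vector. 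I would then apply the complexified Lie algebra operators $L_\pm$, which raise and lower the weight by one and act nonzero on each weight space of $\mathcal{H}_\ell$ except at the ends. Since every weight space is one-dimensional, repeated application of $L_\pm$ reaches all of $\mathcal{H}_\ell$. Hence $W = \mathcal{H}_\ell$.

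\textbf{Step 3 (completeness and distinctness).} The characters $\chi_\ell$ for different $\ell$ are pairwise distinct, so the representations $\rho_\ell$ are pairwise inequivalent. Now suppose $\rho$ is irreducible. By Step 1, $\chi_\rho$ is a symmetric Laurent polynomial in $e^{i\theta}$. The functions $\chi_\ell$ form a basis of the space of such polynomials, via triangular elimination on the top weight, so $\chi_\rho = \sum_\ell a_\ell \chi_\ell$. Next I would use the Weyl integration formula for $SO(3)$: the class-function measure is $\frac{1}{\pi}\sin^2(\theta/2)\,d\theta$ on $[0,\pi]$. Under it the $\chi_\ell$ are orthonormal, and an irreducible character has norm one. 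If $\rho$ is not equivalent to any $\rho_\ell$, Schur orthogonality from Theorem~\ref{thm:peter_weyl_continuous} gives $\langle \chi_\rho, \chi_\ell \rangle = 0$ for every $\ell$. Then all $a_\ell = 0$, so $\chi_\rho = 0$, contradicting $\chi_\rho(e) = \dim \rho \geq 1$. Therefore $\hat{SO(3)} = \{\rho_\ell\} \cong \mathbb{Z}_{\geq 0}$ with $d_{\rho_\ell} = 2\ell+1$.

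The main obstacle is justifying the Weyl integration density. As a fallback, I would avoid it by using the double cover $SU(2) \to SO(3)$. Its irreducibles $V_n = \mathrm{Sym}^n(\mathbb{C}^2)$ are classified by the same torus-weight argument, and those that descend to $SO(3)$ are exactly the ones on which $-I$ acts trivially, namely $n = 2\ell$ even. This gives dimension $n+1 = 2\ell+1$ directly.
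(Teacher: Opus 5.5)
The paper gives no argument for this lemma: it calls the classification classical and defers to Sugiura and Vilenkin. Your proposal is a genuinely different route, a self-contained proof. It uses tools the paper already states (Lemma~\ref{lem:s1_irreps} on the torus $T$, and Schur orthogonality from Theorem~\ref{thm:peter_weyl_continuous}), together with Euler's rotation theorem and the harmonic-polynomial model. The skeleton is sound. Characters are class functions, so they are determined on $T$. Their restrictions are symmetric Laurent polynomials in $e^{i\theta}$, and the $\chi_\ell$ span these triangularly. An irreducible $\rho$ orthogonal to every $\chi_\ell$ would therefore force $\chi_\rho \equiv 0$, contradicting $\chi_\rho(e) = \dim \rho$. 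Your route also makes something explicit that the citation leaves to the references: $\mathcal{H}_\ell$ is literally the space of degree-$\ell$ spherical harmonics. The classification thus arrives already realized inside $L^2(S^2)$, which is exactly what Lemma~\ref{lem:s2_mf} and Theorem~\ref{thm:spherical_harmonics} use next.

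Two corrections. First, the step you flag as the main obstacle is not needed. You already know the $\rho_\ell$ are irreducible (Step~2) and pairwise inequivalent (distinct characters). Schur orthogonality in $L^2(SO(3), dg)$ then gives $\langle \chi_\ell, \chi_k \rangle = \delta_{\ell k}$ directly, and so $a_k = \langle \chi_\rho, \chi_k \rangle = 0$ for every $k$. No formula for the Haar density of the rotation angle ever enters, and the $SU(2)$ fallback is unnecessary. The density as you wrote it is also off by a factor of two: on $[0,\pi]$ it is $\frac{2}{\pi}\sin^2(\theta/2)\,d\theta = \frac{1}{\pi}(1-\cos\theta)\,d\theta$. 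Under your $\frac{1}{\pi}\sin^2(\theta/2)\,d\theta$, each $\chi_\ell$ would have squared norm $1/2$, not $1$. The argument survives only because it uses orthogonality, not normalization.

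Second, you assert without justification that $L_\pm$ is nonzero on every interior weight space of $\mathcal{H}_\ell$. This needs an argument, for example the standard $\mathfrak{sl}_2$ weight-string argument. Alternatively, you can avoid the Lie algebra entirely:
\begin{itemize}
\item The weight-$0$ space of $\mathcal{H}_\ell$ is the space of $T$-fixed (zonal) harmonics, and it is one-dimensional.
\item Every nonzero invariant subspace $W \subseteq \mathcal{H}_\ell$ contains a nonzero zonal vector. To see this, rotate some $f \in W$ so that it is nonzero at the north pole, then average over the stabilizer $T$.
\end{itemize}
Hence a splitting $\mathcal{H}_\ell = W \oplus W^{\perp}$ with both summands nonzero is impossible.
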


\begin{proof}
A classical result; see~\cite[Chapter 4]{sugiura1990unitary} or~\cite[Chapter 3]{vilenkin1968special}.
\end{proof}

\begin{lemma}[Multiplicity-free decomposition of $L^2(S^2)$]
\label{lem:s2_mf}
The representation of $SO(3)$ on $L^2(S^2)$ via the action on the sphere is multiplicity-free, with isotypic decomposition
\[
L^2(S^2) \;=\; \bigoplus_{\ell = 0}^\infty V_\ell,
\]
where $V_\ell$ is a $(2\ell + 1)$-dimensional subspace carrying the irreducible representation $\rho_\ell$.
\end{lemma}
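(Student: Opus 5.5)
The plan is to follow the standard route through harmonic polynomials, using the Peter-Weyl theorem in its compact form (Theorem~\ref{thm:peter_weyl_continuous}) together with Frobenius reciprocity for the homogeneous space $S^2 = SO(3)/SO(2)$. Let $\mathcal{P}_\ell$ denote the homogeneous polynomials of degree $\ell$ on $\mathbb{R}^3$ and $\mathcal{H}_\ell = \ker \Delta \cap \mathcal{P}_\ell$ the harmonic ones. Define $V_\ell$ as the space of restrictions of $\mathcal{H}_\ell$ to $S^2$.

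The argument proceeds in four steps.

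\emph{Step 1: dimension count.} Since $\Delta : \mathcal{P}_\ell \to \mathcal{P}_{\ell-2}$ is surjective, $\mathcal{P}_\ell = \mathcal{H}_\ell \oplus |x|^2 \mathcal{P}_{\ell-2}$. This gives
\[
\dim \mathcal{H}_\ell = \binom{\ell+2}{2} - \binom{\ell}{2} = 2\ell+1 .
\]
The $SO(3)$-action commutes with $\Delta$, so each $V_\ell$ is invariant.

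\emph{Step 2: orthogonality and completeness.} Distinct $V_\ell$ are orthogonal in $L^2(S^2)$, because they are eigenspaces of the spherical Laplacian with eigenvalues $-\ell(\ell+1)$; alternatively this follows from Green's identity. Iterating the decomposition of $\mathcal{P}_\ell$ shows that restrictions of all polynomials lie in $\bigoplus_\ell V_\ell$. Stone-Weierstrass then gives density in $C(S^2)$, hence in $L^2(S^2)$. So $L^2(S^2) = \bigoplus_\ell V_\ell$ as an orthogonal Hilbert sum.

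\emph{Step 3: irreducibility of each $V_\ell$.} The $SO(2)$-fixed vectors (zonal functions) in $V_\ell$ form a one-dimensional space. To see this, note that a zonal harmonic is determined by its restriction to the $z$-axis profile, and the resulting Legendre ODE has a one-dimensional polynomial solution space. Every nonzero $SO(3)$-invariant subspace $W \subseteq V_\ell$ contains a nonzero zonal vector, obtained by averaging over $SO(2)$ a function in $W$ that is nonzero at the north pole; such a function exists by transitivity. Two complementary invariant subspaces would therefore each contain a zonal line, contradicting one-dimensionality. Hence $V_\ell$ is irreducible of dimension $2\ell+1$, and by Lemma~\ref{lem:so3_irreps} it carries $\rho_\ell$.

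\emph{Step 4: multiplicity one.} The $V_\ell$ have pairwise distinct dimensions, so they are pairwise inequivalent. Because $\hat{SO(3)} = \mathbb{Z}_{\geq 0}$, every irreducible appears exactly once. This is consistent with Frobenius reciprocity: $m_\ell = \dim V_{\rho_\ell}^{SO(2)} = 1$.

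I expect Step 3 to be the main obstacle, since it requires controlling the zonal subspace concretely. I would first try the Legendre-polynomial argument described above. As a fallback, I would use Frobenius reciprocity directly, which gives multiplicity-freeness from the fact that the weight-$0$ space of $\rho_\ell$ is one-dimensional.
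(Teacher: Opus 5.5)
Your proof is correct, but it takes a genuinely different route from the paper. The paper proves the lemma in one line by citing the fact that $(SO(3),SO(2))$ is a Gelfand pair, so $L^2(SO(3)/SO(2))$ is multiplicity-free, and then reads off $\dim V_\ell = 2\ell+1$ from the classification of $\widehat{SO(3)}$.

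You instead construct the decomposition explicitly. You take $V_\ell$ to be the restricted harmonic polynomials, obtain the Hilbert sum from spectral orthogonality and Stone--Weierstrass, get irreducibility from the one-dimensionality of the zonal vectors, and get inequivalence from the pairwise distinct dimensions.

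Your route delivers two things the citation does not:
\begin{itemize}
\item It shows that every $\rho_\ell$ actually occurs. The Gelfand-pair property by itself gives only $m_\ell \le 1$. Getting $m_\ell = 1$ for all $\ell$ needs $\dim \rho_\ell^{SO(2)} = 1$ via Frobenius reciprocity, which the paper leaves implicit.
\item It identifies $V_\ell$ concretely as the span of the spherical harmonics. This is exactly the identification that the subsequent spherical-harmonics theorem relies on.
\end{itemize}

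Your Step 3 is really the concrete form of the same mechanism behind the Gelfand-pair argument: uniqueness of the $SO(2)$-fixed (zonal) vector in each irreducible. The abstract route is shorter and applies verbatim to any compact Gelfand pair $G/K$. Your Step 4, by contrast, exploits the fact that the $V_\ell$ have distinct dimensions, which is a feature of this particular example rather than of Gelfand pairs in general.

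One small tightening is needed in Step 1. Surjectivity of $\Delta:\mathcal{P}_\ell\to\mathcal{P}_{\ell-2}$ gives $\dim\mathcal{H}_\ell = 2\ell+1$ by rank--nullity. However, the decomposition $\mathcal{P}_\ell=\mathcal{H}_\ell\oplus|x|^2\mathcal{P}_{\ell-2}$ that you iterate in Step 2 also needs $\mathcal{H}_\ell\cap|x|^2\mathcal{P}_{\ell-2}=0$. Both facts follow at once from the Fischer inner product, under which multiplication by $|x|^2$ is adjoint to $\Delta$.
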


\begin{proof}
The pair $(SO(3), SO(2))$ is a Gelfand pair, equivalently $L^2(SO(3) / SO(2)) = L^2(S^2)$ decomposes multiplicity-freely under $SO(3)$~\cite[Chapter 2]{helgason1984groups}. The dimension of $V_\ell$ equals the dimension of the irreducible $\rho_\ell$, namely $2\ell + 1$.
\end{proof}

\begin{theorem}[Spherical harmonics are the matched transform of $SO(3)$ on $S^2$]
\label{thm:spherical_harmonics}
Let $R: S^2 \times S^2 \to \mathbb{R}$ be a continuous covariance kernel on the sphere that is rotation-invariant: $R(g \cdot \omega_1, g \cdot \omega_2) = R(\omega_1, \omega_2)$ for every $g \in SO(3)$. Then $R$ admits the spectral expansion
\[
R(\omega_1, \omega_2) = \sum_{\ell = 0}^\infty \lambda_\ell \sum_{m = -\ell}^\ell Y_\ell^m(\omega_1) \overline{Y_\ell^m(\omega_2)},
\]
where $\{Y_\ell^m\}_{\ell \geq 0, -\ell \leq m \leq \ell}$ are the (complex) spherical harmonics, with eigenvalues $\lambda_\ell \geq 0$ that are $(2\ell + 1)$-fold degenerate within the $\ell$-th eigenspace.
\end{theorem}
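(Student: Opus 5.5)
The plan is to apply Theorem~\ref{thm:main_continuous} with $G = SO(3)$ acting on $\Omega = S^2$ with normalized surface measure, which is $SO(3)$-invariant. Lemma~\ref{lem:s2_mf} supplies the multiplicity-free hypothesis, and Lemma~\ref{lem:so3_irreps} gives $d_{\rho_\ell} = 2\ell+1$.

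\emph{Commutant step.} Invariance $R(g\omega_1, g\omega_2) = R(\omega_1,\omega_2)$ together with invariance of the measure gives $\mathcal{R}\pi(g) = \pi(g)\mathcal{R}$. Mercer's theorem (Theorem~\ref{thm:mercer}) applies because $S^2$ is compact and $R$ is continuous and positive semidefinite. So $\mathcal{R}$ is a compact self-adjoint operator in the commutant.

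\emph{Scalar on each $V_\ell$.} Since $\mathcal{R}$ commutes with every $\pi(g)$, it commutes with the central projections $P^{(\ell)} = (2\ell+1)\int_{SO(3)} \overline{\chi_\ell(g)}\,\pi(g)\,dg$. Hence it preserves each finite-dimensional $V_\ell$. On $V_\ell$, $\mathcal{R}$ intertwines the irreducible $\rho_\ell$ with itself, so Schur's lemma (Theorem~\ref{thm:schur}(ii)) gives $\mathcal{R}|_{V_\ell} = \lambda_\ell I$. This scalar is real and nonnegative because $\mathcal{R}$ is self-adjoint and positive. This step uses only finite-dimensional Schur, so the infinite-dimensional lifting of Lemma~\ref{lem:multiplicity_free_commutant} is avoided.

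\emph{Identify $V_\ell$ with $\mathrm{span}\{Y_\ell^m\}$.} The standard facts are that the $Y_\ell^m$ are restrictions of harmonic homogeneous polynomials of degree $\ell$, and that this space is rotation-invariant and $(2\ell+1)$-dimensional. The spaces for different $\ell$ are mutually orthogonal, being eigenspaces of the self-adjoint spherical Laplacian with eigenvalue $-\ell(\ell+1)$. Their sum is dense in $L^2(S^2)$ by Stone--Weierstrass, since polynomials restricted to $S^2$ are dense. Each of these spaces is an invariant subspace carrying a representation of dimension $2\ell+1$. The decomposition of $L^2(S^2)$ into them must therefore match the isotypic decomposition of Lemma~\ref{lem:s2_mf}, with $V_\ell = \mathcal{H}_\ell$. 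The argument is that each $\mathcal{H}_\ell$ is irreducible (the classical realization of $\rho_\ell$, cf.\ the references of Lemma~\ref{lem:so3_irreps}), and multiplicity-freeness forces each irreducible to land in its own isotypic component. The orthonormal basis $\{Y_\ell^m\}_{m=-\ell}^{\ell}$ of $V_\ell$ is then a valid choice of the fixed basis in Theorem~\ref{thm:main_continuous}(i).

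\emph{Assemble the expansion.} Combining the steps, $\{Y_\ell^m\}$ is a complete orthonormal eigensystem of $\mathcal{R}$ with $\mathcal{R}Y_\ell^m = \lambda_\ell Y_\ell^m$. Mercer's expansion then yields $R(\omega_1,\omega_2) = \sum_\ell \lambda_\ell \sum_m Y_\ell^m(\omega_1)\overline{Y_\ell^m(\omega_2)}$, converging absolutely and uniformly. The $(2\ell+1)$-fold degeneracy is immediate from $\dim V_\ell$. An optional check via the addition theorem: the inner sum equals $\frac{2\ell+1}{4\pi}P_\ell(\omega_1\cdot\omega_2)$, up to the normalization of the measure. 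So $\lambda_\ell$ is the Legendre coefficient of the zonal function $R(\omega_1,\omega_2) = r(\omega_1\cdot\omega_2)$, which is consistent with the Funk--Hecke formula.

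\emph{Main obstacle.} The delicate point is the identification of $V_\ell$ with the harmonic polynomials of degree $\ell$. This requires the irreducibility of $\mathcal{H}_\ell$, which I would cite from the classical references rather than reprove. If a direct route is wanted, I would argue that the $SO(2)$-fixed (zonal) vectors in $\mathcal{H}_\ell$ form a one-dimensional space spanned by $P_\ell(z)$. Any invariant subspace contains a nonzero zonal vector, so it would contain $P_\ell$ and hence everything generated from it.
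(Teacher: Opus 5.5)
Your argument is correct and follows the paper's skeleton: reduce to Theorem~\ref{thm:main_continuous} with the multiplicity-free decomposition of Lemma~\ref{lem:s2_mf}, identify $V_\ell$ with $\mathrm{span}\{Y_\ell^m\}$, and read the degeneracy off $\dim V_\ell$. You fill the two load-bearing steps differently.

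The paper identifies $V_\ell$ in one line by citing Frobenius reciprocity: the matrix elements of $\rho_\ell$ against its $SO(2)$-fixed vector descend to $S^2 = SO(3)/SO(2)$ and are, up to normalization, the $Y_\ell^m$. It gets the scalar action on $V_\ell$ from the infinite-dimensional lifting of Lemma~\ref{lem:multiplicity_free_commutant}, which is asserted rather than proved inside Theorem~\ref{thm:main_continuous}. This keeps the proof within the paper's ``columns are irreducible matrix elements'' template. It also transfers unchanged to any Gelfand pair $G/K$ that has no polynomial model.

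You instead do the following:
\begin{itemize}
\item You apply finite-dimensional Schur on each $V_\ell$ via the central projections, which avoids the lifting altogether.
\item You identify $V_\ell$ with the harmonic polynomials $\mathcal{H}_\ell$ in their usual concrete definition.
\item You prove irreducibility from the one-dimensionality of zonal harmonics, which is the concrete form of the Gelfand-pair property the paper cites.
\item You invoke Mercer to get the absolutely and uniformly convergent kernel expansion.
\end{itemize}
A by-product is that a complete orthogonal decomposition $L^2(S^2) = \widehat{\bigoplus}_\ell \mathcal{H}_\ell$ into irreducibles of pairwise distinct dimensions is already multiplicity-free. On your route, Lemma~\ref{lem:s2_mf} is therefore a consequence rather than an input.

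Two small tightenings are worth making:
\begin{itemize}
\item Density of $\bigoplus_\ell \mathcal{H}_\ell$ needs the decomposition $\mathcal{P}_n = \mathcal{H}_n \oplus |x|^2 \mathcal{P}_{n-2}$ of homogeneous polynomials, not just Stone--Weierstrass.
\item The zonal argument closes most cleanly if you apply it to both $W$ and $W^\perp$ inside $\mathcal{H}_\ell$, rather than to ``everything generated from'' the zonal harmonic. Both are invariant, so each would contain the zonal harmonic if nonzero, contradicting $W \cap W^\perp = \{0\}$.
\end{itemize}
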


\begin{proof}
Apply Theorem~\ref{thm:main_continuous} to $G = SO(3)$ acting on $\Omega = S^2$. By Lemma~\ref{lem:s2_mf}, the action is multiplicity-free. The matrix elements of the irreducible $\rho_\ell$ on $S^2 = SO(3) / SO(2)$ are precisely the spherical harmonics $Y_\ell^m$, $-\ell \leq m \leq \ell$; this is the Frobenius reciprocity (or Peter-Weyl restriction) identification, see~\cite[Section 3.2]{vilenkin1968special} or~\cite[Chapter 2]{atkinson2012spherical}. The eigenvalue $\lambda_\ell$ on $V_\ell$ is $(2\ell + 1)$-fold degenerate, consistent with Theorem~\ref{thm:main_continuous}(i).
\end{proof}

\begin{remark}[Funk-Hecke theorem]
A rotation-invariant covariance on $S^2$ depends only on the angular separation $\omega_1 \cdot \omega_2 \in [-1, 1]$ between its arguments; that is, $R(\omega_1, \omega_2) = k(\omega_1 \cdot \omega_2)$ for some continuous function $k: [-1, 1] \to \mathbb{R}$. The Funk-Hecke theorem~\cite{groemer1996geometric} gives the eigenvalues explicitly: $\lambda_\ell = 2\pi \int_{-1}^1 k(z) \, P_\ell(z) \, dz$, where $P_\ell$ is the Legendre polynomial of degree $\ell$. This is the continuous analog of the depth-class eigenvalue formula for the iterated wreath case (Theorem~\ref{thm:haar}).
\end{remark}

\begin{remark}[Higher-dimensional spheres]
For $S^{n-1} \subset \mathbb{R}^n$ with $n \geq 3$, the analogous result applies with $SO(n)$ in place of $SO(3)$. The eigenfunctions are the higher-dimensional spherical harmonics, with dimensions of the $\ell$-th eigenspace given by $\binom{n + \ell - 1}{\ell} - \binom{n + \ell - 3}{\ell - 2}$~\cite{atkinson2012spherical}. The Funk-Hecke theorem generalizes accordingly.
\end{remark}

\subsection{The continuous Hankel transform as the radial component of $SO(d)$ matched-transform structure}
\label{subsec:hankel}

The continuous Hankel transform of order $\nu$ on $L^2(\mathbb{R}_+, r\, dr)$ is the integral transform with Bessel-function kernel $J_\nu$:
\[
(\mathcal{H}_\nu f)(\xi) = \int_0^\infty f(r) \, J_\nu(\xi r) \, r \, dr.
\]
It is involutive ($\mathcal{H}_\nu^2 = I$) and diagonalizes the Hankel-type integral operators on $L^2(\mathbb{R}_+, r\, dr)$. The Hankel transform appears in the radial parts of problems with rotational symmetry in two or more dimensions: in optics it diagonalizes the Fresnel propagator for circularly symmetric apertures, in acoustics it diagonalizes the radial part of wave equations with axial symmetry, and in tomography it appears in the inversion of the Abel transform.

The matched-group story is that the continuous Hankel transform of order $\nu$ is a derived consequence of the $SO(d)$ matched-transform structure on $L^2(\mathbb{R}^d)$ for $d = 2\nu + 2$, rather than a separate matched-transform construction in its own right. The reduction is as follows. The space $L^2(\mathbb{R}^d)$ decomposes under the $SO(d)$ action via the radial-spherical product
\[
L^2(\mathbb{R}^d) \;\cong\; L^2(\mathbb{R}_+, r^{d-1} dr) \;\otimes\; L^2(S^{d-1}),
\]
and the $SO(d)$ action acts trivially on the radial factor and irreducibly on each spherical-harmonic isotypic component of the angular factor. By the higher-dimensional generalization of Theorem~\ref{thm:spherical_harmonics} (the higher-D spheres remark above), the angular factor is decomposed by spherical harmonics into isotypic components $V_\ell$ of dimension $\binom{d + \ell - 1}{\ell} - \binom{d + \ell - 3}{\ell - 2}$ for $\ell = 0, 1, 2, \ldots$. Within each angular isotypic sector, an $SO(d)$-invariant kernel on $L^2(\mathbb{R}^d)$ acts as a scalar in the angular variable and as a one-dimensional integral kernel in the radial variable. The eigenfunctions of this radial kernel are the Bessel functions $J_\nu(\xi r)$ for $\nu = \ell + (d-2)/2$, and the diagonalizing transform of the radial factor is precisely the Hankel transform of order $\nu$.

The continuous Hankel transform is therefore not a separate matched-group case; it is the radial-component specialization of the spherical-harmonic line, parameterized by the angular isotypic sector. Different orders $\nu$ correspond to different angular sectors within the same $SO(d)$ matched-transform structure, with $\nu = (d-2)/2$ (the lowest, radially-symmetric sector) appearing in the radially-symmetric reduction of any $d$-dimensional $SO(d)$-invariant problem and higher orders appearing in successively higher angular sectors. The matched-transform theorem of this paper therefore subsumes the Hankel transform without requiring a separate development; the Hankel transform of order $\nu$ is what the $SO(d)$-matched transform structurally \emph{reduces to} when the angular factor is fixed in the $\ell = \nu - (d-2)/2$ isotypic sector.

\begin{remark}[Discrete Hankel-matrix machinery is a different object]
\label{rem:hankel_matrix_distinct}
The same name ``Hankel'' covers two unrelated objects in signal processing, and the distinction is worth making explicit because the matched-group treatment applies to only one of them. The continuous Hankel transform of this subsection is the $SO(d)$-matched radial transform with Bessel-function kernel. The \emph{discrete Hankel-matrix} machinery in subspace methods is something else entirely: a Hankel matrix is a matrix with constant anti-diagonals, $H_{ij} = h_{i+j}$, constructed from a time series by embedding consecutive samples into anti-diagonals of a matrix. This embedding is used in classical subspace methods (MUSIC, ESPRIT, singular spectrum analysis, Hankel SVD for system identification) to access the temporal-ordering structure of a time series that is not accessible to permutation-equivariant frameworks like AD. The Hankel-matrix machinery is discussed in Section~\ref{sec:boundary} as a method that uses information AD does not access; it shares only the name with the continuous Hankel transform of this subsection.
\end{remark}

\subsection{The fractional Fourier transform and the Hermite-Gauss basis}
\label{subsec:frft}

The previous continuous cases in this section have involved groups acting on a function space by geometric transformations of the underlying domain: circle-translation for the Fourier series, real-line-translation for the Fourier transform, reflection-and-rotation for the Fourier cosine series, and sphere-rotation for the spherical harmonics. The fractional Fourier transform is an example of a qualitatively different setting in which the matched group acts on the function space via a representation that does \emph{not} come from a geometric action on the domain. The metaplectic representation mixes position and momentum, which is qualitatively different from the translation actions (Sections~\ref{sec:dft}, \ref{sec:dct}, \ref{subsec:fourier_series}, \ref{subsec:fourier_transform}, and~\ref{subsec:fourier_cosine}) and the geometric rotation actions (Section~\ref{subsec:spherical_harmonics}). This significantly broadens the scope of the framework and gives the natural bridge to time-frequency analysis and to phase-space methods more generally.

The fractional Fourier transform of order $\alpha \in \mathbb{R}$ on $L^2(\mathbb{R})$ is defined by the Mehler-kernel integral
\begin{align*}
\mathcal{F}^\alpha[f](u) &= \sqrt{\frac{1 - i \cot\alpha}{2\pi}} \\
& \times \int_\mathbb{R} \exp\!\Big(i \frac{u^2 + t^2}{2} \cot\alpha - i \frac{ut}{\sin\alpha}\Big) f(t) \, dt
\end{align*}
for $\alpha \not\equiv 0 \pmod{\pi}$, with the convention $\mathcal{F}^0 = I$ and $\mathcal{F}^\pi$ equal to the parity operator. The family satisfies the additive composition law $\mathcal{F}^\alpha \circ \mathcal{F}^\beta = \mathcal{F}^{\alpha + \beta}$ and the special values $\mathcal{F}^{\pi/2} = \mathcal{F}$ (the ordinary Fourier transform), $\mathcal{F}^\pi = P$ (parity), and $\mathcal{F}^{2\pi} = I$ (up to the metaplectic phase). The family is therefore a strongly continuous unitary representation of the compact one-parameter Lie group $\mathbb{R}/4\pi \mathbb{Z} \cong U(1)$ on $L^2(\mathbb{R})$, projecting to a representation of $SO(2) = \mathbb{R}/2\pi\mathbb{Z}$ modulo a central phase. We refer to this as the FRFT family or the metaplectic-$SO(2)$ representation.

\begin{lemma}[Infinitesimal generator]
\label{lem:frft_generator}
The FRFT family is generated infinitesimally by the harmonic oscillator Hamiltonian
\[
H = \tfrac{1}{2}(P^2 + Q^2 - 1) = \tfrac{1}{2}\Big(-\frac{d^2}{dt^2} + t^2 - 1\Big),
\]
in the sense that $\mathcal{F}^\alpha = e^{-i \alpha H}$ for every $\alpha \in \mathbb{R}$.
\end{lemma}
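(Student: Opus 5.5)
The plan is to compare both sides on a single orthonormal basis of $L^2(\mathbb{R})$, namely the Hermite-Gauss functions
\[
h_n(t) = (2^n n! \sqrt{\pi})^{-1/2} H_n(t) e^{-t^2/2}, \qquad n = 0, 1, 2, \ldots
\]
Both $\mathcal{F}^\alpha$ and $e^{-i\alpha H}$ are bounded (indeed unitary) operators. It therefore suffices to show that each acts on $h_n$ by the same scalar, $e^{-in\alpha}$. Completeness of $\{h_n\}$ in $L^2(\mathbb{R})$ then forces $\mathcal{F}^\alpha = e^{-i\alpha H}$ on all of $L^2(\mathbb{R})$. In the language of Theorem~\ref{thm:main_continuous}, this is the statement that the $h_n$ span the one-dimensional isotypic components of the metaplectic $U(1)$ action. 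Any operator in that $U(1)$, whether given by the Mehler kernel or by the exponential of $H$, is determined by its characters $e^{-in\alpha}$.

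\textbf{Step 1: the generator side.} From the Hermite recurrence $H_n'' - 2tH_n' + 2nH_n = 0$, a direct computation gives $(-\tfrac{d^2}{dt^2} + t^2) h_n = (2n+1) h_n$, so $H h_n = n h_n$. Equivalently, one can use the ladder operators $a = (Q + iP)/\sqrt{2}$ and $a^*$, with $H = a^* a$. Hence $H$ is essentially self-adjoint on the span of the $h_n$, with pure point spectrum $\mathbb{Z}_{\geq 0}$. Stone's theorem and the functional calculus then give $e^{-i\alpha H} h_n = e^{-in\alpha} h_n$, and $\alpha \mapsto e^{-i\alpha H}$ is a strongly continuous unitary group.

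\textbf{Step 2: the kernel side.} For $\alpha \not\equiv 0 \pmod{\pi}$, I would invoke Mehler's formula. For $|w| \le 1$, $w \neq \pm 1$, it reads
\[
\sum_{n \ge 0} w^n h_n(u) h_n(t) = \frac{1}{\sqrt{\pi(1-w^2)}} \exp\!\Big(\frac{4wut - (1+w^2)(u^2+t^2)}{2(1-w^2)}\Big).
\]
Substituting $w = e^{-i\alpha}$, one checks the identities
\[
\frac{1+w^2}{1-w^2} = -i\cot\alpha, \qquad \frac{2w}{1-w^2} = \frac{-i}{\sin\alpha}, \qquad \frac{1}{\sqrt{\pi(1-w^2)}} = \sqrt{\frac{1-i\cot\alpha}{2\pi}}\, e^{-i\alpha/2}\cdot(\text{phase}).
\]
With these, the right-hand side reproduces the Mehler kernel of $\mathcal{F}^\alpha$. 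Hence $\mathcal{F}^\alpha h_n = e^{-in\alpha} h_n$, up to a global phase that the $-\tfrac12$ in $H$ is designed to absorb. A sanity check at $\alpha = \pi/2$ gives $e^{-in\pi/2} = (-i)^n$, the known Hermite eigenvalues of the unitary Fourier transform. The values at $\alpha \equiv 0 \pmod{\pi}$, namely $I$ and parity with eigenvalues $(\pm 1)^n$, agree with the stated conventions directly.

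\textbf{Main obstacle.} The expected difficulty is the analytic bookkeeping in Step 2. The sum on the unit circle $|w| = 1$ does not converge pointwise. I would therefore prove the identity for $w = re^{-i\alpha}$ with $r < 1$, as an $L^2$-bounded operator acting on a fixed $h_n$ or on Schwartz test functions, and let $r \to 1^-$ by dominated convergence. Along the way I must fix the branch of $\sqrt{1 - i\cot\alpha}$ consistently with continuity in $\alpha$. This branch ambiguity is exactly the metaplectic sign that makes the group $\mathbb{R}/4\pi\mathbb{Z}$ rather than $\mathbb{R}/2\pi\mathbb{Z}$, and it must match $e^{-i\alpha H}$ including the constant shift in $H$.

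\textbf{Fallback.} If the branch tracking becomes unwieldy, I would instead verify that the kernel $K_\alpha(u,t)$ satisfies the Schrödinger equation $i\partial_\alpha K_\alpha = H_u K_\alpha$ with $K_\alpha \to \delta(u-t)$ as $\alpha \to 0^+$. Uniqueness of solutions to this Cauchy problem, again via Stone's theorem, then identifies $\mathcal{F}^\alpha$ with $e^{-i\alpha H}$.
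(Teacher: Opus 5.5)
Your route is the paper's own. The paper's proof is a one-line appeal to Mehler's formula for the oscillator propagator, and your Hermite--Gauss comparison, with an Abel limit $w = re^{-i\alpha}$, $r \to 1^-$, unpacks that citation faithfully. The gap is in Step~2, at the only point where the lemma asserts more than Mehler's formula does, namely the exact phase.

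Your prefactor relation with the factor $e^{-i\alpha/2}\cdot(\text{phase})$ is wrong. The proposed remedy, a leftover global phase that ``the $-\tfrac12$ in $H$ is designed to absorb,'' double-counts. The $-\tfrac12$ was already spent in Step~1 to turn $n+\tfrac12$ into $n$. The Mehler series $\sum_n w^n h_n(u)h_n(t)$ already carries the eigenvalue $w^n = e^{-in\alpha}$. So any residual factor $e^{i\theta(\alpha)}$ between the Mehler kernel and the FRFT kernel would give $\mathcal{F}^\alpha = e^{i\theta(\alpha)}e^{-i\alpha H}$, and the lemma would be false. Your own check at $\alpha = \pi/2$ would fail by $e^{i\pi/4}$ if that $e^{-i\alpha/2}$ were really there.

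The repair is a direct computation. With $w = e^{-i\alpha}$ one has $1-w^2 = 2ie^{-i\alpha}\sin\alpha$ and $1-i\cot\alpha = -ie^{i\alpha}/\sin\alpha$. Hence, exactly,
\[
\frac{1}{\pi(1-w^2)} = \frac{1-i\cot\alpha}{2\pi}, \qquad \text{i.e.} \qquad 1-i\cot\alpha = \frac{2}{1-w^2}.
\]
The branch is also forced. For $|w|<1$ we have $\mathrm{Re}(1-w^2) \ge 1-|w|^2 > 0$, and for $\alpha \not\equiv 0 \pmod{\pi}$ we have $\mathrm{Re}(1-e^{-2i\alpha}) = 2\sin^2\alpha > 0$. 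So $1-w^2$ stays in the open right half-plane along the whole Abel path. The principal square root is therefore the continuous branch in Mehler's formula, and it coincides with the principal root of $1-i\cot\alpha$. The FRFT kernel \emph{is} the Mehler kernel at $w = e^{-i\alpha}$, with no phase and no branch to track.

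Your ``main obstacle'' is therefore moot. The kernel depends on $\alpha$ only through $\cot\alpha$ and $\sin\alpha$, so the family is $2\pi$-periodic, as $e^{-i\alpha H}$ must be since $H$ has integer spectrum. The $4\pi$-periodicity and the $e^{\pm i\alpha/2}$ you had in mind belong to the unshifted oscillator $\tfrac12(P^2+Q^2)$. For $0<|\alpha|<\pi$ the FRFT kernel equals $e^{i\alpha/2}$ times that oscillator's propagator $(2\pi i\sin\alpha)^{-1/2}\exp(\cdots)$. That $e^{i\alpha/2}$, hidden inside $\sqrt{1-i\cot\alpha}$, is where the $-\tfrac12$ lives.

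For the limit $r \to 1^-$, write $M_w(u,t)$ for the right-hand side of Mehler's formula. The domination you need is $|M_w(u,t)| \le (\pi|1-w^2|)^{-1/2}$ for $|w|\le 1$, which is bounded uniformly in $r$. It holds because the Mehler exponent equals $-\tfrac14\bigl[\tfrac{1+w}{1-w}(u-t)^2 + \tfrac{1-w}{1+w}(u+t)^2\bigr]$, and both coefficients have non-negative real part. If you run the limit on an arbitrary $f \in L^1\cap L^2$ rather than on $h_n$, you get $\mathcal{F}^\alpha f = e^{-i\alpha H} f$ directly, without needing unitarity of the kernel operator in advance.
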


\begin{proof}
This is the Mehler-formula identity for the harmonic-oscillator propagator; see~\cite[Section 4.5]{folland2015course}.
\end{proof}

\begin{lemma}[Hermite-Gauss eigenstructure]
\label{lem:hermite_gauss}
The eigenfunctions of $H$ are the Hermite-Gauss functions
\[
\psi_n(t) = \frac{1}{\sqrt{2^n n! \sqrt{\pi}}} \, H_n(t) \, e^{-t^2/2}, \quad n = 0, 1, 2, \ldots,
\]
where $H_n$ is the physicists' Hermite polynomial. The eigenvalues are $H \psi_n = n \psi_n$, and the corresponding FRFT action is $\mathcal{F}^\alpha \psi_n = e^{-i n \alpha} \psi_n$ (up to the global metaplectic half-integer phase $e^{-i\alpha/2}$). The set $\{\psi_n\}_{n=0}^\infty$ is a complete orthonormal basis of $L^2(\mathbb{R})$.
\end{lemma}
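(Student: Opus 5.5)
The plan is to prove Lemma~\ref{lem:hermite_gauss} in three steps. First, obtain the eigenfunctions of $H$ from the ladder-operator structure of the harmonic oscillator. Second, lift the eigenvalues to the FRFT action through Lemma~\ref{lem:frft_generator}. Third, establish completeness separately, since that is the only part needing an analytic argument.

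\emph{Eigenfunctions and eigenvalues.} Introduce $a = \tfrac{1}{\sqrt{2}}(t + \tfrac{d}{dt})$ and $a^\dagger = \tfrac{1}{\sqrt{2}}(t - \tfrac{d}{dt})$ on Schwartz space. A direct computation gives $[a, a^\dagger] = 1$ and $H = a^\dagger a$, the $-1$ in the definition of $H$ absorbing the commutator term.

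The ground state is the normalized solution of $a\psi_0 = 0$, namely $\psi_0 = \pi^{-1/4} e^{-t^2/2}$, which has $H\psi_0 = 0$. Define $\psi_n = (n!)^{-1/2} (a^\dagger)^n \psi_0$. The commutation relation gives $H a^\dagger = a^\dagger (H + 1)$, so $H\psi_n = n\psi_n$ by induction. It also gives $\|(a^\dagger)^n\psi_0\|^2 = n!$, so each $\psi_n$ is normalized.

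To identify $\psi_n$ with the displayed Hermite-Gauss formula, write $a^\dagger = -\tfrac{1}{\sqrt 2} e^{t^2/2} \tfrac{d}{dt} e^{-t^2/2}$. Iterating this $n$ times yields the Rodrigues formula $H_n(t) = (-1)^n e^{t^2} \tfrac{d^n}{dt^n} e^{-t^2}$. The constant $(2^n n! \sqrt{\pi})^{-1/2}$ then follows from the normalization already established. Orthogonality for distinct $n$ is automatic, because these are eigenvectors of the symmetric operator $H$ with distinct eigenvalues.

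\emph{FRFT action.} Since Lemma~\ref{lem:frft_generator} gives $\mathcal{F}^\alpha = e^{-i\alpha H}$ and $H\psi_n = n\psi_n$, the functional calculus gives $\mathcal{F}^\alpha\psi_n = e^{-in\alpha}\psi_n$ immediately.

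The global phase $e^{-i\alpha/2}$ needs one remark. It appears if one uses the unshifted generator $\tfrac12(P^2+Q^2)$, or if one matches the Mehler kernel normalization at $\alpha = \pi/2$. As a cross-check, I will verify the special case $\alpha = \pi/2$ directly: the Fourier transform of $\psi_n$ is $(-i)^n\psi_n$. This follows by induction, since the Fourier transform intertwines $a^\dagger$ with $-i a^\dagger$ and fixes the Gaussian $\psi_0$.

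\emph{Completeness (main obstacle).} Completeness is the only genuinely analytic step. The span of $\{\psi_n\}$ equals $\{p(t) e^{-t^2/2} : p \text{ a polynomial}\}$, so it suffices to show the following: if $f \in L^2(\mathbb{R})$ satisfies $\int f(t)\, t^k e^{-t^2/2}\, dt = 0$ for every $k$, then $f = 0$.

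The plan for this is as follows.

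1. Consider $F(z) = \int f(t) e^{-t^2/2} e^{-izt}\, dt$. The Gaussian weight makes this integral converge for all $z \in \mathbb{C}$, so $F$ is entire.
2. Justify differentiating under the integral sign using the Gaussian decay. Then $F^{(k)}(0)$ is a constant multiple of the $k$-th moment, which vanishes by hypothesis, so $F \equiv 0$.
3. Restricted to real $z$, $F$ is the Fourier transform of $f e^{-t^2/2} \in L^1 \cap L^2$. Fourier injectivity gives $f e^{-t^2/2} = 0$ almost everywhere, hence $f = 0$.

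This completes the proof. The ladder algebra is routine once the domain is restricted to Schwartz space.
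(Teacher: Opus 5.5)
Your proposal is correct and follows the same route as the paper: the factorization $H = a^\dagger a$ gives the spectrum $H\psi_n = n\psi_n$, Lemma~\ref{lem:frft_generator} turns this into the FRFT action, and completeness is the Hermite expansion theorem. The paper cites that theorem, whereas you prove it through the entire-function and Fourier-injectivity argument, and you also check the phase convention at $\alpha = \pi/2$.
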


\begin{proof}
The Hermite-Gauss functions are the standard quantum-mechanical eigenfunctions of the harmonic oscillator; the eigenvalues $H \psi_n = n \psi_n$ follow from the operator factorization $H = a^\dagger a$ in terms of ladder operators $a^\dagger$, $a$ (after the $-\frac{1}{2}$ shift). Completeness is the Hermite polynomial expansion theorem. See~\cite[Section 4.5]{folland2015course} or any quantum mechanics textbook.
\end{proof}

\begin{theorem}[Hermite-Gauss basis is the matched transform of $SO(2)$ via the metaplectic representation]
\label{thm:frft}
The metaplectic representation of $SO(2)$ on $L^2(\mathbb{R})$ is multiplicity-free with isotypic decomposition
\[
L^2(\mathbb{R}) \;=\; \widehat{\bigoplus}_{n=0}^{\infty} \mathbb{C} \cdot \psi_n,
\]
where the one-dimensional subspace spanned by $\psi_n$ carries the character $\chi_n(\alpha) = e^{-in\alpha}$ (modulo the metaplectic central phase). The matched transform of any $SO(2)$-invariant integral kernel on $L^2(\mathbb{R})$, equivalently any kernel commuting with the entire FRFT family, is the Hermite-Gauss basis $\{\psi_n\}_{n \geq 0}$.
\end{theorem}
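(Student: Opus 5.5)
The plan is to apply Lemma~\ref{lem:hermite_gauss} and Lemma~\ref{lem:frft_generator} to identify the isotypic decomposition directly, and then to argue as in the proof of Theorem~\ref{thm:main_continuous}, with one caveat. The acting group is not acting geometrically on a compact domain, so Peter-Weyl enters only through the one-dimensional characters of the circle.

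First, I will pass from the projective $SO(2)$ representation to a genuine one. Multiplying $\mathcal{F}^\alpha$ by the central phase $e^{i\alpha/2}$ gives $\tilde{\mathcal{F}}^\alpha = e^{-i\alpha N}$ with $N = a^\dagger a$. By Lemma~\ref{lem:hermite_gauss}, $\tilde{\mathcal{F}}^\alpha \psi_n = e^{-in\alpha}\psi_n$, and this representation is $2\pi$-periodic. It is therefore a strongly continuous unitary representation of $SO(2)\cong U(1)$. A central phase does not change the commutant, so it does not affect which kernels are invariant.

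Second, I will compute the isotypic components with the continuous central projection
\[
P^{(n)} = \frac{1}{2\pi}\int_0^{2\pi} e^{in\alpha}\,\tilde{\mathcal{F}}^\alpha\, d\alpha .
\]
For $n \geq 0$, applying this to the complete basis $\{\psi_m\}$ and using orthogonality of characters gives $P^{(n)}\psi_m = \delta_{nm}\psi_m$. So $P^{(n)}$ is the rank-one projector onto $\mathbb{C}\psi_n$, and the characters with $n<0$ do not occur. Completeness of the Hermite-Gauss functions then gives the Hilbert direct sum displayed in the statement, and each character appears with multiplicity exactly one. This is the multiplicity-free claim.

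Third, I will identify the commutant. Let $\mathcal{R}$ be a bounded operator, or a kernel operator, commuting with every $\tilde{\mathcal{F}}^\alpha$. Then $\mathcal{R}$ commutes with every $P^{(n)}$, so it maps $\mathbb{C}\psi_n$ into itself. Hence $\mathcal{R}\psi_n = \lambda_n\psi_n$, and $\mathcal{R} = \sum_n \lambda_n \psi_n\langle\cdot,\psi_n\rangle$ with convergence in the strong operator sense. The commutant is therefore the commutative algebra of Hermite-diagonal operators, which is the infinite-dimensional analogue of Lemma~\ref{lem:multiplicity_free_commutant}. The basis $\{\psi_n\}$ is fixed independently of $\mathcal{R}$, and by the uniqueness argument of Theorem~\ref{thm:main}(iii) each $\psi_n$ is determined up to a unimodular phase. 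When the kernel is also a continuous covariance, positivity gives $\lambda_n\ge 0$, and if it is Hilbert-Schmidt or trace-class the expansion $R(s,t)=\sum_n\lambda_n\psi_n(s)\overline{\psi_n(t)}$ converges in $L^2(\mathbb{R}^2)$. The Karhunen-Lo\`eve optimality then transfers verbatim from Theorem~\ref{thm:kl_continuous}.

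The main obstacle is analytic rather than algebraic. The domain $\mathbb{R}$ is not compact, so Mercer's theorem (Theorem~\ref{thm:mercer}) does not apply directly. I also have to justify exchanging the Bochner integral in $P^{(n)}$ with the expansion in the $\psi_m$. I will handle this by working with bounded operators and the strong-operator continuity of the FRFT family, which the Mehler formula of Lemma~\ref{lem:frft_generator} provides. That way only spectral-theorem facts for the self-adjoint generator $N$ are needed, and none of the compactness hypotheses of Theorem~\ref{thm:main_continuous}.
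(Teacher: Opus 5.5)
Your proposal is correct and uses the same key fact as the paper, but you replace the paper's final step with a direct argument. The paper's proof is three lines. It invokes Theorem~\ref{thm:main_continuous} for $G = SO(2)$ acting through $\mathcal{F}^\alpha = e^{-i\alpha H}$, reads the decomposition off Lemmas~\ref{lem:frft_generator} and~\ref{lem:hermite_gauss}, and gets multiplicity-freeness from the simplicity of each eigenvalue $n$ of $H$. Your rank-one central projections $P^{(n)}$ are exactly the spectral projections of $H$ onto those simple eigenvalues, so the multiplicity-free step coincides with the paper's.

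The difference is how you reach the matched-transform conclusion. The paper cites Theorem~\ref{thm:main_continuous}, but that theorem assumes a compact domain $\Omega$ with a geometric, measure-preserving action, and its proof relies on compact operators and a Mercer-type expansion. The metaplectic action on $L^2(\mathbb{R})$ meets none of these hypotheses. You instead show that any bounded operator commuting with the family commutes with every $P^{(n)}$ and is therefore Hermite-diagonal. This identifies the commutant explicitly as $\{\sum_n \lambda_n \psi_n\langle\cdot,\psi_n\rangle : (\lambda_n)\in\ell^\infty\}$. The paper's route is shorter and keeps the FRFT visibly an instance of the uniform main theorem. Yours is self-contained and valid on a non-compact domain with a non-geometric action, and it makes clear that the Karhunen-Lo\`eve conclusion needs only a trace-class covariance operator, not Mercer.

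One bookkeeping remark concerns the phase correction. In the paper's normalization, $H = \tfrac12(P^2+Q^2-1)$ already equals $N$, so $e^{-i\alpha H}$ is $2\pi$-periodic with no correction. Your factor $e^{i\alpha/2}$ is needed only if you start from $e^{-i\alpha(N+1/2)}$, as the hedge in Lemma~\ref{lem:hermite_gauss} suggests. Applied to $e^{-i\alpha N}$, it would instead produce a $4\pi$-periodic family. Central phases do not change the commutant, so nothing else in your argument is affected.
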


\begin{proof}
Apply Theorem~\ref{thm:main_continuous} with $G = SO(2)$ acting on $L^2(\mathbb{R})$ via the metaplectic representation $\mathcal{F}^\alpha = e^{-i\alpha H}$. By Lemmas~\ref{lem:frft_generator} and~\ref{lem:hermite_gauss}, the representation decomposes into the displayed direct sum of one-dimensional eigenspaces of $H$. Multiplicity-freeness follows from the non-degeneracy of the spectrum of $H$: each eigenvalue $n$ is simple. The matched-transform conclusion is then the compact-case main theorem.
\end{proof}

\paragraph{When does metaplectic-$SO(2)$ invariance arise in practice?} The geometric content of the FRFT is rotation in the time-frequency plane: under $\mathcal{F}^\alpha$, the Wigner distribution $W_f(t, \omega)$ of $f$ rotates rigidly by angle $\alpha$. A covariance kernel invariant under the entire FRFT family corresponds to a stochastic process whose Wigner distribution is rotationally symmetric around the origin of the time-frequency plane, equivalently a process whose autocorrelation depends only on the radial coordinate $t^2 + \omega^2$ rather than on $t$ and $\omega$ separately. The canonical examples are thermal states of the quantum harmonic oscillator (whose density operators are functions of $H$), chirp ensembles with rotationally-symmetric chirp-rate distributions, and any process whose joint time-frequency localization is isotropic. The matched-group → matched-transform correspondence here is the rotational analog of the translation-invariance correspondences of Sections~\ref{sec:dft} and~\ref{subsec:fourier_series}: translation invariance gives Fourier-exponential bases, dihedral invariance gives cosine bases, and metaplectic-rotational invariance gives Hermite-Gauss bases.

\begin{remark}[Two matched-group structures of the ordinary Fourier transform]
\label{rem:two_structures}
The ordinary Fourier transform $\mathcal{F} = \mathcal{F}^{\pi/2}$ is a single element of two different matched-group structures, and it is helpful to keep these distinct. As the $\alpha = \pi/2$ point of the FRFT family, $\mathcal{F}$ is an element of the metaplectic-$SO(2)$ action on $L^2(\mathbb{R})$, and the matched transform of that action is the Hermite-Gauss basis. As the matched transform of translation-invariant kernels on $\mathbb{R}$, $\mathcal{F}$ is itself the diagonalizing unitary, and the matched group is the additive group $\mathbb{R}$. These two matched-group structures are distinct and have different matched transforms (Hermite-Gauss for metaplectic-$SO(2)$, complex exponentials for translation-$\mathbb{R}$); they are linked only by the coincidence that $\mathcal{F}$ appears in both. The FRFT family of order $\alpha$ continuously interpolates between two limit regimes: $\alpha = 0$ corresponds to the position basis (delta functions, the matched basis for multiplication operators on $L^2(\mathbb{R})$), $\alpha = \pi/2$ corresponds to the frequency basis (complex exponentials, the matched basis for translation), and intermediate $\alpha$ corresponds to rotated mixtures of position and frequency. The Hermite-Gauss basis is the fixed point of the rotation, the unique basis invariant under the entire FRFT family, and is therefore the matched basis for the full $SO(2)$ symmetry, not for any one $\mathcal{F}^\alpha$ individually.
\end{remark}

\begin{proposition}[Inverse relationship between matched group size and matched transform resolution]
\label{prop:structural_principle}
Let $H \leq G$ be a subgroup of a compact (or finite) group $G$, with both acting unitarily on a Hilbert space $\mathcal{H}$ via a common representation $\pi$. Let $\mathcal{A}_G = \{T : T\pi(g) = \pi(g)T \; \forall g \in G\}$ and $\mathcal{A}_H = \{T : T\pi(h) = \pi(h)T \; \forall h \in H\}$ denote the corresponding commutants. Then:
\begin{enumerate}
\item[(i)] $\mathcal{A}_G \subseteq \mathcal{A}_H$, with equality if and only if every $H$-equivariant operator is also $G$-equivariant.
\item[(ii)] The simultaneous-diagonalization basis of $\mathcal{A}_G$ (the matched transform for $G$-invariant kernels) is coarser, i.e., has at most as many distinct eigenvalues, as the simultaneous-diagonalization basis of $\mathcal{A}_H$ (the matched transform for $H$-invariant kernels).
\item[(iii)] In a chain of nested subgroups $\{e\} = G_0 \leq G_1 \leq \cdots \leq G_k = G$, the corresponding matched transforms form a chain of coarsening eigenbases, with the data-dependent Karhunen-Lo\`eve transform at the trivial-group end and the maximally coarse $G$-matched transform at the full-group end.
\end{enumerate}
\end{proposition}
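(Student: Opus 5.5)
Part (i) is immediate from the definitions. If $T$ commutes with $\pi(g)$ for every $g \in G$, it commutes in particular with $\pi(h)$ for every $h \in H \subseteq G$, so $\mathcal{A}_G \subseteq \mathcal{A}_H$. The ``equality iff'' clause is simply the statement that the reverse inclusion $\mathcal{A}_H \subseteq \mathcal{A}_G$ holds, so nothing beyond unwinding the definitions is needed. I will also note here, for use below, that a nontrivial strict inclusion can occur. An example is $H = \mathbb{Z}_M \leq D_M$ acting on the $M$-cycle: the Hermitian circulants strictly contain the symmetric circulants (Remark~\ref{rem:hartley}).

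For (ii), the plan is to compare isotypic decompositions rather than particular bases, because the ``basis'' in Theorem~\ref{thm:main} is only defined up to the within-block freedom of part (iii). By Theorem~\ref{thm:isotypic} and Theorem~\ref{thm:commutant_structure}, the self-adjoint elements of $\mathcal{A}_G$ act as $\bigoplus_\rho I_{d_\rho}\otimes \tilde T_\rho$. In the multiplicity-free case each element is a scalar on $V^{(\rho)}_G$, so every $G$-invariant covariance has at most $|S(\pi|_G)|$ distinct eigenvalues, and each eigenspace is a sum of $G$-isotypic components. The key step is that each $G$-irrep $\rho$, restricted to $H$, splits into $H$-irreps. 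Hence every $G$-isotypic component $V^{(\rho)}_G$ is $H$-invariant and is a direct sum of pieces of $H$-isotypic components. The joint spectral decomposition of $\mathcal{A}_H$ (by the generic element argument: a generic self-adjoint element of a $*$-algebra has spectral projections generating its center's minimal projections) therefore refines that of $\mathcal{A}_G$. I will make ``coarser'' precise as follows: the lattice of spectral projections of any $T \in \mathcal{A}_G$ is contained in the projection lattice generated by $\mathcal{A}_H$. Equivalently, the maximal number of distinct eigenvalues attained in $\mathcal{A}_G$ is at most that attained in $\mathcal{A}_H$. This follows directly from $\mathcal{A}_G \subseteq \mathcal{A}_H$ by taking a generic element of $\mathcal{A}_G$, which already lies in $\mathcal{A}_H$.

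Part (iii) follows by iterating (i) and (ii) along the chain. This gives $\mathcal{A}_{G_k}\subseteq\cdots\subseteq\mathcal{A}_{G_0} = \End(\mathcal{H})$. The trivial-group end is Corollary~\ref{cor:trivial}: its commutant is everything, so its ``matched basis'' is the data-dependent KLT. The main obstacle is interpretive. When $\pi|_H$ is not multiplicity-free (as at the trivial end), there is no fixed basis in the sense of Theorem~\ref{thm:main}. So the ``chain of eigenbases'' must be read as a chain of decompositions of $\mathcal{H}$ into blocks, on which every invariant covariance acts by $\tilde T_\rho$, with eigenvalue degeneracy at least $d_\rho$. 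I will state (ii) and (iii) in that block form, which is what the argument actually proves. I will add a remark that an individual covariance may of course have accidental further degeneracies. In the compact infinite-dimensional case the same argument goes through for compact self-adjoint operators using the countable isotypic sums described before Theorem~\ref{thm:main_continuous}.
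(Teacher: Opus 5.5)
Your proposal is correct and follows the paper's own proof. Part (i) restricts the commutation relation from $G$ to $H$, part (ii) splits each $G$-isotypic block into $H$-pieces via the branching rule, and part (iii) iterates along the chain down to the full operator algebra $\mathcal{A}_{\{e\}}$.

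Your choice to pin down ``coarser'' through the inclusion $\mathcal{A}_G\subseteq\mathcal{A}_H$ is the sound reading. Spectral projections of $G$-invariant operators lie in $\mathcal{A}_H$, so the maximal number of distinct eigenvalues can only grow as the group shrinks. By contrast, the claim that the decomposition attached to $H$ refines the $G$-isotypic blocks, made in your sketch and in the paper alike, can fail once $\pi|_H$ has multiplicities. For $H=\{e\}$ there is a single isotypic block, and a generic covariance has eigenvectors lying outside the $G$-isotypic components.
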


\begin{proof}
For (i), if $T \in \mathcal{A}_G$ then $T\pi(g) = \pi(g)T$ for all $g \in G$, in particular for all $g \in H \subseteq G$, so $T \in \mathcal{A}_H$. The reverse containment requires that every operator commuting with $\pi(H)$ also commute with $\pi(G \setminus H)$, which generically fails when $H \neq G$.

For (ii), the simultaneous-diagonalization basis of $\mathcal{A}_G$ is constant on each $G$-isotypic block of $\mathcal{H}$, while the simultaneous-diagonalization basis of $\mathcal{A}_H$ is constant on each (possibly finer) $H$-isotypic block. By the branching rule for restriction of representations, each $G$-isotypic block decomposes into a sum of $H$-isotypic sub-blocks, so the $G$-eigenbasis has at most as many distinct eigenvalues as the $H$-eigenbasis.

For (iii), iterate (i) and (ii) along the chain. The trivial-group end gives $\mathcal{A}_{\{e\}} = B(\mathcal{H})$ (all bounded operators), whose simultaneous diagonalization requires individual operator analysis, i.e., the data-dependent KLT.
\end{proof}

\paragraph{Illustration via the metaplectic chain.} The chain $\{e\} \leq SO(2) \leq SL(2, \mathbb{R})$ acting on $L^2(\mathbb{R})$ via the metaplectic representation provides an illustrative example. The trivial-group matched transform is the unrestricted Karhunen-Lo\`eve basis of the underlying covariance. The $SO(2)$-matched transform restricts to the Hermite-Gauss basis (Theorem~\ref{thm:frft}), with infinitely many distinct eigenvalues indexed by $n \in \mathbb{Z}_{\geq 0}$. The $SL(2, \mathbb{R})$-matched transform restricts further to the two-eigenvalue parity decomposition $L^2_{\text{even}} \oplus L^2_{\text{odd}}$. Each level of the chain refines or coarsens the previous level by the branching rule for the metaplectic representation restricted from $SL(2, \mathbb{R})$ to $SO(2)$ to $\{e\}$. This illustrates the manifold-of-transforms picture introduced in Section~\ref{sec:discovery}: matched transforms form a stratified family parameterized by the matched group, and the resolution of the matched transform varies inversely with the group's size.

\begin{remark}[The linear canonical transform and the generalized Fourier transform]
\label{rem:lct}
The FRFT generalizes further to the linear canonical transform (LCT), which is a unitary representation of the full metaplectic group $Mp(2, \mathbb{R})$, the double cover of the symplectic group $SL(2, \mathbb{R})$, on $L^2(\mathbb{R})$. The LCT is parameterized by matrices $M = \begin{pmatrix} a & b \\ c & d \end{pmatrix} \in SL(2, \mathbb{R})$ via the kernel
\begin{align*}
\mathcal{L}_M[f](u) = \sqrt{\frac{1}{i 2\pi b}} \int_\mathbb{R} & \exp\!\Big(i \frac{d u^2 - 2ut + a t^2}{2b}\Big) \\
&\quad \times f(t) \, dt
\end{align*}
for $b \neq 0$, with a dilation-type degenerate form for $b = 0$. The LCT subsumes the identity ($M = I$), the ordinary Fourier transform ($M = \big(\begin{smallmatrix} 0 & 1 \\ -1 & 0 \end{smallmatrix}\big)$), the FRFT of order $\alpha$ (the rotation by $\alpha$), the chirp multiplication (lower triangular), the dilation (diagonal), and the Fresnel transform (upper triangular) as one-parameter subfamilies.

The matched group of the LCT family is $SL(2, \mathbb{R})$ (or $Mp(2, \mathbb{R})$ at the double-cover level), acting via the metaplectic representation. Unlike the FRFT case, however, the metaplectic representation of the full $SL(2, \mathbb{R})$ on $L^2(\mathbb{R})$ decomposes into only \emph{two} irreducible components, $L^2(\mathbb{R}) = L^2_{\text{even}}(\mathbb{R}) \oplus L^2_{\text{odd}}(\mathbb{R})$, each infinite-dimensional but irreducible. The commutant of the metaplectic action is therefore two-dimensional, and the matched transform is the coarse parity decomposition: every operator commuting with the entire LCT family is of the form $\lambda_{\text{even}} P_{\text{even}} + \lambda_{\text{odd}} P_{\text{odd}}$, a scalar on each parity subspace.

This illustrates a general structural principle of the matched-group framework: \emph{more symmetry (larger matched group) implies fewer invariants (smaller commutant) and a coarser matched transform.} The interesting matched transforms emerge from subgroups of the full $SL(2, \mathbb{R})$, each picking out a different finer structure: $SO(2)$ (rotations) gives the Hermite-Gauss basis of Theorem~\ref{thm:frft}; the unipotent upper-triangular subgroup (the Fresnel transforms, which act by convolution with a chirp) is diagonalized by the ordinary Fourier basis of Theorem~\ref{thm:fourier_transform}, and the lower-triangular subgroup (chirp multiplications) by the position basis; the dilation subgroup $\mathbb{R}_+ \subset SL(2, \mathbb{R})$ gives the Mellin basis $\{|t|^{-1/2 + i\xi}\}_{\xi \in \mathbb{R}}$. The affine ``$ax + b$'' group, acting on $L^2(\mathbb{R})$ by translation and dilation rather than through the metaplectic representation, gives the continuous wavelet transform of Section~\ref{subsec:continuous_wavelet}. The LCT family acts as the umbrella that contains each of these as a one-parameter subfamily, and the matched-group geometry of the LCT is the geometry of the subgroup lattice of $SL(2, \mathbb{R})$. Other ``generalized Fourier transforms'' in the literature, including the Dunkl transform~\cite{dunkl1989differential} for reflection-group-equivariant analysis on $\mathbb{R}^n$, the non-commutative Fourier transform on a general compact group (the Peter-Weyl transform of Theorem~\ref{thm:peter_weyl_continuous}), and the number-theoretic Fourier transform on $\mathbb{F}_q^n$, each fit the same matched-group → matched-transform schema with their own characteristic group.
\end{remark}

\subsection{The affine group and the continuous wavelet transform}
\label{subsec:continuous_wavelet}

The affine group $\mathrm{Aff}(\mathbb{R}) = \{(a, b) : a \in \mathbb{R}_{>0}, b \in \mathbb{R}\}$, with multiplication $(a_1, b_1)(a_2, b_2) = (a_1 a_2, a_1 b_2 + b_1)$, acts on $\mathbb{R}$ by $(a, b) \cdot t = a t + b$. This is a non-compact, non-abelian Lie group. The action on $L^2(\mathbb{R})$ by the unitary
\[
(\pi(a, b) f)(t) = a^{-1/2} f((t - b) / a)
\]
gives a representation that is square-integrable (after restricting to admissible vectors), leading to the continuous wavelet transform of Grossmann and Morlet~\cite{grossmann1984decomposition}.

A full development of the affine-group case requires the theory of square-integrable representations and admissible vectors, which lies beyond the immediate scope of Theorem~\ref{thm:main_continuous}. We note here that the continuous wavelet transform is the matched transform for processes invariant under the affine group, in the same sense that the Fourier transform is the matched transform for processes invariant under translation alone. For a complete treatment, see~\cite{grossmann1984decomposition, folland2015course}.

\subsection{Reflection Groups and the Dunkl Transform}

The Coxeter groups are finite reflection groups whose harmonic analysis is provided by the Dunkl transform and it is a generalized Fourier transform. It extends classical Fourier analysis by replacing translation symmetry with reflection-group symmetry, and it is naturally defined in arbitrary dimensions. Unlike the Fourier transform, which diagonalizes translation-invariant operators, the Dunkl transform diagonalizes commuting differential-difference operators generated by reflections. From the perspective developed in this paper, reflection groups represent another natural matched-group structure. Consequently, covariance operators whose symmetries are governed by a finite reflection group admit a canonical spectral decomposition through the Dunkl transform, placing Dunkl analysis alongside the Fourier, cosine, wavelet, and spherical harmonic transforms within the unified matched-transform framework.

\begin{theorem}[Dunkl transform as the matched transform for a reflection-group covariance]
Let \(W\) be a finite reflection group on \(\mathbb R^d\), let \(k\ge 0\) be a
multiplicity function on the associated root system, and let
\(\Delta_k\) denote the Dunkl Laplacian on
\[
L^2(\mathbb R^d,w_k(x)\,dx).
\]
Let \(R\) be a bounded self-adjoint covariance operator satisfying
\[
[R,\Delta_k]=0
\]
and assume that \(R\) is spectrally generated by \(\Delta_k\), i.e.
\[
R = m(\Delta_k)
\]
for some bounded real Borel function \(m\). Then \(R\) is diagonalized by
the Dunkl transform \(\mathcal F_k\). In particular,
\[
\mathcal F_k R \mathcal F_k^{-1}
\]
is multiplication by \(m(-|\xi|^2)\). Hence the Karhunen--Loeve
decomposition of \(R\) is the Dunkl spectral decomposition.

Equivalently, for a covariance whose matched symmetry is the finite
reflection-group/Dunkl structure, the matched harmonic transform is the
Dunkl transform.
\end{theorem}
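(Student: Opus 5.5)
The argument rests on two standard facts about the Dunkl transform, which I will quote rather than re-prove. The first is Plancherel: for $k \ge 0$, $\mathcal F_k$ extends to a unitary isomorphism of $L^2(\mathbb R^d, w_k(x)\,dx)$ onto itself, suitably normalized (de Jeu; Dunkl). The second is the intertwining identity. The Dunkl kernel $E_k(x,-i\xi)$ is the joint eigenfunction of the Dunkl operators, $T_j E_k(\cdot,-i\xi) = -i\xi_j E_k(\cdot,-i\xi)$. Consequently $\mathcal F_k(T_j f) = i\xi_j \mathcal F_k f$ on the Schwartz class, and summing the squares gives
\[
\mathcal F_k \Delta_k \mathcal F_k^{-1} = M_{-|\xi|^2},
\]
where $M_\phi$ denotes multiplication by $\phi$.

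The proof then has three steps.

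\emph{Step 1: self-adjointness of $\Delta_k$.} On $\mathcal S(\mathbb R^d)$, $\Delta_k$ is symmetric with respect to $w_k\,dx$. The operator $\mathcal F_k^{-1} M_{-|\xi|^2} \mathcal F_k$ is self-adjoint on the domain $\{f : |\xi|^2 \mathcal F_k f \in L^2(w_k)\}$. Because $\mathcal F_k$ maps $\mathcal S$ onto $\mathcal S$, the Schwartz class is a core for this operator. I therefore take this unitarily equivalent multiplication operator as the definition of the self-adjoint closure of $\Delta_k$. I expect this domain/closure bookkeeping to be the main technical obstacle. The way through it is to quote the Schwartz-space invariance of $\mathcal F_k$ rather than rebuild it.

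\emph{Step 2: functional calculus.} With $\Delta_k = \mathcal F_k^{-1} M_{-|\xi|^2} \mathcal F_k$ established, the Borel functional calculus is compatible with unitary equivalence. For multiplication operators it is $m(M_\phi) = M_{m\circ\phi}$. Hence for any bounded real Borel function $m$,
\[
m(\Delta_k) = \mathcal F_k^{-1} M_{m(-|\xi|^2)} \mathcal F_k,
\]
which is exactly the displayed claim $\mathcal F_k R \mathcal F_k^{-1} = M_{m(-|\xi|^2)}$. The hypothesis $[R,\Delta_k]=0$ is then automatic and is not used separately. I will remark that commutation alone would not suffice: the spectrum of $\Delta_k$ has infinite multiplicity, so the commutant is non-commutative, which is precisely why the theorem assumes $R = m(\Delta_k)$.

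\emph{Step 3: KL interpretation.} Since $M_{m(-|\xi|^2)}$ is a real multiplication operator, its spectral resolution is given by the sets $\{\xi : m(-|\xi|^2)\in B\}$. Pulling these back through $\mathcal F_k$ gives the spectral resolution of $R$, so the generalized eigenfunctions of $R$ are the Dunkl kernels $E_k(\cdot, i\xi)$ with eigenvalue $m(-|\xi|^2)$. This is the continuous-spectrum analogue of the Karhunen-Lo\`eve expansion, in the Plancherel sense used for Theorem~\ref{thm:fourier_transform}.

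Two remarks close the argument. First, $R$ is a bounded operator with purely continuous spectrum rather than a Mercer kernel, so Theorem~\ref{thm:kl_continuous} applies only in this direct-integral form. Second, the result is consistent with the paper's framework: $W$-invariance of $w_k$ makes each $R = m(\Delta_k)$ commute with the $W$-action, and at $k=0$ the statement reduces to Theorem~\ref{thm:fourier_transform}.
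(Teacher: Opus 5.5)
Your proposal is correct and is essentially the paper's argument: the intertwining identity $\mathcal F_k\Delta_k\mathcal F_k^{-1}=M_{-|\xi|^2}$ followed by the Borel functional calculus. Your Step~1 (essential self-adjointness of $\Delta_k$ on $\mathcal S$, via Plancherel and the $\mathcal S$-invariance of $\mathcal F_k$) adds rigor that the paper leaves implicit, and you are right that $[R,\Delta_k]=0$ is redundant once $R=m(\Delta_k)$. Two side remarks need correcting, though neither affects the proof: $R=m(\Delta_k)$ need not have purely continuous spectrum (for $m=\chi_{[-1,0]}$ it is a projection with infinite-dimensional eigenspaces), and for $d=1$ the spectral multiplicity of $\Delta_k$ is two rather than infinite, which is still enough to make the commutant non-commutative.
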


\begin{proof}
The Dunkl operators \(T_1,\ldots,T_d\) form a commuting family of
differential-difference operators associated with the finite reflection
group \(W\). The Dunkl Laplacian is
\[
\Delta_k = \sum_{j=1}^d T_j^2 .
\]
The Dunkl transform \(\mathcal F_k\) is the spectral transform for this
commuting Dunkl operator system. In particular, it diagonalizes the Dunkl
Laplacian:
\[
\mathcal F_k(\Delta_k f)(\xi)
=
-|\xi|^2(\mathcal F_k f)(\xi).
\]
By the spectral theorem, if
\[
R=m(\Delta_k),
\]
then applying the functional calculus gives
\[
\mathcal F_k R \mathcal F_k^{-1}
=
m\!\left(\mathcal F_k\Delta_k\mathcal F_k^{-1}\right).
\]
Since
\[
\mathcal F_k\Delta_k\mathcal F_k^{-1}
\]
is multiplication by \(-|\xi|^2\), it follows that
\[
\mathcal F_k R \mathcal F_k^{-1}
\]
is multiplication by
\[
m(-|\xi|^2).
\]
Thus \(R\) is diagonal in the Dunkl transform domain. Therefore the
spectral decomposition of \(R\), and hence its Karhunen--Loeve
decomposition, is the Dunkl spectral decomposition.
\end{proof}

\subsection{Numerical verification on the circle}
\label{subsec:s1_verification}

We verify Theorem~\ref{thm:fourier_series} numerically for a discretization of the circle. Let $\Omega = S^1$ discretized to $N = 64$ equispaced points and let $R$ be a circulant covariance whose spectrum is well-separated (so each $(\cos_n, \sin_n)$ pair has a distinct eigenvalue $\lambda_n$). The eigenfunctions of $R$ are computed by direct eigendecomposition and compared to the discrete Fourier modes (the discretization of the continuous Fourier series basis). The empirical eigenspace degeneracy structure $[1, 2, 2, \ldots, 2, 1]$, one singleton for the DC component, $31$ pairs of size two for frequencies $1$ through $31$, and one singleton for the Nyquist frequency, exactly matches the prediction of Theorem~\ref{thm:fourier_series}. The minimum subspace match between empirical and Fourier basis is $1.000000$ to machine precision.

\section{Discussion: A Catalog of Solvable Signal Classes}
\label{sec:discussion}

The polynomial-time matched-group discovery procedure of Section~\ref{sec:discovery} is provably sound and complete for a structured family of group classes. We catalog those classes here with a representative example for each, and we discuss the natural extension of the framework to graph signal processing. The intent is to make concrete the breadth of signal classes that fall within the unified framework, and to indicate by example how a practitioner identifies which class applies to a given dataset.  

It should be noted that although this article focuses on the blind group matching problem, an alternative and eminently practical approach is to form a small library of candidate groups and to calculate the commutativity residual, $\delta$, with each member of the library and the signal of interest; followed by choosing the group corresponding to the minimal residual value.

\subsection{The five proven-solvable classes}
\label{subsec:five_classes}

The matched-group discovery procedure has been proven sound and complete~\cite{thornton2026double_commutator, thornton2026ad_framework} for the following five structurally related classes of finite groups acting on $\mathbb{C}^M$. We treat each in turn.

\emph{1. Cyclic class.} The cyclic group $\mathbb{Z}_M$ acts on $\mathbb{C}^M$ by translation; the canonical generator set has size 1 (a single cyclic shift). The matched transform is the DFT. A canonical example is a sampled wide-sense stationary process on a uniform grid: the autocorrelation depends only on lag, the covariance on a periodized window is circulant, and the matched group is $\mathbb{Z}_M$. The matched-group discovery procedure identifies the single cyclic generator in $O(M^2)$ operations and returns the DFT as the optimal transform.

\emph{2. Dihedral class.} The dihedral group $D_M = \mathbb{Z}_M \rtimes \mathbb{Z}_2$ acts on $\mathbb{C}^M$ by translation and reflection; the canonical generator set has size 2 (one cyclic shift plus one reflection). The matched transform on the $M$-cycle is the real Fourier (Hartley) basis of Remark~\ref{rem:hartley}, and on the even-reflected extension, where the group acts as $D_{2M}$ on $2M$ points, it is the DCT-II (Theorem~\ref{thm:dct}). A canonical example is a real stationary process on a cycle, whose symmetric circulant covariance is invariant under the reflection as well as the translations; the matched-group discovery procedure identifies both generators in two iterations of the GEVP. A finite-window AR(1) process is reflection-invariant but not shift-invariant, and belongs to this class only through its even extension (Section~\ref{subsec:ar1_motivation}).

\emph{3. Elementary abelian class.} The elementary abelian 2-group $\mathbb{Z}_2^n$ acts on $\mathbb{C}^{2^n}$ by bitwise translation; the canonical generator set consists of $n$ independent single-bit flips. The matched transform is the Walsh-Hadamard transform. A canonical example is a combinatorial signal indexed by length-$n$ binary strings whose covariance depends only on the bitwise Hamming distance between indices (a Krawtchouk-symmetric covariance). Such covariances arise in coding theory, in association schemes, and in certain models of categorical data. The matched-group discovery procedure identifies the $n$ generators in $\log_2(2^n) = n$ iterations and returns the Walsh-Hadamard transform.

\emph{4. Iterated wreath class.} The iterated wreath product $\mathcal{G}_L = G_1 \wr G_2 \wr \cdots \wr G_L$ acts on the $M = \prod_{d=1}^L K_d$ leaves of a regular rooted tree by tree-automorphisms; the canonical generator set has polynomial size in $L$ and $\max_d K_d$~\cite{thornton2026ad_framework}. The matched transform is the hierarchical Karhunen-Lo\`eve basis of Section~\ref{sec:iterated_wreath}; in the dyadic-cyclic case with $K_d = 2$ and $G_d = \mathbb{Z}_2$ for every $d$, it specializes to the Haar wavelet. A canonical example is a multiresolution-organized signal whose statistics respect tree-level exchangeabilities, for instance, an EEG montage organized by region-within-hemisphere-within-cortex with statistics that respect the regional grouping, or a hierarchical sensor network with subnetwork-within-network exchangeability. The matched-group discovery procedure identifies the tree-automorphism structure in $O(L)$ iterations and returns the corresponding hierarchical KL basis.

\emph{5. Hybrid wreath class.} The hybrid wreath product $G_0 \wr S_K$ with $G_0$ a within-block group (typically $\mathbb{Z}_W$ for stationary within-block structure) and $S_K$ the symmetric group on $K$ blocks acts on $\mathbb{C}^{WK}$ by within-block group action plus cross-block permutation; the canonical generator set has size $W - 1 + 1$. A canonical example is the radio-frequency setting of~\cite{thornton2026ad_framework}: $K = 4$ blocks of $W = 8$ samples each, drawn from a single modulation class at fixed SNR, with cross-block exchangeability arising from the second-order stationarity of the modulation. The matched-group discovery procedure identifies the within-block cyclic generator and the across-block transposition in two iterations and returns the matched hybrid transform, which differs from both the DFT (too restrictive: no across-block exchangeability) and the within-block DFT-only (too permissive: ignores the cross-block exchangeability).

In each of these five classes, the matched group is recovered in polynomial time, and the matched transform is constructed mechanically by the central projection formula of Theorem~\ref{thm:main}(ii). The classes are nested, in the sense that the cyclic group is a subgroup of the dihedral group, so every dihedral-invariant covariance is cyclic-invariant and the dihedral class of covariances sits inside the cyclic class, the elementary abelian and cyclic classes intersect in $\mathbb{Z}_2$, the iterated wreath class subsumes the cyclic class as the depth-1 special case, and the hybrid wreath class subsumes the iterated wreath class with a fully cyclic within-block structure. They span the structurally distinct ways in which finite-group symmetry can appear in a covariance, and they map onto the classical transforms in a way that makes the unification of Section~\ref{sec:main} concrete.

\subsection{Graph signal processing as a matched-group problem}
\label{subsec:graph_sp}

The graph signal processing literature~\cite{shuman2013emerging, sandryhaila2013discrete} treats signals indexed by the vertices of a graph $\Gamma = (V, E)$. The standard graph Fourier transform is constructed from the eigendecomposition of the graph Laplacian (or, in the directed case, the adjacency matrix), and its kernels are not, in general, related to characters of any group. The matched-group framework of this paper recovers a meaningful subclass of graph signal processing, graphs whose automorphism groups act on the vertex set with a multiplicity-free permutation representation, as a special case of Theorem~\ref{thm:main}, with explicit connections to the classical transforms.

A few graphs illustrate the principle:

\emph{Cycle graph $C_M$.} The automorphism group of the cycle graph is the dihedral group $D_M$ acting on the $M$ vertices. The graph Laplacian of $C_M$ is a symmetric circulant, hence lies in the commutant of the dihedral permutation representation, and by Theorem~\ref{thm:main} its eigenbasis is the real Fourier, or Hartley, basis of Remark~\ref{rem:hartley}, with the two-dimensional eigenspaces at frequencies $0 < k < M/2$ determined only up to a rotation. When the graph is the directed cycle (without the reflection symmetry), the automorphism group is the cyclic group $\mathbb{Z}_M$, and the eigenbasis of the adjacency shift is the DFT (Section~\ref{sec:dft}). Cycle graph signal processing is therefore exactly cyclic or dihedral group signal processing on the $M$-cycle; the DCT-II belongs to the path graph.

\emph{Path graph $P_M$.} The automorphism group of the undirected path graph is $\mathbb{Z}_2$, generated by the single reflection $i \mapsto M - 1 - i$, whose permutation representation is not multiplicity-free, so the group fixes only the even/odd split of the eigenbasis. The Laplacian eigenbasis is nevertheless the DCT-II~\cite{strang1999discrete}, because the path Laplacian is the compression to the even-extension subspace of the cycle Laplacian on $2M$ vertices and so has the Toeplitz-plus-Hankel form of Theorem~\ref{thm:dct}; this is a property of the operator rather than of the automorphism group, and the boundary-condition variants give the other DCT and DST types.

\emph{Complete graph $K_M$.} The automorphism group is the full symmetric group $S_M$. The permutation representation of $S_M$ on the vertex set decomposes as trivial $\oplus$ standard, both multiplicity-one, with the trivial irrep spanned by the all-ones vector and the standard irrep $(M-1)$-dimensional. Any $K_M$-invariant graph signal has the trivial-irrep eigenvalue isolated and the standard-irrep eigenvalue $(M-1)$-fold degenerate. The within-standard basis is arbitrary; this matches the classical observation that $K_M$ Laplacian eigenvalues are $\{0, M, M, \ldots, M\}$ with the constant vector spanning the null space.

\emph{Hypercube graph $Q_n$.} The automorphism group is $\mathbb{Z}_2^n \rtimes S_n$, the so-called hyperoctahedral group. The $\mathbb{Z}_2^n$ subgroup alone gives the Walsh-Hadamard basis (Section~\ref{sec:hadamard}) as the matched transform; the full hyperoctahedral group enforces additional symmetry under axis permutations and further degenerates the spectrum. Hypercube graph signal processing is a special case of $\mathbb{Z}_2^n$-equivariant signal processing.

\emph{Balanced rooted tree.} The automorphism group of a complete balanced rooted tree of branching $(K_1, K_2, \ldots, K_L)$ acting on the $\prod K_d$ leaves is the iterated wreath product $S_{K_1} \wr S_{K_2} \wr \cdots \wr S_{K_L}$. By Theorem~\ref{thm:iterated_wreath}, the matched transform of any signal invariant under this group is the hierarchical KL basis, which specializes to the Haar wavelet in the dyadic-cyclic case. Hierarchical-tree-graph signal processing is therefore exactly iterated-wreath signal processing.

\emph{Highly symmetric graphs.} Strongly regular graphs, distance-regular graphs, vertex-transitive graphs, and the so-called Cayley graphs of a group are highly symmetric and admit large automorphism groups. The matched-group framework applies directly when the permutation representation of the automorphism group is multiplicity-free, as for distance-transitive graphs, whose adjacency algebra is the commutant; the matched transform is then constructed from the irreducible representations of the automorphism group, and the graph Laplacian eigendecomposition coincides with it, with the same eigenvalue degeneracies. When the permutation representation has multiplicity, as it does for many vertex-transitive and Cayley graphs, the group determines the isotypic blocks and the Laplacian eigenbasis is one of many bases adapted to them (Theorem~\ref{thm:main}(iii) and the last open problem of Section~\ref{subsec:open_problems}).

The matched-group framework captures the symmetric portion of graph signal processing, graphs whose Laplacian is diagonalized by a group-determined basis. For graphs whose automorphism group is trivial (the generic case for a random graph), the matched-group framework returns the trivial group as the matched group and the data-dependent KLT (here, the Laplacian eigendecomposition itself) as the matched transform. The full graph signal processing literature is therefore neither a special case of nor a generalization of the matched-group framework; the two overlap on the symmetric subclass, where they agree.

\subsection{A unified table of matched-group structures}
\label{subsec:unification_table}

To summarize the results of the theorems and analysis of this section and the preceding sections, Table~\ref{tab:unification} contains a list of many of the transforms developed in this paper, their matched groups, the order of each matched group, the source of the matched-group invariance in the underlying signal model, and a pointer to the theorem (or remark, or subsection) in which the matched-transform statement is proved. The table is organized into four blocks corresponding to qualitatively distinct types of matched-group structure: (A) simple finite groups acting by permutation, (B) finite group products with composite structure, (C) compact Lie groups acting on continuous domains, and (D) exotic representations in which the matched group acts on the signal space via a non-geometric representation. The blocks are increasing in mathematical sophistication and in the breadth of signal classes covered, and the unification's theoretical content is that all four blocks fit a single template: the triple \emph{(matched group, representation, signal class)} determines the matched transform via the central projection formula and the Peter-Weyl theorem. The operational content, at least in the discrete permutation setting, is that the matched group can be discovered in polynomial time by the procedure of Section~\ref{sec:discovery}, after which the matched transform follows mechanically.

\begin{table*}[!t]
\centering
\caption{Classical and contemporary signal transforms unified by matched group. The matched transform of every covariance invariant under the listed group is the listed transform; the construction is via the irreducible matrix elements of the group through the central projection formula of Theorem~\ref{thm:main}(ii) (and its compact-Lie extension, Theorem~\ref{thm:main_continuous}).}
\label{tab:unification}
\renewcommand{\arraystretch}{1.25}
\setlength{\tabcolsep}{6pt}
\begin{tabular}{@{}p{3.0cm} p{4.0cm} p{2.5cm} p{4.5cm} p{2.5cm}@{}}
\toprule
\textbf{Transform} & \textbf{Matched group $G^*$} & \textbf{Group order} & \textbf{Source of $G^*$-invariance} & \textbf{Reference} \\
\midrule
\multicolumn{5}{@{}l}{\textbf{(A) Simple finite groups, permutation representations.}} \\
\midrule
KLT (data-dependent) & Trivial $\{e\}$ & $1$ & No symmetry; data-dependent eigenbasis & Remark~\ref{rem:trivial} \\
DFT & Cyclic $\mathbb{Z}_M$ & $M$ & Wide-sense stationarity on a uniform grid & Theorem~\ref{thm:dft} \\
DCT-II & Dihedral $D_{2M}$ on the even extension $\mathbb{Z}_{2M}$ & $4M$ & Even-reflected extension of a finite window (AR(1) approximately) & Theorem~\ref{thm:dct} \\
Hartley (real Fourier pairs) & Dihedral $D_M$ on the $M$-cycle & $2M$ & Reflection-symmetric stationarity (symmetric circulant); coincides with the DFT class on real data & Remark~\ref{rem:hartley} \\
Walsh-Hadamard & Elementary abelian $\mathbb{Z}_2^n$ & $2^n$ & Bitwise-Hamming-symmetric covariance & Theorem~\ref{thm:hadamard} \\
Reed-Muller & $\mathbb{Z}_2^n$ (different basis) & $2^n$ & Combinatorial codes; algebraic geometry codes & Theorem~\ref{thm:reed_muller} \\
Arithmetic transform & $\mathbb{Z}_2^n$ (different basis) & $2^n$ & Boolean function spectra & Theorem~\ref{thm:arithmetic} \\
\midrule
\multicolumn{5}{@{}l}{\textbf{(B) Finite group products, composite structures.}} \\
\midrule
Haar wavelet & Iterated dyadic-cyclic wreath $W_L$ & $2^{2^L - 1}$ & Multiresolution-organized hierarchical signals & Theorem~\ref{thm:haar} \\
General hierarchical KL & Iterated wreath $G_1 \wr \cdots \wr G_L$ & polynomial in $L$, $|G_d|$ & Tree-organized exchangeability & Theorem~\ref{thm:iterated_wreath} \\
Block-stationary hybrid & Hybrid wreath $G_0 \wr S_K$ & $|G_0|^K \, K!$ & Block-stationary cross-block exchangeable & Section~\ref{subsec:five_classes} \\
2D DFT & Direct product $\mathbb{Z}_{M_1} \times \mathbb{Z}_{M_2}$ & $M_1 M_2$ & Separable stationarity on a rectangular grid & Theorem~\ref{thm:direct_product_rule} \\
Tensor products & Direct product of constituent groups & Product of orders & Independent structural components & Theorem~\ref{thm:direct_product_rule} \\
\midrule
\multicolumn{5}{@{}l}{\textbf{(C) Compact Lie groups, continuous extensions.}} \\
\midrule
Fourier series & $U(1) \cong S^1$ (circle) & --- & Circle-translation invariance ($1$-periodic) & Theorem~\ref{thm:fourier_series} \\
Fourier transform & $(\mathbb{R}, +)$ & --- & Real-line-translation invariance (stationary) & Theorem~\ref{thm:fourier_transform} \\
Fourier cosine series & $O(2)$ on $S^1$, even subspace & --- & Reflection-and-translation invariance of the even extension & Theorem~\ref{thm:fourier_cosine} \\
Spherical harmonics & $SO(3)$ on $S^2$ & --- & Rotation-invariant signals on the sphere & Theorem~\ref{thm:spherical_harmonics} \\
Continuous Hankel (order $\nu$) & $SO(d)$ radial component, $d = 2\nu+2$ & --- & Rotationally invariant signals on $\mathbb{R}^d$ & Section~\ref{subsec:hankel} \\
Continuous wavelet & Affine ``$ax + b$'' group & --- & Scale-invariant analysis & Section~\ref{subsec:continuous_wavelet} \\
\midrule
\multicolumn{5}{@{}l}{\textbf{(D) Exotic representations: non-geometric and non-compact actions.}} \\
\midrule
Fractional Fourier (FrFT) & $SO(2)$ via metaplectic rep. & --- & Time-frequency-rotational invariance & Theorem~\ref{thm:frft} \\
Linear canonical (LCT) & $SL(2, \mathbb{R})$ via metaplectic rep. & --- & Time-frequency-symplectic invariance & Remark~\ref{rem:lct} \\
Dunkl transform & Reflection group $W \subset O(n)$ & --- & Reflection-equivariant analysis on $\mathbb{R}^n$ & \cite{dunkl1989differential} \\
NTT on $\mathbb{F}_q^n$ & $\mathbb{F}_q^n$ as additive group & $q^n$ & Number-theoretic / coded signals & Section~\ref{sec:hadamard} (analog) \\
\bottomrule
\end{tabular}
\end{table*}

The four blocks of Table~\ref{tab:unification} correspond to qualitatively distinct types of group structure:
\begin{itemize}[leftmargin=*]
\item \emph{(A) Simple finite groups, permutation representations.} The matched group acts on $\mathbb{C}^M$ by permuting the indices. This is the most familiar setting and covers the classical DFT/DCT/Walsh-Hadamard family, together with the Hartley basis (the dihedral group on the $M$-cycle) and the Reed-Muller and arithmetic variants on the $\mathbb{Z}_2^n$ space.
\item \emph{(B) Finite group products, composite structures.} The matched group is a direct product, wreath product, or semidirect product of smaller groups, each acting on a structural component of the signal. This covers wavelets (an iterated wreath product), hierarchical Karhunen-Lo\`eve bases on tree-organized data (general iterated wreath products), and multidimensional transforms (direct products).
\item \emph{(C) Compact Lie groups, continuous extensions.} The discrete grid is replaced by a continuous domain, and the matched group is a compact (or, in the Fourier-transform case, locally compact abelian) Lie group. The Peter-Weyl theorem replaces the central projection formula, and the matched transform is a Fourier-type expansion in the matrix elements of irreducible unitary representations. The Hankel transform of order $\nu$ appears as the radial component of the $SO(d)$ matched-transform structure for $d = 2\nu + 2$.
\item \emph{(D) Exotic representations.} The matched group's action on the signal space is not a geometric action on the underlying domain; the FrFT and LCT are the canonical examples, in which $SO(2)$ and $SL(2, \mathbb{R})$ act on $L^2(\mathbb{R})$ via the metaplectic representation, mixing position and momentum. The Dunkl transform and the number-theoretic Fourier transform on finite fields fit the same schema with their own characteristic groups.
\end{itemize}

The blocks are increasing in mathematical sophistication and in the breadth of signal classes covered. The deeper observation is that the entire table can be read as a catalog of the central projection formula applied to different choices of matched group and representation, with the variety of named transforms in the practitioner's toolbox arising from the variety of structurally distinct matched groups that practical signal classes exhibit.

\section{Why These Transforms Are the Common Ones}
\label{sec:why_common}

A practitioner who has followed the development of this paper through the special cases (DFT, DCT, Walsh-Hadamard, Haar wavelet, hierarchical KL, FrFT) and the discussion of the five solvable classes (Section~\ref{subsec:five_classes}) may reasonably ask why the small handful of named transforms account for the vast majority of real-world signal-processing pipelines, while the representation-theoretic principle of Theorem~\ref{thm:main} would in principle admit a matched transform for any finite group. The answer is not a property of the transforms or of the signals they process; it is a property of the way humans build the data acquisition systems that produce the signals.

\subsection{Acquisition geometry imposes cyclic symmetry}
\label{subsec:acquisition_geometry}

Engineers acquiring data sample at uniform intervals and arrange their sensors with uniform spacings. A waveform sampled at a fixed period $T$ and stored as an $M$-vector of consecutive samples is, by construction, indexed by $\{0, 1, \ldots, M-1\}$ with the cyclic group $\mathbb{Z}_M$ acting naturally by shifting indices. A uniform linear array of $M$ antennas produces a snapshot vector whose index set is again $\{0, 1, \ldots, M-1\}$ with the same cyclic structure. A rectangular pixel grid acquires an image whose indices live on $\{0, \ldots, M_1 - 1\} \times \{0, \ldots, M_2 - 1\}$ with the direct-product cyclic group $\mathbb{Z}_{M_1} \times \mathbb{Z}_{M_2}$ acting by two-axis translation. A regular volumetric voxel grid produces a tensor with three-axis cyclic structure. In each case, the acquisition geometry imposes a cyclic symmetry on the index set of the data \emph{before any modeling assumption about the signal is made}. The cyclic structure is in the instrument, not in nature.

Wide-sense stationarity of the underlying continuous process then extends the acquisition-geometry cyclic structure to the second moment of the acquired samples: under stationarity, the discrete autocorrelation depends only on lag, the discrete covariance is Toeplitz, circulant on a periodized window and within $O(1/M)$ of circulant otherwise (Remark~\ref{rem:toeplitz_wrap}), and the matched group is precisely the $\mathbb{Z}_M$ imposed by the sampling grid. The DFT is the matched transform of the result by Theorem~\ref{thm:dft}. This explains why the DFT is ubiquitous in spectral estimation, communications, and array processing: practitioners in these areas both build acquisition systems with uniform spacing and study signals (radio-frequency modulated signals, antenna-array snapshots, sampled audio) for which wide-sense stationarity is an appropriate model. The DFT is the matched transform of the resulting empirical covariance because the acquisition geometry made it so.

\subsection{Images add reflection symmetry from boundaries}

The dihedral case is similar in spirit, but the mechanism is different. A digital image is acquired on a uniform pixel grid (cyclic structure $\mathbb{Z}_{M_1} \times \mathbb{Z}_{M_2}$ from the sensor) and is a finite-window observation of a scene rather than a periodic tiling. The finite window breaks the cyclic symmetry of the grid; what survives exactly is \emph{boundary} structure, the statistical behavior at the left edge of the image being approximately the mirror image of the statistical behavior at the right edge. The even-reflected extension of the window turns this reflection symmetry into the dihedral symmetry $D_{2M_1}$ of a translation-invariant signal on twice the window (and analogously on the second axis), and the DCT-II is the matched transform of every covariance with this structure, by Theorem~\ref{thm:dct} applied to each axis. The precision matrix of a stationary 1D AR(1) process on a finite window, tridiagonal and reflection-symmetric (Section~\ref{sec:dct}), is within its two corner entries of that class, and the precision of a 2D AR(1) process is the tensor product of two such matrices. The widespread use of the DCT in image and video compression standards (JPEG, MPEG, HEVC) is a consequence of this matched-group structure: images are finite windows with mirror-symmetric boundary statistics, and the DCT is the matched transform of their even extension. The AR$(m)$ discussion of Section~\ref{sec:dct} explains how this picture extends to natural-image patches with higher-order Markov structure, with the DCT asymptotically optimal as $M \to \infty$.

\subsection{The named transforms are all the Karhunen-Lo\`eve transform}

The framing principle that follows from this discussion is worth stating explicitly: \emph{the DFT, DCT, Walsh-Hadamard, and other named transforms are not alternatives to the Karhunen-Lo\`eve transform; they are the Karhunen-Lo\`eve transform, specialized to covariances that respect specific symmetries.} Theorem~\ref{thm:main} makes this exact: when the matched group is nontrivial and the representation is multiplicity-free, the data-dependent eigenbasis of every $G$-invariant covariance collapses to a fixed representation-theoretic basis. The ``named'' transforms enjoy three operational advantages over the general KLT, but all three are consequences of the matched-group structure rather than independent properties of the transforms themselves:

\paragraph{Data-independence.}
The matched transform $U_G$ depends only on $G$, not on the specific covariance. The practitioner does not need to estimate $\mathbf{R}$ in order to identify the optimal transform of $\mathbf{R}$. The covariance need only be \emph{verified} to be $G$-invariant for $U_G$ to be the right basis.

\paragraph{Fast factorizations.}
The ``fast'' property of the DFT, DCT, Walsh-Hadamard, and Haar wavelet, the existence of $O(M \log M)$ implementations via Cooley-Tukey-style recursive factorizations~\cite{cooley1965algorithm}, is a consequence of the matched group having a structured set of generators that admits a tower-of-subgroups decomposition. The FFT exists because $\mathbb{Z}_M$ has a tower $\mathbb{Z}_M \supset \mathbb{Z}_{M/p} \supset \cdots$ for any factorization $M = p_1 p_2 \cdots p_k$; the analogous fast transforms for $D_M$, $\mathbb{Z}_2^n$, and the iterated wreath products of Section~\ref{sec:iterated_wreath} exist for the same reason at the respective groups. A general matched group $G$ admits a fast matched transform whenever its subgroup lattice supports such a decomposition; the named transforms enjoy fast factorizations because of the algebraic structure of their matched groups, not because of any special property of cosine kernels or $\pm 1$-valued basis functions.

\paragraph{Tabled and named in software.}
The DFT, DCT, and Walsh-Hadamard transform are the matched transforms of the symmetries that the most common acquisition systems produce. This is why these transforms are the ones for which off-the-shelf software implementations exist; the engineering community has built named software for the matched transforms of the most common matched groups, not the reverse.

The deeper consequence is that a practitioner whose acquisition system or signal model does not match $\mathbb{Z}_M$, $D_M$, $\mathbb{Z}_2^n$, or one of the other groups in the catalog of Section~\ref{subsec:five_classes} has nothing fundamentally wrong with their setup. They are not being punished by nature; they are working in a regime in which the matched group is not one for which the engineering community has historically built named software. The unification framework, together with the polynomial-time matched-group discovery procedure of Section~\ref{sec:discovery}, makes the matched transform for any admissible group constructible mechanically. Non-uniform sampling geometries, irregular sensor arrays, hierarchical sensor networks, exchangeable populations, and graph-structured signals all admit their own matched transforms by the same principle that makes the DFT the matched transform of a uniform linear array; the absence of these transforms from the named-software canon is a historical artifact of which acquisition geometries are most common, not a statement that other geometries are without optimal transforms.

\subsection{The historical pedigree of the FFT}
\label{subsec:fft_history}

The 1965 Cooley-Tukey paper~\cite{cooley1965algorithm} is the version of the fast Fourier transform algorithm that became foundational to digital signal processing and made the FFT a standard engineering tool, on the strength of its match to the emerging digital-computer infrastructure and the seismic-array applications of the era. The history before 1965 is richer than the single citation suggests, however, and was reconstructed in subsequent scholarly work by Heideman, Johnson, and Burrus~\cite{heideman1984gauss}.

The recursive divide-and-conquer structure of the FFT was anticipated by Carl Friedrich Gauss around 1805, who used it to interpolate the orbits of the asteroids Pallas and Juno from observational data. Gauss described the algorithm in a Neo-Latin manuscript that was never published in his lifetime; the work appeared posthumously in his collected works~\cite{gauss1866nachlass}. Intermediate published contributions before Cooley-Tukey include the radix-2 doubling algorithm of Runge and K\"onig~\cite{runge1924vorlesungen} in 1924, the wavelength-based computation of Danielson and Lanczos~\cite{danielson1942harmonic} in 1942, and the prime-factor algorithm of I.~J.\ Good~\cite{good1958interaction} in 1958, the latter being the one pre-1965 contribution that Cooley and Tukey cited. The 1965 paper's contribution to the modern engineering practice of signal processing, alongside the subsequent practitioner adoption that the new digital-computing environment enabled, was decisive for the algorithm's role in the field as we know it today; the algorithmic structure was known to several earlier authors in forms ranging from limited special cases to fully general recursive factorization, the latter due to Gauss.

The Pallas-Juno application is structurally apt for the present paper's argument. Gauss's finite Fourier expansion of a discrete orbital sample is precisely the matched-transform construction of Theorem~\ref{thm:dft} for the cyclic group of the sampling period: the asteroid's orbital data, observed at a sequence of times treated as uniformly spaced for the purposes of the interpolation, has the cyclic structure of the sampling acquisition imposed on it by the discrete observation grid, and Gauss's recursive factorization of the resulting Fourier transform is the subgroup-tower decomposition of $\mathbb{Z}_M$ discussed in the previous subsection. The matched-group framework as developed in this paper articulates two centuries later the same algorithmic-structural principle that Gauss intuited for one specific astronomical application, and the FFT is the most-studied special case of a matched-group fast transform under a name. The framework gives, in addition, the corresponding fast transforms for $D_{2M}$ on the even extension (the DCT), for $\mathbb{Z}_2^n$ (the Walsh-Hadamard transform), for the iterated wreath products $W_L$ (the Haar wavelet and its generalizations), and for any matched group whose subgroup lattice admits a recursive decomposition.

\section{Framework Boundary and Complementary Methods}
\label{sec:boundary}

The matched-group framework developed in this paper is built on second-order statistics: the matched group of a covariance is determined by the population second moment, and the procedural content of the framework (the BMG discovery algorithm of Section~\ref{sec:discovery}, the matched-transform construction of Section~\ref{sec:main}, the unification of the classical transforms in Sections~\ref{sec:dft} through~\ref{sec:iterated_wreath}) uses no information beyond the second moment. This restriction is deliberate and yields the framework's principal advantages: polynomial-time matched-group recovery, certifiable optimality of the matched transform on $G$-invariant spectral parameters, and Cram\'er-Rao bound achievement via the group-gain factor in the variance reduction of the orbit-averaged estimator. The boundary excludes signal classes that admit only mixed-component, higher-order, or time-ordering descriptions, and several classical mixed-signal methods address these classes by using information that the second-order single-group framework chooses not to use. We name the principal complementary methods explicitly, with a brief account of which information channel each accesses that AD does not.

\paragraph{State-space and time-ordering methods.}
The Kalman filter, the extended Kalman filter, particle filters, and hidden Markov models exploit time-ordering through state evolution. The state-space description of a Gauss-Markov process captures the conditional distribution of $X_t$ given $X_{t-1}$ and the transition dynamics in a way that the unordered second moment of the same process does not. The matched-group framework can identify the reflection structure of a finite-window AR process, or the dihedral structure of a symmetric circulant, from its sample covariance, as developed in Sections~\ref{sec:dct} and~\ref{subsec:acquisition_geometry}, but it cannot perform the state-space prediction problem that classical Kalman filtering solves: predicting $X_{t+1}$ from $X_1, \ldots, X_t$ requires the time-ordering, and the matched-group framework is permutation-equivariant by construction.

\paragraph{Time-frequency methods.}
Synchrosqueezing, reassigned representations, the Wigner-Ville distribution, and adaptive time-frequency dictionaries exploit non-stationary structure that the matched-group framework treats as approximate invariance at best. The matched-group framework is well-suited to processes whose statistical structure is stationary in a group-theoretic sense (translation-invariant on a uniform grid, rotation-invariant on the sphere, etc.); when the statistical structure varies in time or frequency in ways that break the assumed invariance, the matched-group framework provides only an approximation, and the time-frequency methods that explicitly model the local-stationarity structure are appropriate.

\paragraph{Subspace methods and the Hankel-matrix embedding.}
Subspace methods, including MUSIC, ESPRIT, singular spectrum analysis (SSA), and Hankel SVD for system identification, use rank assumptions and eigenvalue ordering that AD does not natively access. The common architectural element of these methods is the Hankel-matrix embedding: a time series $x_0, x_1, \ldots, x_{N-1}$ is reorganized into a matrix with constant anti-diagonals,
\[
H = \begin{pmatrix}
x_0 & x_1 & x_2 & \cdots \\
x_1 & x_2 & x_3 & \cdots \\
x_2 & x_3 & x_4 & \cdots \\
\vdots & \vdots & \vdots & \ddots
\end{pmatrix},
\]
and the eigenstructure of $H$ (or the related Toeplitz matrix) is then analyzed for signal-subspace estimation. The Hankel-matrix embedding uses the \emph{ordered} sequence of observations to construct an information-rich matrix, and the analysis exploits the ordered structure to identify signal subspaces with finer resolution than the unordered second-moment description provides. The matched-group framework is permutation-equivariant on the time index and does not access the ordered structure that the Hankel embedding leverages. Subspace methods and matched-group methods therefore solve related but structurally different problems, and the choice between them depends on whether the time-ordering structure is the principal carrier of the signal information.

This is the appropriate place to note that the \emph{same name ``Hankel''} covers two unrelated objects in the signal-processing literature, and the matched-group treatment of this paper applies to only one of them. The \emph{continuous Hankel transform} discussed in Section~\ref{subsec:hankel} is the $SO(d)$-matched radial transform with Bessel-function kernel, and it is a derived consequence of the spherical-harmonic matched-transform structure on $L^2(\mathbb{R}^d)$. The \emph{discrete Hankel matrix} is the time-series-embedded matrix discussed in the previous paragraph, used by subspace methods. The two share only the name (after Hermann Hankel, who introduced both the Bessel-function integral and the constant-anti-diagonal matrix in different nineteenth-century works); they are not structurally related, and the matched-group framework intersects only with the continuous Hankel transform.

\paragraph{Morphological component analysis.}
Morphological component analysis (MCA) separates an additive mixture into components by exploiting component-specific sparsity in different dictionaries. The dictionary-sparsity structure is a non-second-order prior that the AD framework does not access. A signal $\mathbf{x} = \mathbf{x}_1 + \mathbf{x}_2$ that is sparse in dictionary $\Phi_1$ for component $\mathbf{x}_1$ and sparse in dictionary $\Phi_2$ for component $\mathbf{x}_2$ can be separated by MCA using the sparsity priors, even when the second moment of $\mathbf{x}$ alone does not distinguish the components. The matched-group framework cannot perform this additive decomposition because it does not access the sparsity information.

\paragraph{Feature-extraction architectures with non-group nonlinearities.}
The wavelet scattering transform~\cite{mallat2012group, bruna2013invariant} is a canonical example of a representation that lies outside the matched-group framework by virtue of using non-group nonlinearities. A scattering representation iteratively applies wavelet transforms followed by pointwise modulus operations, producing a hierarchical feature map whose first-order behavior matches the affine-group matched-transform structure of Section~\ref{subsec:continuous_wavelet} (the continuous wavelet transform) but whose higher-order structure is shaped by the modulus nonlinearity. The modulus operation is not a group action: it is a pointwise nonlinearity that converts the linear wavelet transform into a feature map whose value at a given coefficient is invariant to phase, and whose iterated application accumulates this invariance across multiple scales. Scattering is therefore not the matched transform of any group; it is a composition of matched transforms (the wavelet transforms at each layer) with non-group glue (the modulus operations between layers). The framework's value for scattering is restricted to the per-layer linear component, which is the continuous wavelet; the cross-layer modulus, and the diffeomorphism-stability story that justifies it (Mallat 2012~\cite{mallat2012group} establishes that the scattering representation is locally invariant to translations and Lipschitz-stable to small diffeomorphisms), use information the second-order single-group framework does not have access to. The matched-group framework does not subsume scattering; the two design choices solve different problems, with scattering using the modulus nonlinearity that AD chooses not to use.

\paragraph{Additive mixtures of differently-symmetric components.}
A clean example of where the framework boundary lies is the case of an additive mixture $\mathbf{x} = \mathbf{x}_1 + \mathbf{x}_2$ in which $\mathbf{x}_1$ is invariant under a group $G_1$ and $\mathbf{x}_2$ under a different group $G_2$ with $G_1 \cap G_2 = \{e\}$, such as a chirp-pulse train added to an AR(1) process. The mixture covariance is invariant only under the trivial group; the additive structure of the components is not preserved in the second moment, and resolving the case within the AD framework requires either an extension that admits a non-second-order separation step or a hybrid in which classical methods (MCA, MUSIC/ESPRIT, sparse decomposition) perform the additive decomposition and AD provides the per-component matched-group estimator. Either path is consistent with the framework; the second-order single-group restriction simply specifies that the additive decomposition step itself is not part of AD.

The general principle is that the matched-group framework is a complete description of the second-order single-group structure of a covariance, and an incomplete description of any signal whose principal carrier of information is in higher-order statistics, in time-ordering, in component-specific sparsity, or in compositional nonlinearities. The complementary methods named above are not refuted by the matched-group framework; they are tools that exist to access information channels that the framework deliberately does not access. The practitioner's choice between matched-group methods and complementary methods should be made on the basis of which structural assumption the data plausibly satisfies, not on a presumption that one framework subsumes the other.

\section{Modern Applications and Emerging Matched-Group Problems}
\label{sec:modern_applications}

The framework's reach extends naturally to several emerging application areas where the appropriate transform basis has been difficult to identify by traditional means. Each admits a matched-group description that we sketch here. The intent is to demonstrate how identification of the matched group is all that is required to apply the rich history of transform-related processing methods to this diverse set of applications that have considerable contemporary interest. Full development of results related to these applications will appear in companion manuscripts under preparation.

\subsection{Massive-MIMO and structured antenna arrays}
\label{subsec:mimo}

Modern wireless systems with massive antenna arrays at the base station require estimation and exploitation of high-dimensional covariance structures across the antenna elements. For structured arrays, the matched group is read directly off the array geometry: uniform linear arrays correspond to the cyclic group $\mathbb{Z}_M$ (translation along the array), uniform planar arrays to the direct product $\mathbb{Z}_{M_1} \times \mathbb{Z}_{M_2}$, uniform circular arrays to the cyclic group $\mathbb{Z}_M$ on the angular index, and conformal and irregular arrays to the array's geometric automorphism group. Current 3GPP codebook designs for 5G and 6G beamforming are ad hoc quantizations of matched-basis directions for uniform geometries. The matched-group framework provides the principled basis directly, including for unconventional configurations relevant to cell-free deployments and reconfigurable intelligent surfaces. A manuscript developing the framework for massive-MIMO covariance estimation and beamforming codebook design is in preparation.

\subsection{Graph neural networks}
\label{subsec:gnn}

Spectral graph neural networks~\cite{defferrard2016convolutional, kipf2017semi} operate on the Laplacian eigendecomposition of the graph, which coincides with the matched-group transform for graphs whose automorphism group acts multiplicity-freely on the vertex set (Section~\ref{subsec:graph_sp}). The matched-group framework provides a principled justification for spectral GNN design on symmetric graphs, characterizes the cases in which the spectral basis reduces to the data-dependent KLT (asymmetric graphs), and identifies hierarchical pooling architectures as matched transforms of iterated wreath product groups. A manuscript extending the framework to message-passing networks, graph attention, and large-scale heterogeneous graphs is in preparation.

\subsection{Large language models and transformer attention}
\label{subsec:llm}

The attention mechanism of transformer language models~\cite{vaswani2017attention} computes a query-key inner product followed by a softmax-weighted aggregation of values across positions. The matrix performing this aggregation decomposes into a content-dependent component (matched group trivial, matched transform reduces to the data-dependent KLT) and a position-dependent component (matched group $\mathbb{Z}_M$ for translation-equivariant heads, dihedral $D_M$ for symmetric-window heads). To the extent that attention heads in pretrained models are position-equivariant in this sense, their dense attention matrices admit FFT-based circulant approximations that reduce the FLOP count of attention by a constant factor; how large that fraction is in practice, and what the approximation costs in task accuracy, are empirical questions. A manuscript characterizing the matched-group structure of pretrained transformer attention layers, with implications for efficient inference and model compression, is in preparation.

\subsection{Point cloud and 3D vision}
\label{subsec:point_cloud}

Three-dimensional point cloud data, ubiquitous in autonomous driving, robotics, and structural biology, admits the matched group $SE(3) = SO(3) \ltimes \mathbb{R}^3$ for rigid-body transformations of the cloud, tensored with the symmetric group $S_N$ for permutations of the (unordered) point indices. The matched basis is the spherical-harmonic expansion of the $SO(3)$ component (Section~\ref{subsec:spherical_harmonics}), tensored with the symmetric-group basis on the point indices. Current equivariant point-cloud architectures (PointNet, Tensor Field Networks, $SE(3)$-Transformers) approximate this structure through learned features. The matched-group framework provides the optimal linear-transform basis directly. Manuscripts applying the framework to LiDAR processing for autonomous driving, and to cryo-electron microscopy particle alignment for structural biology, are in preparation.

\subsection{Brain connectivity}
\label{subsec:brain}

Functional connectivity matrices derived from fMRI, EEG, and MEG recordings have approximate symmetries dictated by the underlying cortical organization: hemispheric reflection symmetry ($\mathbb{Z}_2$), regional exchangeability within homologous functional areas ($S_K$ on regions of the same type), and electrode-array symmetries inherited from the recording montage. The matched group is a direct or hybrid wreath product of these components. Standard connectivity analyses (independent component analysis, principal component analysis, graph-spectral methods) do not exploit these symmetries explicitly, and the matched-group framework provides a transform basis that reduces effective dimensionality by pooling statistical strength across symmetric components. Manuscripts applying the framework to multi-site fMRI and to clinical EEG analysis are in preparation.

\subsection{Genomics and single-cell data}
\label{subsec:genomics}

Single-cell RNA sequencing data exhibits latent cell-type structure: cells of the same type are statistically exchangeable, giving an $S_K$ matched group on each cell-type partition, and the joint structure on the full sample is a wreath product of within-type and across-type permutations. When the partition is known, the matched-group framework provides a transform basis that pools statistical strength across cells of the same type, improving covariance estimation in the high-dimensional small-sample regime characteristic of single-cell data. When the partition is unknown, the latent-partition matched-group discovery problem of Section~\ref{subsec:open_problems} applies. Standard dimension-reduction methods (PCA, UMAP, t-SNE) do not exploit cell-type exchangeability explicitly. A manuscript applying the framework to single-cell RNA-seq covariance estimation and to data-driven cell-type discovery is in preparation.

\subsection{Quantum informatics}
\label{subsec:quantum}

Many of the results developed in this paper extend directly to the quantum-informatic setting, in which density matrices replace classical covariances and the matched group acts on a complex Hilbert space via unitary representations. The matched basis is the irreducible decomposition of the symmetry group acting on the relevant Hilbert space. Specific quantum systems admit specific matched groups: stabilizer codes have finite Pauli-group structure, bosonic systems have $U(N)$ structure, harmonic-oscillator systems have $Mp(2, \mathbb{R})$ metaplectic structure (Section~\ref{subsec:frft} and Remark~\ref{rem:lct}), and spin systems have $SU(2)$ structure. Quantum state tomography, quantum process tomography, and quantum error correction all benefit from matched-basis decompositions that exploit the system's symmetry. The companion manuscript~\cite{thornton2026quantum_ad} develops the quantum AD framework in detail.

\section{Conclusion}
\label{sec:conclusion}

We have established a representation-theoretic principle that unifies the discrete Fourier transform, discrete cosine transform, Walsh-Hadamard transform, Haar wavelet transform, the hierarchical Karhunen-Lo\`eve basis, the classical KLT, and their continuous counterparts (Fourier series, Fourier transform, Fourier cosine series, and spherical harmonics), and many others under a single mathematical framework. Each transform is the eigenbasis of any covariance invariant under a specific finite or compact group whose permutation (or geometric) representation on the signal space is multiplicity-free; the columns of the transform matrix are constructed from the irreducible matrix elements of the group via the Peter-Weyl theorem. The classical catalog of transforms is, from this perspective, a catalog of matched groups.

The unification has both a theoretical and an operational consequence. Theoretically, the framework clarifies relationships that have historically been treated as ad hoc: the DFT is the abelian Fourier transform on $\mathbb{Z}_M$, the Walsh-Hadamard transform is the abelian Fourier transform on $\mathbb{Z}_2^n$, the DCT is the dihedrally-symmetric restriction of the DFT on $\mathbb{Z}_{2M}$, the Haar wavelet is the matched transform of the non-abelian iterated dyadic-cyclic wreath product $W_L$, and the data-dependent KLT is the trivial-group limit in which no symmetry is presupposed. Operationally, the polynomial-time matched-group discovery procedure of~\cite{thornton2026double_commutator} closes the loop: the practitioner with an empirical covariance can discover the matched group and construct the matched transform without expert judgment about which classical transform to apply. Section~\ref{sec:discussion} catalogs the five classes of finite groups for which the discovery procedure is provably sound and complete (cyclic, dihedral, elementary abelian, iterated wreath, hybrid wreath) and shows how a meaningful subset of graph signal processing fits within the same framework.  Although some classes of signals are not compliant with the efficient blind group discovery approach, most classes of common interest are included within these five broad categories.  Furthermore, in lieu of blind group matching, a small library of candidate groups can be formed and the AD metrics such as the ``commutativity residual'' $\delta$ can be used to find the closest matching group in the library in an efficient manner. 

\subsection{Open problems}
\label{subsec:open_problems}

As the AD framework is not fully mature, several substantive open questions remain, each marking a frontier at which the unified framework currently stops short.  A brief summary of these follows.

\emph{Graph signals without exploitable symmetry.} For graphs whose automorphism group is trivial, the generic case for a random graph or an empirically constructed sensor-network graph, the matched-group framework returns the trivial group and falls back on the data-dependent KLT. The graph Laplacian eigendecomposition then plays the role of the matched transform. A genuine unification of the matched-group framework with the full graph signal processing literature, in particular for asymmetric graphs, would require a substantive extension. One plausible direction is to relax the exact-invariance requirement to approximate-invariance, identifying the largest group under which the Laplacian is invariant up to a controlled perturbation, and treating the perturbation as a residual after group projection. This extension would also bear on the practical case of real-world graphs that are only approximately symmetric.

\emph{Exchangeable signals with unknown clustering.} The hybrid wreath class of Section~\ref{subsec:five_classes} treats data with a known across-block exchangeability ($S_K$ acting on $K$ blocks of $W$ samples each). When the block partition is itself unknown, when the practitioner has $WK$ samples but does not know how they cluster into exchangeable groups, the matched group is some unspecified subgroup of $S_{WK}$, and the matched-group discovery problem becomes a joint clustering-and-symmetry-discovery problem. The differential probe of~\cite{thornton2026ad_framework} attacks a special case of this problem in the context of multi-class RF data, but a general theory of latent-partition matched-group discovery remains an open problem. The connection to classical clustering, to community detection in networks, and to mixed-membership models is rich and largely unexplored.

\emph{Hierarchical signals with unknown branching.} The iterated wreath class of Section~\ref{subsec:five_classes} treats hierarchical data with a known branching structure $(K_1, K_2, \ldots, K_L)$. When the branching is itself unknown, when the practitioner has $M = \prod K_d$ samples but does not know how they are organized into a hierarchy, the matched group is some unspecified iterated wreath product, and the matched-group discovery problem becomes a joint hierarchy-discovery problem. A residual-based search over candidate branchings has been demonstrated in~\cite{thornton2026ad_framework} for depth-3 problems, but the worst-case complexity of branching discovery and the optimal search strategy remain open.

\emph{Non-compact non-abelian Lie groups.} The continuous framework of Sections~\ref{sec:lie_groups} and~\ref{sec:continuous_transforms} handles the compact and the locally compact abelian cases via the Peter-Weyl theorem and the Plancherel theorem, respectively. The non-compact non-abelian case ($SL(2, \mathbb{R})$, the Heisenberg group, the Lorentz group, the Poincar\'e group, the affine group of the line) requires the substantially more delicate machinery of square-integrable representations and the Plancherel theorem in its non-abelian form~\cite{taylor1986noncommutative, sugiura1990unitary, knapp1986representation}. The continuous wavelet transform of Grossmann and Morlet~\cite{grossmann1984decomposition} fits into this category as the matched transform for affine-group-invariant processes. Other signal-processing-relevant non-compact non-abelian transforms (the windowed Fourier transform via the Heisenberg group, the bispectrum-based decompositions via the dilation group) likely admit a similar matched-group treatment, but a full development is outside the scope of the present paper.

\emph{Beyond multiplicity-free.} The matched-transform principle of Theorem~\ref{thm:main} requires the permutation representation of the matched group on the signal space to be multiplicity-free. When the multiplicity exceeds one for some irreducible, when several copies of the same irreducible appear in the signal space, the matched transform has a data-dependent component within each multiplicity block. The isotypic decomposition is still determined by the group, and a $G$-invariant covariance is block diagonal in it with each block of the form $\mathbf{R}_\rho \otimes \mathbf{I}_{d_\rho}$; the hybrid framework that combines this group-determined block structure with data-dependent diagonalization of the multiplicity-space factors $\mathbf{R}_\rho$ is the natural extension, and it points toward a refined catalog of partially-determined matched transforms whose group-determined component is fixed by representation theory and whose within-block component is the KLT of the residual multiplicity-space covariance.

These five frontiers, taken together, define a research program for the matched-group framework: extending the unification to graphs without exploitable symmetry, to data with unknown latent structure, to non-compact continuous symmetries, and to representations with non-trivial multiplicity. Each is a substantive open problem of independent interest, and each will, when solved, enlarge the catalog of signal classes for which the matched group and therefore the optimal transform can be automatically discovered.

\bibliographystyle{IEEEtran}
\bibliography{optimal_transforms_arxiv-v9_0}

\end{document}